 \def\BibTeX{{\rm B\kern-.05em{\sc i\kern-.025em b}\kern-.08em
     T\kern-.1667em\lower.7ex\hbox{E}\kern-.125emX}}
\theoremstyle{definition}
\newtheorem{definition}{Definition} 
\newtheorem{theorem}{Theorem}
\newtheorem{corollary}{Corollary}
\definecolor{MyDarkBlue}{rgb}{0,0.08,0.45}
\definecolor{yellow}{rgb}{0.99,0.99,0.70}
\definecolor{myback}{RGB}{204,232,207}  
\definecolor{white}{rgb}{1.0,1.0,1.0}                             
\definecolor{black}{rgb}{0.00,0.00,0.00}
\begin{document}
\title{Stochastic Geometry Analysis of Spatial-Temporal Performance in Wireless Networks: A Tutorial } 
\author{Xiao Lu, {\em Member, IEEE}, Mohammad Salehi, Martin Haenggi, {\em Fellow, IEEE}, 
\\
Ekram Hossain, {\em Fellow, IEEE}, and Hai Jiang, {\em Senior Member, IEEE} \thanks{X. Lu and H. Jiang are with the Department of Electrical and Computer Engineering, University of Alberta, Canada.
M. Salehi and E. Hossain are with the Department of Electrical and Computer Engineering, University of Manitoba, Canada. M. Haenggi is with the Department of Electrical Engineering, University of Notre Dame, Notre Dame, USA and with the Department of Information Technology and Electrical Engineering, Swiss Federal Institute of Technology, Zürich (ETHZ), Switzerland. }  }
\maketitle

\begin{abstract}

The performance of wireless networks is fundamentally limited by the aggregate interference, which  depends on the spatial distributions of the interferers, channel conditions, and user traffic patterns (or queueing dynamics). These factors usually exhibit spatial and temporal correlations and thus make the performance of large-scale networks environment-dependent (i.e., dependent on network topology, locations of the blockages, etc.). The correlation can be exploited in protocol designs (e.g., spectrum-, load-, location-, energy-aware resource allocations) to provide efficient wireless services. For this, accurate system-level performance characterization and evaluation with spatial-temporal correlation are required. In this context, stochastic geometry models and random graph techniques have been used to develop analytical frameworks to capture the spatial-temporal interference correlation in large-scale wireless networks. The objective of this article is to provide a tutorial on the stochastic geometry analysis of large-scale wireless networks that captures the spatial-temporal interference correlation (and hence the signal-to-interference ratio (SIR) correlation). We first discuss the importance of spatial-temporal performance analysis, different parameters affecting the spatial-temporal correlation in the SIR, and the different performance metrics for spatial-temporal analysis.  Then we describe the methodologies to characterize spatial-temporal SIR correlations for different network configurations (independent, attractive, repulsive configurations), shadowing scenarios, user locations, queueing behavior, relaying, retransmission, and mobility. We conclude by outlining future research directions in the context of spatial-temporal analysis of emerging wireless communications scenarios.

\end{abstract}

\begin{IEEEkeywords}
Large-scale wireless access networks, signal-to-interference ratio (SIR), spatial-temporal correlation, point process modeling, stochastic geometry.
\end{IEEEkeywords}

\section{Introduction}

Wireless communications systems are evolving toward a heterogeneous architecture (e.g., multi-tier and cell-free) with the dense deployment of different types of access points (e.g., smallcells and hotspots) to enable pervasive wireless Internet access \cite{E.2014Hossain}. The evolving wireless networks are expected to provide seamless connectivity to ubiquitous and/or high-mobility devices and users with millisecond delay and gigabits per second data rate \cite{M.2015Peng}. The ever-increasing demand for low-latency high-reliability services from pervasive terminals will lead to an explosive increase in mobile traffic. To accommodate the massive traffic volume, high network densification and aggressive spatial frequency reuse  will be required,  which will result in high levels of interference in the network.

\subsection{Background of Stochastic Geometry and Objective}
Signal propagation over a radio link is impaired by large-scale path loss and shadowing, small-scale fading,  as well as co-channel interference from concurrent transmissions.
Since all of these effects are heavily location-dependent, the network spatial configurations become a dominant factor that determines the system-level performance.
Hence, developing tractable approaches for modeling large-scale wireless systems and analyzing their statistical performance taking into account the randomness (due to the above-mentioned factors) have become compelling.  

In this context,  
stochastic geometry \cite{M.2013Haenggic} (also referred to as geometric probability), a probabilistic analytical approach to study (random) point configurations, 
has become a necessary theoretical tool for the analysis and characterization of large-scale wireless systems, including 
heterogeneous cellular networks  \cite{S.2012Dhillon,L.2018Wang,Wei2016H}, 
dynamic spectrum access systems \cite{Lee2012C,A.2015H}, wireless ad hoc networks \cite{K.2009Ganti,A.2007Hasan,M.2008Hunter,F.2009Baccelli2}, drone networks  
  \cite{Ravi2016Vishnu,M.2020Azari,Lahmeri2019Mohamed-Amine}, vehicular networks~\cite{Jeya2020Pradha,V.2020Chetlur,Konstantinos2020Koufos}, and low earth orbit satellite networks \cite{N.2020Okati,Talgat2020Anna}. 
 Spatial-temporal aspects of mobile communications, including the spatial distribution of network nodes, wireless channels and traffic patterns have to be considered for system development, resource allocation, performance evaluation and optimization. The purpose of this paper is to provide a tutorial on how to quantitatively analyze the effects of  spatial and temporal fluctuations of interference (resulting from the factors above) on the system-level network performance.

\begin{figure} 
\centering
 \subfigure [ Example 1: Temporal interference correlation in a static network. ]
  {
 \centering   
 \includegraphics[width=0.48 \textwidth]{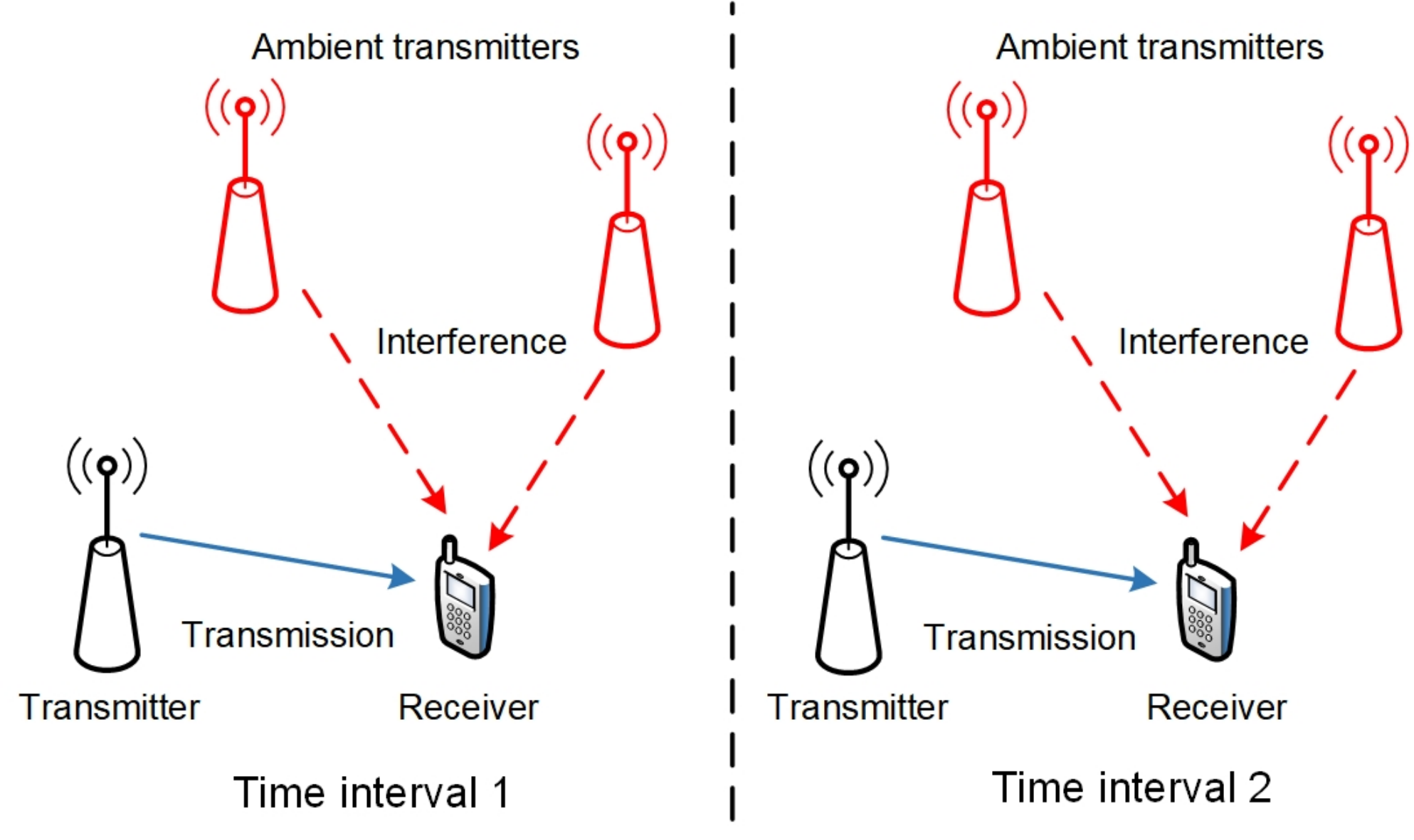}}\label{fig:example1}
 \centering  
 \subfigure  [ Example 2: Spatial interference correlation in a full-duplex communication system.
 ] {
 \centering
\includegraphics[width=0.3 \textwidth]{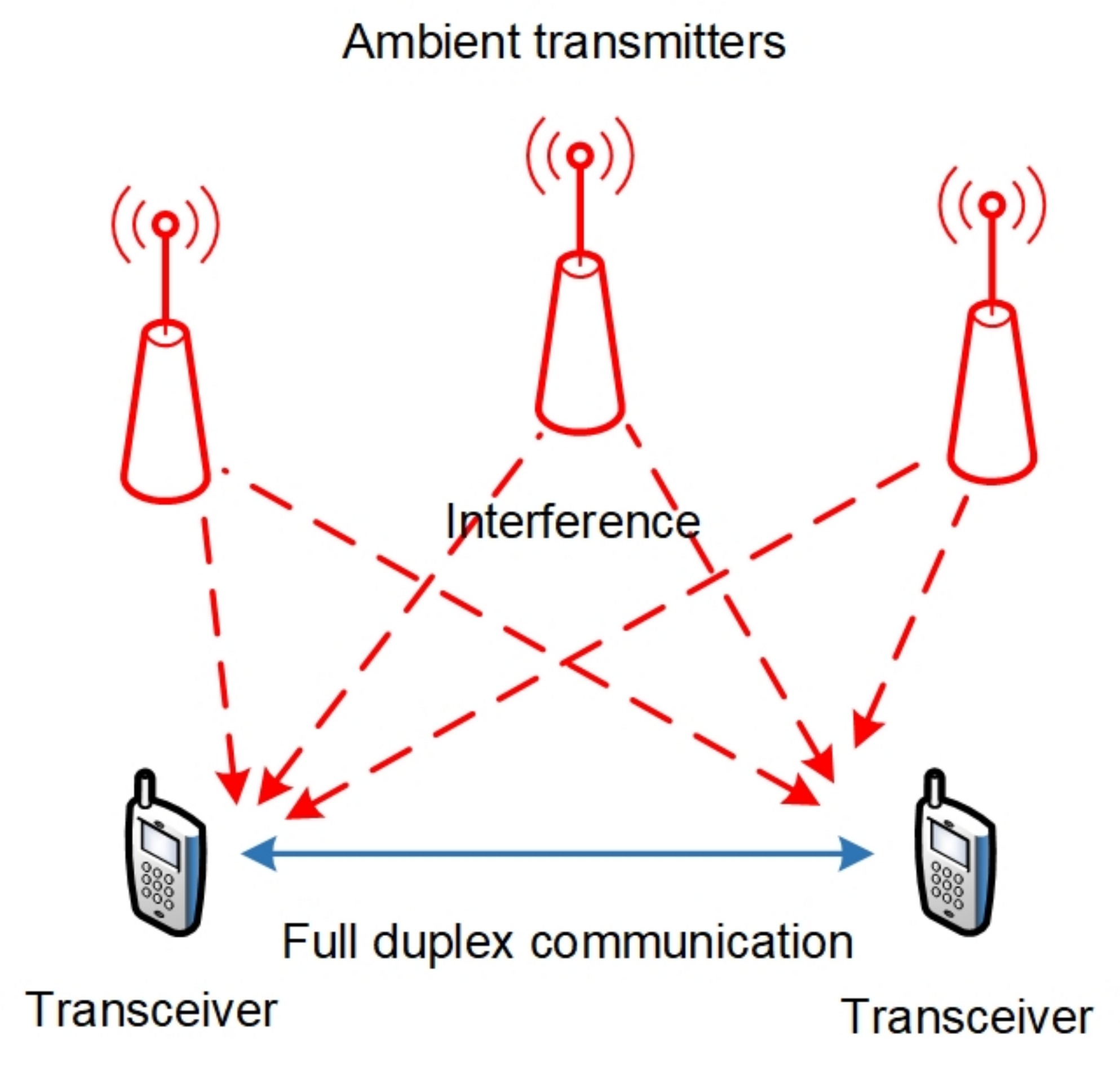}}
 \centering  
 \subfigure  [ Example 3: Spatial-temporal interference correlation in a mobile network. 
 ] {
 \centering
\includegraphics[width=0.48 \textwidth]{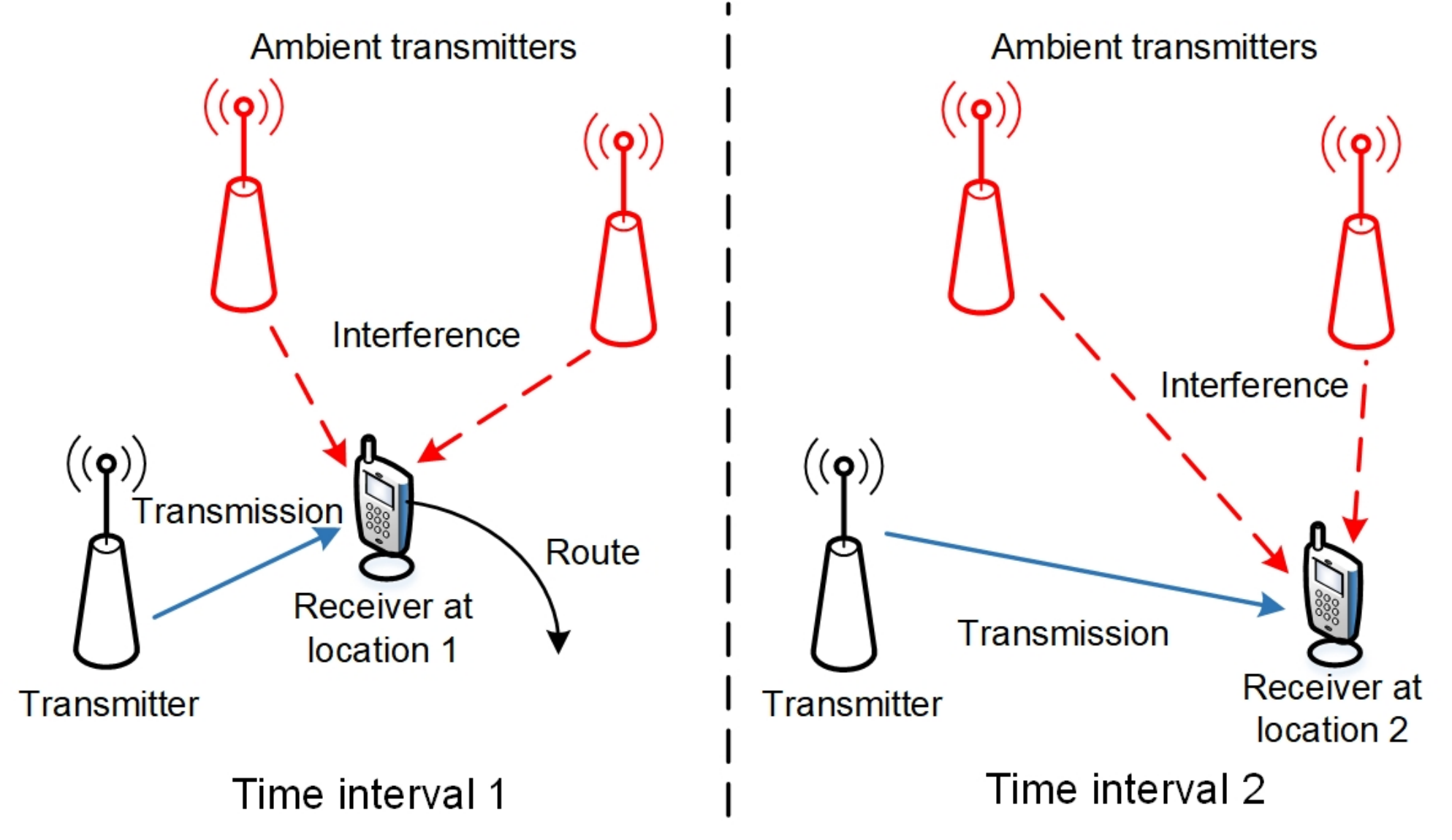}}
\caption{Examples of interference correlation.} 
\centering
\label{fig:example}
\end{figure}

\subsection{Importance of  Characterization of
Signal-to-Interference-plus-Noise Ratio Correlation}

Wireless communications systems need to preserve the quality of radio links in time-varying environments. The  signal-to-interference-plus-noise ratio (SINR)  statistics is commonly used  as the main measure of the quality of links~\cite{A.2005Goldsmith}, and most of the performance metrics for system-level evaluation (to be introduced in Section \ref{sec:PM}) are based on the SINR.
The variation of the wireless environment (and hence the SINR) is mainly attributed to two causes. On the one hand, the changes of the relative positions of communication devices and surrounding obstructions affect the multipath propagation and thus the received power of both desirable and interfering signals. Also, variations in traffic patterns cause fluctuations in the interference and hence the SINR.

Owing to the spatial-temporal fluctuations of network distributions, wireless channels and traffic patterns, the interferences and hence SINRs (at different locations and time instants) are correlated. 
Although rapid channel fluctuations due to small-scale fading can result in reduced interference correlation~\cite{R.2009Ganti}, the 
correlation still exists due to large-scale 
path-loss and shadowing~\cite{U2011Schilcher}. For example, in a static network, as shown in Fig.~\ref{fig:example}(a), if the ambient transmitters have data packets to send during two time intervals, the interferences at the receiver 
are temporally correlated. Besides, in a full-duplex communication system as shown in Fig.~\ref{fig:example}(b), the interferences at the transceiver pair are spatially correlated at any time instance. In a mobile network, as shown in Fig.~\ref{fig:example}(c), the interference at the receiver is spatially and temporally correlated. 
The causes of the temporal and/or spatial interference correlation in the above examples all arise from the fact that the interference comes from the same group of transmitters.  
The interference correlation results in the spatial-temporal correlation of transmission outage/success \cite{Z.2014Gong,K.2018Koufos,M.Aug.2017Gharbieh}, throughput \cite{H.Dec.2017Yang,J.Sep.2018Li},  
mean local delay~\cite{M.2013Haenggib,Y.Jun.2017Zhong}, etc., thus needs to be treated carefully in the designs of mobile systems. 

Since SINR correlation affects the performance at different transmission attempts (e.g., when using an error recovery method) and/or locations (e.g., in a relay-based system), an accurate characterization of 
it is essential to the understanding of wireless network performance. 
Information about SINR correlation can be exploited 
to optimize the performance and design of the system accordingly.

\subsection{Related Work}

Several survey and tutorial papers have focused on the stochastic geometry analysis of wireless communication networks.  
In particular, reference \cite{S.2009Zuyev} provides a survey of point process models and stochastic geometry tools that have been used to analyze static wired, wireless, ad hoc and cellular networks prior to 2009.
Reference \cite{JGAndrews2010} overviews the impact of spatial modeling on the SINR-based performance metrics, i.e.,  connectivity, coverage area, and capacity of different types of systems, including ad hoc, cellular and cognitive networks. 
The survey in \cite{ElSawy2013H} comprehensively reviews the  
works on stochastic geometry analysis of multi-tier and cognitive cellular systems prior to 2013.  
   
In addition to the above survey papers, tutorial papers on the mathematical tools used for stochastic geometry analysis of
 large-scale systems have also been written. Reference \cite{M.Haenggi2009} is the first tutorial on point process theory, random geometry graphs, and percolation theory for interference characterizations in ad hoc networks. 
Targeting cellular networks,  \cite{J.G.Andrews2016} provides a tutorial on stochastic geometry analysis of both  downlink and uplink networks based on Poisson point process (PPP) modeling.  The focus is on characterizing interference in different scenarios by exploiting the properties of the PPP under Rayleigh fading assumption. 
As PPP modeling fails to capture the spatial correlation among the random points, the authors in \cite{N.Miyoshi2016} emphasize the use of repulsive point processes to model  cellular networks, 
where the base station locations are usually planned with a moderate degree of irregularity due to different development issues. To this end, the authors present a tutorial on the SINR distribution analysis of downlink cellular networks based on the $\beta$-GPP (Ginibre point process), which is a fairly tractable model for random points with spatial repulsion. The tutorial in \cite{ElSawy2016Hesham} focuses on a unifying analysis of bit/symbol error probability, coverage outage probability, and ergodic capacity in cellular networks. More recently, reference~\cite{Y.2021Hmamouche} systematically tutors the analytical techniques to characterize interference, success probability and capacity in Poisson networks. 
\textbf{However, none of the existing survey and tutorial papers focus on the stochastic geometry techniques to 
characterize the  spatial and temporal correlations in their considered systems}. Moreover, 
this is the first article to include the refined-grained analysis tool of the signal-to-interference ratio (SIR) meta distribution.

\begin{figure} 
 \centering
 \includegraphics[width=0.5 \textwidth]{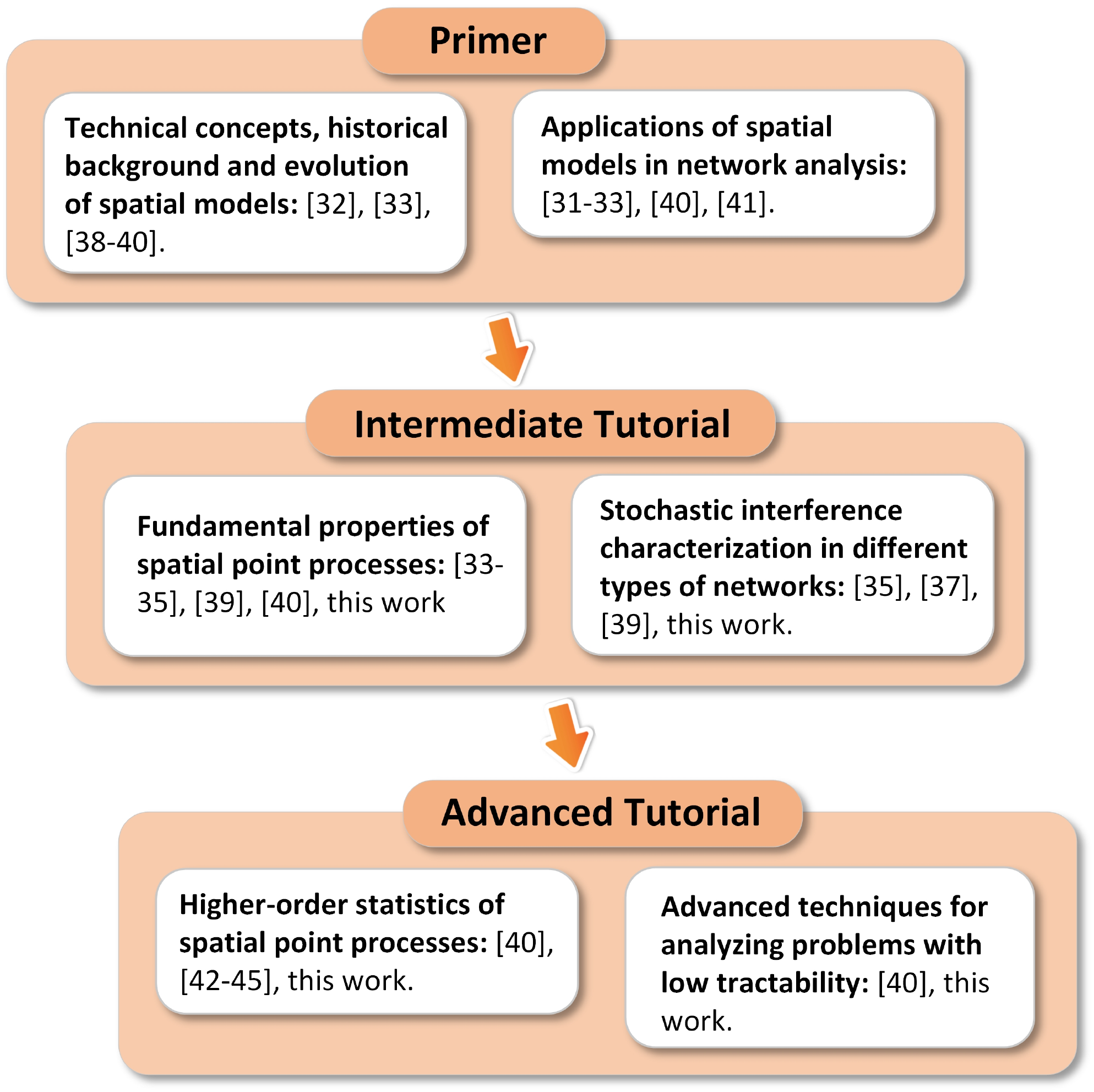} 
 \caption{The suggested learning path for stochastic geometry analysis in wireless networks.   } \label{fig:path} 
 \end{figure}

For a better understanding of the contents of this tutorial, 
we recommend taking some prerequisite knowledge of stochastic geometry regarding network analysis which has been comprehensively reviewed in the literature. The suggested learning path and the recommended readings corresponding to the prerequisite knowledge are shown in Fig. \ref{fig:path}, which includes the following. 


\begin{itemize} 

\item  Primers~\cite{JGAndrews2010, M.Haenggi2009,P.2010Cardieri,lattice,Y.2021Hmamouche,S.2009Zuyev,ElSawy2013H, J.Mar2013G} that cover the technical concepts of spatial models 
and their applications in modeling wireless networks;


\item Intermediate tutorials~\cite{ElSawy2013H,M.Haenggi2009,J.G.Andrews2016,lattice,Y.2021Hmamouche,ElSawy2016Hesham}  that tutor the fundamental properties of spatial point processes 
(e.g., counting measure,  superposition, thinning, and transforms) and analytical methods to characterize the interference distribution in different types of wireless networks 
(e.g., cellular and ad hoc networks);


\item Advanced tutorials~\cite{Y.2021Hmamouche,U.2016Schilcher,Yu2019Xinlei,M.Part1Haenggi,M.Part2Haenggi} that lays out the higher-order statistics of point processes for fine-grained analysis 
(e.g., product/joint meta distribution) and up-to-date analytical techniques (e.g., approximation and bounding methods) to analyze complicated scenarios (e.g., with spatial-temporal correlation) where the exact performance characterization is of low tractability or unavailable. 

 
 
\end{itemize}

 
\subsection{Contributions and Organization}
 
This tutorial aims to concisely present the analytical approaches for performance evaluation of large-scale wireless networks taking into account various correlation causes and effects, such as shadowing, traffic queueing, and spatial distribution of network nodes.  We focus on the characterization of the interference distribution and signal-to-interference ratio (SIR)-based performance  metrics\footnote{Since in large-scale cellular networks, the impact of aggregate interference typically dominates that of noise~\cite{S.2012Dhillon,G.2014Nigam}, 
this tutorial focuses on analyzing the interference-limited cases with the noise ignored. However, without loss of generality, the same analytical approaches can also be applied to characterize SINR-based performance metrics.}, e.g., {\em success probability}, {\em joint success probability}, and {\em moments of conditional success probability (CSP) given the point process}. 
For spatial point process models, we additionally derive the {\em asymptotic SIR gain}, which is the horizontal gap between a target SIR distribution and a reference SIR distribution. This metric directly reflects the variation of SIR due to the changes in the network model with respect to (w.r.t.) the reference model. Furthermore, this metric can be utilized to simplify the analysis of non-Poisson networks based on the PPP \cite{M.2014Haenggi} (to be introduced in Section~\ref{sec:spatial_PA}).

This tutorial considers the following correlation effects in wireless networks.

\begin{itemize}

\item The spatial correlation (i.e., attraction and repulsion) among the locations of the transmitters;    

\item The spatially-correlated shadowing experienced by the links that traverse common obstacles (e.g., buildings);
 
\item The spatially and temporally correlated queue status among the transmitters due to the cross-interference imposed on each other over space and time;

\item The spatial correlation experienced by users located in the cell-center and cell-boundary regions; 

\item The temporal interference correlation between multiple transmission attempts due to the correlation in the locations of the interferers over time; 

\item The spatial-temporal interference correlation among relaying nodes in a multihop network due to the correlation in the locations of the interferers over space and time;

\item The spatial-temporal interference correlation in mobile systems (i.e., where the users and/or base stations (BSs) are mobile) due to the correlation among interferers' locations over space and time. 

\end{itemize}

\begin{table*} 
\captionsetup{font=footnotesize}
\centering
\caption{\footnotesize Main differences between other survey and tutorial papers}  \label{Tab:difference} 
\begin{tabular}{|p{1.2cm}|p{1.cm}|p{0.9cm}|p{0.9cm}|p{0.9cm}|p{0.8cm}|l|p{0.9cm}|p{3.5cm}|p{1.3cm}|}
\hline
Reference (year) &  Type of Review & \multicolumn{4}{ c| }{Examined systems}  &  \multicolumn{2}{ c| }{Spatial models}   &  Target performance metrics  &   Spatial-temporal correlation analysis \\
\cline{3-8}
& & Ad hoc & Cellular & Multihop & Mobile & Poisson &  Non-Poisson  & & \\
\hline
\hline
\cite{S.2009Zuyev} (2009) & Survey & \checkmark 
&  \checkmark &    &   & \checkmark &
 & Success probability, paging, handover   &  No \\
\hline
\cite{M.Haenggi2009} (2009) & Tutorial &  
\checkmark &  &  & & \checkmark & &  Interference characterization, outage probability, capacity, and area spectral efficiency &  No  \\
\hline
\cite{JGAndrews2010} (2010) & Survey & 
\checkmark & \checkmark & & &  \checkmark & & Success probability, coverage area, capacity &  No \\
\hline
\cite{ElSawy2013H} (2013)  & Survey & 
\checkmark & \checkmark & & & & & Success probability, capacity  &  No \\
\hline
\cite{J.G.Andrews2016} (2016) &  Tutorial  &  
& \checkmark  &   & & \checkmark  & & Success probability &  No \\
\hline
\cite{N.Miyoshi2016} (2016) & Tutorial & 
& \checkmark  &   & &   & \checkmark  & Success probability &  No \\
\hline
\cite{ElSawy2016Hesham} (2017) &  Tutorial &  
& \checkmark & \checkmark  & & \checkmark &  & Interference characterization,  error probability, error rate, outage probability,  capacity, and handover &  No \\
\hline  
\cite{Y.2021Hmamouche} (2021) & Tutorial &   & \checkmark &   &   & \checkmark & \checkmark & Interference characterization, success probability, capacity  & No \\ 
\hline
This work & Tutorial  & 
\checkmark  & \checkmark  & \checkmark  & \checkmark  & \checkmark  & \checkmark & Success probability, joint success probabilities, conditional success probabilities, moments of conditional success probability given the point process, SIR meta distribution, SIR gain, interference correlation coefficient & Yes \\
\hline 
\end{tabular} 
\end{table*}

This tutorial presents the methodologies of analyzing the interference (and SIR) correlation and the effects of spatial-temporal SIR correlations. For a thorough exposition of the analytical techniques, we demonstrate the step-by-step derivations as well as numerical results\footnote{Note that the analytical methodologies introduced in this tutorial come from the referenced literature. The analytical results are either well-established or direct extensions of the ones derived in the references.}. The main differences between our tutorial
and the state-of-the-art discussed in the previous subsection are summarized in Table~\ref{Tab:difference}\footnote{Herein, success probability refers to the complementary cumulative distribution function (CDF) of the SINR/SIR.}. Herein, Poisson spatial models refer to the PPP 
and binomial point process, while non-Poisson spatial models refer to any point process whose points are not independently distributed. 

The organization of this tutorial and the relations among 
different sections are shown in Fig. \ref{fig:outlines}. We restrict the point processes to the Euclidean spaces $\mathbb{R}$ and $\mathbb{R}^2$. However, the same methodologies can be applied to analyze point processes in higher dimensions without loss of generality. Section~I introduces the role of stochastic geometry analysis and highlights the importance of spatial-temporal correlation characterization. Section II explains the correlation effects of different network factors that may affect the SIR-based network performance. Sections III-VI present exact methodologies to analyze network performance with correlations in node distribution, 
link distance distribution, shadowing, and queueing, respectively. Moreover, Sections VII and VIII present a performance characterization with multiple transmission attempts (i.e., retransmission) and multihop relaying, respectively, for correlated and independent interference. Section IX characterizes the spatial-temporal performance with mobility. Future directions and research challenges are then discussed in Section X followed by the conclusion in Section XI. Additionally, for convenience, we list the abbreviations used in Table \ref{tab:abb}.

\begin{table} 
\begin{center}
\caption{List of abbreviations} 
    \begin{tabular}{| l | p{5cm}  | }
    \hline
    Abbreviation & Description \\ \hline 
    \hline
PPP &  Poisson point process     \\ \hline 
PLP &  Poisson line process     \\ \hline 
BPP & Binomial point process  \\ \hline
MCP & Mat\'{e}rn cluster process \\ \hline
GPP &  Ginibre point process     \\ \hline 
RDP & Relative distance process \\ \hline
SINR &  Signal-to-interference-plus-noise ratio     \\ \hline    
SIR &  Signal-to-interference ratio     \\ \hline 
CSP  & Conditional success probability  \\ \hline
JSP & Joint success probability  \\ \hline
MISR & Mean interference-to-signal ratio \\ \hline
LSU  & Location-specific user \\ \hline 
ASAPPP & Approximate SIR analysis based on the PPP  \\ \hline 
MIMO & Multiple-input and multiple-output \\ \hline 
PDF &  Probability density function \\ \hline
CDF  & Cumulative distribution function \\ \hline
PCF & Pair correlation function \\ \hline
PGFL & Probability generating functional \\ \hline  
MAC &  Medium access control \\ \hline
CSMA & Carrier-sense multiple access \\ \hline
BS &  Base station \\ \hline
DF  & Decode-and-forward \\ \hline
HARQ & Hybrid automatic repeat request   \\ \hline 
QSI & Quasi-static interference  \\ \hline 
FVI  & Fast-varying interference \\ \hline    
IoT   &  Internet of Things \\ \hline    
IoNT & Internet of Nanothings  \\ \hline   
IoST  &     Internet of Space Things \\ \hline
ML & Machine learning \\ \hline
    \end{tabular}
    \label{tab:abb}
\end{center}
\end{table}

{\em Notations}: The notations defined in Table \ref{notation} are used throughout this tutorial. 
 
 \begin{figure*} [htp] 
 \hspace{10mm} 
 \includegraphics[width=0.9 \textwidth]{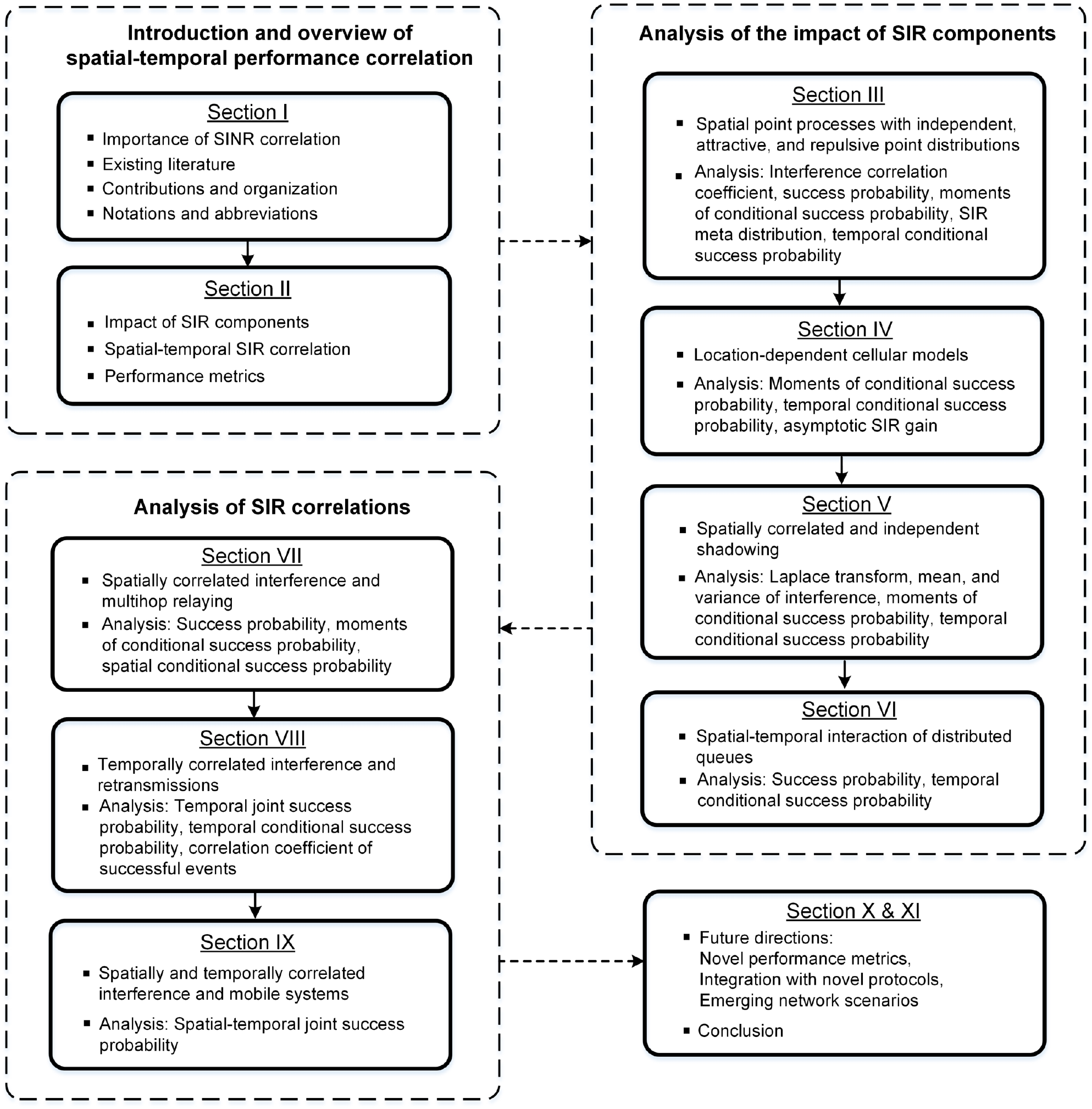} 
 \caption{Organization of this paper. } \label{fig:outlines} 
 \vspace{-3mm} 
 \end{figure*}

\begin{table*} [htp]
\centering
\caption{ Notations}  \label{notation} 
\begin{tabular}{|l|p{12cm}|}
\hline
Symbol & Definition\\ \hline
\hline
$\jmath = \sqrt{-1}$  &  The imaginary
unit \\
$\equiv$  &  Equivalence relation \\
$\mathbb{N}$,   $\mathbb{R}$, $\mathbb{R}^{+}$ &  Natural numbers,  real numbers, and positive real numbers, respectively  \\
$\mathbb{R}^{d}$ & $d$-dimensional Euclidean space \\
$o$ & The origin of $\mathbb{R}^{d}$ \\
$\mathbb{B}(x,R)$ & Disk of radius $R$ centered at $x$ \\
$\Phi$ &  Point process representing the nodes in the network \\
$\Phi^{!}$ & The set of interferers \\
$\Phi(A)$  &  
The number of elements in $\Phi \cap A$
\\
$\triangleq $ &   The definition operator \\
$\overset{(\rm d)}{=}$ & Equivalence in distribution \\
$\mathbbm{1}_{\{\cdot\}}$  & Indicator function which equals 1 and 0 if
the statement $\{\cdot \}$ is true and false, respectively \\
$\mathbb{E}[\cdot]$ & Expectation operator \\
$\mathbb{E}^{!}_{x}[\cdot]$ 
 &  Expectation operator 
 w.r.t. the reduced Palm measure at $x$ \\
$\mathbb{P}[\cdot]$ & Probability measure  \\
 $\mathbb{P}^{!}_{x}[\cdot]$ &  Reduced Palm probability measure at $x$ \\
$\mathbb{V}[\cdot]$ & Variance   \\ 
$|\cdot|$ & Modulus operator \\
$\|\cdot\|$ & Euclidean norm \\  
$\Gamma(\cdot)$ & Gamma function \\
$\gamma(s,x)$ & Lower incomplete gamma function, i.e., $\gamma(s,x)=\int^{x}_{0} u^{s-1} e^{-u} \mathrm{d}u$ \\
$\mathcal{E}(a)$ & Exponential distribution with rate parameter $a$   \\
$\mathcal{P}(a)$ & Poisson distribution with rate parameter $a$   \\
$\mathcal{G}(a,b)$ & Gamma distribution with shape parameter $a$ and scale parameter $b$  \\
$\mathrm{Det}$ &  Determinant operator \\
${_2}F_{1}(.,.;.; z)$  & The
Gauss hypergeometric function
\\
$f_{X}(\cdot)$,  $F_{X}(\cdot)$, $\mathcal{L}_{X}(\cdot)$ & The probability density function (PDF), the cumulative distribution function, and Laplace transform of random variable $X$.  \\
$\mathcal{W}(\cdot)$ &  The Lambert-$\mathcal{W}$ function, i.e., the inverse function of $f(x) = x e^{x}$ \\ \hline
\end{tabular} 
\end{table*}  
\vspace{2mm}

\begin{table*}[htp]
\caption{SIR components}  \vspace{-2mm}
\begin{center}  \label{SINR}   \footnotesize
\begin{tabular}{ | p{2cm} | p{5cm}| p{8cm} | } 
\hline
Component & Representation  & Characterization  \\ 
\hline
\hline
 $\Phi$   &  Spatial distribution of interferers &  Spatial point process models  (in Section III) \\ 
\hline
 $r_{1}=\|x_{1}\|$ & Contact distance distribution   &  Location-dependent analysis of cellular models (in Section IV)  \\ 
\hline
$S_{x}$  & Blockage &  Shadowing models (in Section V) \\ 
\hline
 $\iota_{x}$   &  Buffer status   &   Queueing models (in Section VI)  \\ 
\hline
\end{tabular} 
\end{center}
\end{table*}

\section{Overview of Spatial-Temporal Performance Correlation}

 This section first discusses the impact of SIR components and then elaborates on the SIR correlation effects in wireless networks and defines the performance metrics for spatial-temporal performance analysis.

\subsection{Parameters Impacting SIR Correlation}
 
Let $o$ denote the origin of $\mathbb{R}^{d}$. 
The SIR at a target receiver located at $o$ expressed as  
\begin{align} 
\textup{SIR} =   \frac{ P_{y} h_{y} S_{y} \ell(\| y  \|) }{ \sum_{x \in  \Phi^{!} } \iota_{x} P_{x} h_{x} S_{x} \ell(\| x \|)  },
\end{align} 
where $\Phi^{!}$ denotes the set of all the interfering transmitters, $y$ the location of the transmitter associated with the target receiver, $P_{x}$ the transmit power of transmitter at $x$, $h_{x}$ ($S_{x}$) the fading (shadowing) coefficient 
between the transmitter located at $x$ and receiver located at $o$,
$\ell$ the path loss function, and $\iota_{x}$ the state indicator of the transmitter located at $x$ which equals 1 and 0 when the transmitter is on and off, respectively. The physical implications of the SIR components are shown in Table~\ref{SINR}.   

The SIR varies due to the fading and shadowing effects and system factors such as timing-varying traffic loads and positions of transmitters and receivers. 
Their impacts are discussed below. 
  
\subsubsection{Spatial Distribution}

The spatial distribution of the network nodes can be categorized into three types: independent, repulsive, and attractive.

 \begin{figure*}
 \centering
 \includegraphics[width=0.75\textwidth]{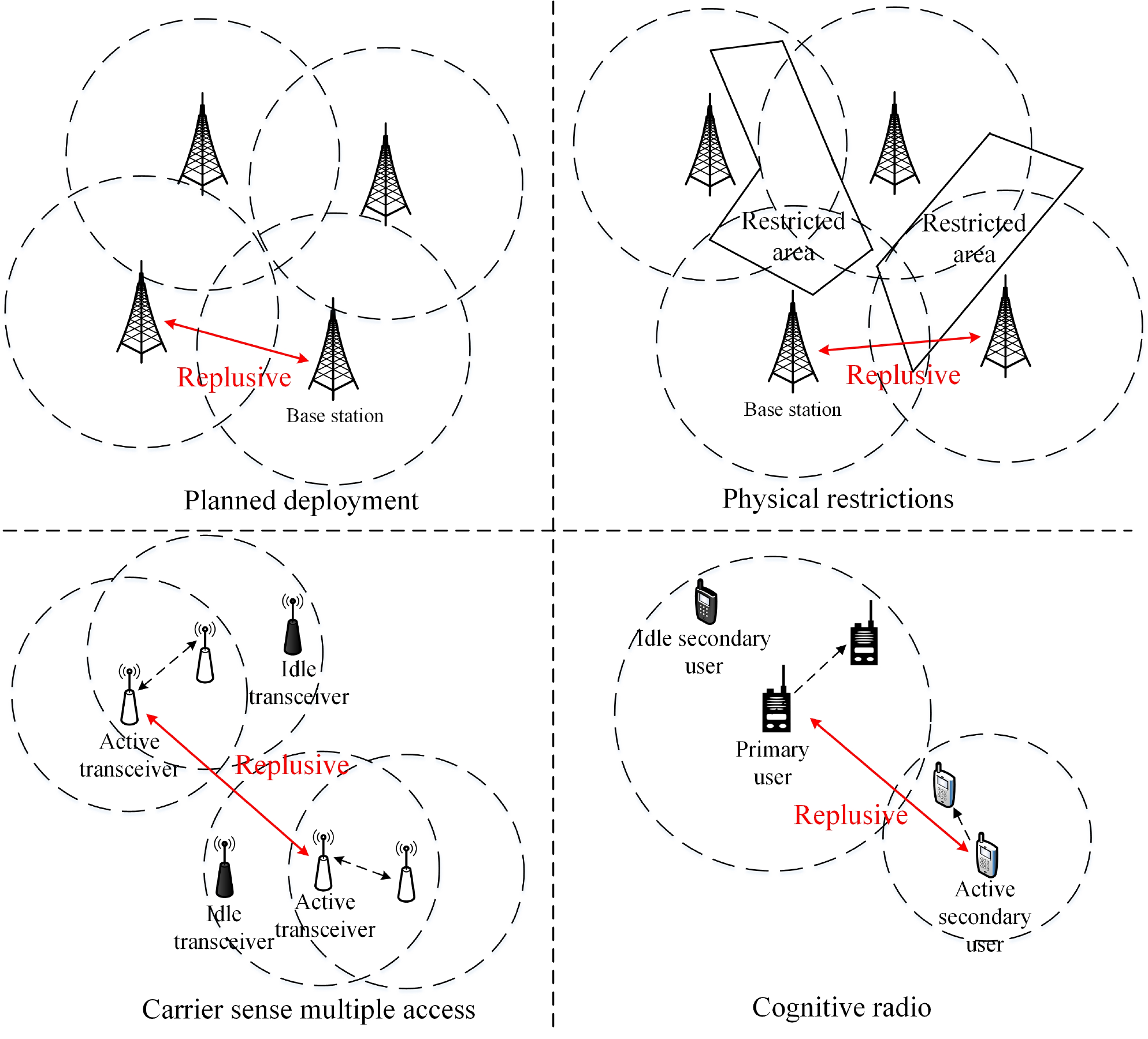} 
 \caption{Examples of wireless systems with spatially repulsive node locations.   } \label{fig:replusion} 
 \end{figure*}

\begin{itemize}

\item With independent distribution, the locations of the transmitters are independent of each other. 
Spatial point process models to characterize independent distributions include PPP   
and binomial point process (BPP) \cite{Afshang2017Mehrnaz} models.

\item With repulsive distribution, the wireless transmitters transmitting simultaneously are not too close to each other. 
The  repulsion  may arise from planned deployment, physical restrictions (e.g., geographic exclusion and terrain occlusion) and channel access control (e.g., carrier sense multiple access (CSMA) \cite{M2011Haenggi,H.2013ElSawy,G.2014Alfano,U.2020Schilcher2} and licensed-user activity \cite{Lee2012C,Z.2016Yazdanshenasan})  as shown in Fig.~\ref{fig:replusion}.
Lattice processes \cite{lattice}, Ginibre point processes \cite{N.Jan.2015Deng,N.Miyoshi2016,X.2016Lu}, and Mat\'{e}rn hardcore point processes \cite{M2011Haenggi,A.2016Al-Hourani} are some well-known examples of point process models to characterize this repulsive behavior.

\item  Attractive distributions can be observed when wireless transmitters are only clustered in certain regions, i.e., not identically distributed over the entire plane, as illustrated in Fig.~\ref{fig:cluster}. This can be caused by the base station (BS)-centric user gathering~\cite{D.2016Mankar}, (e.g., around open-access WiFi spots) or due to user-centric BS deployment \cite{C.2017Saha}. The Mat\'{e}rn cluster process \cite{C2017Saha,Azimi-Abarghouyi2018}, Thomas cluster process~\cite{D.2016Mankar}, 
Gauss-Poisson process~\cite{A2016Guo,N2018Deng},  
Cox process~\cite{Y2013Jeong,Chetlur2018,Jeya2020Pradha}, and the Poisson hole process \cite{Lee2012C,Z.2016Yazdanshenasan}
are some representative point processes to model attractive spatial distributions. 

\end{itemize}

Additionally, a point process can be a mix of the above three types. Whether a point process is of any type may depend on the distance between the locations considered. For example, the type I user point process introduced in \cite{M.2017Haenggi} (for modeling user distribution in cellular networks) is repulsive at short distances, attractive at intermediate distances and eventually approaches independence at larger distances.

 \begin{figure}
 \centering
 \includegraphics[width=0.40\textwidth]{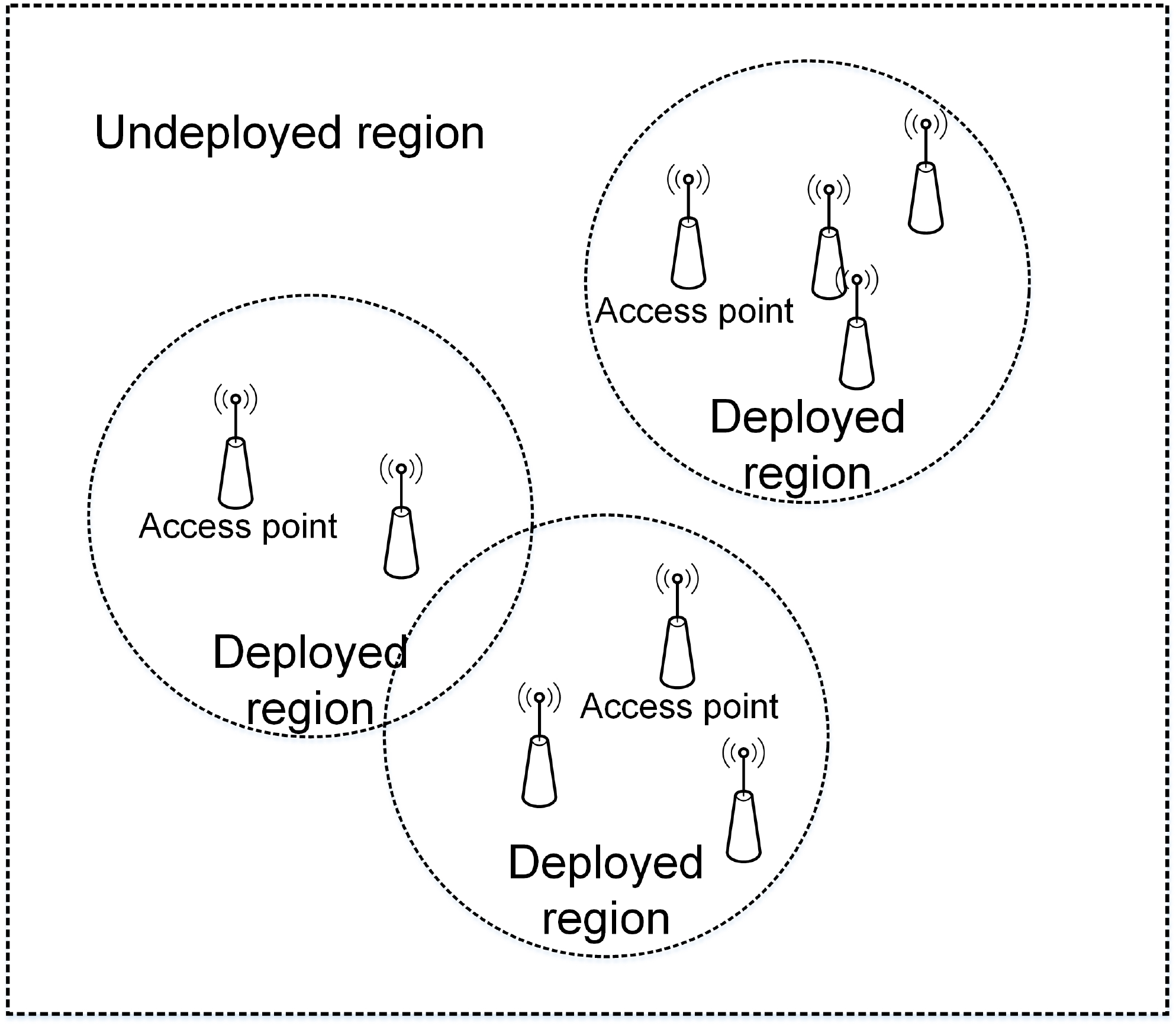} 
 \caption{Illustration of a wireless system with spatially attractive deployment.   } \label{fig:cluster} 
 \end{figure}

For network performance analysis, a significant body of literature adopts the independent distribution assumption for simplicity and tractability (see \cite{ElSawy2013H} and references therein). However, this assumption is too idealized to hold in practice as most of the cellular networks are deployed under system-level planning~\cite{G.2011Andrews}. Due to factors such as geographical restriction, access control and resource allocation, active network nodes can exhibit a spatial pattern 
with a degree of correlation (i.e., repulsion or attraction) which cannot be ignored in performance characterization.

\subsubsection{Locations of Receivers}
In cellular networks, the transmission performance of a mobile user is highly dependent on its location w.r.t. the serving BS and interfering BSs 
\cite{D.2010Gesbert}. 
Specifically, for a cell-edge user, the desired signal is  weaker and the interference signal is stronger compared to those for a cell-center user (Fig. \ref{fig:location_dependent}). 
Different techniques have been developed to strengthen the desirable signals and/or to mitigate the interference of cell-edge users, e.g., through coordinated beamforming \cite{H2017Yang} 
and intercell interference coordination \cite{J.2010Zhang,ICIC}. With these techniques, resource allocation  strongly depends on the spatial variation of mobile users. Therefore, location-dependent modeling of user performance is fundamental to the understanding of spatial-temporal performance of these systems. 
 
 \begin{figure}
 \centering
 \includegraphics[width=0.5\textwidth]{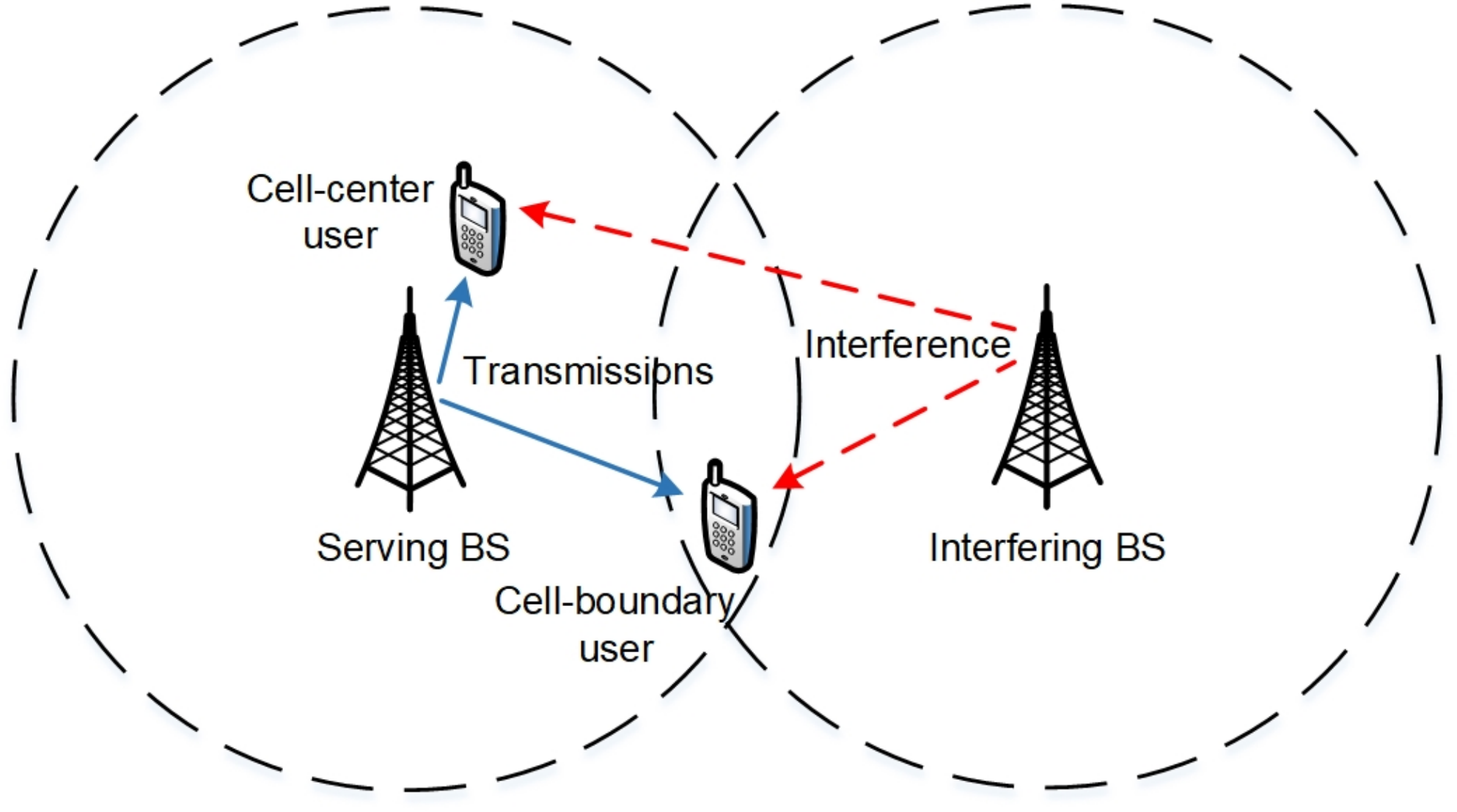} 
 \caption{Illustration of location-specific users.   } \label{fig:location_dependent} 
 \end{figure}

\subsubsection{Small-Scale Fading} 
Owing to small-scale fading, over only a fraction of a wavelength, the signal power variation may reach up to 40 dB \cite{T.2002Rappaport}. The correlation of the signal strength can be both temporal and spatial and is frequency dependent~\cite{A.2005Goldsmith}.
The temporal correlation occurs due to movement of environmental scatters.  
The spatial correlation is caused by correlated multipath components due to the common propagation environment.  
However, small-scale fading is only correlated over a short time duration and a small distance (e.g., several wavelengths) in environments with moving scattering objects that change the multipath propagation. Hence it is reasonable to adopt independent and identically distributed (i.i.d.) small-scale fading models for receiving antennas with wavelength separation. 
This assumption is widely used in the existing literature (see \cite{ElSawy2013H,ElSawy2016Hesham} and references therein).

\subsubsection{Shadowing}

 \begin{figure}
 \centering
 \includegraphics[width=0.47\textwidth]{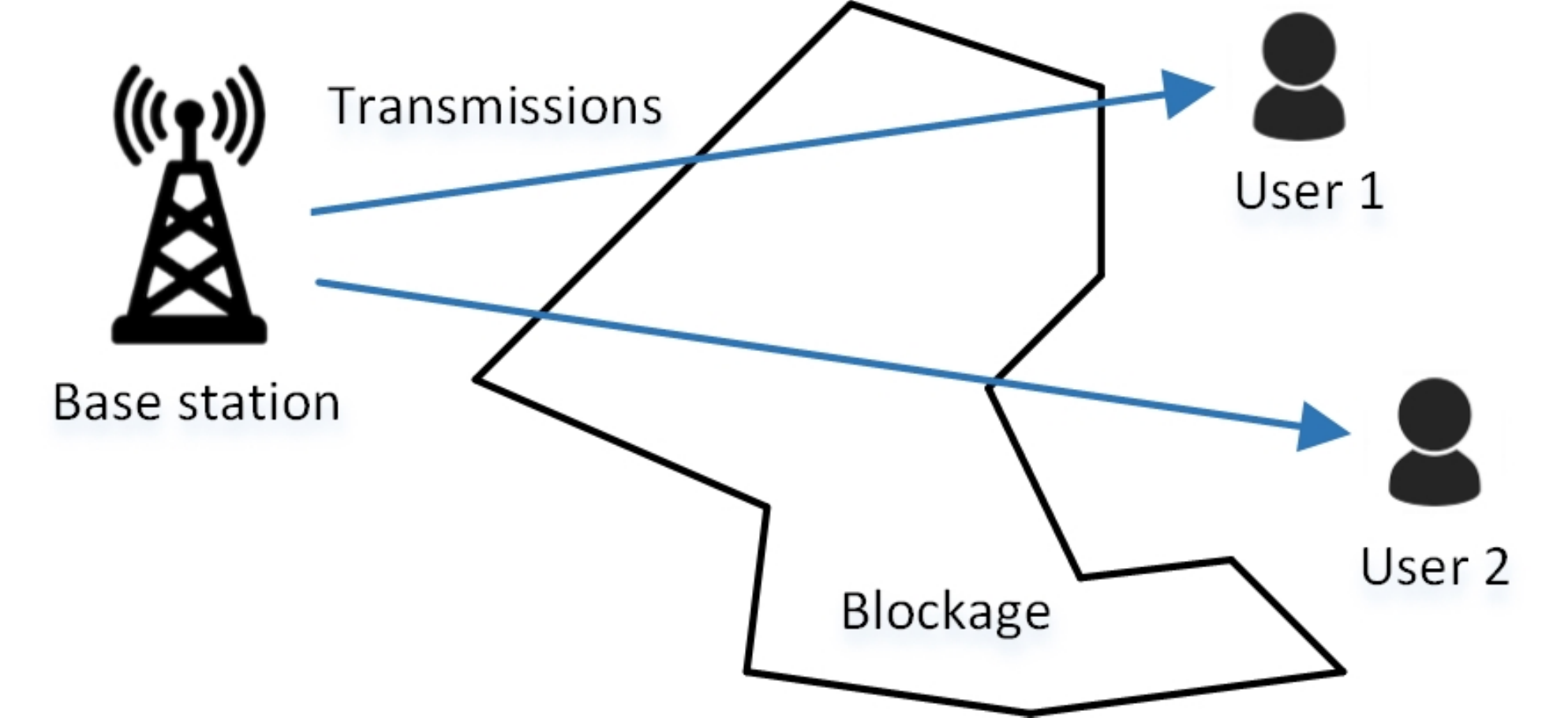} 
 \caption{Illustration of correlated shadowing.   } \label{fig:shadow} 
 \end{figure}

Compared to small-scale fading, shadowing is correlated at a much larger time and space scale \cite{M.1991Gudmundson}.
For example, in an urban environment, shadowing caused by obstacles in communication paths can be geographically correlated on a scale from 50 to 200 meters \cite{A.2005Goldsmith}. Moreover, temporal shadow fades with variations below 1 dB (i.e., highly correlated) over an IEEE 802.11ad channel can be commonly measured in urban streets \cite{L.2018Ahumada}.  
Fig. \ref{fig:shadow} demonstrates  an example of correlated shadowing.  As the two links between the BS and users experience a similar propagation environment, their shadowing attenuations tend to be correlated. As evidenced by experimental studies in \cite{V.1978Graziano} and \cite{Van1987Rees}, path attenuations in a region are correlated due to shadowing effects.  
Therefore, correlated shadowing should be carefully treated when assessing system performance that is heavily impacted by environmental blockages.  

\subsubsection{Transmission Buffer Status}

The buffer status (i.e., queueing status) of each transmitter
depends on the packet arrival and service processes, and it impacts the activation of the transmitters, and therefore, the mutual interference.  
In view of this, the buffer statuses of the transmitters are interdependent, leading to {\em interacting queues}.
The queues are spatially coupled since the mutual interference directly affects the transmission success probability (i.e., service processes of the queues). 
Also, the queues are temporally coupled since the current buffer status is affected by the previous departure process.  
Due to the random nature of channel fading and aggregate interference, the service processes can be very dynamic.

 \begin{figure}
 \centering
 \includegraphics[width=0.44\textwidth]{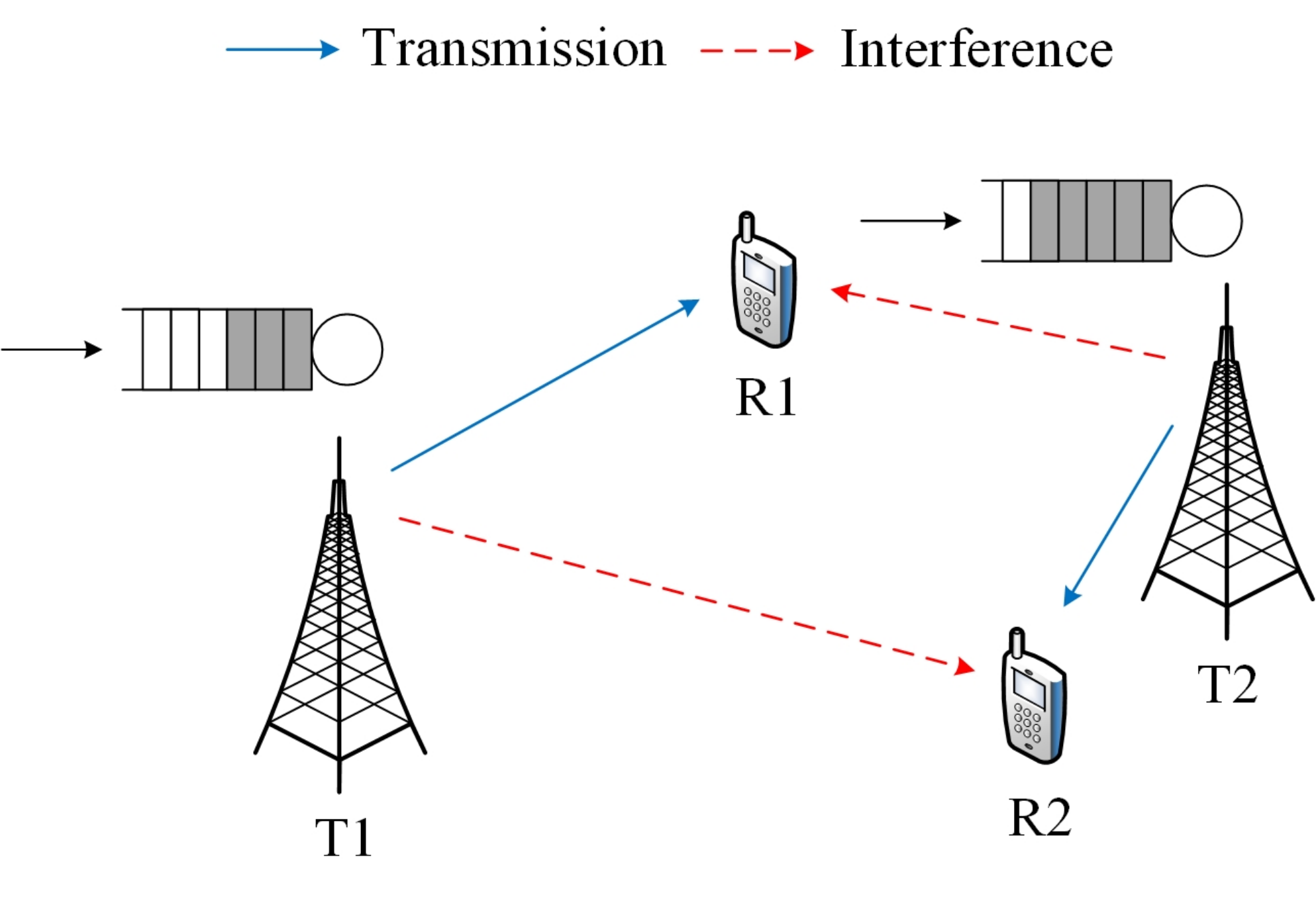} 
 \caption{Illustration of the interacting queues.   } \label{fig:queue} 
 \end{figure}

 Fig.~\ref{fig:queue} depicts an example of the correlation between the queues at two transmitters. Transmitter T1, which serves receiver R1, has a longer transmission link and a shorter interference link compared to those of transmitter T2, which serves receiver R2. 
 On the one hand,   
 the channel disparity between the two communication links results in different packet service rates. Compared to T2, T1 
 suffers from more severe path loss and interference. Correspondingly, given a similar traffic load 
 T1 tends to vacate its queue more slowly and thus remain active more frequently than T2.   
 On the other hand, the queue lengths of T1 and T2 determine the activation of T1 and T2.  
 In particular, if both transmitters are busy, 
 their transmissions will cause mutual interference, which slows down the departure process. 
 If one of the transmitters has an empty queue, the other  enjoys a speedy departure process.  
 Due to the interacting queues, transmissions in a large-scale system  inherently experience spatial and temporal correlation. 

\begin{figure*} 
 \centering
 \includegraphics[width=0.7 \textwidth]{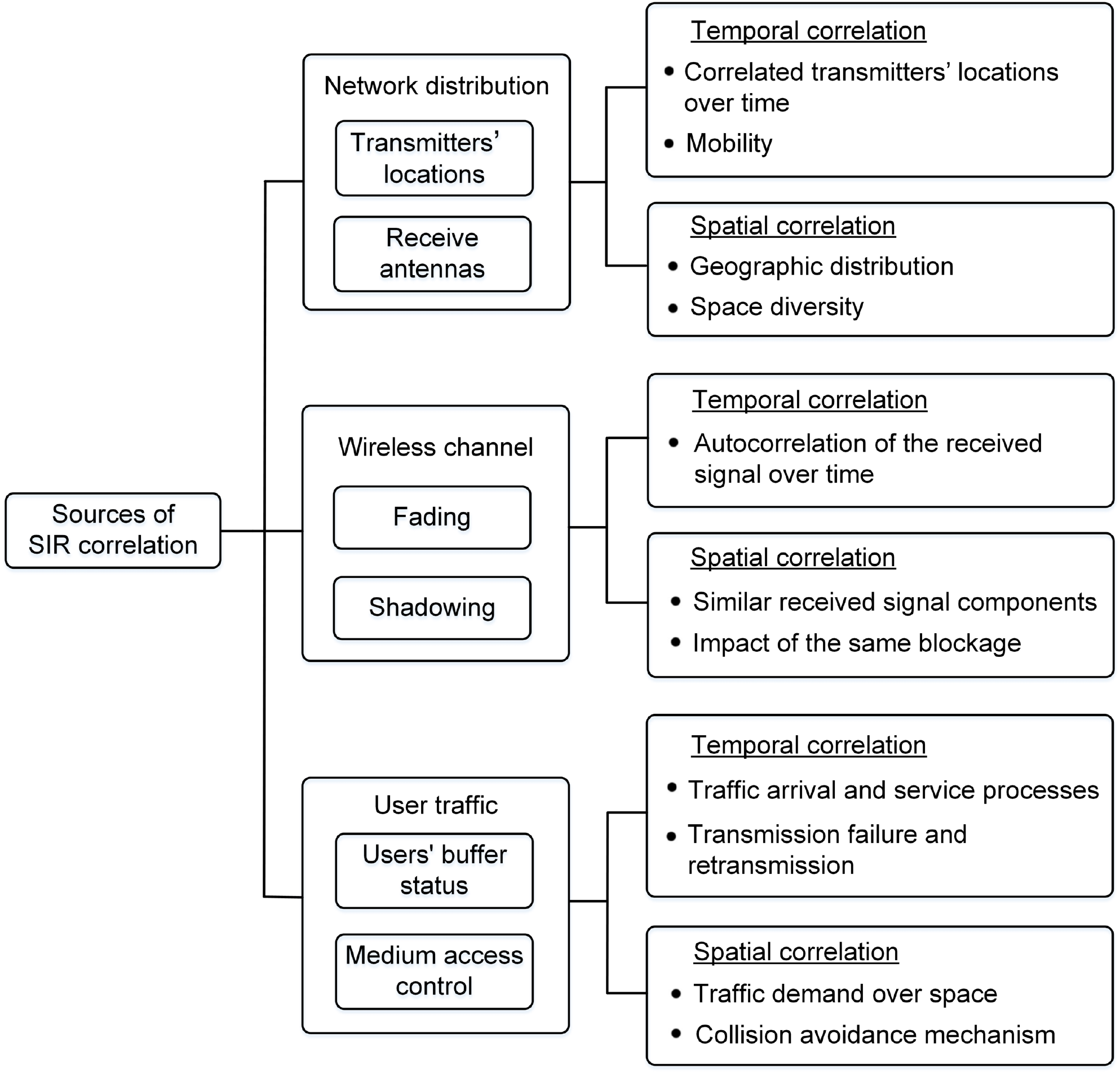} 
 \caption{Sources of SINR correlation.   } \label{fig:correlations} 
 \end{figure*}

Finally, Fig. \ref{fig:correlations} summarizes the sources of SINR correlation and their spatial-temporal impact discussed in this subsection.

\subsection{Spatial-Temporal SIR Correlation and Network Scenarios}

Even over independent fading channels, aggregated interference is spatially correlated when it comes from the same group of interferers. Similarly, interference becomes temporally correlated when common static interferers exist across time. Spatially and temporally-correlated interference causes spatial-temporal SIR correlation.
When different transmission attempts are affected by the same group of static interferers (e.g., the same point process), it gives rise to a {\em quasi-static interference} (QSI) scenario.  By contrast, a {\em fast-varying interference} (FVI) \cite{Z.2009Win} scenario results when different transmission attempts are affected by different sets of  interferers\footnote{The QSI and FVI assumptions can be used for modeling static and highly-mobile network scenarios, where the interferers across different transmissions remain the same and become completely different, respectively.  
} (e.g., independent point processes). 
As two extreme cases, QSI and FVI, respectively, render the highest and lowest interference correlation caused by the interferer locations.

 
Spatial-temporal SIR correlation affects the performance of many practical systems, where the transmission performance depends on the SIR measured over different space and time spots. Typical examples of these systems include the following:

\begin{itemize}
\item {\em Multihop relaying}: The end-to-end performance of a multihop relaying system depends on the  SIRs at the receivers at different hops. The SIRs at different hops can be correlated due to spatial-temporal interference correlation (e.g., common interferers exist from one transmission to another).

\item {\em Retransmission}: 
The reliability of a retransmission scheme or a multi-packet transmission scheme depends on the SIR at the receiver during multiple transmissions. The SIRs from different transmissions can are correlated if they are subject to the interference from the same group of transmitters. 
   
\item {\em Mobile networks}:  
For a mobile user, transmissions could occur at different spatial locations and times.
The correlation between the SIRs at the receiver for different transmissions are dependent on the mobility.  

\end{itemize}

The SIR correlation affects the performance metrics such as temporal joint success probability (JSP)~\cite{Krishnan2017}, end-to-end success probability~\cite{A.Oct.2015Crismani},  the local delay \cite{M.2013Haenggib}, and joint SIR meta distribution~\cite{Yu2019Xinlei}. 

\subsection{Metrics for Spatial-Temporal Performance Analysis} \label{sec:PM} 

\vspace{0.1cm}
\begin{definition}(Success Probability):
   The (average) success probability of a user 
   can be defined as the probability that the user's received SIR is greater than a threshold denoted by $\theta$~\cite{G.2011Andrews}, i.e., 
   \begin{align} \label{def:SP}
   P_{\rm s} = \bar{F}_{\mathrm{SIR}} (\theta) \triangleq  \mathbb{P} [ \mathrm{SIR} > \theta ], \quad \theta \in \mathbb{R}^{+}.
   \end{align}
   where $\bar{F}_{\mathrm{SIR}}$ represents the complementary CDF of SIR.  
\end{definition}

 \vspace{0.1cm} 

\begin{definition}(Moments of Conditional Success Probability given the Point Process): 
Let $P_{\rm s \mid \Phi} (\theta) = \mathbb{P} \big[  \mathrm{SIR}  > \theta \mid \Phi 
\big]$ represent the CSP given the point process $\Phi$  (abbreviated as CSP$_\Phi$), 
which is averaged over the fading of all the links and the random channel access (if applicable). 
The $b$-th moment of $P_{\rm s|\Phi}$ is defined as \cite{M.Apr.2016Haenggi}
\begin{align}
\mathcal{M}_{P_{\rm s}}(b) \triangleq  \mathbb{E}  \Big[ \big(  P_{\rm s \mid \Phi} (\theta)
\big)^b \Big] , \quad  b \in \mathbb{C}.
\end{align} 
\end{definition}

Note that the first moment of the CSP$_\Phi$ is the average success probability defined in (\ref{def:SP}), i.e., $\mathcal{M}_{P_{\rm s}}(1) \equiv P_{\rm s}$, and 
the variance is $\mathbb{V}(P_{\rm s}) = \mathcal{M}_{P_{\rm s}} (2) -  \mathcal{M}^2_{P_{\rm s}}(1)$. Besides, the mean local delay, defined as the average number of transmission attempts to accomplish a success~\cite{Baccelli2010}, is given by $\mathcal{M}_{P_{\rm s}}(-1)$. 
 It is worth mentioning that for static random networks and $b \in \mathbb{N}$, 
the $b$-th moment of the CSP$_{\Phi}$ 
is equivalent to the JSP of $b$ transmissions of the same link~\cite{M.2013Haenggia} and JSP that the $b$ antennas of a multiple-antenna receiver all succeed in reception (i.e., SIR exceeds $\theta$) \cite{M.2012Haenggi}.  

 \vspace{0.1cm} 
\begin{definition}(SIR Meta Distribution):
The SIR meta distribution is the complementary CDF of the CSP$_\Phi$, defined as~\cite{M.Part1Haenggi,M.Apr.2016Haenggi}
\begin{align}
\bar{F}_{P_{\rm s}} (\theta,s) \triangleq \mathbb{P}  \big[  P_{\rm s \mid \Phi} (\theta)
  > s \big] , \,\,\,  s \in [0, 1] , 
\end{align}
where $s$ represents the target success probability. 
 \end{definition} 
    
In ergodic point processes, the SIR meta distribution 
indicates the fraction of links that can achieve successful transmission with probabilities 
greater than $x$ in any realization of $\Phi$.
Compared to the average success probability defined in~(\ref{def:SP}), the SIR meta distribution also characterizes the variability of link success probabilities.  
For example, the inverse function of the SIR meta distribution $\bar{F}^{-1}_{P_{\rm s}} (p)$, $p \in [0,1]$, yields the success probability $x$ that a fraction $1-p$ of the links achieve while the rest do not.

\noindent
{\em Example:} With $\theta=1$ and $s=0.9$, $\bar{F}_{P_{\rm s}} (1,0.9)$ represents the fraction of links/transmissions that achieve a receive SIR greater than 1 with probability at least 0.9. 

\vspace{0.1cm}
 \begin{definition}(SIR Gain):  
The SIR gain is the horizontal gap between the complementary CDF of two SIR distributions.  
Evaluated at the target success probability  $P_{\rm t}$, the SIR gain is defined as \cite[Eq.  (1)]{M.Dec.2014Haenggi}
\begin{align}  \label{def:SIR_gain}
G( P_{\rm t})  \triangleq  \frac{ \bar{F}^{-1}_{ \mathrm{SIR}_{\rm tm} } ( P_{\rm t}) }{  \bar{F}^{-1}_{ \mathrm{SIR}_{\rm rm} } ( P_{\rm t})  }, \quad  P_{\rm t} \in (0,1), 
\end{align}
where $\bar{F}^{-1}_{\mathrm{SIR}}$ represents the inverse function of the complementary CDF of the SIR, and $\mathrm{SIR}_{\rm tm}$ and $\mathrm{SIR}_{\rm rm}$ denote the SIRs of a target model and a reference model, respectively. 
\end{definition}

The SIR gain can be used to quantify the impact of a target model on the SIR distribution w.r.t. that of a reference model. An SIR gain greater or less than 1 indicates that the target model can achieve the same success probability with a larger or smaller SIR threshold, respectively, than the reference model. 

In wireless networks, the SIR gain is usually not sensitive to the target success probability to be evaluated \cite{M.Dec.2014Haenggi,K.Mar.2016Ganti}. Therefore, the SIR gain can be approximated by the asymptotic SIR gain evaluated in the  high-reliability regime, i.e., $P_{\rm t} \to 1$ or $\theta \to 0$, defined as
\begin{align}
G_{0} \triangleq 
\lim_{ P_{\rm t} \to 1} G( P_{\rm t}),
\end{align}
whenever the limit exists.

Note that a necessary and sufficient condition for the asymptotic SIR gain to exist is that the slopes of the two CDFs of the SIR are asymptotically the same as $\theta \to 0$ \cite{M.Dec.2014Haenggi}.

\vspace{0.1cm}
\begin{definition} (Joint Success Probability): 
Let $\mathbf{x} =\{ x_{k}\}_{k \in \{1,\ldots,K\}} \in (\mathbb{R}^{d})^K$ and $\mathbf{t} =\{ t_{k}\}_{k \in \{1,\ldots,K\}} \in \mathbb{N}^K$ denote deterministic vectors of locations and time instances, respectively, and
$\textup{SIR}_{k}$ denote the SIR measured at location $x_{k}$ and time $t_{k}$.
The JSP is defined as the probability that $\textup{SIR}_{i}$ is greater than the corresponding SIR threshold $\theta_{k}$ for all $k \in \{1,\ldots,K\}$.
Given the locations $\mathbf{x}$, times $\mathbf{t}$, and the target SIR thresholds  $\mbox{\boldmath $\theta$ }=\{\theta_{k}\}_{k \in \{1,2,\ldots,K\}}   \in (\mathbb{R}^{+})^K$, the JSP is defined as
\begin{align}  \label{def:JCP}
& \mathcal{J}_{K} ( \mbox{\boldmath $\theta$},\mathbf{x},\mathbf{t}) \nonumber \\ &\triangleq \mathbb{P} [  \mathrm{SIR}_{1}> \theta_{1}, \mathrm{SIR}_{2}> \theta_{2}, \ldots,  \mathrm{SIR}_{K}> \theta_{K} ],  
\end{align} 
where $\mathrm{SIR}_{k}$ denotes the SIR of the $k$-th transmission. 
\end{definition}
The SIR corresponding to the different transmissions can exhibit spatial and/or temporal correlation, the effect of which will be reflected in the JSP. 
The JSP can be for temporal, spatial, or spatial-temporal transmission events. 
In particular, we have
\begin{itemize} 

\item  {\em Temporal JSP} of multiple transmission attempts occurring at the same location but different time instances (e.g., multiple transmissions~\cite{M.2013Haenggia} and retransmissions~\cite{G.2015Nigam} by a transmitter);

\item {\em Spatial JSP} of multiple transmission attempts occurring at the same time but different locations (e.g., joint uplink and downlink transmissions \cite{Singh2015Sarabjot,A.2017Kishk});

\item {\em Spatial-temporal JSP} of multiple transmission events occurring at different space and time intervals (e.g., for the same mobile user \cite{Krishnan2017}).  

\end{itemize}

For a multihop communication scenario, the probability that the transmissions at each hop are successful for a given packet is referred to as the {\em end-to-end success probability}.

\vspace{0.1cm} 
\begin{definition} (Conditional Success Probability): 
The CSP is the probability of achieving a successful transmission given that $K-1$ ($K\geq 2$) such events have occurred. Given the locations $\mathbf{x}$, times $\mathbf{t}$, and the target SIR thresholds $\mbox{\boldmath$\theta$}$, the CSP is given by
\begin{align}  \label{def:CSP} 
\mathcal{C}_{K}(\mbox{\boldmath$\theta$},\mathbf{x},\mathbf{t}) = \frac{\mathcal{J}_{K} (\mbox{\boldmath$\theta$},\mathbf{x},\mathbf{t})}{ \mathcal{J}_{K-1} (\mbox{\boldmath$\theta$},\mathbf{x},\mathbf{t}) },
\end{align}
where $\mathcal{J}_{K} (\mbox{\boldmath$\theta$},\mathbf{x},\mathbf{t})$ is defined in (\ref{def:JCP}). 
\end{definition}

The CSP reveals the dependence between two successful transmission events. If the two events are positively correlated (spatially or temporally), one has $\mathcal{C}_{2}(\mbox{\boldmath$\theta$},\mathbf{x},\mathbf{t}) >  \mathcal{J}_{1} (\mbox{\boldmath$\theta$},\mathbf{x},\mathbf{t})$~\cite{M.2013Haenggia}. 
Moreover, if the events are independent, one has $\mathcal{C}_{2}(\mbox{\boldmath$\theta$},\mathbf{x},\mathbf{t}) =  \mathcal{J}_{1} (\mbox{\boldmath$\theta$},\mathbf{x},\mathbf{t}) = \sqrt{\mathcal{J}_{2} (\mbox{\boldmath$\theta$},\mathbf{x},\mathbf{t})}$. Similar to the definition of the JSP, the CSP of spatial events, temporal events, and spatial-temporal events are referred to as {\em spatial CSP}, {\em temporal CSP}, and {\em spatial-temporal CSP}, respectively.

\vspace{0.1cm} 

\begin{definition} (Product SIR Meta Distribution): 
Let $\mathcal{J}_{K|\Phi} (\mbox{\boldmath$\theta$},\mathbf{x}, \mathbf{t}) \triangleq \mathbb{P} [  \mathrm{SIR}_{1}> \theta_{1},  \ldots,  \mathrm{SIR}_{K}> \theta_{K} \mid \Phi ]$ denote the JSP of $K$ transmissions given the point process. 
The product SIR meta distribution is defined as the complementary CDF of the JSP at locations $\mathbf{x}$ and times $\mathbf{t}$ given the point process $\Phi$ \cite[Eq. (11)]{Yu2019Xinlei}, i.e.,
\begin{align}
\bar{F}_{\mathcal{J}_{K}} \!(\mbox{\boldmath$\theta$}, \mathbf{x},  \mathbf{t}, s  ) \triangleq  \mathbb{P} \Big[ \mathcal{J}_{K|\Phi} (\mbox{\boldmath$\theta$},\mathbf{x}, \mathbf{t}) > s \Big], \, s \!\in\! [0,1].
\end{align}  
\end{definition}

\noindent
{\em Example:} Let $K=2$,  ${\bm \theta}=\{\theta_{1},\theta_{2}\}$, $\mathbf{x}=\{x_{1},x_{2}\}$, $\mathbf{t}=\{t_{1}, t_{2} \}$. Considering a stationary and ergodic point process $\Psi=\{u_{j}\}_{j \in \mathbb{N}}$ representing the users, the product SIR meta distribution measures the fraction of users
for which the joint probability that the SIR at their location $u$ at time $t_{1}$ exceeds $\theta_{1}$ and the SIR at location $u + (x_{2}-x_{1})$ at time $t_{2}$ exceeds $\theta_{2}$ is larger than $s$. 

\vspace{0.1cm} 
\begin{definition} (Joint SIR Meta Distribution): 
 Let $P^{(k)}_{\rm s|\Phi}(\theta_{k})= \mathbb{P} \big[ \textup{SIR}_{k} > \theta_{k} \mid \Phi \big] $ represent the CSP$_{\Phi}$ at $x_{k}$.  
The joint SIR meta distribution is defined as the joint distribution of CSP$_{\Phi}$ at locations $\mathbf{x}$ and times $\mathbf{t}$ \cite[Eq. (10)]{Yu2019Xinlei}, i.e., 
\begin{align}
\bar{F}^{(K)}_{P_{\rm s } } (\mbox{\boldmath$\theta$}, \mathbf{x}, \mathbf{t}, \mathbf{s} ) &  \triangleq \mathbb{P} 
\bigg[  \bigcap^{K}_{k=1}  \big \{ P^{(k)}_{s|\Phi} (\theta_{k}) > s_{k}   \big \} \bigg] ,  
\end{align}
where $\mathbf{s}=\{s_{1},s_{2},\ldots,s_{K}\} \in [0,1]^{K}$ is the vector of target success probabilities corresponding to the locations $\mathbf{x}$. 
 \end{definition}

\noindent
 {\em Example:} Let $K=2$,  $\bm{\theta}=\{\theta_{1},\theta_{2}\}$, $\mathbf{x}=\{x_{1},x_{2}\}$, $\mathbf{t}=\{t_{1}, t_{2} \}$, $\mathbf{s}=\{s_{1},s_{2}\}$. Considering a stationary and ergodic point process $\Psi=\{u_{j}\}_{j \in \mathbb{N}}$  representing the users, the joint SIR meta distribution measures the fraction of users that meet the following conditions: 1) the probability that the SIR at the user location $u$ at time $t_{1}$ exceeds $\theta_{1}$ is larger than $s_{1}$ and 2) the probability that the SIR at location $u+ (x_{2}-x_{1})$ at time $t_{2}$ exceeds $\theta_{2}$ is larger than $s_{2}$. 

\vspace{0.1cm} 
\begin{definition} (Interference Correlation Coefficient): 
Let $I^{(t)}_{x}$ denote the aggregated interference received at location $x$ at time $t$. The correlation degree of interference at two locations and times can be quantified by the {\em Pearson correlation coefficient} defined as \cite[Eq. (2)]{U2011Schilcher} 
\begin{align} \label{def:ICC}
& \zeta_{t_{1},t_{2}} (u_{1},u_{2})  \triangleq \nonumber\\
& \frac{\mathbb{E} \big[ I^{(t_{1})}_{u_{1}}  I^{(t_{2})}_{u_{2}} \big] - \mathbb{E} \big[ I^{(t_{1})}_{u_{1}} \big] \mathbb{E} \big[I^{(t_{2})}_{u_{2}}\big] }{\sqrt{ \mathbb{E} \big[ \big( I^{(t_{1})}_{u_{1}} \big)^{\!2}    \big]\! -\! \mathbb{E} \big[   I^{(t_{1})}_{u_{1}}  \big]^{\!2}   } \sqrt{ \mathbb{E} \big[ \big( I^{(t_{2})}_{u_{2}} \big)^{\!2}    \big]\! - \!\mathbb{E} \big[   I^{(t_{2})}_{u_{2}}    \big]^{\!2}  } } ,
\end{align} 
where the numerator computes the covariance of $I^{(t_{1})}_{u_{1}}$ and $I^{(t_{2})}_{u_{2}}$ and the denominator is the product of the standard deviations of $I^{(t_{1})}_{u_{1}}$ and $I^{(t_{2})}_{u_{2}}$.
\end{definition} 
 
The interference correlation coefficient is a statistical measure that quantifies the extent to which the interferences at $u_{1}$ and $u_{2}$ are associated. 
The value of the interference correlation coefficient
ranges between 0 and 1. The coefficients of 0 and 1, respectively, indicate no linear  relationship and full correlation between the interferences at $u_{1}$ and $u_{2}$.

Note that when the interferers are motion-invariant,  
 one has $I^{(t_{1})}_{o} \overset{(d)}{=} I^{(t_{1})}_{u}$.   
In this case, $\zeta (\|u\|)$ can be simplified to  \cite[Eq. (12)]{R.2009Ganti} 
\begin{align}  \label{def:ICC2}
\zeta_{t_{1},t_{2}} (\|u_{1}-u_{2}\|)  \triangleq  \frac{ \mathbb{E} \Big[ I^{(t_{1})}_{u_{1}}  I^{(t_{2})}_{u_{2}} \Big] - \mathbb{E} \Big[   I^{(t_{1})}_{u_{1}}   \Big]^{\!2} }{ \mathbb{E} \Big[ \big( I^{(t_{1})}_{u_{1}} \big)^2    \Big]\! - \! \mathbb{E} \Big[  I^{(t_{1})}_{u_{1}}  \Big]^2    } .
\end{align}

\vspace{0.1cm}
\begin{definition} (Interference Coherence Time):
The interference coherence time is defined as the minimum time lag such that the interference correlation coefficient is below a threshold $\zeta_{\rm th}$ \cite[Eq. (35)]{U.2020Schilcher}, i.e.,
\begin{align}
\tau_{\rm ct} \!\triangleq\! \min \big \{  \tau =  t_{2} \!-\! t_{1} \in \mathbb{N}  \mid \zeta_{t_{1},t_{2}}  \leq   \zeta_{\rm th}  \big \}, \, \zeta_{\rm th} \in \mathbb{R}^{+} .
\end{align}
 \end{definition}

Fig.~\ref{fig:metrics} summarizes the above performance metrics for spatial-temporal performance analysis.

\begin{figure}
 \centering
   \includegraphics[width=0.35  \textwidth]{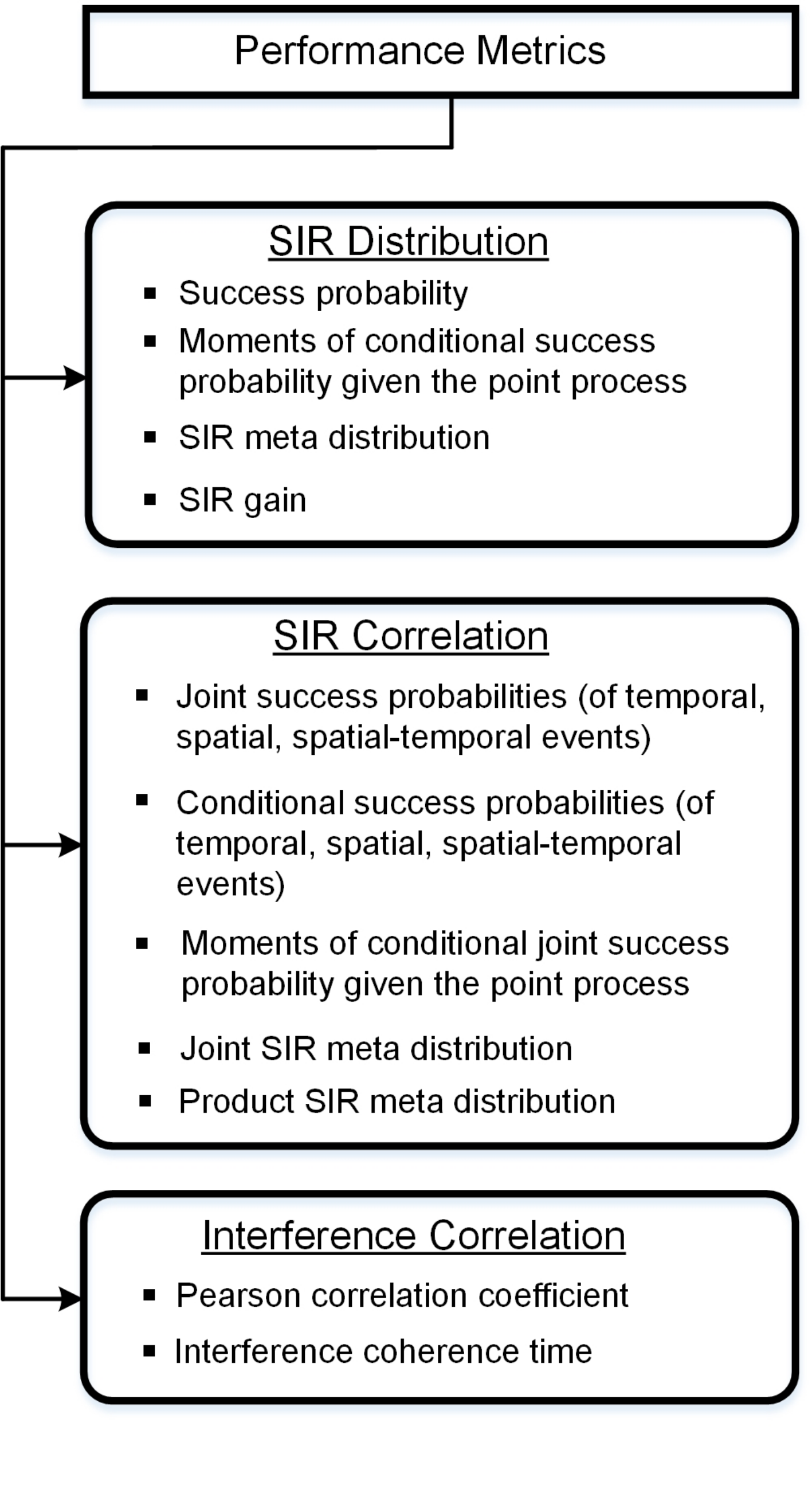} 
 \caption{Metrics for spatial-temporal performance analysis. } \label{fig:metrics} 
 \end{figure}

\section{Spatial Point Process Models} \label{sec:spatial}

This section formally defines some common point processes, also referred to as random point fields, and illustrates how the spatial distribution of the random points affects the network performance. To evaluate the impacts of independent, attractive and repulsive spatial configurations, we choose three representative point processes, namely, the Poisson point process (PPP), the Mat\'{e}rn cluster process (MCP), and the $\beta$-Ginibre point process (GPP), due to their tractability.  

\begin{itemize}
\item  In a PPP, each point is located independently from the others. 

\item  In an MCP, the locations of the random points have a propensity to be clustering.

\item  In a $\beta$-GPP, the random points are scattered with repulsion.

\end{itemize}

This section considers both ad hoc and downlink cellular networks modeled based on the above point processes and analyzes the interference correlation coefficient, success probability, SIR meta distribution and SIR gain. 
  
\subsection{System Models} \label{sec:sectionIII_SM}

\subsubsection{System Configurations}

We focus on analyzing the performance of a target in both ad hoc and downlink cellular networks. 
The target receiver is considered to be at the origin $o \in \mathbb{R}^2$ and attempts to decode the transmitted signals from the associated transmitter subject to the aggregate interference. 

\begin{itemize}
 
\item {\em Ad hoc networks}: The target receiver at $o$ is served by the transmitter located at $x_{\rm t} \in \mathbb{R}^2$, where $\|x_{\rm t}\|=r_{\rm t}$. The target link is impaired by the interference from a random field of interferers 
modeled by a stationary point process $\Phi$ of intensity $\lambda$.  $\Phi$ does not contain the serving transmitter at $x_{\rm t}$.

\item {\em Downlink cellular networks}: The transmitters (i.e., BSs) form a stationary point process $\Phi$ with intensity $\lambda$. The users are located according to a stationary point process independent of the BS process.  
Each user is served by the nearest BS in $\Phi$. 
 
\end{itemize}

In the networks considered, the transmitters stay active with unit transmit power, i.e.,  there is no MAC scheme.
All the transmitters and receivers are equipped with one antenna. 
The system employs universal frequency reuse.  
The channels of the links 
experience i.i.d. 
block Rayleigh fading and power-law path loss, 
i.e., 
$d^{-\alpha}$, where $d$ and $\alpha > 2$ are the link distance and the path-loss exponent, respectively. 
 
For the convenience of notation, the points in $\Phi$ are assumed to be ordered from nearest to farthest to the origin, i.e., $\|x_{j}\| < \|x_{j+1}\|$.  
 Subsequently,  
 the set of interferers in the ad hoc networks and downlink cellular networks is $\Phi^{!}=\Phi=\{x_{j}\}_{j \in \mathbb{N}}$ and $\Phi^{!}=\Phi \backslash \{x_{1}\}$, respectively.  The distance from the $j$-th nearest point in $\Phi$ to the origin is denoted as $r_{j}=\|x_{j}\|$.
 
The SIR at the target receiver in an ad hoc and downlink cellular network is given by, respectively, as
\begin{align}  \label{eqn:eta_adhoc}
\eta = \frac{ h_{\rm t}  \|x_{\rm t}\|^{- \alpha}  }{  \sum_{ j \in \mathbb{N} } h_{j}  \|x_{j}\|^{-\alpha} 
} 
\end{align}
and
\begin{align}  \label{eqn:eta_downlink}
\eta = \frac{ h_{1}  \|x_{1}\|^{- \alpha} }{  \sum^{\infty}_{ j =2 } h_{j} \|x_{j}\|^{-\alpha  }},
\end{align}  
where $h_{\rm t}$ and $h_{j}$ denote the power fading coefficients from the transmitters at $x_{\rm t}$,  $x_{j}$ 
to the target receiver, which are exponential random variables with unit mean, i.e., $h_{\rm t}, h_{j} \sim \mathcal{E}(1)$. It is noted that (\ref{eqn:eta_adhoc}) represents the SIR of a specific receiver instead of the typical receiver in ad hoc networks, while (\ref{eqn:eta_downlink}) is the SIR of a user at the origin that, upon averaging over the base station and user point processes, becomes the typical user in the downlink cellular network, for any stationary point process of users that is independent of the base station point process.

\subsubsection {Spatial Configurations}
 
We consider the homogeneous PPP, MCP, and $\beta$-GPP, formally defined as follows.

\begin{definition} (Homogeneous Poisson point process):  
A  point process $\Phi = \{ x_{j} \}_{j \in \mathbb{N}} \subset \mathbb{R}^{d} $ is 
a homogenous PPP if it satisfies two conditions: i) for any compact set $\mathcal{B} \subset \mathbb{R}^{d}$, the number of points inside it follows a Poisson distribution with average value being $\lambda|\mathcal{B}|$; and ii) disjoint sets are independent in terms of the number of points inside them.
\end{definition}

\vspace{0.2cm}
{\em Properties:}
\vspace{0.2cm}

1) {\bf Density}: For a homogeneous PPP, the first moment density  (or first-order density) and second moment density (or second-order product density) are given, respectively, as~\cite{N.2013Chiu}
\begin{align} \label{eqn:rho1_PPP}
\rho^{(1)}(x) = \lambda, 
\end{align} 
and
\begin{align} \label{eqn:rho2_PPP}
\rho^{(2)}(x,y) = \big(\rho^{(1)}(x) \big)^2=\lambda^2.
\end{align}
The $n$-th moment density is the density pertaining to the $n$-th order factorial moment measure, which, for $n>1$,  indicate the spatial correlation.  
For example, $\rho^{(1)}(x) \mathrm{d}x$ measures the probability of having a point at $x$ in some infinitesimal region $\mathrm{d}x$.

\vspace{0.2cm}
2) {\bf Contact Distance Distribution}: Contact distance refers to the distance between a reference location and the nearest point in $\Phi$~\cite{M.2017Afshang}. 
 The probability density function (PDF) and CDF of the contact distance in a homogeneous PPP are given, respectively, as~\cite{M.Haenggi2005}
\begin{align} \label{eqn:PDF_CD_PPP}
f_{r_{1}}(r)= 2 \pi \lambda \exp (-\lambda \pi r^2) 
\end{align}
and
\begin{align}
F_{r_{1}}(r)= \exp (-\lambda \pi r^2).
\end{align}

3) {\bf Joint Distance Distribution}:
The joint PDF of the distances to $n$ nearest points is~\cite[Eq. (30)]{D.2012Moltchanov}
\begin{align}
  f_{r_{1}, r_{2}, \ldots, r_{n}} (x_{1}, x_{2}, \ldots\!, x_{n})  \! =\! e^{- \lambda \pi x^2_{n}} (2 \lambda \pi )^{n} x_{1} x_{2} \ldots x_{n}.
\end{align}

4) {\bf Distance Ratio Distribution}:
Let $\varrho_{j}=\frac{r_{1}}{r_{j}}, j \in \mathbb{N}$, denote the distance ratio of the nearest point to the $j$-th nearest point. The CDF and PDF of $\varrho_{j}$
are given, respectively, by~\cite[Lemma 3]{X.Dec.2014Zhang}
\begin{align} \label{eqn:CDF_ratio}
 F_{\varrho_{j}} (\varrho) = 1 - (1-\varrho^2)^{j-1}, \hspace{10mm} \varrho \in [0,1]    
\end{align}
and
\begin{align} \label{eqn:PDF_ratio}
 f_{\varrho_{j}} (\varrho) =  2 (j-1) \varrho (1 - \varrho^2)^{j-2} , \hspace{2.5mm} \varrho \in [0,1]. 
\end{align}
 
5) {\bf Probability Generating Functional (Product Functional)}: 
For $\upsilon(x) \in [0,1]$ and $\int_{\mathbb{R}^2} (1-\upsilon(x)) \mathrm{d} x < \infty $,  the probability generating functional (PGFL) for the PPP is given by \cite[Eq. (4.8)]{M.2013Haenggic}
\begin{align}  
\mathbb{E} \bigg[ \prod_{j \in \mathbb{N} }  \upsilon(x_{j}) \bigg] = \exp \bigg (  - \lambda \int_{\mathbb{R}^d} \big ( 1 -  \upsilon(x )    \big ) \mathrm{d} x  \bigg ). \label{eqn:PPP_PGFL}  
\end{align}   

According to Slivnyak's theorem \cite[Theorem  8.10]{M.2013Haenggic}, the reduced Palm distribution of the PPP is the same as its ordinary distribution. Therefore, the conditional PGFL for the PPP can also be expressed as~(\ref{eqn:PPP_PGFL}).

Let  $\Phi^{\rm R} = \big \{ x \in \Phi \backslash \{ x_{1} \} :   \|x_{1}\| /\|x \|    \big\}    \subset (0,1)$  denote the {\em relative distance process} (RDP) of a PPP $\Phi$.  The PGFL of  $\Phi^{\rm R}$ is given by \cite[Lemma 1]{K.Mar.2016Ganti}
\begin{align} \label{eqn:PGFL_RDP}
\mathbb{E} \bigg[ \prod_{  \varrho \in \Phi^{\rm R} }  \upsilon ( \varrho )    \bigg] = \frac{1}{ 1 + 2 \int^{1}_{0} \big(1 - \upsilon(r) \big) r^{-3} \mathrm{d} r } , 
\end{align}
for functions $\upsilon(r) \in [0,1]$ such that $\int_{\mathbb{R}^+} (1-\upsilon(r))r^{-3} \mathrm{d}r$ is finite.

6) {\bf Sum Functional}: For any measurable function $g \geq 0$ on $\mathbb{R}^{d}$, the sum functional for a stationary point process is given by Campbell's Theorem~\cite[Theorem 4.1]{M.2013Haenggic} as
\begin{align}
\mathbb{E} \bigg[ \sum_{   j \in \mathbb{N} }  g ( x_{j} )    \bigg] = \lambda \int_{\mathbb{R}^{d}} g(x) \mathrm{d} x .
\end{align} 

7) {\bf Sum-Product Functional}: 
Let $\bm{p} \triangleq (p_{1},p_{2},\ldots,p_{q}) \in \mathbb{N}^{q}$, $\|\bm{p}\|_{1} \triangleq \sum^{q}_{k=1} p_{k}$ and $\{ M_{j } \}_{j  \in \mathbb{N} }$ denote a set of i.i.d. random marks associated with the points in $\Phi=\{x_{j}\}_{j\in\mathbb{N}}$.  
If $g_{k}(x) \in [0,\infty), 1 \leq k \leq q$,  and $\upsilon(x) \in [0,1]$ are measurable functions for all $x \in \mathbb{R}^d$, the sum-product functional for the PPP with $\|\bm{p}\|_{1}>0$ is given by \cite[Theorem 1]{U.2016Schilcher}
\begin{align} 
& \mathbb{E} \Bigg[ \prod^{q}_{k=1} \bigg( \sum_{j  \in \mathbb{N}} g(x_{j},M_{j})  \bigg)^{p_{i}} \prod_{j  \in \mathbb{N}} \upsilon(x_{j} ,M_{j})   \Bigg]  \nonumber \\
& = \exp \bigg( - \lambda \int_{\mathrm{R}^d}  \Big( 1 - \mathbb{E}_{M_{j}} \Big[ \upsilon(x,M_{j})  \Big] \Big)  \mathrm{d} x  \bigg) \nonumber \\
& \hspace{5mm}  \times \sum^{\|\bm{p}\|_{1}}_{l=1} \sum_{N \in \mathcal{N}^{p}_{l} } \frac{D_{N}}{l!} \prod^{l}_{k=1} \lambda \int_{\mathbb{R}^d} \mathbb{E}_{\bm{M}} \bigg[ \upsilon(x,M_{j}) \nonumber \\
& \hspace{5mm}  \times  \prod^{q}_{i=1} f^{n_{ki}}_{i} (x,M_{j})  \bigg] \mathrm{d} x,
\end{align} 
where $D_{N} \triangleq \prod^{q}_{k=1} \frac{p_{k}!}{\prod^{l}_{i=1} m_{ik} ! } $ and $\mathcal{N}^{\bm{p}}_{l} \subset \mathbb{N}^{q \times l}$ is the class of all $q \times l$ matrices with the columns $\|n_{\cdot i}\|_{1} > 0$, $\forall i \in \{1,2,\ldots, l\}$ and the rows $\| n_{ k \cdot } \|_{1} = p_{k}$,  $\forall k \in \{1,2\dots,q\}$.

In the special case $q=1$ and $p=1$, the sum-product functional for the PPP is given by the Campbell-Mecke Theorem~\cite[Theorem 8.2]{M.2013Haenggic} as
\begin{align} \label{eqn:Campbell-Mecke}
& \mathbb{E} \Bigg[ \sum_{j  \in \mathbb{N}} g(x_{j},M_{j}) \prod_{j  \in \mathbb{N}} \upsilon(x_{j},M_{j}) \Bigg]  \nonumber  \\
& = \exp \bigg(  - \lambda \int_{\mathbb{R}^d} \Big( 1 - \mathbb{E}_{M_{j}} \Big[ \upsilon(x,M_{j})  \Big]  \Big)  \mathrm{d}x \bigg)  \nonumber \\
& \hspace{10mm} \times \lambda \int_{\mathbb{R}^d} \mathbb{E}_{M_{j}} \Big[ g(x,M_{j}) \upsilon(x,M_{j}) \Big] \mathrm{d} x.
 \end{align}

\begin{definition} (Mat\'{e}rn cluster process): 
The MCP $\Phi^{\rm M}$ is a doubly Poisson cluster process 
constructed from a parent PPP $\Phi_{\rm p}=\{x_{j}\}_{ j \in \mathbb{N}}$ with intensity $\lambda_{\rm p}$ with each point of $\Phi_{\rm p}$ substituted by a
daughter cluster consisting of a PPP with an average number of points $\bar{c}$ within a disk of radius $R_{\rm d}$ centered at that point. 
\end{definition}

\vspace{0.2cm}
{\em Properties:}
\vspace{0.2cm}

1) {\bf Density}: For an MCP, the first moment and second moment densities are given, respectively, by  \cite{M.2013Haenggic} 
\begin{align}  
& \rho^{(1)}(x)=\lambda = \lambda_{\rm p} \bar{c},  \label{eqn:rho1_MCP} \\
& \rho^{(2)}(x,y) =  \lambda^2_{\rm p} \bar{c}^2 +  \lambda_{\rm p}  \frac{\bar{c}^2 }{\pi^2 R^4_{\rm d} } A_{R_{\rm d}}(\|x-y\|),  \label{SOD_MCP1}
\end{align}
$A_{R_{\rm d}}$ represents the area of the intersection of two disks with radius $R_{\rm d}$ at a distance $r > 0$ given as \cite[Eq. (6.4)]{M.2013Haenggic} 
\begin{align} \label{eqn:spatial_AR}
A_{R_{\rm d}}(r) =  \begin{dcases}   2 R^2_{\rm d}  \arccos(r/2R_{\rm d}) - r \sqrt{R^2_{\rm d}-r^2/4 } ,  \quad  
\\
\hspace{38mm}	0 \leq r \leq 2 R_{\rm d} ,     \\
0 \hspace{36mm} \textup{otherwise.}
\end{dcases}  
\end{align}

2) {\bf Daughter Point Distribution}:  Each daughter point is located uniformly within a disk of radius $R$ around the origin with the PDF given by  \cite[Eq. (5)]{K.2009Ganti}
\begin{align} \label{PDF_MCP_daughter}
 f^{\rm M}(y)  = \begin{cases}
\frac{1}{\pi R^2_{\rm d} },  \quad \text{if} \,\, \|y\| \leq R_{\rm d}  \\
0,   \quad \quad \text{otherwise}.
\end{cases}   
\end{align} 

3) {\bf Contact Distance Distribution}:
The PDF and CDF of the contact distance in the MCP are given as (\ref{eqn:pdf_CD_MCP}) and  (\ref{eqn:cdf_CD_MCP}), respectively~\cite{M.2017Afshang}. 
\begin{figure*}
\begin{align}  \label{eqn:pdf_CD_MCP}  
  f^{\rm M}_{r_{1}} (r|x)  = \begin{dcases}
 \frac{2 r }{R^2_{\rm d}} ,   \hspace{39mm} \textup{if} \,\, 0 \leq r \leq R_{\rm d} -x  \,\, \textup{for} \,\, \|x\| < R_{\rm d}  \\
 \frac{2 r }{\pi R^2_{\rm d}} \cos^{-1} \Big ( \frac{r^2+x^2-R^2_{\rm d}}{2 r x}  \Big) ,   \hspace{2mm} \textup{if} \,\,  R_{\rm d} - \|x\| \leq r \leq R_{\rm d} - \|x\|   \,\, \textup{for} \,\,  \|x\| < R_{\rm d} , \,\,  \textup{and} \,\,   \textup{if} \,\, \|x\|> R_{\rm d} ,   
\end{dcases}   
\end{align}
\hrulefill
\end{figure*}
\begin{figure*}
\vspace{-3mm}
\begin{align}
 F^{\rm M}_{r_{1}} (r) & = 1 - \exp\Bigg(  -  \lambda_{\rm p} \bigg(  \int_{\mathbb{B}(0,R_{\rm d})}   \bigg[ 1 - \exp \bigg( - \bar{c} \bigg( \int^{\min(r,R_{\rm d}-x)}_{0} \frac{2y}{R^{2}_{\rm d} } \mathrm{d}y \nonumber \\
& \hspace{10mm} + \int^{\min(r,R_{\rm d}+x)}_{\min(r,R_{\rm d}-x)}   \frac{2 y }{\pi R^2_{\rm d}} \cos^{-1} \Big ( \frac{y^2+x^2-R^2_{\rm d}}{2 y x}  \Big) \mathrm{d} y  \bigg)  \bigg) \bigg]  \mathrm{d} x  \nonumber \\
& \hspace{20mm}  + \int_{\mathbb{R}^2 \backslash \mathbb{B}(0,R_{\rm d})} \bigg[  1 - \exp \bigg( - \bar{c} \int^{\min(r, R_{\rm d}+x) }_{\min(r, R_{\rm d}-x)} \frac{2 y }{\pi R^2_{\rm d}} \cos^{-1} \Big ( \frac{y^2+x^2-R^2_{\rm d}}{2 y x} \mathrm{d} y \Big)   \bigg) \bigg]   \mathrm{d} x  \bigg)  \Bigg),   \label{eqn:cdf_CD_MCP}
 \end{align}  
 \hrulefill
 \end{figure*}

4) {\bf Probability Generating Functional}:  
 The PGFL for an MCP is given as follows~\cite{K.2009Ganti}:
 \begin{align} \label{eqn:PGFL_MCP}
   & \mathbb{E} \bigg[ \prod_{j \in \mathbb{N} } \upsilon  (x_{j}) \bigg] \nonumber \\ &= \exp \Bigg( \! - \lambda \int_{\mathbb{R}^2} \bigg[ 1   - M \bigg(  \int_{\mathbb{R}^2} \upsilon(x+y) f^{\rm M}(y) \mathrm{d} y   \bigg) \bigg] \mathrm{d} x  \Bigg)    \mathrm{d} y,
 \end{align}
 where $M(t)=e^{-\bar{c}(1-t) }$ is the moment generating function of the representative cluster in an MCP. 
 
Let $\mathbb{E}^{!}_{o}[\cdot]$ denote the expectation operation based on the reduced Palm measure~\cite{N.2013Chiu} which takes the expectation for a point process conditioned at a point of the process at $o$ without including the point. The conditional PGFL of an MCP is~\cite[Lemma 1]{K.2009Ganti}  
\begin{align} \label{eqn:reduced_PGFL}
& \mathbb{E}^{!}_{o} \bigg[ \prod_{j \in \mathbb{N} }  \upsilon(x_{j}) \bigg] \nonumber \\
 & \! =  
\exp \! \Bigg( \! - \lambda_{\rm p} \int_{\mathbb{R}^2} \!\! \bigg[  1 - M \bigg( \int_{\mathbb{R}^2} \! \upsilon(x+y) f(y) \mathrm{d}y \bigg) \! \bigg] \mathrm{d}x \! \Bigg) \nonumber \\
& \hspace{22mm} \times \int_{\mathbb{R}^2}  G^{\rm M}_{\rm d} \Big( \upsilon(x-y) \Big) f^{\rm M}(y) \mathrm{d} y,
\end{align}
where $G^{\rm M}_{\rm d}$ is the PGFL for the representative cluster given by
\begin{align}
G^{\rm M}_{\rm d} (\upsilon) = M \bigg( \int_{\mathbb{R}^2}  \upsilon(x) f^{\rm M}(x) \mathrm{d} x \bigg).
\end{align}

\begin{definition} ($\beta$-Ginibre point process):
A $\beta$-GPP $\Phi^{\rm G}_{\beta} =  \{ x_{j}  \}_{j \in \mathbb{N}}$ is a determinantal point process with the kernel\footnote{The kernel represents the interaction force among the points of the process.} given by \cite{N.Jan.2015Deng}
\begin{align}
\mathcal{K}_{\beta,\lambda} (x,y) = \lambda    e^{-\frac{\pi \lambda |x-y|^2}{2 \beta} }   , \quad  x,y \in \mathbb{C} ,
\end{align} 
w.r.t. the Lebesgue measure  
 on $\mathbb{C}$~\cite{N.2014Miyoshi}. 
\end{definition} 

\vspace{0.2cm}
{\em Properties:}
\vspace{0.2cm}

1) {\bf Density}: For a $\beta$-GPP, the first moment density and the second moment density of the $\beta$-GPP are given, respectively, by~\cite{N.Jan.2015Deng}
\begin{align}  
& \rho^{(1)}(x)= \det[ K_{\beta,\lambda}(x,x)]= \lambda, \label{eqn:rho1_GPP}  \\
 &   \hspace{-3mm} \quad \rho^{(2)}(x,y)= \det \left[ \begin{array}{cc}
    \mathcal{K}_{\beta,\lambda} (x,\bar{x}) & \mathcal{K}_{\beta,\lambda} (x,\bar{y} )\\
    \mathcal{K}_{\beta,\lambda} (\bar{x},y) & \mathcal{K}_{\beta,\lambda} (y,\bar{y} ) 
   \end{array} 
   \right ]\nonumber\\
   & =  \lambda^2  \bigg(1 -  \exp \bigg( \! \! - \! \frac{\pi  \lambda  |x -  y|^2  }{ \beta } \bigg)   \bigg) .  \label{SOD_GPP1}
   \end{align}

2)  {\bf Link Distance Property}:   
  Let $\lambda$ represents the intensity of $\Phi^{\rm G}_{\beta}$.
  Let $\{Q_{j}\}_{ j\in \mathbb{N} }$ be a set of  independent gamma random variables with PDF 
  \begin{align} \label{eqn:PDF_gamma}
  f_{Q_{j}} (q) = \frac{q^{j-1} e^{-\frac{\pi \lambda}{\beta} q}}{ \Big ( \frac{\beta}{ \pi \lambda}  \Big)^{j} \Gamma (j) },
  \end{align} 
 i.e., $Q_{j} \sim \mathcal{G}(j,\beta/ \pi \lambda)$.  Then the set $\{ \|x_{j}\|^2 \}_{j \in \mathbb{N}}$ 
 is equivalent in distribution with the set constructed by retaining each element from $\{Q_{j}\}_{ j\in \mathbb{N} }$ independently with probability $\beta$ \cite[Theorem 4.7.1]{B.2009Hough}.

\begin{figure*} [htp]
\centering
 \subfigure [ Realization of an MCP ($\lambda_{\rm p}=0.1$, $\bar{c}=5$ and $R_{\rm d}=1$). ]
  {
 \centering   \hspace{-3mm}
 \includegraphics[width=0.53\textwidth]{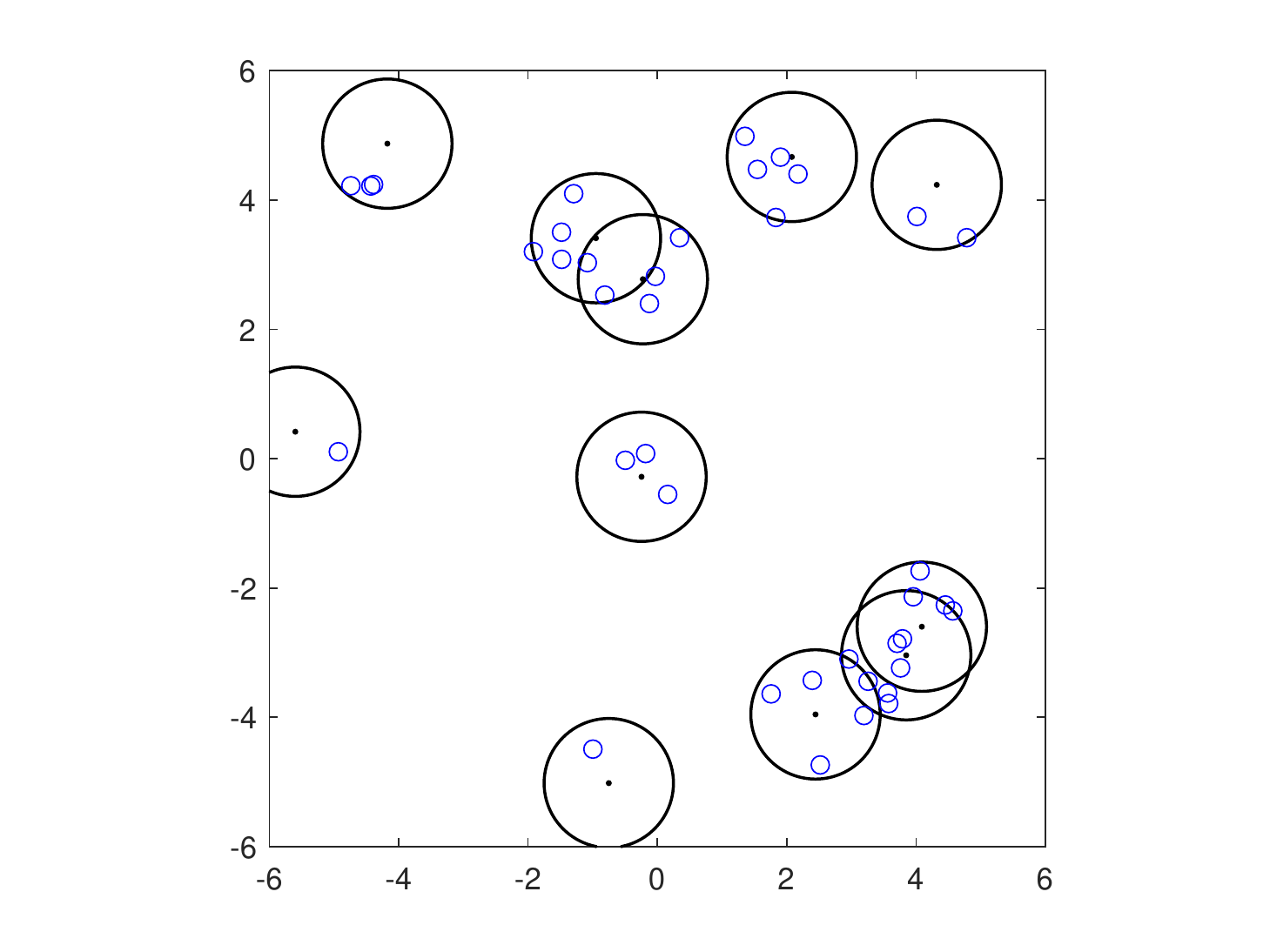}}
 \centering  
 \subfigure  [ Realization of a homogeneous PPP ($\lambda=0.5$).
 ] {
 \centering \hspace{-2mm}
\includegraphics[width=0.41 \textwidth]{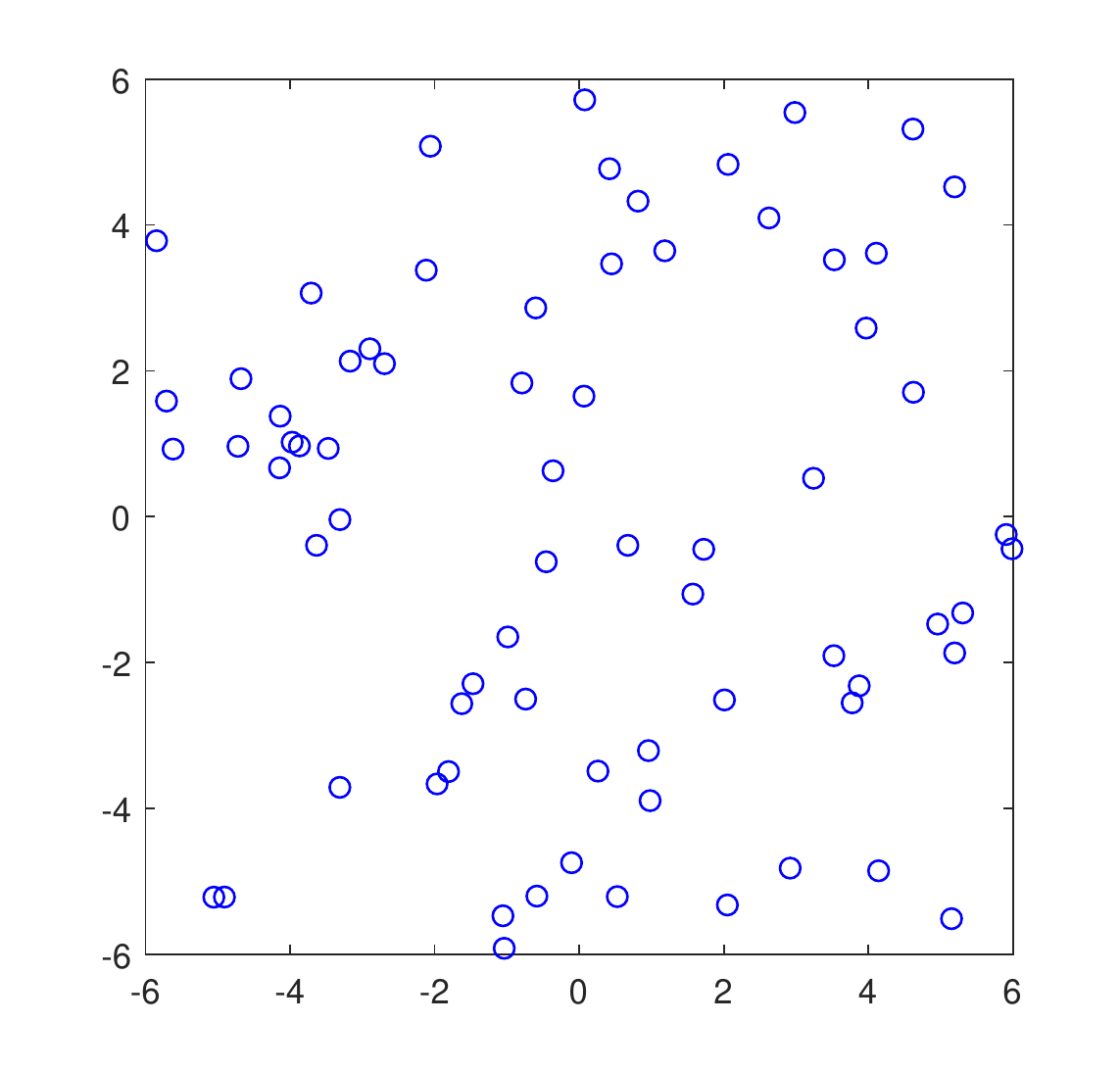}}
\centering
 \subfigure  [ Realization of a $\beta$-GPP ($\lambda=0.5$, $\beta=0.5$). 
 ] {
 \centering \hspace{4mm}
\includegraphics[width=0.41 \textwidth]{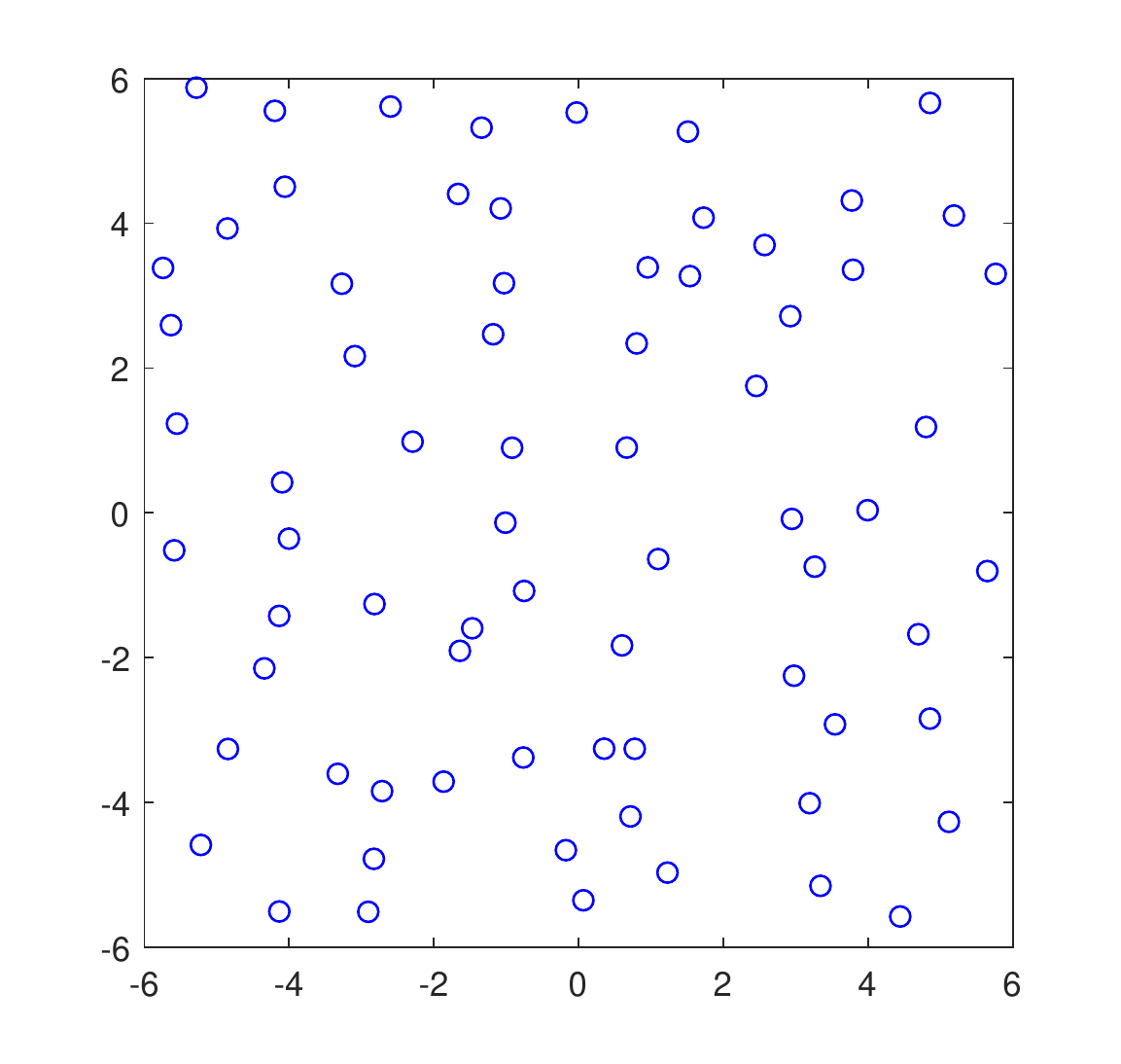}}
\centering
 \subfigure  [ Pair correlation functions ($\lambda=1$, $R_{\rm d}=1$). 
 ] {
 \label{fig:Spatial_PCF}
 \centering  \hspace{7mm}
\includegraphics[width=0.41 \textwidth]{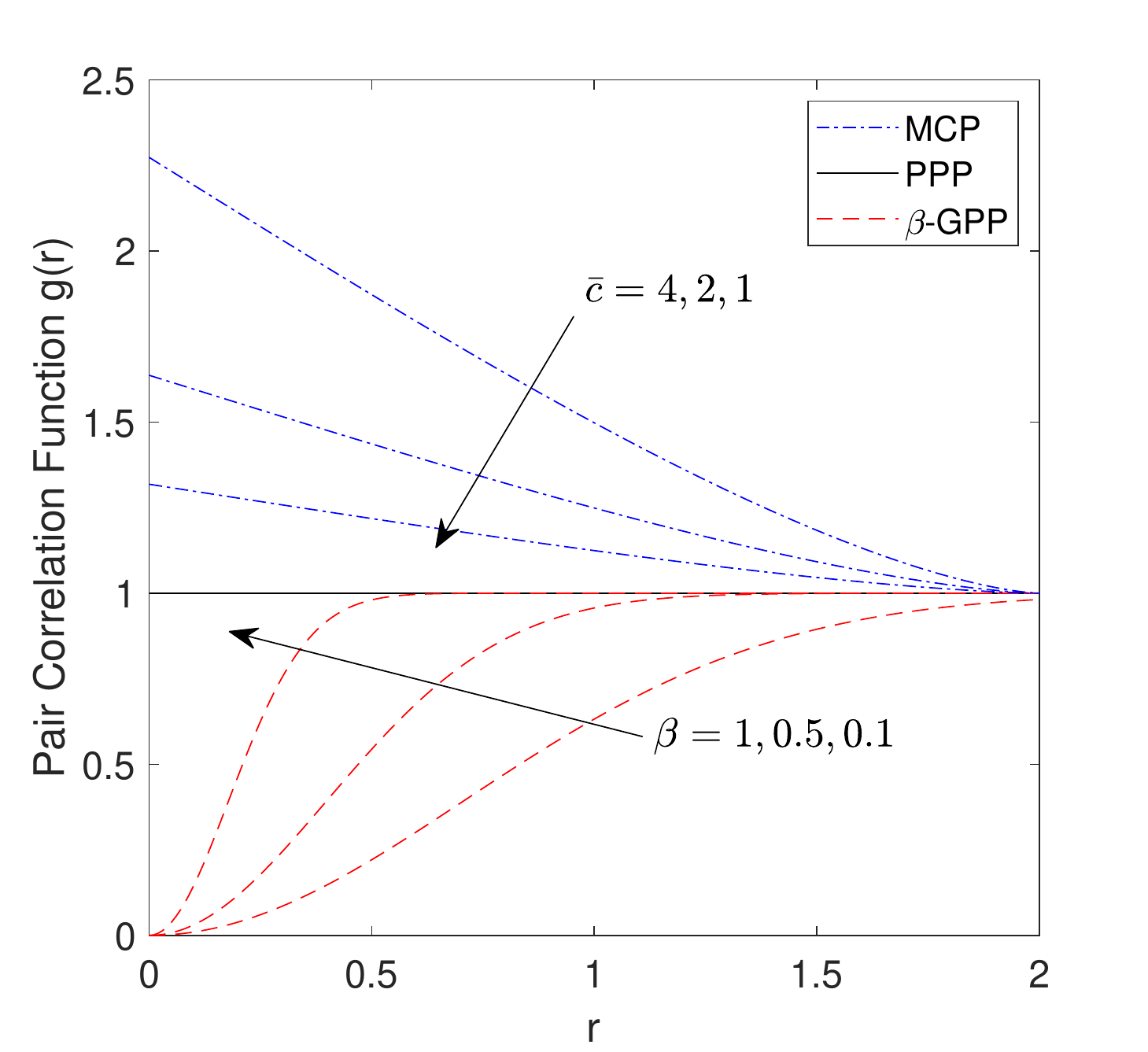}}
\caption{Illustration of spatial point processes. } 
\centering
\label{fig:Illustration_SpatialPP} 
\end{figure*}

Fig. \ref{fig:Illustration_SpatialPP}(a), (b), (c), respectively, illustrate realizations of MCP, PPP, and $\beta$-GPP. As shown by the realizations of MCP and $\beta$-GPP, the point sets are attractive and repulsive, respectively.  
 
The correlation between the spatial points can be measured by the pair correlation function (PCF). 
For a point process $\Phi \subset \mathbb{R}^{d}$, the PCF is defined as $g(x,y) \triangleq  \frac{\rho^{(2)}(x,y)}{\rho^{(1)}(x)\rho^{(2)}(y)}$. If $\Phi$ is motion-invariant, 
the first moment density is the constant intensity and the second moment density $\rho^{(2)}(x,y)$ only depends on the difference  $r=\|x-y\|$. 
Hence, the PCF can be expressed as $g(r)= \frac{\rho^{(2)}(r)}{\lambda^2}$.

Let $\Phi(A)$ denote the number of points in $\Phi \cap A$.  
The PCF quantifies the degree of correlation between the random variables $\Phi(A)$ and $\Phi(B)$ in a non-centered way.  
If $g(x,y)\equiv 1$
then for disjoint $A$ and $B$ the  covariance of $\Phi(A)$ and $\Phi(B)$ is zero, which means the two variables are uncorrelated. For spatial point processes, the PCF describes how a point is surrounded by others.
The PCF equals one if the points are uncorrelated (e.g., as in the PPP), and is greater (smaller) than 1 if the points are attractive (repulsive).

The PCFs of the MCP, PPP and $\beta$-GPP are given, respectively, as \cite[Page 153]{M.2013Haenggic}, \cite{N.Jan.2015Deng} 
\begin{align}
g(r)= \begin{dcases}1+ \frac{ \bar{c} A_{R_{\rm d}}(r) }{\lambda  \pi^2 R^4_{\rm d}   } , \hspace{24mm} \textup{MCP} \\
1,  \hspace{42mm} \textup{PPP}  \\
1-\frac{\exp( -r^2/\beta)}{\pi \lambda},  \hspace{18mm} \textup{$\beta$-GPP} ,
\end{dcases} 
 \end{align}
 where $A_{R_{\rm d}}(r)$ is given in (\ref{eqn:spatial_AR}). 
 
 \vspace{0.2cm}
{\em Properties:}
\vspace{0.2cm}
 
1)  For $r\geq 2R_{\rm d}$, for an MCP, the PCF is the same as that of the PPP, because two points with a distance greater than $2R_{\rm d}$ must belong to different clusters and thus are independent. 
 
2) Given the intensity $\lambda$, the PCF of an MCP approaches that of the PPP as $\bar{c} \to 0$, since the points are less likely to belong to the same cluster.

3) For a PPP, the PCF is not affected by the intensity since the points are independently distributed.  
 
4) For a $\beta$-GPP, the PCF approaches that of the PPP as $\beta \to 0$ or $\lambda \to \infty$.
 
Fig.~\ref{fig:Illustration_SpatialPP}(d) shows the PCFs of the spatial point processes and demonstrates the properties discussed above.

\subsubsection{Validity} 
 
The validity of the point process approximation of real-world cellular networks has been demonstrated in the literature. 
In particular, an early study in~\cite{X.2000Brown} found that the success probability of an actual cellular network (with a degree regularity) is lower-bounded by that of a Poisson network (with complete randomness). 
 More importantly, reference~\cite{M.Dec.2014Haenggi} 
 revealed that adding a horizontal SIR threshold shift from 0 dB to 3.4 dB to the success probability in a Poisson network yields a tight approximation of that in any network with a spatial layout from complete randomness to perfect regularity (i.e., triangular lattice).
 Hence, the analysis of an actual cellular network can be performed based on that of a Poisson network with a horizontally shifted SIR threshold calibrated to the regularity of the topology.
 
 The validity of the $\beta$-GPP to model actual cellular networks has been rigorously explored. The studies in \cite{N.Jan.2015Deng,S.2015Gomez}
 show that the repulsion parameter $\beta$ that adjusts the degree of repulsion among points can be numerically fitted with the data set from actual network deployments. The simulations demonstrate 
 that the success probabilities from $\beta$-GPP analysis closely approximates the ones from simulations with the actual deployment topologies. Hence, it is valid to employ either a horizontally shifted PPP  or the fitted $\beta$-GPP to analyze actual cellular networks.

\subsection{Performance Analysis} \label{sec:spatial_PA}
 
\subsubsection{Spatial-Temporal Interference Correlation in Ad Hoc Networks} \label{sec:spatial_PA_IC} 
We first investigate the correlation of interference observed at two locations $o$ and $u$ in two time slots $t_{1}$ and $t_{2}$, respectively, with the interferers distributed as the three types of point processes considered. As PPP, MCP and $\beta$-GPP are motion-invariant, the spatial-temporal interference correlation can be measured by the Pearson correlation coefficient defined in (\ref{def:ICC2}).
 
As can be seen from (\ref{def:ICC2}), the expectation and the second moments of $I^{(t_{1})}_{o}$ and the mean product of $I^{(t_{1})}_{o}$ and $I^{(t_{2})}_{u}$ are needed to quantify the interference correlation.
However, these two quantities do not
exist because of the singularity of the path-loss function $\ell(x)$.  To cope with this issue, we follow the approach in~\cite{K.2009Ganti} by defining $\ell_{\epsilon}(x)=\frac{1}{\epsilon+\|x\|^{\alpha}}, \alpha>2, \epsilon>0$, such that $\ell(x)=\lim_{\epsilon\to 0} \ell_{\epsilon}(x)$.

The expectation of $I^{(t_{1})}_{o}$ can be derived as
 \begin{align}  
 \mathbb{E} \big[I^{(t_{1})}_{o} \big] & \overset{(a)}{=} \mathbb{E} \big[I_{o} \big] \nonumber \\
  & = \mathbb{E} \bigg [      \sum_{j \in \mathbb{N} } \!    h_{j} \ell_{\epsilon}(x_{j} )      \bigg] \nonumber \\ 
  &   = \mathbb{E} \big[  h   \big]   \int_{\mathbb{R}^2}   \ell_{\epsilon}(x ) \rho^{(1)}(x) \mathrm{d} x   \nonumber \\
 & \overset{(b)}{=} 2 \pi \lambda  \int^{\infty}_{0} \frac{x}{ \epsilon + x^{\alpha} }
 \mathrm{d}x  \nonumber \\
 & \overset{(c)}{=} \delta \pi^2 \lambda \epsilon^{\delta-1} \csc(\delta \pi) , \label{MeanInterference}
 \end{align}
 where $(a)$ follows since MCP, PPP, and $\beta$-GPP are all motion-invariant and the superscript $(t_{1})$ is dropped for conciseness, $(b)$ applies the conversion from Cartesian to polar coordinates, and $(c)$ substitutes $\frac{2}{\alpha}$ with $\delta$.

If the densities of the interferers following a PPP, MCP, and $\beta$-GPP as given in (\ref{eqn:rho1_PPP}),  (\ref{eqn:rho1_MCP}), and (\ref{eqn:rho1_GPP}), respectively,  are the same, 
we can observe from (\ref{MeanInterference}) that the three point processes cause the same mean interference at an arbitrary location. This indicates that spatial attraction and repulsion do not affect the first moment of the interference. 
 
We then continue to derive the second moment of  $I^{(t_{1})}_{o}$ as    
\begin{align}  
    & \mathbb{E} \Big[ \Big( I^{(t_{1})}_{o} \Big)^2 \Big] 
    \nonumber \\
    & =  \mathbb{E}  \Bigg [   \bigg( \sum_{ j \in 
    \mathbb{N} }    h_{j}  \ell_{\epsilon} ( x_{j} ) \bigg)^{\! 2}  \Bigg]   \nonumber \\  
    & = \! \mathbb{E} \Bigg[  \sum_{ j \in 
        \mathbb{N}  }           h^2_{j} \ell^2_{\epsilon}(x_{j})   + \sum^{ j \neq i }_{  j,i  \in \mathbb{N} }    h_{j} h_{i} \ell_{\epsilon}(x_{j}) \ell_{\epsilon}(x_{i})  \Bigg] \nonumber \\
    & = \mathbb{E} [h^2]   \int_{\mathbb{R}^2} \ell^2_{\epsilon}(x) \rho^{(1)}(x) \mathrm{d} x \nonumber\\
    & \hspace{10mm} +\mathbb{E} [h]^2  \int_{\mathbb{R}^2}  \int_{\mathbb{R}^2}  \ell_{\epsilon}(x) \ell_{\epsilon}(y) \rho^{(2)}(x,y)  \mathrm{d} x  \mathrm{d} y  \nonumber \\
    & = 2 \delta \lambda \pi^2 (1-  \delta  ) \epsilon^{\delta-2} \csc(\delta \pi)\nonumber \\
    & \hspace{15mm} +  \int_{\mathbb{R}^2}  \int_{\mathbb{R}^2}  \ell_{\epsilon}(x) \ell_{\epsilon}(y) \rho^{(2)}(x,y)  \mathrm{d} x  \mathrm{d} y.   \label{SM_Interference}
    \end{align}  By plugging in
 the second moment density $\rho^{(2)}(x,y)$ for the MCP, PPP, and $\beta$-GPP given, respectively, in  (\ref{SOD_MCP1}), (\ref{eqn:rho2_PPP}), (\ref{SOD_GPP1}), we obtain the second moment of the interference in (\ref{eqn:SM_I}), 
 \begin{figure*}
\begin{align}
  \mathbb{E} \bigg[ \Big( I^{(t_{1})}_{o} \Big)^2 \bigg]   \overset{(a)}{=} \! \begin{dcases} \bar{I}^{2}_{\textup{PPP}}  +   \frac{ \lambda \bar{c}}{\pi^2 R_{\rm d}^4} \int_{\mathbb{R}^2}\! \! \bigg(   \!\! \int_{\mathbb{R}^2} \! \ell_{\epsilon} ( x ) 
     \ell_{\epsilon} (y) A_{R_{\rm d}}(\|x-y\|)   \mathrm{d} x  \bigg)  \mathrm{d} y , \hspace{13mm} \textup{MCP} \\
   \bar{I}^{2}_{\textup{PPP}},  \hspace{85 mm}  \textup{PPP}  \\
\bar{I}^{2}_{\textup{PPP}}  -   \lambda^2 \! \int_{\mathbb{R}^2} \!\!\! \bigg( \!  \int_{\mathbb{R}^2} \!\!  \ell_{\epsilon}(x) \ell_{\epsilon}(y)   \bigg(   \exp \bigg( \!- \frac{\pi \lambda  \|x-y\|^2}{\beta} \bigg) \!  \bigg)  \mathrm{d} x  \bigg)   \mathrm{d} y,   \quad \!\! \beta \textup{-GPP} .  
    \end{dcases}  \label{eqn:SM_I}
    \end{align}
        \hrulefill
 \end{figure*} 
 where  $\bar{I}^{2}_{\textup{PPP}}$ is
\begin{align}
& \bar{I}^{2}_{\textup{PPP}} = 2 \delta \lambda \pi^2 (1-  \delta  ) \epsilon^{\delta-2} \csc(\delta \pi) \nonumber \\ 
& \hspace{30mm}  +  \delta^2 \pi^4  \lambda^2 \epsilon^{2\delta-2}   \csc(\delta \pi)^2 . 
\end{align}

Table~\ref{Tab:SM_Interference} shows the variance of the interference with different fields of interferers. It can be observed that the MCP and the $\beta$-GPP cause larger and smaller interference variance than the PPP, respectively, as illustrated in Fig.~\ref{fig:Spatial-Interference-Variance}. 
Moreover, it can be found that, for an MCP, given the interference density $\lambda=\lambda_{\rm p} \bar{c}$, the variance increases when the points are more densely clustered (i.e., with smaller cluster density $\lambda_{\rm p}$ and a larger average number of points $\bar{c}$ within each cluster). 
 
 \begin{figure}[htp]   
 \centering   
 \includegraphics[width=0.48 \textwidth]{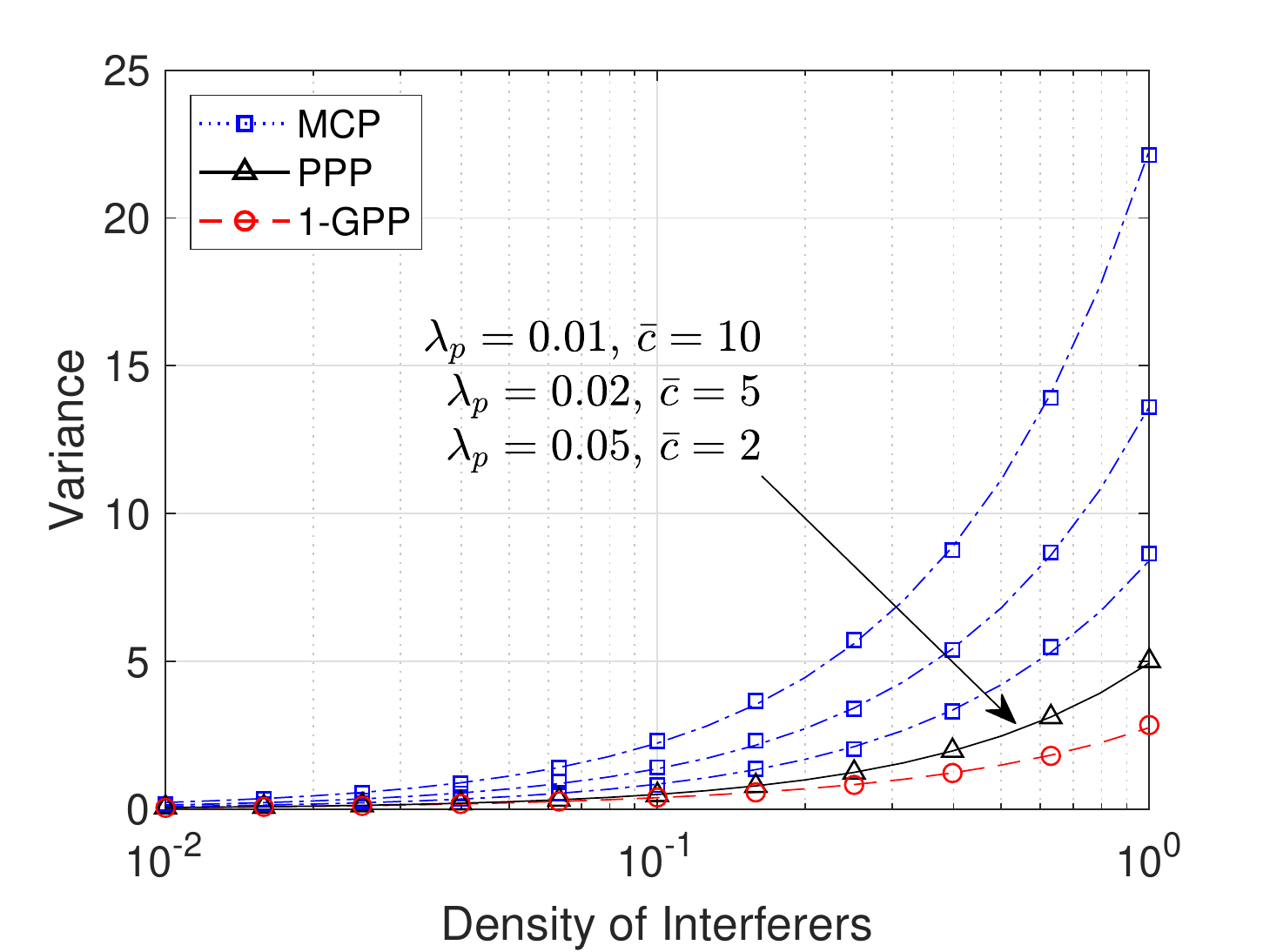}
 \caption{Interference variance in ad hoc networks with different fields of interferers  ($\ell(x)=\frac{1}{1+x^{\alpha}}$ and $R_{\rm d}=1$).}  
   \label{fig:Spatial-Interference-Variance} 
\end{figure}

\begin{table*}
\centering
\caption{\footnotesize Variance of Interference}  \label{Tab:SM_Interference} 
\begin{tabular}{|p{1cm}|p{10cm}|}
\hline
Point Process &  $\mathbb{E} [I^2_{o}] - \mathbb{E} [I_{o}]^2$  \\ \hline
\hline
MCP & $\delta  \pi^2 \lambda (1- \delta ) \epsilon^{\delta-2} \csc(\delta \pi) +   \frac{\lambda \bar{c} }{\pi^2 R_{\rm d}^4} \int_{\mathbb{R}^2} \big(  \int_{\mathbb{R}^2} 
\ell_{\epsilon} (x) \ell_{\epsilon} (y) A_{R_{\rm d}}(\|x-y\|) \mathrm{d} x  \big)   \mathrm{d} y $  \\
\hline  
PPP & $\delta  \pi^2 \lambda (1-\delta ) \epsilon^{\delta-2} \csc(\delta \pi)  $  \\
\hline  
$\beta$-GPP &  $\delta \pi^2 \lambda (1-\delta ) \epsilon^{\delta-2} \csc(\delta \pi)   -    \lambda^2 \int_{\mathbb{R}^2} \Big(  \int_{\mathbb{R}^2}  \ell_{\epsilon}(x) \ell_{\epsilon}(y)      \exp \big( - \pi \lambda \|x-y\|^2 / \beta \big)    \mathrm{d} x  \Big)   \mathrm{d} y $  \\
\hline   
\end{tabular} 
\end{table*}

Similarly, we have the mean product of $I^{(t_{1})}_{o}$ and $I^{(t_{2})}_{u}$, $t_{1} \neq t_{2}$, as 
\begin{align}  
 & \mathbb{E} \Big[ I^{(t_{1})}_{o} I^{(t_{2})}_{u} \Big]   \nonumber\\
 & =  \mathbb{E}  \Bigg [     \sum_{ j \in \mathbb{N}  }    h^{(t_{1})}_{j}  \ell_{\epsilon} (x_{j} - o)   \sum_{ i \in \mathbb{N} }    h^{(t_{2})}_{i}  \ell_{\epsilon} (x_{i} - u )  \Bigg]   \nonumber \\  
 & = \! \mathbb{E} \Bigg[  \sum_{ j \in \mathbb{N}  }             h^{(t_{1})}_{j}  h^{(t_{2})}_{j}  \ell_{\epsilon}(x_{j} ) \ell_{\epsilon}(x_{j} - u)   \nonumber\\
 &\hspace{10mm}+ \! \sum^{ j \neq i}_{j, i \in \mathbb{N}  }    h^{(t_{1})}_{j} h^{(t_{2})}_{i} \ell_{\epsilon}(x_{j} ) \ell_{\epsilon}(x_{i} - u)   \Bigg] \nonumber \\
 & = \mathbb{E} [h]^2  \! \int_{\mathbb{R}^2} \ell_{\epsilon}(x ) \ell_{\epsilon}(x - u) \rho^{(1)}(x) \mathrm{d} x  \nonumber\\
 &\hspace{10mm}  + \mathbb{E} [h]^2 \! \int_{\mathbb{R}^2}  \int_{\mathbb{R}^2} \! \ell_{\epsilon}(x ) \ell_{\epsilon}(y ) \rho^{(2)}(x,y)  \mathrm{d} x  \mathrm{d} y,  \label{eqn:MP_I} 
\end{align}
which is an integral function of the first and second moment densities. Subsequently, (\ref{eqn:MP_I}) can be obtained by following the derivation of the second moment of interference as in (\ref{eqn:MP_Interference}). 
 \begin{figure*}  
 \begin{align}
& \mathbb{E} \Big[ I^{(t_{1})}_{o} I^{(t_{2})}_{u} \Big] = \! 
  \begin{dcases}  \lambda \int_{\mathbb{R}^2} \ell_{\epsilon}(x) \ell_{\epsilon}(x-u)   \mathrm{d} x  +  \delta^2 \pi^4  \lambda^2 \epsilon^{2\delta-2}   \csc(\delta \pi)^2 + \frac{ \lambda \bar{c} }{\pi^2 R_{\rm d}^4} \\
  	\hspace{45mm} \times \int_{\mathbb{R}^2} \!\! \bigg( \!  \int_{\mathbb{R}^2} \ell_{\epsilon} (  x)   \ell_{\epsilon} (y) A_{R_{\rm d}}(\|x-y\|)  \mathrm{d} x  \bigg)  \mathrm{d} y,  \hspace{4mm}  \textup{MCP} \\
  \lambda \int_{\mathbb{R}^2} \ell_{\epsilon}(x) \ell_{\epsilon}(x-u)   \mathrm{d} x  + \delta^2 \pi^4  \lambda^2 \epsilon^{2\delta-2}   \csc(\delta \pi)^2,  \hspace{42mm}     \textup{PPP}  \\
    \lambda \int_{\mathbb{R}^2} \ell_{\epsilon}(x) \ell_{\epsilon}(x-u)   \mathrm{d} x  + \delta^2 \pi^4  \lambda^2 \epsilon^{2\delta-2}   \csc(\delta \pi)^2 \\ 
      \hspace{33mm}   -  \lambda^2 \!  \int_{\mathbb{R}^2} \!\!\! \bigg( \!  \int_{\mathbb{R}^2}\!\!  \ell_{\epsilon}(x) \ell_{\epsilon}(y)   \exp \bigg( \!- \frac{\pi \lambda \|x-y\|^2 }{\beta}\bigg)    \mathrm{d} x  \bigg)  \mathrm{d} y,   \quad \!\! \beta\textup{-GPP}   .
          \end{dcases}   \label{eqn:MP_Interference}
          \end{align}
             \hrulefill
          \end{figure*}

Finally, by inserting the expectation, second moment, and mean interference product given in (\ref{MeanInterference}), (\ref{SM_Interference}), and (\ref{eqn:MP_I}), respectively, 
into (\ref{def:ICC2}), we have the spatial-temporal interference correlation coefficient in the following theorem.

\begin{theorem}
The spatial-temporal correlation coefficient with interferers distributed as MCP, PPP, and $\beta$-GPP and path-loss function $\ell_{\epsilon}(x)=\frac{1}{\epsilon+x^{\alpha}}$ is given by (\ref{eqn:zeta}),  
\begin{figure*} 
\begin{align}
\zeta(\|u\|) = \begin{dcases}\frac{  \int_{\mathbb{R}^2} \ell_{\epsilon}(x) \ell_{\epsilon}(x-u)   \mathrm{d} x + \varpi^{\textup{MCP}} (R_{\rm d},\bar{c}) }{2 \delta  \pi^2 ( 1 -  \delta  ) \epsilon^{\delta-2} \csc(\delta \pi) + \varpi^{\textup{MCP}} (R_{\rm d},\bar{c}) },  \hspace{18mm} \,  \textup{MCP} \\
\frac{   \int_{\mathbb{R}^2} \ell_{\epsilon}(x) \ell_{\epsilon}(x-u)   \mathrm{d} x   }{2 \delta   \pi^2 (1-  \delta  ) \epsilon^{\delta-2} \csc(\delta \pi)  }, \hspace{42mm}  \textup{PPP}  \\
 \frac{   \int_{\mathbb{R}^2} \ell_{\epsilon}(x) \ell_{\epsilon}(x-u)   \mathrm{d} x - \varpi^{\beta\textup{-GPP}} (\lambda) }{2 \delta   \pi^2 ( 1 -  \delta  ) \epsilon^{\delta-2} \csc(\delta \pi) - \varpi^{\beta\textup{-GPP}} (\lambda) } , \hspace{22mm}  \beta\textup{-GPP} ,  
 \end{dcases} \label{eqn:zeta} 
\end{align}
 \hrulefill  
\end{figure*}
where  $\varpi^{\textup{MCP}} (R_{\rm d},\bar{c})$ and $\varpi^{\beta\textup{-GPP}} (\lambda)$ are given, respectively, by
 \begin{align}
&\varpi^{\textup{MCP}} ( R_{\rm d},\bar{c}) \nonumber\\
&= \frac{  \bar{c}}{\pi^2 R_{\rm d}^4} \int_{\mathbb{R}^2} \!\! \bigg( \! \int_{\mathbb{R}^2}  \!\! \ell_{\epsilon} (x) \ell_{\epsilon} (y)   A_{R_{\rm d}}(\|x-y\|)  \mathrm{d} x   \bigg)   \mathrm{d} y, \nonumber 
\end{align} 
and
\begin{align}
& \varpi^{\beta\textup{-GPP}} (\lambda) \nonumber \\  & = \! \lambda  \! \int_{\mathbb{R}^2} \!\!\! \bigg( \!\int_{\mathbb{R}^2} \!\!   \ell_{\epsilon}(x) \ell_{\epsilon}(y)    \exp \bigg( \! - \! \frac{\pi \lambda  \|x\!-\!y\|^2 }{\beta} \bigg)     \mathrm{d} x  \bigg) \mathrm{d} y.  \nonumber
\end{align} 
\end{theorem}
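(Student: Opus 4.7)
The plan is to prove Theorem 1 by direct substitution into the definition of the Pearson correlation coefficient in (\ref{def:ICC2}), using the three ingredients already assembled just before the theorem statement: the mean interference $\mathbb{E}[I^{(t_{1})}_{o}]$ in (\ref{MeanInterference}), the second moment $\mathbb{E}[(I^{(t_{1})}_{o})^{2}]$ in (\ref{eqn:SM_I}), and the mean cross-product $\mathbb{E}[I^{(t_{1})}_{o} I^{(t_{2})}_{u}]$ in (\ref{eqn:MP_Interference}). Since the PPP, MCP, and $\beta$-GPP are all motion-invariant, $I^{(t_{1})}_{u_{1}}$ is equal in distribution to $I^{(t_{1})}_{o}$ and $I^{(t_{2})}_{u_{2}}$ is equal in distribution to $I^{(t_{2})}_{u}$ with $u=u_{2}-u_{1}$, so (\ref{def:ICC2}) already applies and only the distance $\|u\|$ appears.

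First, I would square (\ref{MeanInterference}) to obtain $\mathbb{E}[I_{o}]^{2} = \delta^{2}\pi^{4}\lambda^{2}\epsilon^{2\delta-2}\csc(\delta\pi)^{2}$. Then, for the numerator of (\ref{def:ICC2}), I would subtract this from (\ref{eqn:MP_Interference}); the key observation is that the $\delta^{2}\pi^{4}\lambda^{2}\epsilon^{2\delta-2}\csc(\delta\pi)^{2}$ contribution arising from the product-density term cancels exactly in all three cases, leaving $\lambda\int_{\mathbb{R}^{2}}\ell_{\epsilon}(x)\ell_{\epsilon}(x-u)\,\mathrm{d}x$ plus the process-specific correction (the MCP double-integral term with $A_{R_{\rm d}}$ for the cluster process, nothing extra for the PPP, and the negative Gaussian-weighted term for the $\beta$-GPP). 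For the denominator, I would similarly subtract $\mathbb{E}[I_{o}]^{2}$ from (\ref{eqn:SM_I}); again the $\delta^{2}\pi^{4}\lambda^{2}\epsilon^{2\delta-2}\csc(\delta\pi)^{2}$ part inside $\bar{I}^{2}_{\textup{PPP}}$ cancels, leaving $2\delta\lambda\pi^{2}(1-\delta)\epsilon^{\delta-2}\csc(\delta\pi)$ plus the same process-specific correction that appears in the numerator.

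Finally, I would divide the resulting expressions and cancel the common $\lambda$ that factors through both numerator and denominator (the PPP-type contribution scales with $\lambda$ from $\rho^{(1)}$, and each of the MCP and $\beta$-GPP corrections also carry an overall $\lambda$ that pulls out once the $1/\lambda^{2}$ normalization inside $\varpi^{\textup{MCP}}(R_{\rm d},\bar{c})$ and $\varpi^{\beta\textup{-GPP}}(\lambda)$ is identified). The three quotients then match (\ref{eqn:zeta}) line by line.

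The calculation itself is essentially algebraic once (\ref{MeanInterference}), (\ref{eqn:SM_I}), and (\ref{eqn:MP_Interference}) are in hand, so the main obstacle is bookkeeping: one must carefully track which integral terms are common to numerator and denominator (so that the theorem admits the compact ``plus/minus $\varpi$'' form), verify that the squared first-moment contribution in the cross-product truly coincides with $\mathbb{E}[I_{o}]^{2}$ (this relies on the fact that $\int\int \ell_{\epsilon}(x)\ell_{\epsilon}(y)\rho^{(1)}(x)\rho^{(1)}(y)\,\mathrm{d}x\,\mathrm{d}y$ separates), and properly extract the $\lambda$ factor from the $\varpi$ correction in each process so that the final expressions are free of $\lambda$ in the PPP numerator and denominator as shown.
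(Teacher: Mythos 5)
Your proposal is correct and follows essentially the same route as the paper: the paper likewise obtains the theorem by inserting the mean (\ref{MeanInterference}), the second moment (\ref{eqn:SM_I}), and the cross-product (\ref{eqn:MP_Interference}) into the motion-invariant form (\ref{def:ICC2}), cancelling the squared-mean term, and dividing out the common factor $\lambda$ so that the process-specific corrections reduce to $\varpi^{\textup{MCP}}$ and $\varpi^{\beta\textup{-GPP}}$. The only nit is your phrase ``$1/\lambda^{2}$ normalization inside $\varpi$'' --- each correction in (\ref{eqn:SM_I}) and (\ref{eqn:MP_Interference}) is exactly $\lambda$ times the corresponding $\varpi$, so a single factor of $\lambda$ is extracted, not $\lambda^{2}$.
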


\noindent
{\bf Remark 1}: The interference correlation coefficient for the MCP is greater than for the PPP, 
since
 \begin{align}
 \zeta^{\textup{MCP}}(\|u\|)-\zeta^{ \textup{PPP}}(\|u\|)  
 = \frac{ (C_{2} - C_{1})  \varpi^{\textup{MCP}}(R_{\rm d},\bar{c}) }{ C_{2}(C_{2} + \varpi^{\textup{MCP}}(R_{\rm d},\bar{c}))   } \overset{(a)}{>} 0 ,  \nonumber
 \end{align}
where $(a)$ holds as  $\varpi^{\textup{MCP}}(R_{\rm d},\bar{c})>0$ and $C_{2}-C_{1} \geq 1/2$ \cite{K.2009Ganti}, and
$C_{1}=  \int_{\mathbb{R}^2} \ell_{\epsilon}(x) \ell_{\epsilon}(x-u)   \mathrm{d} x $ and $C_{2}= 2 \delta  \pi^2 (1-  \delta  ) \epsilon^{\delta-2} \csc(\delta \pi)$.

\noindent
{\bf Remark 2}: The interference correlation coefficient for the $\beta$-GPP is smaller than for the PPP, since
\begin{align} 
 \zeta^{\beta\textup{-GPP}}(\|u\|)-\zeta^{ \textup{PPP}}(\|u\|)  
 = \frac{ (C_{1} - C_{2})  \varpi^{\beta\textup{-GPP}}(\lambda) }{ C_{2}(C_{2} - \varpi^{\beta\textup{-GPP}}(\lambda))   } \overset{(a)}{<} 0, \nonumber 
\end{align}
where $(a)$ holds as   $C_{1}<C_{2} $ and $C_{2}>\varpi^{\beta\textup{-GPP}}$ (since $\exp \big( \!- \pi \lambda \|x-y\|^2 / \beta \big) < 1$). 

Fig. \ref{fig:Spatial_CC} shows the spatial-temporal correlation coefficient for different fields of interferers, which illustrates the properties discussed above.

\begin{figure}[htp]
\centering
\includegraphics[width=0.5\textwidth]{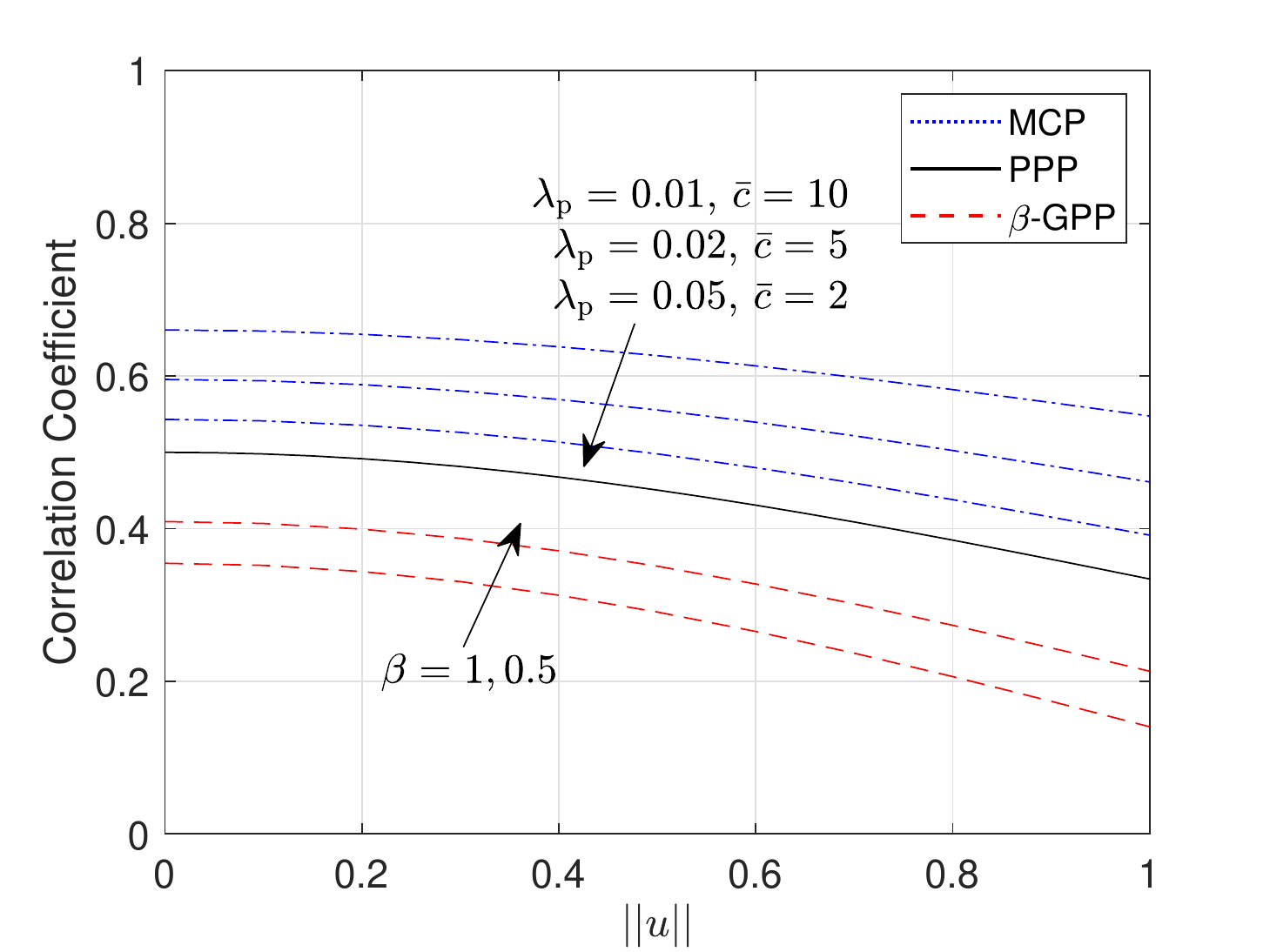} 
\caption{Correlation coefficient in ad hoc networks with different fields of interferers ($\lambda=0.1$, $\epsilon=1$ and $R_{\rm d}=1$).} \label{fig:Spatial_CC}
\end{figure}

\vspace{0.2cm}
\subsubsection{Moments of the CSP$_{\Phi}$ in Ad Hoc Networks}
 

Next, we derive the moments of the CSP$_\Phi$ in three different random fields of interferers by following the methodology in \cite{M.Apr.2016Haenggi}. 
For this,  
\begin{itemize}
\item We start by obtaining the CSP$_{\Phi}$ by averaging out the randomness of channel gains for a given point process $\Phi$ and a link distance.

\item     
We then derive the moments of the CSP$_\Phi$ by averaging over the 
spatial distributions of the interferers in the different random fields. 
Specifically,

\begin{itemize}  
\item  after averaging over the channel gains of the contact link and interfering links based on their distributions, the moments of the CSP$_\Phi$ can be represented by the expectation of the product of a function w.r.t. the locations of the interferers, i.e., $\mathbb{E} \big[ \prod_{x \in \Phi} f(x) \big] $. 

\item the expectation for Mat\'{e}rn cluster and Poisson fields of interferers can be derived based on the PGFLs of MCP and PPP given in (\ref{eqn:PGFL_MCP}) and (\ref{eqn:PPP_PGFL}), respectively, and that for the $\beta$-Ginibre field of interferers can be derived based on the distributions of the distances of the interfering links given in (\ref{eqn:PDF_gamma}).
 \end{itemize}
\end{itemize}

\begin{theorem} \label{thm:SP_MCP_field}
The moments of the CSP$_{\Phi}$ for a Mat\'{e}rn cluster field, Poisson field and $\beta$-Ginibre fields of interferers are given by (\ref{eqn:moments_field}), 
\begin{figure*} 
\begin{align} 
\mathcal{M}_{P_{\rm s}} (b) = \begin{dcases} \exp \! \Bigg(   - \lambda_{\rm p} \! \int_{\mathbb{R}^2} \!\! \bigg( \!1 \! - \exp \! \bigg( \! - \bar{c}    \! +  \!  \frac{ \bar{c} V_{b}(x,\theta)}{\pi R^2_{\rm d}}    \!   \bigg) \! \bigg)  \mathrm{d} x \!  \Bigg), \hspace{25mm}  \textup{MCP}  \\
\exp \Bigg( - \pi \lambda \theta^{\delta} r^2_{\rm t}  \frac{\Gamma(1-\delta) \Gamma(b+\delta)}{\Gamma(b)}       \Bigg ) ,  \hspace{40mm} \, \, \,  \textup{PPP} \\ 
 \int^{\infty}_{0}  \!\! e^{-\pi \lambda q / \beta} \bigg( \frac{ \beta   }{1 + \theta   r^{\alpha}_{\rm t}  q^{-\alpha/2} }  +  1 - \beta \bigg)^{b}  \prod_{ j\geq 1  } \frac{ (\pi \lambda / \beta )^{j} }{\Gamma(j) } q^{j-1}  \mathrm{d} q, \hspace{6mm}  \textup{$\beta$-GPP} 
\end{dcases} \label{eqn:moments_field}
\end{align}
 \hrulefill 
\end{figure*}
where 
\begin{align}
 V_{b}(x,\theta) & = \int_{ \mathbb{B}(0,R_{\rm d}) } \!\! \bigg ( \frac{1}{  1 \!+\! \theta r^{\alpha}_{\rm t}  \| x-y \|^{-\alpha}   } \bigg)^{\!b}  \mathrm{d} y .   
\end{align}
\end{theorem}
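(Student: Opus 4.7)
The plan is to first derive the conditional success probability given $\Phi$ by averaging over the Rayleigh fading, then take the $b$-th power and the outer expectation over $\Phi$ using the appropriate tool for each point process. For the initial step, conditioning on $\Phi$ and on the interferer fading coefficients, the event $\eta>\theta$ in (\ref{eqn:eta_adhoc}) is equivalent to $h_{\rm t}>\theta r_{\rm t}^{\alpha}\sum_{j} h_{j}\|x_{j}\|^{-\alpha}$; since $h_{\rm t}\sim\mathcal{E}(1)$, this event has probability $\exp(-\theta r_{\rm t}^{\alpha}\sum_{j} h_{j}\|x_{j}\|^{-\alpha})$, and averaging over the i.i.d.\ exponential interferer fading yields $P_{\rm s\mid\Phi}(\theta)=\prod_{j\in\mathbb{N}}(1+\theta r_{\rm t}^{\alpha}\|x_{j}\|^{-\alpha})^{-1}$. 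Raising to the $b$-th power and taking the outer expectation reduces $\mathcal{M}_{P_{\rm s}}(b)$ to a PGFL-type expectation $\mathbb{E}[\prod_{j}\upsilon(x_{j})]$ with $\upsilon(x)=(1+\theta r_{\rm t}^{\alpha}\|x\|^{-\alpha})^{-b}$.

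For the PPP I would invoke the PGFL (\ref{eqn:PPP_PGFL}) with this $\upsilon$ and pass to polar coordinates. The substitution $u=\|x\|^{\alpha}/(\theta r_{\rm t}^{\alpha})$ reduces the outer integrand to the Mellin-type integral $\int_{0}^{\infty} u^{-\delta-1}(1-(1+u)^{-b})\,du$, which I would evaluate by integration by parts followed by the Beta integral identity $\int_{0}^{\infty} u^{-\delta}(1+u)^{-b-1}du=B(1-\delta,b+\delta)$, obtaining the stated closed form involving $\Gamma(1-\delta)\Gamma(b+\delta)/\Gamma(b)$. For the MCP I would directly apply (\ref{eqn:PGFL_MCP}) with the cluster moment generating function $M(t)=e^{-\bar{c}(1-t)}$, recognizing via the uniform daughter density (\ref{PDF_MCP_daughter}) and the symmetry of $\mathbb{B}(0,R_{\rm d})$ that the inner convolution $\int_{\mathbb{R}^{2}}\upsilon(x+y)f^{\rm M}(y)\,dy$ equals $V_{b}(x,\theta)/(\pi R_{\rm d}^{2})$; substituting this back into $1-M(\cdot)$ reproduces the MCP formula in (\ref{eqn:moments_field}).

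The $\beta$-GPP case is the main obstacle, since no closed-form PGFL is available. I would instead exploit the distributional representation from Property~2 of the $\beta$-GPP: jointly, $\{\|x_{j}\|^{2}\}_{j\in\mathbb{N}}$ is equal in law to a $\beta$-Bernoulli thinning of the independent gamma collection $\{Q_{j}\}_{j\geq 1}$ with $Q_{j}\sim\mathcal{G}(j,\beta/(\pi\lambda))$. Introducing independent marks $B_{j}\sim\mathrm{Bern}(\beta)$, the product becomes $\prod_{j}[B_{j}(1+\theta r_{\rm t}^{\alpha} Q_{j}^{-\alpha/2})^{-b}+(1-B_{j})]$; averaging over the $B_{j}$'s first (they are independent of the $Q_{j}$'s) produces factors of the form $\beta(1+\theta r_{\rm t}^{\alpha} Q_{j}^{-\alpha/2})^{-b}+(1-\beta)$, and the independence of the $Q_{j}$'s factorizes the overall expectation into a product over $j\geq 1$ of one-dimensional integrals against the gamma density (\ref{eqn:PDF_gamma}), matching the stated expression. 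The main technical delicacies here are justifying the interchange of expectation with the infinite product and verifying convergence, which follows because $Q_{j}$ concentrates around $j\beta/(\pi\lambda)$ for large $j$, so the $j$-th factor converges to unity sufficiently fast.
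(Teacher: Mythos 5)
Your overall strategy coincides with the paper's Appendix A: condition on $\Phi$, average the Rayleigh fading to obtain $P_{\rm s\mid\Phi}=\prod_{j}(1+\theta r_{\rm t}^{\alpha}\|x_{j}\|^{-\alpha})^{-1}$, raise to the $b$-th power, and remove the spatial randomness via the PGFL (PPP, MCP) or the gamma/thinning representation ($\beta$-GPP); your MCP step is identical to the paper's. For the PPP your final integral evaluation differs: you use integration by parts plus the Beta integral $\int_{0}^{\infty}u^{-\delta}(1+u)^{-b-1}\,\mathrm{d}u=B(1-\delta,b+\delta)$, whereas the paper expands $\bigl(1-\tfrac{\theta r_{\rm t}^{\alpha}}{z+\theta r_{\rm t}^{\alpha}}\bigr)^{b}$ as a binomial series, integrates term by term via Gradshteyn--Ryzhik 3.196.2, and resums with $\sum_{k}\binom{b}{k}\binom{\delta-1}{k-1}=\Gamma(b+\delta)/(\Gamma(b)\Gamma(1+\delta))$. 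Both routes land on $\Gamma(1-\delta)\Gamma(b+\delta)/\Gamma(b)$; yours is shorter and avoids the series interchange, at the cost of checking that the boundary terms vanish (they do, since $0<\delta<1$ and $\mathrm{Re}(b)>0$).

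The one substantive divergence is the $\beta$-GPP case. Writing the retained set as $\{Q_{j}:B_{j}=1\}$ with $B_{j}\sim\mathrm{Bern}(\beta)$ independent of the $Q_{j}$, each factor of $(P_{\rm s\mid\Phi})^{b}$ is $(B_{j}a_{j}+1-B_{j})^{b}=B_{j}a_{j}^{b}+1-B_{j}$ with $a_{j}=(1+\theta r_{\rm t}^{\alpha}Q_{j}^{-\alpha/2})^{-1}$, so averaging over $B_{j}$ gives per-factor terms $\beta a_{j}^{b}+1-\beta$, i.e., the exponent $b$ acts only on the interference factor --- which is what you obtain. The paper's Theorem~\ref{thm:SP_MCP_field} and its step $(e)$ in Appendix A instead place the exponent outside the whole bracket, $(\beta a_{j}+1-\beta)^{b}$; the two expressions coincide only at $b=1$. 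Your placement is the one that follows from the indicator algebra, so your derivation is correct and in fact exposes an apparent misplacement of the exponent in the paper's stated $\beta$-GPP formula for $b\neq 1$. Your closing remark on interchanging the expectation with the infinite product is a legitimate point the paper does not address; for real $b>0$ it is harmless since each factor lies in $[1-\beta,1]$ and the partial products decrease monotonically.
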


\begin{proof}
See \textbf{Appendix A}.
\end{proof}

Fig.~\ref{fig:Spatial_SP_compare} shows the average success probabilities (i.e., $\mathcal{M}_{P_{\rm s}}(1)$) in ad hoc networks with different fields of interferers.
In order to reveal the entire SIR distribution, we plot the success probabilities in the M\"{o}bius homeomorphic (MH) scale. The conversion from linear scale to MH scale is given by the function $x \, \text{MH} = \frac{x}{1-x}$~\cite{M.2020Haenggi}, which maps the   one-sided infinite support  $[0,\infty)$ to the unit interval $[0,1)$. 
 It can be observed that, in ad hoc networks, spatial repulsion and attraction among the interferers result in lower and higher success probabilities, respectively, compared to the independently located interferers. This can be intuitively understood from the fact that stronger spatial attraction (repulsion) increases the chance that the interferers are located further away (closer to) the target receiver (as shown in Fig.~\ref{fig:Illustration_SpatialPP}). 
  
\begin{figure}[htp]
\centering
\includegraphics[width=0.5\textwidth]{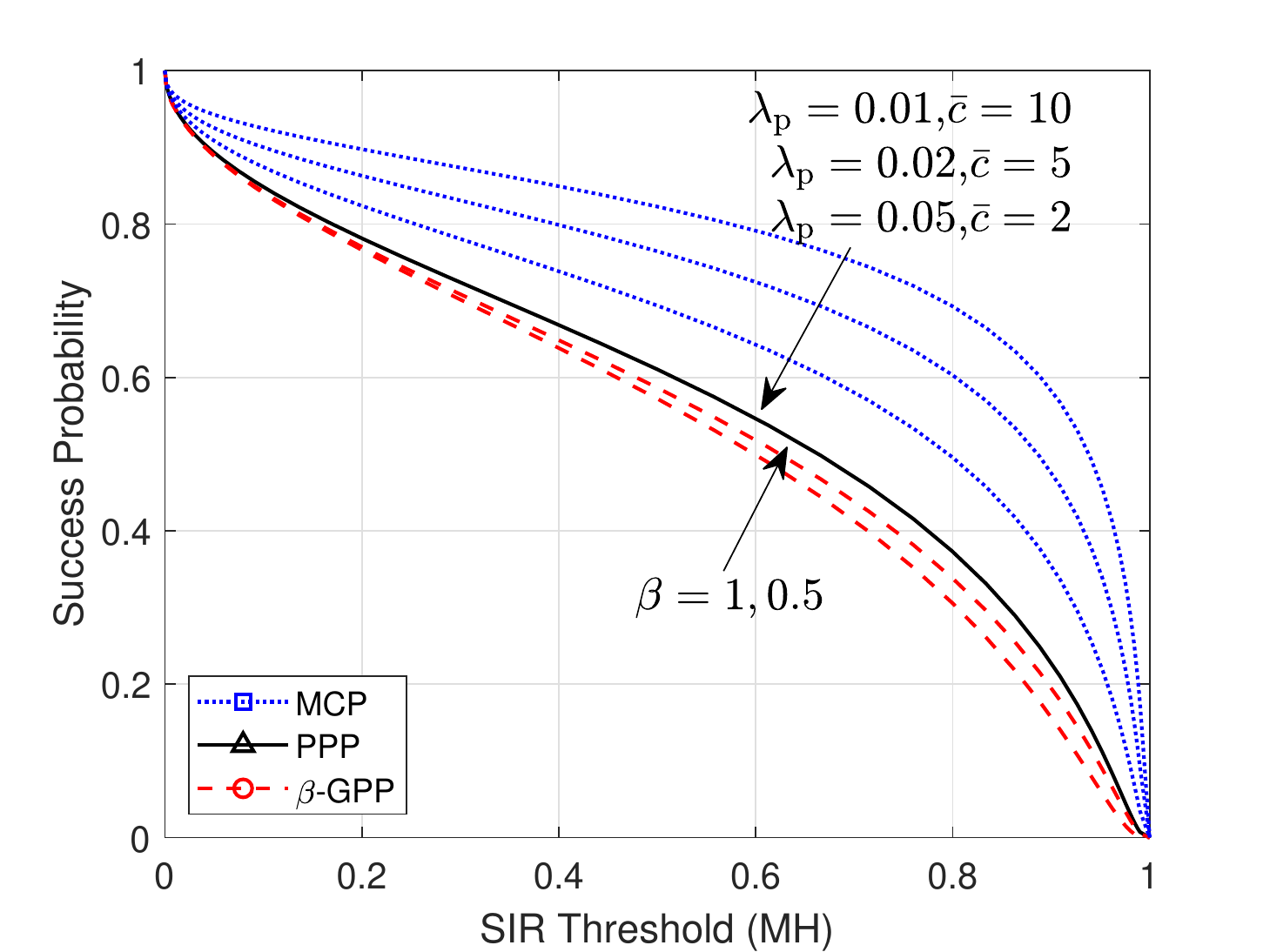} 
\caption{Success probability in different ad hoc networks ($\lambda=0.1$, $r_{\rm t}=1$, 
and $R_{\rm d}=1$. The curves and markers correspond to the 
analytical and simulation results, respectively.).} \label{fig:Spatial_SP_compare}
 \vspace{-3mm}
\end{figure}
 
 \begin{figure*} [htp]
  \centering
   \vspace{-3mm}  
   \subfigure [  MCP ($\lambda=0.1$) ]
    {
   \centering   
   \includegraphics[width=0.48 \textwidth]{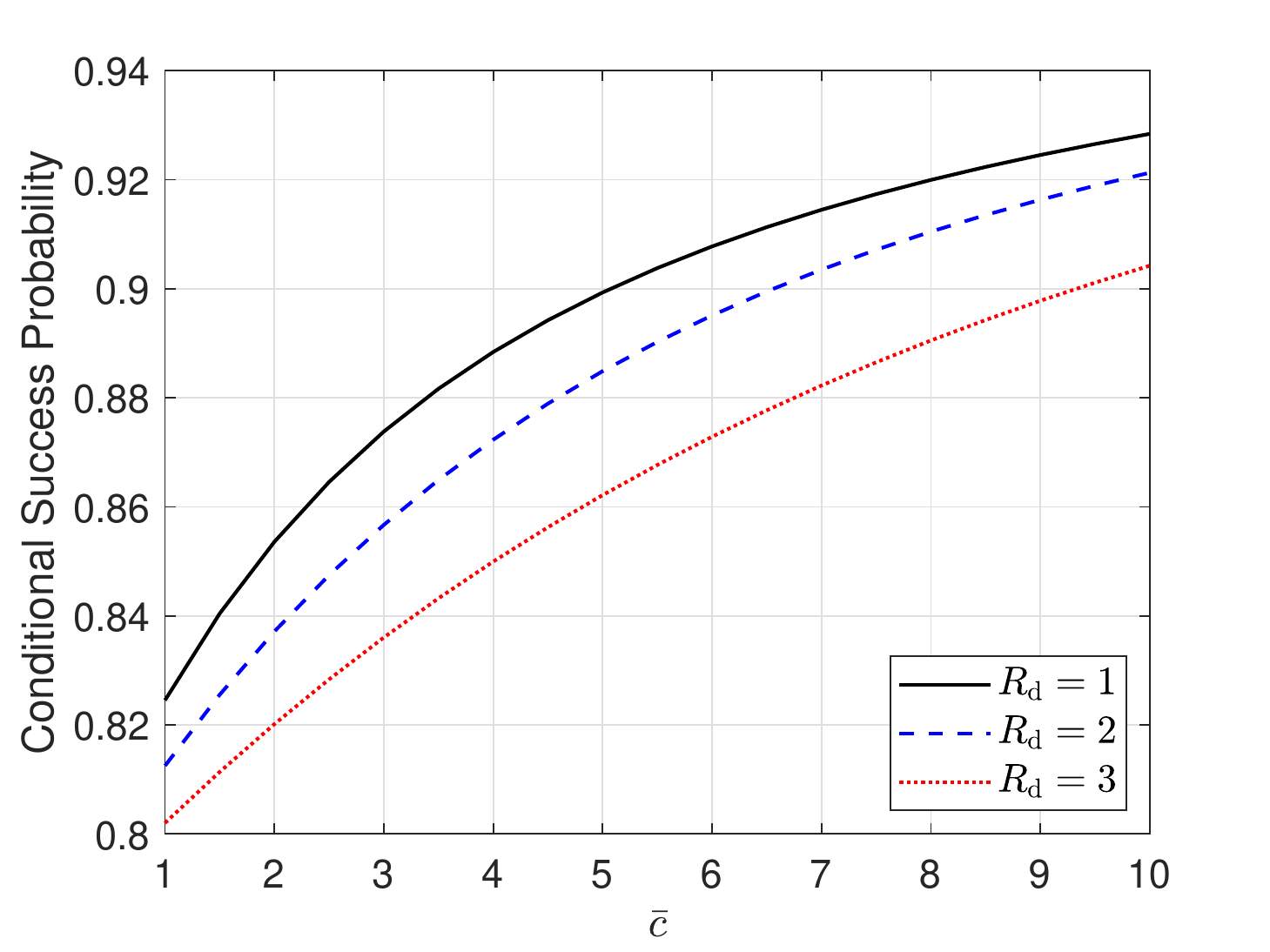}}
   \centering  
   \subfigure  [ $\beta$-GPP  ($\lambda=0.1$)
   ] {
   \centering 
  \includegraphics[width=0.48 \textwidth]{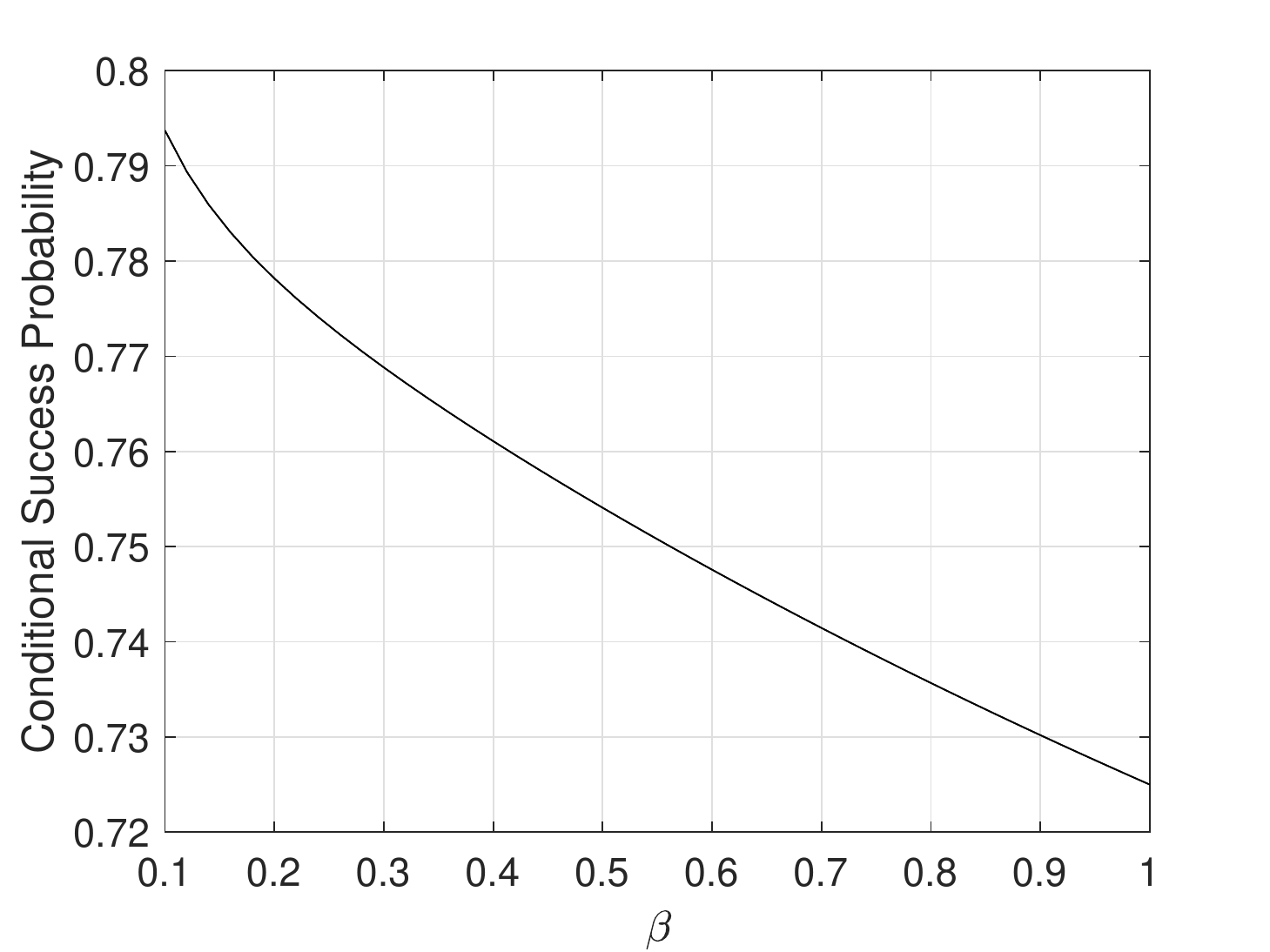}}
  \captionsetup{justification=centering}
  \caption{Temporal CSP in ad hoc networks with non-Poisson fields of interferers  ($\alpha=4$). } 
  \centering
  \label{fig:non_Poisson_field}
  \end{figure*}

Subsequently, we investigate the temporal dependence of successful transmissions by evaluating the temporal CSP, i.e., $\mathcal{M}_{P_{\rm s}}(2)/\mathcal{M}_{P_{\rm s}}(1)$, in  Fig.~\ref{fig:non_Poisson_field}. It can be found that the CSP increases when the interferers are more clustered (i.e., with smaller $R_{\rm d}$ or larger $\bar{c}$) and decreases when the interferers are more scattered (i.e., with larger $\beta$).  The reason is that spatial attraction and repulsion cause a lower and higher intensity of aggregated interference at the target receiver, and thus a transmission attempt is more likely to succeed given a previous successful transmission.
 
Furthermore, we investigate the SIR meta distribution for a target link in ad hoc networks. 
  According to the Gil-Pelaez theorem~\cite{Gil-Pelaez1951}, the exact SIR meta distribution can be represented as an integral function of the moments of the CSP$_\Phi$ as  
 \begin{align}
 \bar{F}(\theta,x) = \frac{1}{2} \! + \! \frac{1}{\pi} \int^{\infty}_{0} \frac{\Im \big(e^{-\jmath u \log x } M_{\jmath u} (\theta) \big) }{u } \mathrm{d} u,  
 \end{align}
 herein  $\Im(z)$ is the imaginary part of $z$ and $\jmath = \sqrt{-1}$ denotes the imaginary unit. 
  
 Fig.~\ref{fig:Spatial_meta_field} shows the SIR meta distribution in ad hoc networks with different fields of interferers. Compared with the success probability in Fig.~\ref{fig:Spatial_SP_compare}, the SIR meta distribution gives the entire distribution of the CSP$_{\Phi}$. For example, for a $0.5$ MH SIR threshold, although the average success probability for PPP is slightly greater than that of $\beta$-GPP, the percentage of the links achieving  $90\%$ reliability in PPP is higher than twice of that in $\beta$-GPP.
 
\subsubsection{Moments of the CSP$_\Phi$ in Downlink Cellular Networks}
  
With Rayleigh fading, the exact moments of the CSP given $\Phi^{\rm M}$ and $\Phi^{\rm G}_{\beta}$ can be obtained using the following steps~\cite{M.Apr.2016Haenggi}: 
\begin{itemize}
\item Deriving the CSP$_\Phi$ by averaging over the randomness of the channel gains of the contact link and interfering links based on the PDF of the exponential distribution due to Rayleigh fading;

\item Deriving the moments of the CSP$_\Phi$ by deconditioning on 
the spatial distributions of the interferers in the different random fields and contact distance.
Specifically,

\begin{itemize}

\item The spatial randomness of the interferers in Mat\'{e}rn cluster and Poisson downlink networks is averaged out based on the reduced PGFL of MCP and PPP, respectively, given in (\ref{eqn:reduced_PGFL}) and (\ref{eqn:PPP_PGFL}), 
and that in Ginibre downlink networks is averaged out based on the PDF of the distances of the interfering links given in (\ref{eqn:PDF_gamma});

\item Based on the nearest-BS associated rule, the contact distances in Mat\'{e}rn cluster, Poisson, and Ginibre downlink networks are averaged out based on their PDFs given in  (\ref{eqn:pdf_CD_MCP}), (\ref{eqn:PDF_CD_PPP}), and (\ref{eqn:PDF_gamma}), respectively.

\end{itemize}

\end{itemize}

  \begin{figure}[htp] 
  \centering
  \includegraphics[width=0.5\textwidth]{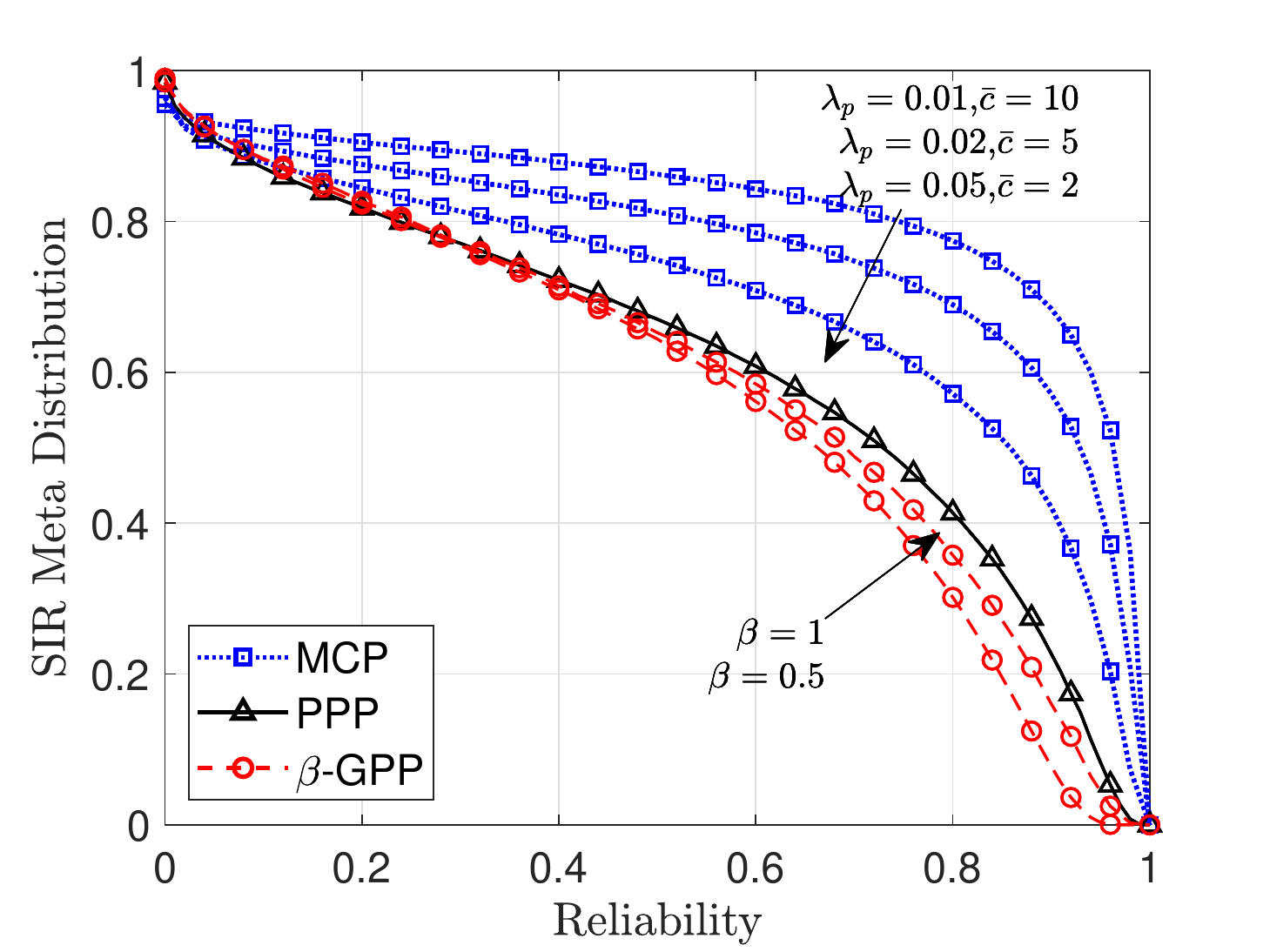} 
  \caption{SIR meta distribution in ad hoc networks  with different fields of interferers. ($\theta=0.5$ MH, $r_{\rm t}=1$, $\lambda=0.1$,  
  and $R_{\rm d}=1$. The curves and markers 
 correspond to the analytical and simulation results, respectively.)} \label{fig:Spatial_meta_field}
  \end{figure}
  
For non-Poisson networks, since deriving the exact downlink success probability 
is tedious (if not impossible) and the resulting expressions are cumbersome, we use an 
approximation method, referred to as {\em Approximate SIR analysis based on the PPP (ASAPPP)} method \cite{A.2015Guo,M.2014Haenggi,S.2019Kalamkar}, to simplify the evaluation of the SIR distribution.
ASAPPP provides an approximate SIR distribution in a non-Poisson network, obtained from the SIR distribution in a Poisson network along with a horizontal shift (in the dB scale).   
This method is based on the insight that when the disparity between the target system model and the Poisson model purely lies in the spatial configuration of the points, shifting the SIR threshold $\theta$ of the Poisson model by a coefficient $G_0$ results in a close approximation of the success probability and the meta distribution of the target system model.    
The subscript in $G_0$ indicates that the shift is calculated when $\theta$ approaches $0$. In other words, the ASAPPP method becomes exact as $\theta \to 0$. 
As shown in \cite{K.Mar.2016Ganti}, ASAPPP yields a very good approximation across the whole SIR distribution and 
is barely susceptible to the fading and path-loss models.

 The asymptotic gain $G_0$ can be obtained by taking the ratio of the mean interference-to-signal ratios (MISRs) of the considered point process to that of the PPP as \cite{M.Dec.2014Haenggi}
 \begin{align} 
 G_{0} = \frac{\textup{MISR}_{\textup{PPP}} }{ \textup{MISR}}   = \frac{2}{\alpha-2} \frac{1}{\textup{MISR}},
 \end{align}
 where MISR is defined as
  \begin{align}
 \textup{MISR} &  \triangleq \mathbb{E} \Bigg[  \frac{\sum^{\infty}_{ k = 2}    \|x_{k}\|^{-\alpha} }{ \|x_{1}\|^{-\alpha} } \Bigg],   
 \end{align}  
and $\textup{MISR}_{\textup{PPP}}$ 
is the MISR in Poisson downlink networks, which can be obtained by utilizing the distance ratio distribution as follows~\cite{M.Dec.2014Haenggi}. 
\begin{align} \label{eqn:MISR_2D}
 \textup{MISR}_{\textup{PPP}} & =   \sum^{\infty}_{j=2} \mathbb{E} \bigg[  \bigg ( \frac{r_{1}}{r_{j}} \bigg)^{\! \alpha}  \bigg] \nonumber \\
& =     \sum^{\infty}_{j=2} \mathbb{E} \big[     \varrho_{j} ^{ \alpha}  \big]   \nonumber \\
& \overset{(a)}{=}  \sum^{\infty}_{j=2} \int^{1}_{0} \varrho^{\alpha} 2 (j-1) \varrho (1 - \varrho^2)^{j-2} \mathrm{d} \varrho  \nonumber \\ 
&  =   \sum^{\infty}_{j=2} \frac{\Gamma(1+\alpha/2 ) \Gamma(j)}{ \Gamma(j+\alpha/2 ) }  \nonumber \\
&  =  \frac{2}{\alpha-2}, 
\end{align}  
 where $(a)$ follows the PDF of  $\varrho_{j}$ given in (\ref{eqn:PDF_ratio}).
 
The numerical value of $G_0$ can be obtained easily from Monte Carlo simulations for a given network geometry, i.e., $\lambda_{\rm p}$, $\bar{c}$, and $R_{\rm d}$ for the MCP and $\lambda$ and $\beta$ for the $\beta$-GPP. Note that, as found in \cite{Wei2016H},  the MISR-based gain of the $\beta$-GPP can be accurately approximated as 
\begin{align} \label{eqn:G_GPP}
G_{0} \approx 1 + \beta/2,
\end{align}
which is 
insensitive to the network density and path-loss exponent~\cite{K.Mar.2016Ganti}. Therefore, the success probability of the typical user in a $\beta$-GPP network is approximately identical to that in a Poisson network with the SIR threshold scaled from $\theta$ to $\theta/(1+\beta/2)$.

Fig. \ref{fig:Spatial_SP_ASAPPP} and Fig. \ref{fig:Spatial_meta_AMAPPP}, respectively, confirm the effectiveness of the ASAPPP method for approximating the SIR distribution and the SIR meta distribution in MCP and $\beta$-GPP downlink networks. 
As expected, for both types of networks, the ASAPPP method yields a more accurate approximation of SIR distribution as the SIR threshold decreases. 
For the MCP model, we can observe that the ASAPPP method is more accurate when $\Phi^{\rm M}$ is less clustered (e.g., with smaller $\bar{c}$ given $\lambda$).
When the network is more clustered (e.g., when $\bar{c}=10$), there exists an observable 
gap between the approximation and the simulation results for both success probability and SIR meta distribution. The reason is that a higher degree of clustering results in a slowly growing rate of the asymptotics~\cite{S.2019Kalamkar}.
Moreover, for the $\beta$-GPP, the insignificant disparity   between the simulation and the approximation results 
can be ascribed to the approximation of the MISR-based gain given in (\ref{eqn:G_GPP}).

   \begin{figure*}[htp]  
    \centering
      \begin{minipage}[c]{0.48 \textwidth}
       \includegraphics[width=0.98\textwidth]{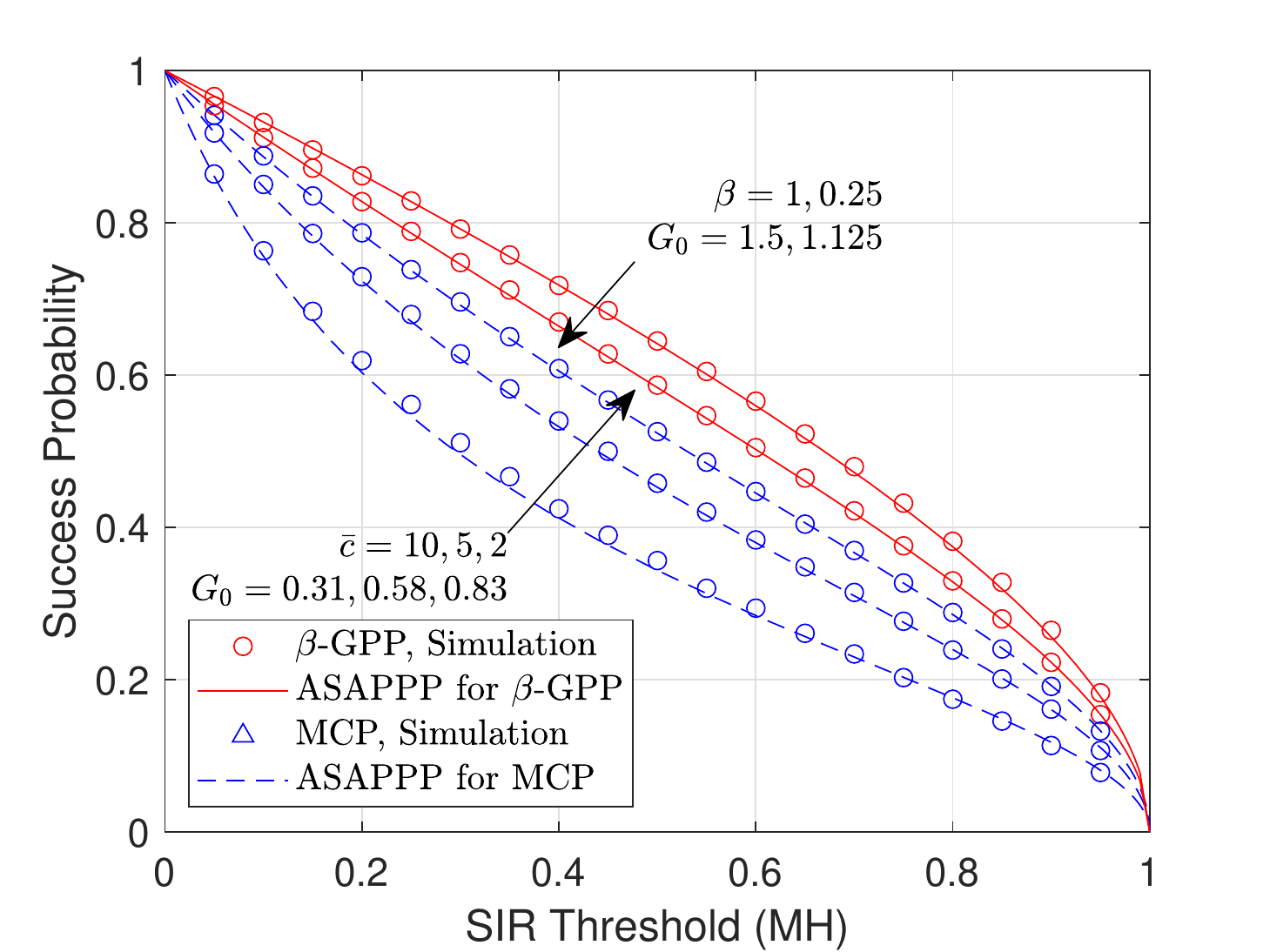} 
       \caption{The ASAPPP approximation of non-Poisson cellular networks ($\alpha=4$, $\lambda=0.1$, $R_{\rm d}=4$). 
       }  \vspace{4mm} \label{fig:Spatial_SP_ASAPPP}
      \end{minipage}  \hspace{3mm}
      \begin{minipage}[c]{0.48 \textwidth}\vspace{0mm}
        \includegraphics[width=0.98\textwidth]{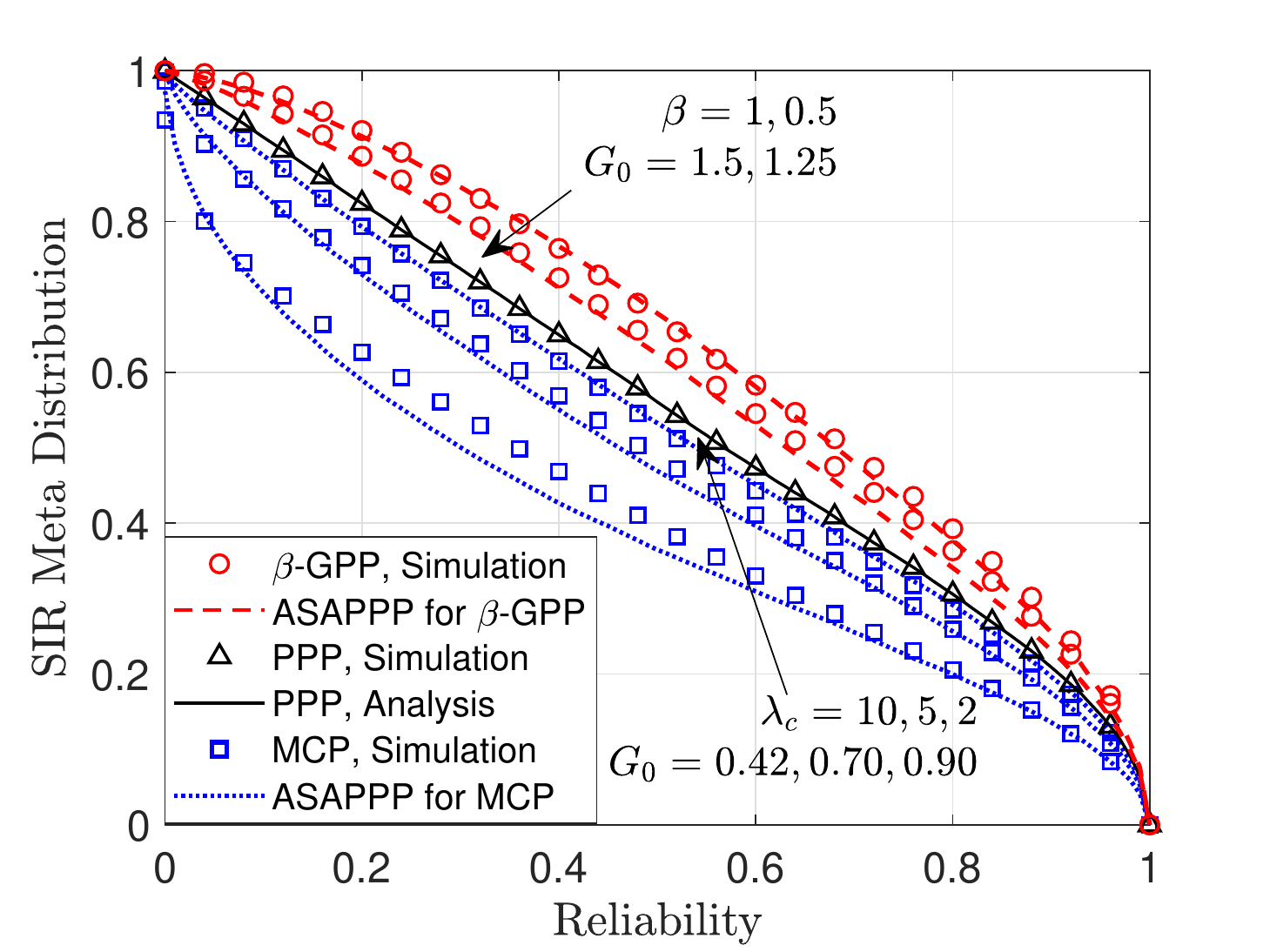}  
        \caption{The ASAPPP approximation of SIR meta distribution in non-Poisson cellular networks ($\theta=0.5$ MH, $\alpha=4$, $\lambda=0.1$, $R_{\rm d}=5$). } \label{fig:Spatial_meta_AMAPPP}
        \end{minipage}  
     \end{figure*}

 Next, in Fig. \ref{fig:spatial_DL}(a) and Fig. \ref{fig:spatial_DL}(b), we evaluate the temporal CSPs  for MCP and $\beta$-GPP downlink networks, respectively. We observe that the CSP decreases slightly when the network becomes more clustered (i.e., with larger $\bar{c}$ and/or smaller $R_{\rm d}$ with fixed $\lambda$) and increases slightly when the network becomes more repulsive (i.e., with larger $\beta$). 
 However, the curves are almost flat, i.e., the temporal CSPs in Mat\'{e}rn cluster and Ginibre downlink networks are generally non-sensitive to attractiveness and repulsiveness of the spatial points.

  \begin{figure*} [htp]
  \centering
   \subfigure [  MCP   ]
    {
   \centering   
   \includegraphics[width=0.48 \textwidth]{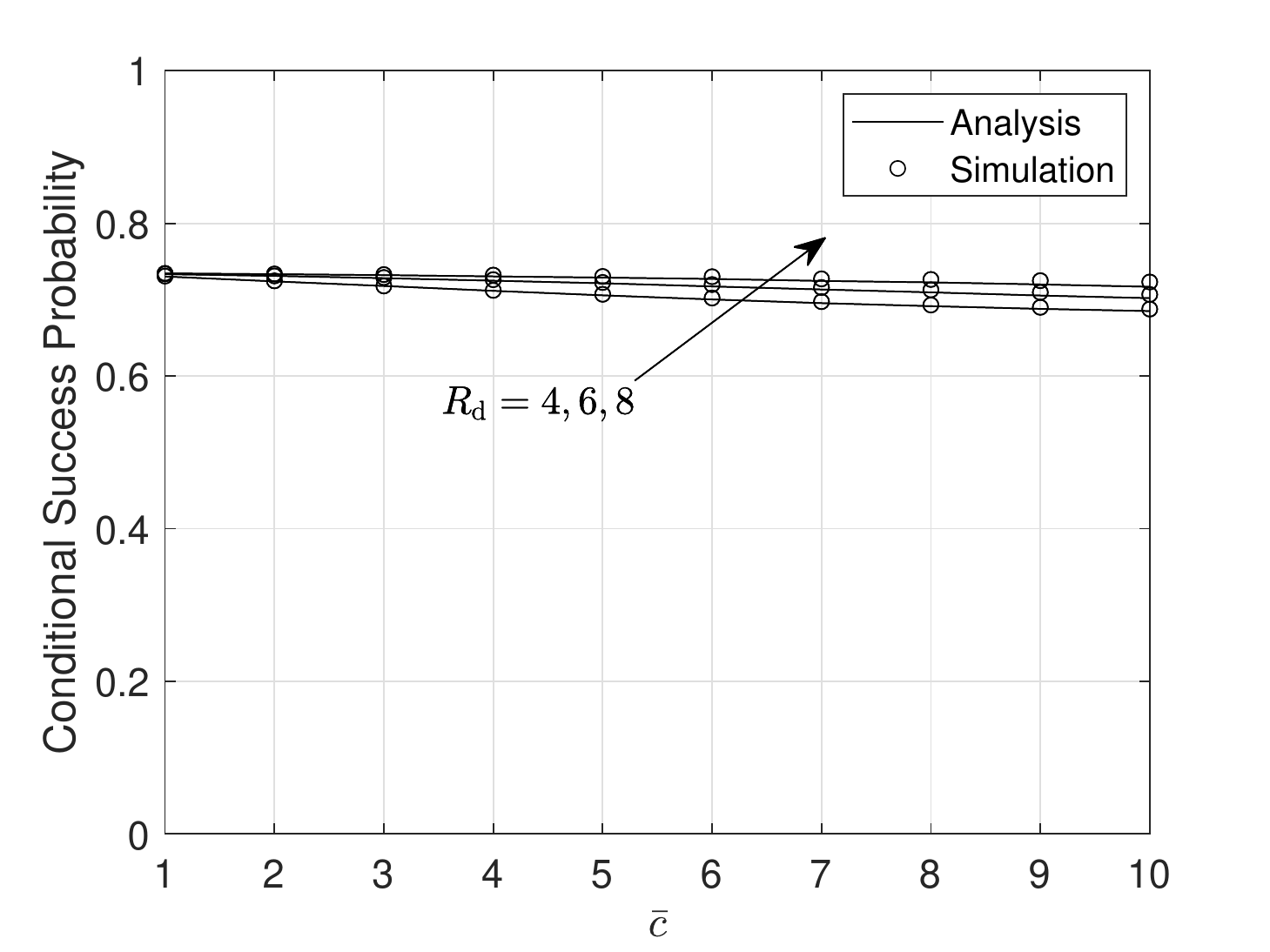}} \label{fig:spatial_DL1}
   \centering  
   \subfigure  [ $\beta$-GPP   
   ] {
   \centering
  \includegraphics[width=0.48 \textwidth]{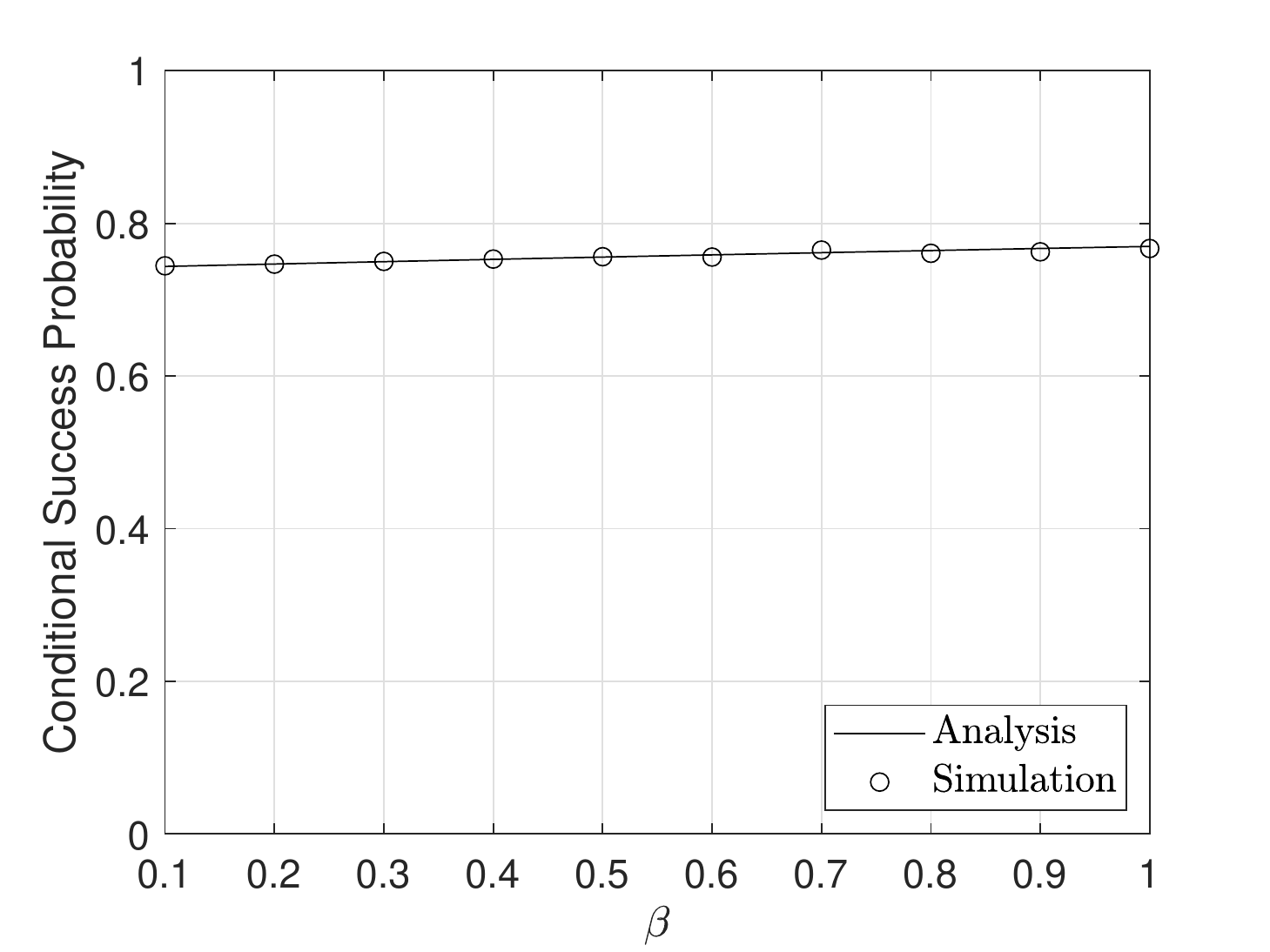}}
  \label{fig:spatial_DL2}
  \caption{Temporal CSP in non-Poisson cellular networks ($\alpha=4$, $\lambda=0.1$). } 
  \centering
  \label{fig:spatial_DL}
  \end{figure*}

\subsection{Summary and Discussion}

We have analyzed the impact of the spatial distribution of transmitters on the interference correlation coefficient and success probability.
Specifically, we have presented the derivations of spatial-temporal interference correlation coefficient and success probability and SIR meta distribution of a target link in MCP, PPP, and $\beta$-GPP fields of interferers. We have also introduced an approximation of the success probabilities and SIR meta distributions in non-Poisson networks based on the analysis of a Poisson network by scaling the SIR threshold. The main lessons learned are as follows.

\begin{itemize}
 
 \item With non-Poisson fields of interferers, the spatial-temporal interference is  more and less correlated when the interferers are distributed more attractively and repulsively, respectively.

\item Compared with a Poisson field of interferers, the Mat\'{e}rn cluster and $\beta$-Ginibre field of interferers increases and decreases the success probability of a target link, respectively. Moreover, the successful transmission events are more and less temporally-correlated when the locations of the interferers are more clustered and scattered, respectively.

\item Compared with a Poisson downlink network, the Mat\'{e}rn cluster and Ginibre downlink networks render a lower and a higher success probability, respectively. Additionally, the successful transmission events are less and more temporally-correlated when the interferers are more attractive and repulsive, respectively.

\end{itemize}

 \vspace{0.2cm}
{\em Open Technical Issues}:
The point process models presented in Section~\ref{sec:spatial} have been adopted to characterize the general spatial features (e.g., independence, repulsion and clustering) of wireless networks. In addition to these generic models, some point process models have been established for some specialized network scenarios. For example, reference~\cite{N.2020EPPDeng} proposes an {\em energized point process} to model the spatial distribution of wireless-powered devices. The model exhibits spatial correlation of the RF-powered nodes that can harvest sufficient energy from a Poisson field of RF sources.
Besides, reference~\cite{K.2014Stamatiou} introduces a {\em Poisson rain process} to model cellular networks with spatial-temporal traffic. Specifically, the model employs a space-time PPP to model traffic arrives which are then assigned to PPP-distributed BSs based on different allocation schemes.
Future efforts should be dedicated to developing point processes that incorporate spatial-temporal features (e.g., fading, shadowing, blockage, and renewable energy arrivals) for more specific system models.

\section{Location-Dependent Analysis of Cellular Models} \label{eqn:location}

The locations of the users play a crucial part in their performance. Conditioning the regions (e.g., cell center and boundary) of the user of interest allows a location-dependent analysis \cite{K.2019Feng,K.2019Feng2} for a location-specific user (LSU) in cellular networks. The objective of this section is to demonstrate the effect of the location of the target user on the SIR distribution. For this, we derive the moments of the CSP$_\Phi$ and the SIR gain for an LSU in different regions.  

\subsection{System Models}

For a stationary point process $\Phi \subset \mathbb{R}^2$ of BSs, let $x_{j}(u) \in \Phi$ represent the location of the $j$-th nearest BS to the LSU at $u$.  Under the nearest-BS association principle, 
$x_{1}(u)$ and $x_{2}(u)$ are the locations of the associated (serving) BS and the nearest interfering BS to the LSU, respectively.  
The region of a user's location can be defined by the relationship between the link distances to the associated BS and to the nearest interfering BS. 
 
For $\rho \in [0,1]$, the regions of {\em cell-center user} and {\em cell-boundary user} are defined, respectively, as
\begin{align}
&\mathcal{R}_{\rm c} \triangleq \{ u  \in \mathbb{R}^2: \|   x_{1}(u) - u \| \leq \rho \|  x_{2}(u) - u \|  \} ,  \nonumber  \\
&\mathcal{R}_{\rm b} \triangleq \{ u  \in \mathbb{R}^2:  \|   x_{1}(u) - u \| > \rho \|   x_{2}(u) - u \|  \}.  \nonumber
\end{align}

The area fraction of a region 
can be determined by the probability that 
an arbitrary location, say $o$,  falls into that region, which solely depends on $\rho$. 
For the analysis, we focus on the case where $\Phi$ is a PPP. 

The area fractions of $\mathcal{R}_{\rm c}$ and $\mathcal{R}_{\rm b}$ 
can be obtained based on the CDF of $\varrho_{2}$ (given in~(\ref{eqn:CDF_ratio})) as 
 \begin{align} 
&   \mathbb{P} [ o \in \mathcal{R}_{\rm c} ] =  \mathbb{P} [ \varrho_{2} \leq  \rho ] = F_{\varrho_{2}}  (\rho)
 =
  \rho^2  \nonumber
  \end{align}
  and
  \begin{align}
  \mathbb{P} [ o \in \mathcal{R}_{\rm b} ] = 1 - \mathbb{P} [ o \in \mathcal{R}_{\rm c} ] 
  = 1 - \rho^2 . \nonumber 
 \end{align}

Moreover, in the special cases when a cell-boundary user has two or three equidistant closest BSs, it is referred to as an {\em edge user} or a {\em vertex user}, respectively.
The regions of the edge users (one-dimensional) and vertex users (zero-dimensional) are defined, respectively, as
\begin{align}
& \mathcal{R}_{\rm e} = \{ u  \in \mathbb{R}^2: \|   x_{1}(u) - u \| = \|  x_{2}(u) - u \|   \},  \nonumber  \\
& \mathcal{R}_{\rm v} = \{ u  \in \mathbb{R}^2: \|   x_{1}(u) - u \| = \|  x_{2}(u) - u \| \nonumber \\
& \hspace{40mm} = \|  x_{3}(u) - u \|  \}.
\end{align} 
Note that in a two-dimensional homogeneous PPP there exists no location with more than three nearest BSs almost surely~\cite{N.2013Chiu}.

In the following, we explore and the performance at five different types of typical users: (i) The  standard typical user, whose performance represents the average of all users; it is referred to in this section as the typical general user; (ii) the typical cell-center user, whose performance represents the average of all cell-center users; (iii) the typical cell-boundary user, whose performance represents the average of all cell-boundary users; (iv) the typical cell-edge user whose performance represents the average of all locations on the cell edges; and (v) the typical vertex user, whose performance represents the average of all vertex points of the cells.

In Poisson cellular networks, the PDF of the contact distance of the typical vertex user is given by~\cite{L.2005Muche,Y.Aug.2013Jung} 
\begin{align} \label{eqn:PDF_VU_2D}
f^{\rm v}_{r_{1}}(r) = 2 (\lambda \pi)^2 r^3 \exp(-\lambda \pi r^2).
\end{align}

The success probability of the typical cell-center and cell-boundary users can be defined, respectively, as 
\begin{align}
& \bar{F}^{\rm c}_{\eta} (\theta) = \mathbb{P} [ \eta > \theta \mid o \in \mathcal{R}_{\rm c} ],  \nonumber \\
  \quad \textup{and} \quad & 
\bar{F}^{\rm b}_{\eta} (\theta) = \mathbb{P} [ \eta > \theta \mid o \in \mathcal{R}_{\rm b} ].
\end{align} 
 
Fig.~\ref{fig:2D_regions} shows a realization of network partitions of a Poisson network with unit intensity, where the locations of the BSs are represented by the black circles. 
White and cyan regions represent the cell-center regions $\mathcal{R}_{\rm c}$ and cell-boundary regions $\mathcal{R}_{\rm b}$, respectively. The blue triangle markers represent 
the region of the vertex users and blue lines 
represent the region of the edge users.

\begin{figure}[htp]
\centering
\includegraphics[width=0.5\textwidth]{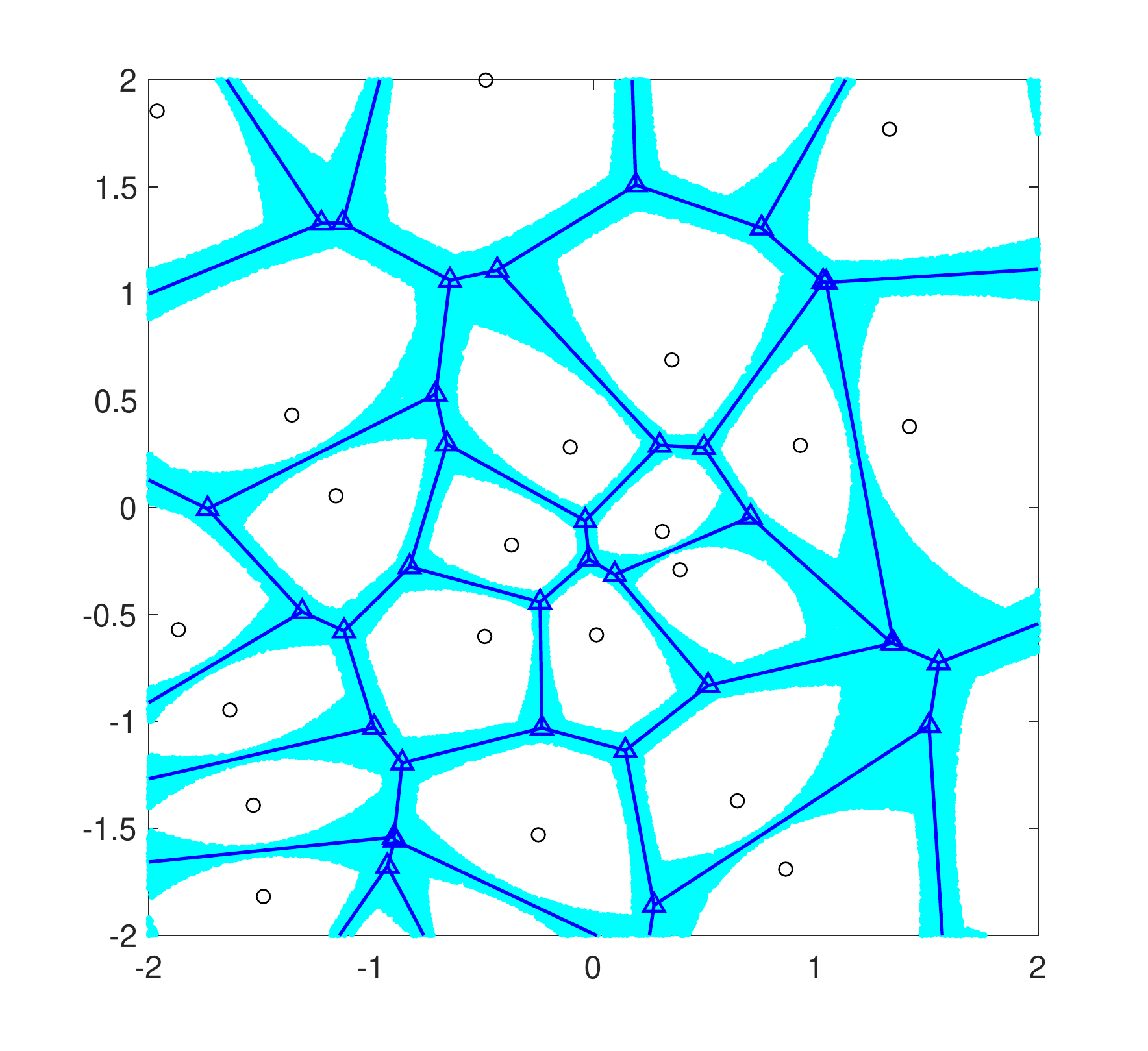} 
\caption{Illustration of network partitions in a two-dimensional Poisson network for $\rho=0.8$.} \label{fig:2D_regions}
\end{figure}

 \subsection{Performance Analysis}

 \subsubsection{Cell-Center and Cell-Boundary User}
 
 We start by showing how to obtain the $b$-th moments of the CSP$_\Phi$ for cell-center and cell-boundary users, denoted as $\mathcal{M}^{\rm c}_{P_{\rm s}}(b)$ and $\mathcal{M}^{\rm b}_{P_{\rm s}}(b)$, respectively. The methodology of obtaining the former is to derive the SIR meta distribution for the typical general user 
 by conditioning its location to be in the cell-center and cell-boundary regions \cite{K.2019Feng,K.2019Feng2}. Such conditioning restricts the distribution regions of interferers which will be reflected in the lower bounds of the integrals after applying the PGFL of PPP.

 Based on this approach, we obtain $\mathcal{M}^{\rm c}_{P_{\rm s}}(b)$ in the following theorem.

 \begin{theorem} \label{thm:2D_SP_CCU} 
 The moments of the CSP$_\Phi$ for the typical cell-center user, i.e., the typical general user conditioned on $\mathcal{R}_{\rm c}$, in a Poisson downlink network with Rayleigh fading are
 \begin{align} \label{eqn:2D_SP_CCU}
 \mathcal{M}^{\rm c}_{P_{\rm s}}(b)
 = \frac{1}{ {_2}F_1(b,-\delta;1-\delta;- \rho^{\alpha} \theta )},
 \end{align} 
 where $\delta=2/\alpha$.
 \end{theorem}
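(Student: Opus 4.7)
I would reduce the cell-center analysis to the standard typical-general-user problem in a Poisson downlink network via a scaling symmetry, and then invoke the well-known closed form $\mathcal{M}_{P_{\rm s}}(b;\theta')=1/\,{_2}F_1(b,-\delta;1-\delta;-\theta')$ for the typical general user (itself obtainable by the PGFL-plus-Rayleigh-fading recipe outlined in Section III-B). Under the reduction, $\theta'\to\rho^{\alpha}\theta$, which immediately yields the claimed expression.

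\textbf{Deriving the conditional law.} First, using Rayleigh fading ($h_{1}\sim\mathcal{E}(1)$) and integrating out the interfering fades, the CSP given $\Phi$ is
\begin{equation*}
P_{\rm s\mid\Phi}(\theta)=\prod_{j\geq 2}\frac{1}{1+\theta(r_{1}/r_{j})^{\alpha}}.
\end{equation*}
Next, I identify the joint law of $(r_{1},\Phi\setminus\{x_{1}\})$ given $\{o\in\mathcal{R}_{\rm c}\}$. Since this event coincides with $\{r_{2}\geq r_{1}/\rho\}$, i.e., $\Phi$ has no point in the annulus $\mathbb{B}(o,r_{1}/\rho)\setminus\mathbb{B}(o,r_{1})$, the PPP void probability combined with the unconditional contact-distance density of $r_{1}$ yields
\begin{equation*}
f_{r_{1}\mid{\rm cc}}(r)=\frac{2\pi\lambda r}{\rho^{2}}\exp\!\left(-\frac{\lambda\pi r^{2}}{\rho^{2}}\right),
\end{equation*}
and the interferers, conditional on $r_{1}$, form a PPP of intensity $\lambda$ on $\{\|x\|\geq r_{1}/\rho\}$.

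\textbf{The rescaling.} Substitute $R=r_{1}/\rho$. Under this change of variable, $R$ acquires the \emph{unconditional} contact-distance density $2\pi\lambda R\exp(-\lambda\pi R^{2})$; the interferers form a PPP of intensity $\lambda$ outside $\mathbb{B}(o,R)$; and because the signal power is $h_{1}(\rho R)^{-\alpha}=\rho^{-\alpha}h_{1}R^{-\alpha}$, the event $\mathrm{SIR}>\theta$ is equivalent to $h_{1}R^{-\alpha}/\sum_{j\geq 2}h_{j}r_{j}^{-\alpha}>\rho^{\alpha}\theta$. These three facts together describe precisely the typical-general-user Poisson-downlink problem with threshold $\rho^{\alpha}\theta$, so $\mathcal{M}^{\rm c}_{P_{\rm s}}(b;\theta)=\mathcal{M}_{P_{\rm s}}(b;\rho^{\alpha}\theta)$, and substituting the known formula yields (\ref{eqn:2D_SP_CCU}).

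\textbf{Main obstacle.} The crux is recognizing that the cell-center conditioning alters the law of the signal link length \emph{and} the support of the interferer process in precisely the compensating way required for the rescaling $R=r_{1}/\rho$ to restore the standard typical-user statistics. A brute-force alternative, namely performing the $r_{1}$-integral directly via the PGFL (\ref{eqn:PPP_PGFL}) and then matching the resulting integrand $\delta\int_{0}^{\rho^{\alpha}\theta}(1-(1+t)^{-b})t^{-\delta-1}\,dt$ to ${_2}F_1(b,-\delta;1-\delta;-\rho^{\alpha}\theta)-1$ through an Euler-type hypergeometric identity, concentrates the same conceptual difficulty in one final integral manipulation.
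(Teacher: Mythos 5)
Your proof is correct and takes essentially the same route as the paper's Appendix B: both hinge on the scaling invariance whereby conditioning on $o\in\mathcal{R}_{\rm c}$ and rescaling distances by $\rho$ restores the unconditioned typical-user statistics, reducing the problem to the typical general user at threshold $\rho^{\alpha}\theta$. You phrase this in spatial coordinates (conditional contact-distance density plus the void probability of the annulus) and then cite the known typical-user formula (\ref{eqn:M_SP_TU}), whereas the paper states the same invariance for the relative distance process $\{\varrho_j\}$ and re-derives the typical-user moment from the RDP's PGFL; your justification of the key conditioning-and-rescaling step is, if anything, the more explicit of the two.
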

 
 \begin{proof}
 See \textbf{Appendix B}. 
 \end{proof}

\noindent
{\bf Remark 3}:
The SIR gain of the typical cell-center user compared to the typical general user is $\rho^{-\alpha}$, which is easily obtained by comparing (\ref{eqn:2D_SP_CCU}) with the moments of the CSP$_\Phi$ for the typical general user in a Poisson downlink network 
given as~\cite{M.Apr.2016Haenggi}:
 \begin{align} \label{eqn:M_SP_TU}
 \mathcal{M}_{P_{\rm s}}(b) = \frac{1}{ {_2} F_1(b,-\delta;1-\delta;- \theta)} .
 \end{align}

\noindent
{\bf Remark 4}: A fraction $x=\rho^2$ of users who have the lowest distance ratio of the associated BS to the interfering BS has an SIR gain of $-5 \alpha \log_{10} x$ dB.

Applying the law of total probability, the moments of the CSP$_\Phi$ for the typical general user can be expressed 
 as follows:
\begin{align} 
  \mathcal{M}_{P_{\rm s}}(b)  =&    \mathbb{E} \Big[ \mathbb{P} \big[ \eta > \theta   \big]^{b}  \Big] \nonumber \\ 
  =& \mathbb{E} \Big[ \mathbb{P} \big[ \eta > \theta   \big]^{b} \bigm|  o \in \mathcal{R}_{\rm c} \Big] \mathbb{P} \big[  o \in \mathcal{R}_{\rm c} \big] \nonumber \\
 &   + \mathbb{E} \Big[ \mathbb{P} \big[ \eta > \theta   \big]^{b} \bigm| o \in \mathcal{R}_{\rm b} \Big] \mathbb{P} \big[  o \in \mathcal{R}_{\rm b} \big] \nonumber \\
   = &  \mathcal{M}^{\rm c}_{P_{\rm s}}(b)  \mathbb{P} \big[  o \in \mathcal{R}_{\rm c} \big]   + \mathcal{M}^{\rm b}_{P_{\rm s}}(b)  \Big( 1-  \mathbb{P} \big[  o \in \mathcal{R}_{\rm c} \big] \Big). \nonumber
\end{align}

Then, the moments of the CSP$_\Phi$ for the typical cell-boundary user 
are given by
\begin{align} \label{eqn:SP_CBU_2D} 
\mathcal{M}^{\rm b}_{P_{\rm s}}(b) 
& = \frac{ \mathcal{M}_{P_{\rm s}}(b) - \mathcal{M}^{\rm c}_{b}(\theta)\mathbb{P} [ o \in \mathcal{R}_{\rm c} ] }{ 1 - \mathbb{P} [ o \in \mathcal{R}_{\rm c} ] } \nonumber \\
& = \frac{ \mathcal{M}_{P_{\rm s}}(b) -  \rho^2 \mathcal{M}^{\rm c}_{P_{\rm s}}(b) }{  1 - \rho^2}  . 
\end{align}

We then have the following corollary by using (\ref{eqn:M_SP_TU}) and (\ref{eqn:2D_SP_CCU}) in (\ref{eqn:SP_CBU_2D}).

\begin{corollary}  
The moments of the CSP$_\Phi$ of the typical cell-boundary user, i.e., $o \in \mathcal{R}_{\rm b}$, in a Poisson downlink  network with Rayleigh fading are
\begin{align} \label{eqn:SP_CBU_2D}
    \mathcal{M}^{\rm b}_{P_{\rm s}}(b) 
 =  & \frac{1}{ (1-\rho^2)\, {_2} F_1(b,-\delta;1-\delta;-  \theta)} \nonumber\\
&   -   \frac{ \rho^2  }{(1-\rho^2) \, {_2}F_1(b,-\delta;1-\delta ;- \rho^{\alpha}\theta)},
\end{align}  
where $\delta=2/\alpha$.
\end{corollary}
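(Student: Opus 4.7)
The plan is to obtain the result by direct substitution into the decomposition that precedes the corollary statement. Specifically, the law of total probability applied to the conditional event $o \in \mathcal{R}_{\rm c}$ versus $o \in \mathcal{R}_{\rm b}$ yields
\begin{align}
\mathcal{M}_{P_{\rm s}}(b) = \rho^{2}\,\mathcal{M}^{\rm c}_{P_{\rm s}}(b) + (1-\rho^{2})\,\mathcal{M}^{\rm b}_{P_{\rm s}}(b),
\end{align}
since $\mathbb{P}[o\in\mathcal{R}_{\rm c}]=\rho^{2}$ and $\mathbb{P}[o\in\mathcal{R}_{\rm b}]=1-\rho^{2}$ as established at the top of Section~\ref{eqn:location}. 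Solving algebraically for $\mathcal{M}^{\rm b}_{P_{\rm s}}(b)$ gives exactly the intermediate identity~(\ref{eqn:SP_CBU_2D}).

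Next, I would substitute the closed-form expressions already derived in the excerpt: the moments of the CSP$_{\Phi}$ for the typical general user in a Poisson downlink network from~(\ref{eqn:M_SP_TU}), namely $\mathcal{M}_{P_{\rm s}}(b)=1/{_2}F_{1}(b,-\delta;1-\delta;-\theta)$, and the moments for the typical cell-center user from Theorem~\ref{thm:2D_SP_CCU}, namely $\mathcal{M}^{\rm c}_{P_{\rm s}}(b)=1/{_2}F_{1}(b,-\delta;1-\delta;-\rho^{\alpha}\theta)$. Plugging these two hypergeometric expressions into the decomposition and collecting the common factor $1/(1-\rho^{2})$ produces the stated closed form term by term.

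There is essentially no analytical obstacle: once Theorem~\ref{thm:2D_SP_CCU} and the baseline formula~(\ref{eqn:M_SP_TU}) are accepted, the corollary reduces to one line of algebra, and the only point to be careful about is that the conditioning $\{o\in\mathcal{R}_{\rm c}\}$ is independent of the shape of the Palm distribution of the interfering points beyond what is already reflected in the cell-center analysis, so that the total-probability decomposition is legitimate without extra Palm calculus. Since both $\mathcal{M}_{P_{\rm s}}(b)$ and $\mathcal{M}^{\rm c}_{P_{\rm s}}(b)$ are derived as expectations under the stationary (typical) measure with the relevant conditioning event, and these events partition the space of user locations with probabilities $\rho^{2}$ and $1-\rho^{2}$, the weighted sum is immediate and the substitution completes the proof.
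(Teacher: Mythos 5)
Your proposal is correct and follows essentially the same route as the paper: the paper likewise applies the law of total probability to decompose $\mathcal{M}_{P_{\rm s}}(b)$ into the cell-center and cell-boundary contributions weighted by $\rho^2$ and $1-\rho^2$, solves for $\mathcal{M}^{\rm b}_{P_{\rm s}}(b)$, and then substitutes the closed forms from (\ref{eqn:M_SP_TU}) and Theorem~\ref{thm:2D_SP_CCU}. No gaps.
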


\subsubsection{Edge User} 
An edge user can be considered as an asymptotic case of a cell-boundary user when $\rho \to 1$, i.e., $\mathbb{P} [ o \in \mathcal{R}_{\rm b} ] \to 0 $. In this case, the cell-boundary user is equidistant from 
the serving BS and the nearest interferer, i.e., on the cell edge. 
Therefore, the performance of the typical cell-edge user can be obtained from the asymptotics of the typical cell-boundary user as $\rho \to 1$. 

By taking the limit $\rho \to 1$ of $\mathcal{M}^{\rm b}_{P_{\rm s}}(b)$ in (\ref{eqn:SP_CBU_2D}), we have the moments of the CSP$_\Phi$ for the typical edge user as follows: 
\begin{align}
 \mathcal{M}^{\rm e}_{P_{\rm s}}(b) & \overset{(a)}{=}   \lim_{\rho \to  1}  \mathcal{M}^{\rm b}_{P_{\rm s}}(b)   \nonumber \\
&   =  \frac{1}{{_2}F_{1} ( b, - \delta; 1 - \delta; - \theta ) } \nonumber \\
& \hspace{5mm} -  \frac{  b  \theta \,   {_2}F_{1} ( b + 1, 1-\delta; 2-\delta; - \theta ) }{ (1 - \delta)  \, {_2}F_{1} ( - b, -\delta;  1-\delta; - \theta )^2  }, 
\end{align}
where $(a)$ follows from L'H\^opital's rule. 

After some mathematical manipulations, we have  $\mathcal{M}^{\rm e}_{P_{\rm s}}(b)$ in  the following corollary. 
\begin{corollary} \label{M_SI_EU}
With Rayleigh fading, the moments of the CSP$_\Phi$ for the typical edge user in a Poisson downlink network are 
\begin{align} \label{eqn:M_EU}
\mathcal{M}^{\rm e}_{P_{\rm s}}(b) = \frac{1}{ (1+\theta)^{b}{_2}F_1(b,- \delta;1\!-\!\delta;-\theta)^2}, \quad b \in \mathbb{C}
 \end{align}
 where $\delta = 2/ \alpha$.
 \end{corollary}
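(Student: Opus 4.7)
The plan is to evaluate the limit $\lim_{\rho \to 1} \mathcal{M}^{\rm b}_{P_{\rm s}}(b)$ using the closed form of the cell-boundary moment in~(\ref{eqn:SP_CBU_2D}), then reduce the resulting hypergeometric combination to the binomial factor $(1+\theta)^{-b}$ via a Gauss contiguous relation.

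First I would substitute $u = \rho^2$, so $\rho^{\alpha} = u^{1/\delta}$, recasting the expression as a difference quotient $[f(1) - f(u)]/(1-u)$ with
\begin{equation*}
f(u) = \frac{u}{{_2}F_1(b,-\delta;1-\delta;-u^{1/\delta}\theta)}.
\end{equation*}
Since both numerator and denominator vanish at $u=1$, L'H\^opital's rule gives $\mathcal{M}^{\rm e}_{P_{\rm s}}(b) = f'(1)$. Applying the chain rule together with the standard derivative identity $\tfrac{\mathrm{d}}{\mathrm{d}z}{_2}F_1(a,\beta;c;z) = \tfrac{a\beta}{c}{_2}F_1(a+1,\beta+1;c+1;z)$ and evaluating at $u=1$ reproduces the intermediate form already displayed in the excerpt, namely
\begin{equation*}
\mathcal{M}^{\rm e}_{P_{\rm s}}(b) = \frac{1}{F} - \frac{b\theta\, G}{(1-\delta)\, F^2},
\end{equation*}
where $F \triangleq {_2}F_1(b,-\delta;1-\delta;-\theta)$ and $G \triangleq {_2}F_1(b+1,1-\delta;2-\delta;-\theta)$.

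The critical step is the algebraic identity $F - \tfrac{b\theta}{1-\delta}\, G = (1+\theta)^{-b}$; once this is established, dividing by $F^2$ delivers the stated closed form. I would derive this identity from the Gauss contiguous relation
\begin{equation*}
{_2}F_1(a,\beta;c;z) - {_2}F_1(a,\beta-1;c;z) = \frac{a z}{c}\, {_2}F_1(a+1,\beta;c+1;z),
\end{equation*}
specialised to $a = b$, $\beta = 1-\delta$, $c = 1-\delta$ and $z = -\theta$. The right-hand side then becomes $-\tfrac{b\theta}{1-\delta}\, G$, while the leading term ${_2}F_1(b,1-\delta;1-\delta;-\theta)$ collapses by the degenerate hypergeometric identity ${_2}F_1(a,c;c;z) = (1-z)^{-a}$ to $(1+\theta)^{-b}$. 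Rearranging yields the target identity.

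The main obstacle is pinpointing the right hypergeometric manipulation. A direct attack via Euler's or Pfaff's transformations of $F$ in isolation introduces factors in the argument $\theta/(1+\theta)$ that do not combine cleanly with $G$. The contiguous relation above is the natural instrument because it connects $F$ and $G$ to a third hypergeometric function whose top and bottom parameters coincide, triggering the binomial collapse and closing the argument in a single step.
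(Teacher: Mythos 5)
Your proposal is correct and follows essentially the same route as the paper: take the $\rho \to 1$ limit of the cell-boundary moment via L'H\^opital's rule to reach the intermediate expression, then simplify the hypergeometric combination. Your explicit use of the contiguous relation ${_2}F_1(a,\beta;c;z) - {_2}F_1(a,\beta-1;c;z) = \frac{az}{c}{_2}F_1(a+1,\beta;c+1;z)$ with $\beta = c = 1-\delta$, which collapses ${_2}F_1(b,1-\delta;1-\delta;-\theta)$ to $(1+\theta)^{-b}$, correctly supplies the ``mathematical manipulations'' the paper leaves implicit.
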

 
 \noindent
{\bf Remark 5}: Compared with $\mathcal{M}_{P_{\rm s}}(b)$ in (\ref{eqn:2D_SP_CCU}), $\mathcal{M}^{\rm e}_{P_{\rm s}}(b)$ in (\ref{eqn:M_EU}) can be expressed as $ \mathcal{M}^{\rm e}_{P_{\rm s}}(b) = \frac{  \mathcal{M}_{P_{\rm s}}(b) ^2 }{(1+\theta)^{b}}$, $\forall b \in \mathbb{R}^{+}$. $ \mathcal{M}^{\rm e}_{P_{\rm s}}(b)$ is smaller than $ \mathcal{M}_{P_{\rm s}}(b)$ and their gap depends only on $\theta$ and $b$.

\begin{figure*} 
\centering
 \subfigure [ Typical general user and cell-center/boundary user ]
  {
 \centering   
 \includegraphics[width=0.48 \textwidth]{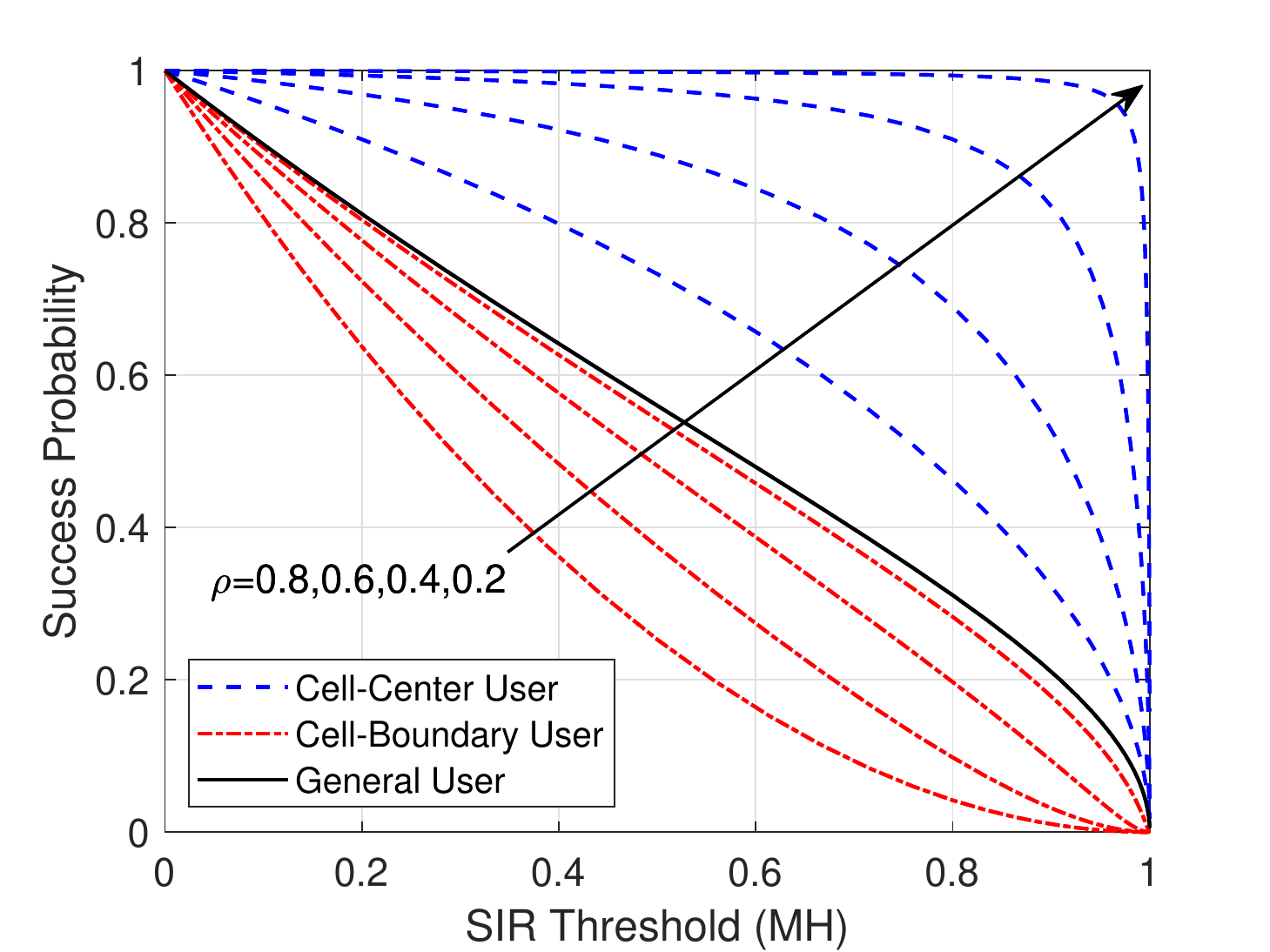}}
 \centering  
 \subfigure  [ Typical edge user and vertex user 
 ] {
 \centering
\includegraphics[width=0.48 \textwidth]{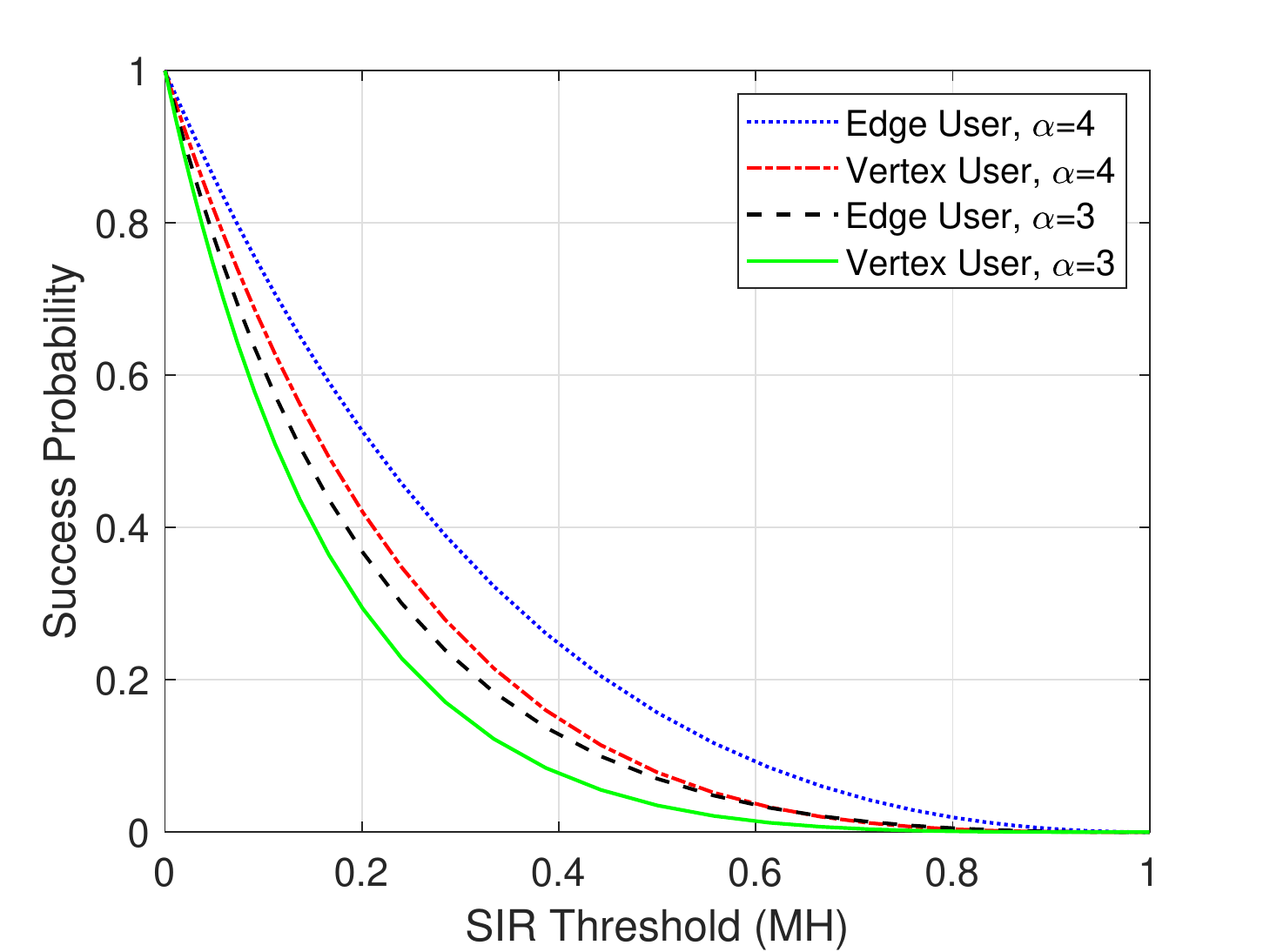}}
\caption{Success probability of LSUs in Poisson downlink networks ($\lambda=1$, $\alpha=4$). } 
\centering
\label{fig:CP_2D}
\end{figure*}

\subsubsection{Vertex User}

Finally, we compute the performance of the typical vertex user, which represents the worst-case performance.  
To obtain the moments of the CSP$_\Phi$, denoted as $\mathcal{M}^{\rm v}_{P_{\rm s}}(b)$, we follow the derivation steps similar to those for the typical general user with the condition that the user is equidistant from the serving BS and the two nearest interfering BSs \cite{K.2019Feng}. 
Under this condition, the contact distance is averaged out based on its PDF given in (\ref{eqn:PDF_VU_2D}). With this methodology, we have $\mathcal{M}^{\rm v}_{P_{\rm s}}(b)$ given in the following theorem.

\begin{theorem} \label{thm:vertex}
With Rayleigh fading, the moments of the CSP$_\Phi$ for the typical vertex user in a Poisson downlink network are
\begin{align} \label{eqn:M_2Dvertex}
\mathcal{M}_{P_{\rm s}}(b) = \frac{1}{(1+ \theta )^{2b} {_2}F_1(b,- \delta;1-\delta;- \theta)^{ 2}},
\end{align}
where $\delta=2/\alpha$.
\end{theorem}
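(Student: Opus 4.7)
The plan is to derive $\mathcal{M}^{\rm v}_{P_{\rm s}}(b)$ by conditioning on the three equidistant closest BSs of the typical vertex user, averaging over Rayleigh fading to obtain the CSP$_{\Phi}$, and then deconditioning on the vertex-user contact distance via (\ref{eqn:PDF_VU_2D}). The calculation will parallel the one underlying (\ref{eqn:M_SP_TU}) for the typical general user, with two essential modifications: two of the interferers sit at exactly the contact distance instead of farther away, and the contact-distance PDF carries an extra factor of $r^2$.

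First I condition on $r_1 = r_2 = r_3 = r$. By the multi-point extension of Slivnyak's theorem for the PPP, the remaining BSs $\{x_j\}_{j \geq 4}$ form an independent PPP of intensity $\lambda$ on $\mathbb{R}^2 \setminus \mathbb{B}(o, r)$. Averaging over $h_1 \sim \mathcal{E}(1)$ and the i.i.d.\ interferer fading gives
\[
P^{\rm v}_{\rm s\mid\Phi}(\theta) = \frac{1}{(1+\theta)^2} \prod_{j\geq 4} \frac{1}{1 + \theta r^\alpha \|x_j\|^{-\alpha}},
\]
where the prefactor collects the two interferers located at distance $r$. Raising to the $b$-th power and taking the expectation over the PPP on $\mathbb{R}^2 \setminus \mathbb{B}(o, r)$ via the PGFL (\ref{eqn:PPP_PGFL}) yields
\[
\mathbb{E}\bigl[ P^{\rm v}_{\rm s\mid\Phi}(\theta)^b \mid r \bigr] = \frac{1}{(1+\theta)^{2b}} \exp\bigl(-2\pi\lambda r^2 C(\theta, b)\bigr),
\]
where $C(\theta,b)$ is the $r$-independent constant determined by $\int_r^\infty \bigl(1 - (1+\theta r^\alpha s^{-\alpha})^{-b}\bigr)\, s\, \mathrm{d}s = r^2\, C(\theta, b)$, obtained by the substitution $t = (r/s)^\alpha$.

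Deconditioning on $r$ via (\ref{eqn:PDF_VU_2D}) and substituting $w = \pi\lambda r^2$ collapses the factor $2(\pi\lambda)^2 r^3\, \mathrm{d}r$ into $w\, \mathrm{d}w$, so that
\[
\mathcal{M}^{\rm v}_{P_{\rm s}}(b) = \frac{1}{(1+\theta)^{2b}} \int_0^\infty w\, e^{-(1+2C(\theta,b))w}\, \mathrm{d}w = \frac{1}{(1+\theta)^{2b}\bigl(1+2C(\theta,b)\bigr)^2}.
\]
The square in the denominator is the direct consequence of the extra $r^2$ in the vertex contact-distance PDF: $\int_0^\infty w\, e^{-aw}\, \mathrm{d}w = a^{-2}$, whereas the typical-user derivation involves $\int_0^\infty e^{-aw}\, \mathrm{d}w = a^{-1}$ and therefore produces only the first power of ${}_2F_1$ in (\ref{eqn:M_SP_TU}). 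Identifying $1 + 2C(\theta,b) = {}_2F_1(b,-\delta;1-\delta;-\theta)$ from the typical-user analysis then delivers (\ref{eqn:M_2Dvertex}). The main hurdle, I expect, is setting up the multi-point Palm conditioning carefully: the ``three equidistant BSs'' event must be treated jointly so that the joint vertex-user PDF (\ref{eqn:PDF_VU_2D}) is compatible with the PGFL restricted to $\mathbb{B}(o,r)^{\rm c}$; once that bookkeeping is done, the remaining computation reduces to the hypergeometric integral already handled for the typical user and no new special-function manipulation is required.
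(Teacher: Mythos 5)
Your proposal is correct and follows essentially the same route as the paper's Appendix C: condition on the three equidistant nearest BSs (the paper invokes the Mecke--Muche lemma for this Palm step), average the Rayleigh fading to get the $(1+\theta)^{-2}$ prefactor times the product over $j\ge 4$, apply the PGFL of the PPP restricted to $\mathbb{B}(o,r)^{\rm c}$, and decondition with the vertex contact-distance PDF in (\ref{eqn:PDF_VU_2D}). Your only (harmless, and in fact tidier) deviation is to identify $1+2C(\theta,b)={}_2F_{1}(b,-\delta;1-\delta;-\theta)$ by comparison with the typical-user calculation instead of re-evaluating the hypergeometric integral, which is exactly the content of the paper's unstated "mathematical manipulations."
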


\begin{proof}
See \textbf{Appendix C}.
\end{proof}

\noindent
{\bf Remark 6}: Compared with (\ref{M_CP_worstcase}), the success probability of the typical vertex user in (\ref{eqn:M_2Dvertex}) can be expressed as $ \mathcal{M}^{\rm v}_{P_{\rm s}}(b) = \frac{ \mathcal{M}^{\rm e}_{P_{\rm s}}(b) }{(1+\theta)^b}$. $ \mathcal{M}^{\rm v}_{P_{\rm s}}(b)$ is smaller than $ \mathcal{M}^{\rm e}_{P_{\rm s}}(b)$ due to the two equidistant closest interfering BSs.

Fig.~\ref{fig:CP_2D} depicts the average success probabilities (i.e., $\mathcal{M}_{P_{\rm s}}(1)$) of the LSUs. It is straightforward that the success probability monotonically decreases with $\rho$. The weighted mean of the success probabilities for the typical cell-center user and the typical cell-boundary user with the same value of $\rho$ are equal to that of the typical general user.

 \subsubsection{Temporal Effect of Location Dependence} 

Next, we investigate how location dependence affects the temporal correlation among different successful transmission events by evaluating the CSP $\frac{\mathcal{M}_{P_{\rm s}}(2)}{\mathcal{M}_{P_{\rm s}}(1)}$.  Fig. \ref{fig:CCP_locationdependent} shows how the CSP varies with $\rho$. It is found that the CSP for the typical cell-center user is greater than the typical cell-boundary user. Moreover, for both typical cell-center and cell-boundary users, their CSPs decrease with $\rho$. 
The reason is that the successful transmission events are more temporally correlated when the received signal dominates the interference. Such temporal correlation decreases when the interference becomes stronger. 

 \begin{figure} \centering \includegraphics[width=0.5\textwidth]{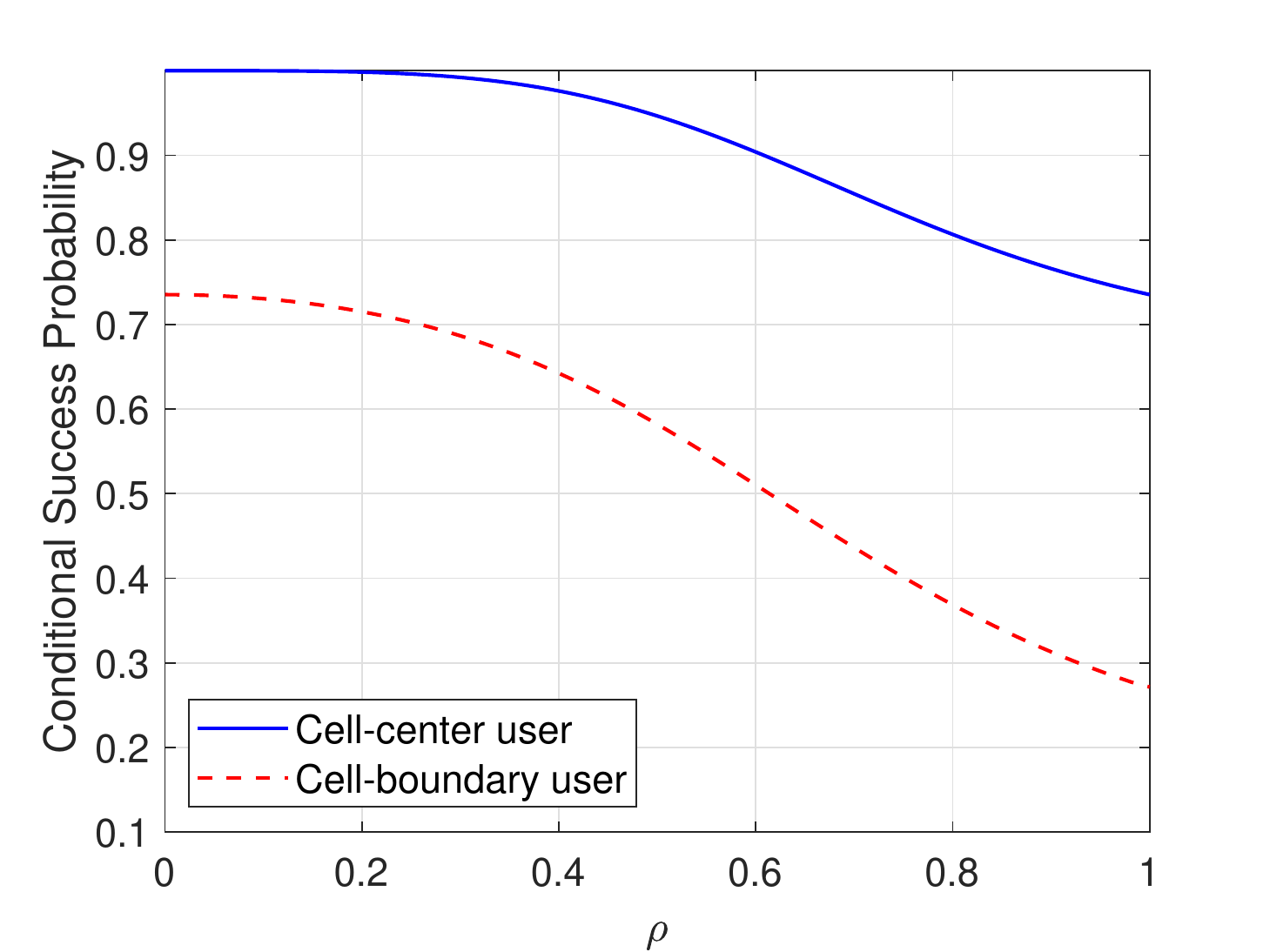} \caption{The temporal CSP of LSUs.} \label{fig:CCP_locationdependent} \end{figure}

\subsubsection{SIR Gain} \label{Sec:SIR_gain_1D}
Next, we investigate the asymptotic SIR gain to illustrate 
the SIR improvement over the typical general user due to location dependence.   
The MISR of the typical cell-center user can be derived as
\begin{align}
 \textup{MISR}_{\mathcal{R}_{\rm c}} & \! = \sum^{\infty}_{i=2} \mathbb{E} \bigg[ \bigg( \frac{r_{1}}{r_{i}}  \bigg)^{\!\alpha} \mid  o \in \mathcal{R}_{\rm c}   \bigg] \nonumber \\
 & \! \overset{(b)}{=}   \mathbb{E} \bigg[ \bigg(\frac{r_{1}}{r_{2}}  \bigg)^{\!\alpha}  \mid  o \in \mathcal{R}_{\rm c}  \bigg]  \sum^{\infty}_{i=2} \mathbb{E} \bigg[ \bigg( \frac{r_{2}}{r_{i}} \bigg)^{\!\alpha}   \bigg], \nonumber \\
 & \!  = \frac{\mathbb{E} \Big[ \Big(\frac{r_{1}}{r_{2}}  \Big)^{\!\alpha}  ,  o \in \mathcal{R}_{\rm c}  \Big]}{ \rho^2}  \sum^{\infty}_{i=2} \mathbb{E} \bigg[ \bigg( \frac{r_{2}}{r_{i}} \bigg)^{\!\alpha}   \bigg], \label{eqn:MISR_Rc}
\end{align} 
where $(b)$ holds as the condition that the typical general user is in the cell-center region, i.e., $o \in \mathcal{R}_{\rm c}$, only affects   the relative distance between $r_{1}$ and $r_{2}$.

The first expectation in (\ref{eqn:MISR_Rc}) can be derived based on the joint distribution of $r_{1}$ and $r_{2}$ as
\begin{align}
 \mathbb{E} \Big[ \Big(\frac{r_{1}}{r_{2}}  \Big)^{\!\alpha}  ,  o \in \mathcal{R}_{\rm c}  \Big]  & = \int^{\rho}_{0} f_{v_{2}} (v) v^{\alpha} \mathrm{d} v 
  = \frac{2 \rho^{2+\alpha}}{2+ \alpha}. \label{eqn:FT_MISR_R1}
\end{align}

The second expectation in (\ref{eqn:MISR_Rc}) can be obtained by exploiting the distribution of the distance ratios of a PPP as  
\begin{align}
 \sum^{\infty}_{j=2}   \mathbb{E} \bigg[ \bigg( \frac{r_{2}}{r_{i}} \bigg)^{\!\alpha}   \bigg] & = \frac{\sum^{\infty}_{j=2}   \mathbb{E} \Big[ \Big( \frac{r_{1}}{r_{i}} \Big)^{\!\alpha}   \Big] }{   \mathbb{E} \Big[ \Big( \frac{r_{1}}{r_{2}} \Big)^{\!\alpha}   \Big] } \nonumber \\
& \overset{(c)}{=} \frac{2+\alpha}{2} \sum^{\infty}_{j=2}   \int^{1}_{0}     v_{i}^{ \alpha}  
f_{i}(v) \mathrm{d} v     \nonumber \\
&  = \frac{2+\alpha}{2} \int^{1}_{0} v^{\alpha} 2 v^{-3}  \mathrm{d} v \nonumber \\
& =   \frac{\alpha+2}{\alpha-2},   \label{eqn:ST_MISR_Rc}
\end{align}
where $(c)$ follows the distribution of $\varrho_{2}$. 

 Plugging (\ref{eqn:FT_MISR_R1}) and (\ref{eqn:ST_MISR_Rc}) into (\ref{eqn:MISR_Rc}) yields
 \begin{align} \label{eqn:MISRP_RC}
 \textup{MISR}_{\mathcal{R}_{\rm c}} = \frac{2 \rho^{\alpha}}{\alpha-2}. 
 \end{align}

Subsequently, with $\textup{MISR}_{\textup{PPP}}$ and  $\textup{MISR}_{\mathcal{R}_{\rm c}}$ given in (\ref{eqn:MISR_2D}) and (\ref{eqn:MISRP_RC}), respectively, the SIR gain of the typical cell-center user can be obtained as 
\begin{align}
G_{\mathcal{R}_{\rm c} }= \frac{\textup{MISR}_{\textup{PPP}}}{\textup{MISR}_{\mathcal{R}_{\rm c}}} =  \rho^{-\alpha}.
\end{align}
Similarly, the SIR gains of different LSUs can be obtained as in Table VI. 

\noindent
{\bf Remark 7}: Since $\alpha > 2$ and $\rho \in [0,1]$, the SIR gain of the typical cell-center user is greater than or equal to one, and the equality holds when $\rho=1$. Moreover, the SIR gain of the typical cell-boundary user is smaller than or equal to one and the equality holds when $\rho=0$.
It can be found that the influence of $\rho$ on the SIR gain of the typical cell-center user is much more significant than that of the typical cell-boundary user. Moreover, the gap between the SIR gains of the typical edge user and vertex user increases with  $\alpha$.

\begin{table*}   
\begin{center} 
\caption{MISR and SIR gain of location-specific users} 
\begin{tabular}{ | p{3cm} | p{3.5cm}| p{2.5cm} | }   
\hline 
User Location & MISR & SIR Gain \\ 
\hline
Typical general user & $2/(\alpha-2)$ & 1 \\ 
\hline
Typical cell-center user & $2\rho^{\alpha}/(\alpha-2)$  & $\rho^{-\alpha}$ \\ 
\hline
Typical cell-boundary user & $2(1-\rho^{\alpha+2})/ (\alpha-2)(1-\rho^2)$  &  $(1\!-\!\rho^2)/ (1- \rho^{\alpha+2})$ \\ 
\hline
Typical edge user &  $(\alpha+2)/(\alpha-2)$  & $2/(\alpha+2)$  \\ 
\hline
Typical vertex user   & $2  \alpha  / ( \alpha -2)$ & $1/\alpha$ \\ 
\hline
\end{tabular}   
\end{center}    \label{Tab:SIR_gain_2D}
\end{table*}

\subsection{Summary and Discussion}

We have discussed the impact of the relative distance of a random user on its performance in Poisson downlink networks. Based on the relationship between the contact distance and the distance to the two nearest interferers, a user can be categorized as cell-center, cell-boundary, cell-edge, and vertex user. We present the derivations of  
the moments of the CSP$_\Phi$ for the typical cell-center, cell-boundary, cell-edge and vertex users in two-dimensional Poisson downlink networks. To show the direct impact of the location on the SIR, we also derive the SIR gain of the four types of users. The 
major lessons learned
are as follows.

\begin{itemize}

\item The gaps among the $b$-th moments of the CSP$_{\Phi}$ for the typical cell-center, cell-edge and vertex users depend only on $b$ and the SIR threshold $\theta$, i.e., $\mathcal{M}^{\rm c}_{P_{\rm s}}(b)= (1+\theta)^{b} \mathcal{M}^{\rm e}_{P_{\rm s}}(b) = (1+\theta)^{2 b} \mathcal{M}^{\rm v}_{P_{\rm s}}(b) $, $\forall b \in \mathbb{R}^{+}$.  

\item The successful transmission events are more temporally correlated for the typical cell-center user than the typical cell-boundary user.

\item The SIR gains for the typical cell-center and cell-boundary user depend on the path-loss exponent and location dependence coefficient $\rho$ while those for the typical cell-edge and vertex user only depend on the path-loss exponent.

\end{itemize} 

\vspace{0.2cm}
{\em Open Technical Issues}: 
In the literature, a location-dependent analysis has been carried out~\cite{K.2019Feng}, and BS cooperation schemes have been designed based on the location-dependent modeling~\cite{K.2019Feng2}. However, the location-dependent performance for Poisson uplink networks remains unexplored. Characterizing the location-dependent uplink user is more challenging as the uplink model is not fully tractable~\cite{M.2017Haenggi}. Another interesting direction is to analyze the performance of LSUs in non-Poisson cellular networks, e.g., when the transmitters exhibit repulsive or attractive distribution.

Additionally, cell-free communication systems  
where densely deployed BSs could simultaneously serve a number of users have emerged as a practical solution for future-generation communication systems.  
In such an infrastructure, the BS cooperation schemes need to take into account the user locations. For example, it is more meaningful to assign more resources for cell-boundary users to improve their success probabilities. Thus, location-dependent analysis is the key to the design of BS cooperation schemes in cell-free massive MIMO systems. 

 \section{  
 Spatially Correlated and Independent Shadowing} \label{sec:shadowing}
 
 The main objective of this section is to investigate the effect of correlated and independent shadowing in large-scale systems based on stochastic geometry analysis.  
 
 \subsection{System Model}

 To model the effect of shadowing,  we consider that the 
 whole plane is partitioned into a set of deterministic shadowing cells $\mathbf{S}= \{ \mathbb{S}_{k}  \subset \mathbb{R}^2 \}_{k \in \mathbb{N}}$ in which the transmitters  
 experience similar shadowing. 
 We consider an ad hoc network with a Poisson field of interferers  $\Phi$ introduced in Section~\ref{sec:sectionIII_SM} as an example system model for the analysis in this section.
  
 The channel gain from $x_{j} \in \Phi$ to $o$ is $h_{j} S_{j} \ell(\|x_j\|)$, where $h_{j}$ and $S_{j}$ are the fading coefficient and shadowing coefficient, respectively, from $x_{j}$ to $o$, and $\ell(\|x_j\|)= \|x_{j}\|^{-\alpha}$ is the path-loss function.  
With a Poisson field of interferers, the aggregated interference at the receiver of interest is
 \begin{align}   
  I_{o} & = \sum_{ j \in \mathbb{N} } h_{j} 
  S_{j}  
  \ell(\|x_j\|),   
  \end{align}
  and the SIR follows as
  \begin{align}
  \eta & = \frac{  h_{\rm t}  S_{\rm t}
  \ell(\|x_{\rm t}\|)  }{ I_{o} },  
 \end{align} 
 where $S_{\rm t}$ denotes the shadowing coefficient of the link under consideration.  
 
  Let $c(j)$ be the cell index that $x_j$ resides in, i.e., $c(j)=k$ if $x_{j} \in \mathbb{S}_{k}$.  
 We consider the extreme cases of (fully) correlated and independent intra-cell shadowing.
 In independent shadowing, the $S_{c(j)}$ are all independent and distributed with CDF $F_{c(j)}$. In correlated shadowing, $S_{c(j)}$ and $S_{c(i)}$ are independent only if $c(j)\neq c(i)$, and $S_{c(j)}=T_{k}$ for all $x_{j}  \in \mathbb{S}_{k}$. By choosing $F_{k}$, different shadowing properties can be assigned to the individual cells.

 \subsection{Performance Analysis}
  
 We first characterize the Laplace transform of the interference. Then,  based on it, we derive the mean and variance of the interference. Furthermore, we obtain the moments of the CSP$_\Phi$ under both correlated and independent shadowing.  
 
 \subsubsection{Laplace Transform of the Interference}

 To analyze the interference distribution, we start by characterizing the Laplace transform of the aggregated interference at the target receiver at the origin with correlated and independent shadowing, denoted by $I^{\rm Cor}_{o}$ and $I^{\rm Ind}_{o}$, respectively, by following the methodology in \cite{J.toappearLee}.
 Similar to the derivation steps of the moments of the CSP given a Mat\'{e}rn cluster field of interferers presented in \textbf{Appendix A}, we first express the Laplace transform as the expectation of the product of a function of the locations of the interferers given the distribution of the shadowing coefficients.
 Then, by the PGFLs of MCP  given in (\ref{eqn:PGFL_MCP}), we convert the expectation into an integral expression as a function of the shadowing coefficients.  Subsequently, the randomness of the shadowing can be averaged out by conditioning on that the daughter points in each cluster are associated with the same and independent shadowing coefficients with correlated shadowing and independent shadowing, respectively.

 Based on this methodology, the Laplace transforms of   $I^{\rm Cor}_{o}$ and $I^{\rm Ind}_{o}$ are given in the following theorem.

 \begin{theorem} \label{thm:shadowing}
 In a Rayleigh fading environment with a Poisson field of interferers, the conditional Laplace transform of the interference at the target receiver 
 is
 \begin{align} \label{eqn:L_cor_shadowing}
 & \mathcal{L}_{I^{\rm Cor}_{o}   } (s) = \nonumber \\ 
 & \prod_{k \in \mathbb{N}} \mathbb{E}_{T_{k}} \Bigg[  \exp  \bigg( \! - \! \lambda    \int_{\mathbb{S}_{k}} \! \! \bigg( 1 -     \frac{1}{1\!+\!s    
  \ell(\|x\|)   T_{k} }   \! \bigg)     \mathrm{d} x \bigg)\Bigg] ,
 \end{align}
 with correlated shadowing, and
 \begin{align}  \label{eqn:L_ind_shadowing}
  & \mathcal{L}_{I^{\rm Ind}_{o} } (s) = \nonumber \\ 
 & \prod_{k \in \mathbb{N}} \exp  \Bigg( \! - \! \lambda     \int_{\mathbb{S}_{k}} \! \! \bigg( 1 -   \mathbb{E}_{T_{k}} \bigg[   \frac{1}{1\!+\!s    
 \ell(\|x\|)  
  T_{k} }   \bigg] \! \bigg)     \mathrm{d} x \Bigg) ,
 \end{align}   
 with independent shadowing.
  \end{theorem}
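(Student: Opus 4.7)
The plan is to start from the definition $\mathcal{L}_{I_o}(s) = \mathbb{E}[\exp(-s I_o)]$ and partition the interference sum according to the shadowing tessellation, writing $I_o = \sum_{k \in \mathbb{N}} \sum_{x_j \in \Phi \cap \mathbb{S}_k} h_j S_j \ell(\|x_j\|)$. Because the cells $\mathbb{S}_k$ are deterministic and disjoint, the restrictions $\Phi \cap \mathbb{S}_k$ are independent homogeneous PPPs of intensity $\lambda$ by the complete independence property of the PPP. Hence the expectation over $\Phi$ factors as a product over $k$, and each factor is the Laplace transform of the interference generated by $\Phi \cap \mathbb{S}_k$ alone.

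Within each cell, I would first exploit the Rayleigh fading assumption: since the $h_j \sim \mathcal{E}(1)$ are i.i.d.\ and independent of both $\Phi$ and the shadowing, averaging $h_j$ turns the per-point contribution $e^{-s h_j S_j \ell(\|x_j\|)}$ into $\frac{1}{1 + s S_j \ell(\|x_j\|)}$. The remaining task is to decondition on the shadowing, and this is precisely where the two cases split.

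In the correlated case, the shadowing $T_k$ is a single random variable common to every interferer in $\mathbb{S}_k$. Conditioning on $T_k$, the per-point function becomes deterministic, so I can apply the PGFL of the PPP~(\ref{eqn:PPP_PGFL}) on $\mathbb{S}_k$ with $\upsilon(x) = \frac{1}{1 + s T_k \ell(\|x\|)}$, then take $\mathbb{E}_{T_k}[\cdot]$ on the outside of the resulting exponential; multiplying across independent cells yields~(\ref{eqn:L_cor_shadowing}). In the independent case, each $S_j$ is drawn independently (with cell-dependent law $F_k$ when $x_j \in \mathbb{S}_k$), so the shadowing marks are i.i.d.\ within each cell and independent of $\Phi$. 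By Fubini, the per-point factor can therefore be marginalized first, replacing it by $\mathbb{E}_{T_k}\big[\frac{1}{1 + s T_k \ell(\|x\|)}\big]$, before invoking the PGFL on $\mathbb{S}_k$. This places the shadowing average inside the integral in the exponent, giving~(\ref{eqn:L_ind_shadowing}).

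The only real obstacle is justifying the order in which the shadowing expectation and the PGFL are applied. In the correlated setting, $T_k$ is a cell-level (not a point-level) random variable and is therefore coupled across all points of $\Phi \cap \mathbb{S}_k$; it cannot be absorbed into the per-point function before applying the PGFL, so $\mathbb{E}_{T_k}$ must sit outside the exponential. In the independent setting, the shadowing is an i.i.d.\ mark independent of $\Phi$, so Fubini permits marginalization before the PGFL and $\mathbb{E}_{T_k}$ ends up inside the integral. Together with the cell-wise independence inherited from the PPP, this ordering distinction is exactly what produces the structural difference between~(\ref{eqn:L_cor_shadowing}) and~(\ref{eqn:L_ind_shadowing}).
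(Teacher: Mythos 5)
Your proposal is correct and follows essentially the same route as the paper's Appendix D: average the Rayleigh fading per point to get the factor $\frac{1}{1+sS_j\ell(\|x_j\|)}$, then apply the PGFL of the PPP, with the shadowing expectation placed outside the exponential (conditioning on the cell-level $T_k$) in the correlated case and marginalized inside the integral (as an i.i.d.\ mark) in the independent case. Your extra step of factoring over cells first via complete independence of the PPP on disjoint sets is equivalent to the paper's factoring of the sum in the exponent and of the expectation over the independent $T_k$'s at the end.
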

   
 \begin{proof} 
 See \textbf{Appendix D}.
 \end{proof}

 According to Jensen’s inequality, i.e., the convex transformation of an expectation equals or exceeds the expectation over the convex transformation, we have the inequality in (\ref{eqn:inequaility_shadowing}).
 \begin{figure*}
 \begin{align}
 &  \mathbb{E}_{T_{k}} \bigg[ \exp \bigg( - \bar{c} \int_{\mathbb{S}_{k}} \Big( 1 -    \frac{1}{1+s    \ell(\|x\|) T_{k}  }  \Big) \mathrm{d} x     \bigg) \bigg]  \nonumber \\
 & \hspace{45mm} > \exp \bigg( - \bar{c} \int_{\mathbb{S}_{k}} \bigg( 1 -  \mathbb{E}_{T_{k}} \bigg[  \frac{1}{1 + s   \ell(\|x\|) T_{k}  }    \bigg] \mathrm{d} x \bigg)   \bigg)  . \label{eqn:inequaility_shadowing}
 \end{align} 
 \hrulefill
 \end{figure*}
 Furthermore, since $\exp(-x)$ is a completely monotone function, we readily obtain the following observation by comparing (\ref{eqn:L_cor_shadowing}) and (\ref{eqn:L_ind_shadowing}). 
   
 \vspace{0.1cm}
 \noindent
 {\bf Remark 8}:  For all $s>0$,  $\mathcal{L}_{I^{\rm Cor}_{o}   } (s) >  \mathcal{L}_{I^{\rm Ind}_{o}   } (s)$.

 \subsubsection{Mean and Variance of Interference} 
 
 Due to the singularity of the path-loss function $\ell(x)$,  we adopt $\ell_{\epsilon}(x)=\frac{1}{\epsilon+\|x\|^{\alpha}}, \alpha>2, \epsilon>0$, to evaluate the mean and variance of interference, similar to Sec.~\ref{sec:spatial_PA_IC}.
 As the mean of a random variable can be derived by taking the first-order derivative of 
 its Laplace transform w.r.t. $s$ and setting $s = 0$,  i.e., $\mathbb{E} [X] = - \frac{ \mathrm{d} \mathcal{L}_{X}(s) }{ \mathrm{d} s } |_{s=0} $, we can easily obtain the mean interference as
 \begin{align} 
 & \mathbb{E} \Big[ I^{\rm Ind}_{o}  \Big]   = \mathbb{E} \Big[ I^{\rm Cor}_{o}   \Big]   = \lambda \sum_{k \in \mathbb{N} }  \mathbb{E} [ T_{k} ] \int_{\mathbb{S}_{k}}  \frac{1}{\epsilon + \|x\|^{\alpha}}  \mathrm{d} x   , \nonumber 
 \end{align} 
 which 
 shows that the mean interference at the typical receiver with independent shadowing and correlated shadowing are identical, i.e., $I^{\rm Ind}_{o} = I^{\rm Cor}_{o}$.

 The same result can also be obtained from Campbell's Theorem~\cite[Thm. 4.1]{M.2013Haenggic}, which shows that it holds for all stationary point processes with an arbitrary fading model.    
 
 As the second moment of a random variable can be derived by evaluating the second derivative of its Laplace transform w.r.t. $s$ at $s = 0$, 
 the variance of $I^{\rm Ind}_{o} $ can be obtained based on the Laplace transform as \cite{J.toappearLee} 
 \begin{align}
 \mathbb{V} [ I^{\rm Ind}_{o} ] &   = \mathbb{E} \Big[ \big( I^{\rm Ind}_{o} \big)^2    \Big ] - \mathbb{E} \Big[   I^{\rm Ind}_{o}    \Big ]^2 \nonumber \\
 &  = \frac{ \mathrm{d}^2   \mathcal{L}_{I^{\rm Ind}_{o}}(s) }{\mathrm{d}^2  s^2 }  \big|_{s=0} - \bigg(  \frac{ \mathrm{d}   \mathcal{L}_{I^{\rm Ind}_{o}}(s) }{\mathrm{d}  s } \big|_{s=0} \bigg)^2 \nonumber \\
 &  = 2 \lambda \sum_{k \in \mathbb{N} }   \mathbb{E} \big[ T^2_{k} \big] \int_{\mathbb{S}_{k}}   \frac{1}{ \big(\epsilon + \|x\|^{\alpha} \big)^2}    \mathrm{d} x  \nonumber \\
 &  \hspace{2mm} +   \lambda^2 \sum_{k \in \mathbb{N} }    \mathbb{E} [T_{k}]^2  \bigg( \int_{\mathbb{S}_{k}}  \frac{1}{\epsilon + \|x\|^{\alpha}}   \mathrm{d} x \bigg)^2 . \label{eqn:V_I_cor_ind}
 \end{align}
  
 In the same manner, the variance of $I^{\rm Cor}_{o}$ can be derived as
 \begin{align} 
\mathbb{V} [ I^{\rm Cor}_{o} ]  &    = 2 \lambda \sum_{k \in \mathbb{N} }   \int_{\mathbb{S}_{k}}  \frac{\mathbb{E} \big[ T^2_{k} \big]}{ \big(\epsilon + \|x\|^{\alpha} \big)^2}     \mathrm{d} x  \nonumber \\
 &  \hspace{5mm} +   \lambda^2 \sum_{k \in \mathbb{N} }    \mathbb{E} [T_{k}]^2  \bigg( \int_{\mathbb{S}_{k}}  \frac{1}{\epsilon + \|x\|^{\alpha}}   \mathrm{d} x \bigg)^2  \nonumber \\
 &   \hspace{5mm} +  \lambda^2  \sum_{k \in \mathbb{N}}   \mathbb{V} [ T_{k} ]  \bigg(  \int_{\mathbb{S}_{k}}  \frac{1}{\epsilon + \|x\|^{\alpha}}  \mathrm{d} x \bigg)^2  \nonumber \\
 & \overset{(a)}{=} \! \mathbb{V} [ I^{\rm Ind}_{o} ] \! + \! \underbrace{  \lambda^2 
 \! \sum_{k \in \mathbb{N} }   \! \mathbb{V} [T_{k}]^2  \bigg( \int_{\mathbb{S}_{k}} \! \frac{1}{\epsilon \! + \! \|x\|^{\alpha}}  \mathrm{d} x \bigg)^2 }_{ >  0},  \label{eqn:V_I_cor}
 \end{align} 
 where $(a)$ follows from (\ref{eqn:V_I_cor_ind}). Comparing (\ref{eqn:V_I_cor_ind}) and (\ref{eqn:V_I_cor}) yields the following observation.

 \vspace{0.1cm}
 \noindent
 {\bf Remark 9}: The variance of the interference at the typical receiver with correlated shadowing is greater than that with independent shadowing, i.e., $\mathbb{V}[I^{\rm Cor}_{o}] >  \mathbb{V}[I^{\rm Ind}_{o}]$.

 \subsubsection{Moments of the CSP$_\Phi$}

 Given the point process $\Phi$, the CSP is  
 \begin{align} 
  \mathbb{P} [ \eta > \theta    \mid  \Phi ] &  =   \mathbb{P} \bigg[  \frac{ h_{\rm t} S_{\rm t}   r^{-\alpha}_{\rm t}  }{  I_{o}   } > \theta  \Bigm|   \Phi \bigg]   \nonumber \\ 
 & \overset{(a)}{=}  \mathbb{E}_{S_{\rm t}, (S_{j}) } \Bigg[ \prod_{ j \in \mathbb{N} } \bigg( 1 + \frac{\theta r^{\alpha}_{\rm t} r_{j}^{-\alpha} S_{j} }{ S_{\rm t} } \bigg)^{\!-1} \Bigg] 
 , \nonumber  
 \end{align} 
 where $(a)$ follows the derivations of (\ref{eqn:spatial_SP1}) in \textbf{Appendix A}.

 Then, the moments of the CSP$_\Phi$ can be represented as
 \begin{align}
    \mathcal{M}_{P_{\rm s}} (b) =    \mathbb{E}  \Bigg[  \prod_{ j \in \mathbb{N} }  
    \! \bigg( 1 \! + \! \frac{\theta r^{\alpha}_{\rm t} r_{j}^{-\alpha} S_{j} }{ S_{\rm t} } \bigg)^{\!-b}  \Bigg].  \nonumber
 \end{align}
  
 By following the same methodology to the proof of Theorem~\ref{thm:shadowing}, we have $\mathcal{M}_{P_{\rm s}} (b)$ derived in the following theorem.
 
 \begin{theorem}
 In a Rayleigh fading environment, the conditional moments of the CSP$_\Phi$ under correlated shadowing and independent shadowing are given, respectively, by (\ref{eqn:MCSP_shadowing_cor}) and (\ref{eqn:MCSP_shadowing_ind}), shown on the top of the next page.  
 \begin{figure*}
 \begin{align}
 & \mathcal{M}^{\rm Cor}_{P_{\rm s}} (b) 
   =   \prod_{k \in \mathbb{N}} \! \mathbb{E}_{T_{k},T_{c(\rm t)}} \! \Bigg[  \exp \! \bigg( \! - \! \lambda  \!  \int_{\mathbb{S}_{k}} \! \!   \bigg ( \! 1 -    \bigg ( \frac{1}{1\!+\! \theta r^{\alpha}_{\rm t} \|x\|^{-\alpha}  T_{k} / T_{c(\rm t)} } \bigg)^{\!b}     \bigg)     \mathrm{d} x \!\bigg) \!\Bigg], \label{eqn:MCSP_shadowing_cor} \\ 
 & \mathcal{M}^{\rm Ind}_{P_{\rm s}} (b) 
  =    \prod_{k \in \mathbb{N}} \mathbb{E}_{T_{c(\rm t)}} \Bigg[  \exp \! \Bigg( \! - \! \lambda   \!  \int_{\mathbb{S}_{k}} \! \! \bigg( \! 1 -   \mathbb{E}_{T_{k}} \! \bigg[   \bigg( \frac{1}{ 1\!+\! \theta r^{\alpha}_{\rm t}  \|x\|^{-\alpha}   T_{k} / T_{c(\rm t)} } \bigg)^{\!b}  \bigg] \! \bigg)     \mathrm{d} x \! \Bigg) \Bigg] . \label{eqn:MCSP_shadowing_ind}
  \end{align}  
  \hrulefill
  \end{figure*}
 \end{theorem}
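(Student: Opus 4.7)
The strategy is to mirror the argument of Theorem~\ref{thm:shadowing} (Appendix~D), adapted to the $b$-th power instead of the $\exp(-s\,\cdot\,)$ kernel and carrying $S_{\rm t}=T_{c(\mathrm{t})}$ as an additional random variable that must be handled explicitly. Starting from the Rayleigh-averaged per-link success probability conditioned on $\Phi$ and on the full shadowing field, namely $\mathbb{P}[\eta>\theta\mid\Phi,(S_j),S_{\rm t}]=\prod_{j}\bigl(1+\theta r_{\rm t}^{\alpha}r_j^{-\alpha}S_j/S_{\rm t}\bigr)^{-1}$, raising to the $b$-th power and taking the outer expectation yields $\mathcal{M}_{P_{\rm s}}(b)=\mathbb{E}\bigl[\prod_{j}(1+\theta r_{\rm t}^{\alpha}r_j^{-\alpha}S_j/S_{\rm t})^{-b}\bigr]$. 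The key reorganization is to split the product across shadowing cells as $\prod_j=\prod_k\prod_{x_j\in\Phi\cap\mathbb{S}_k}$ and to exploit the fact that the restrictions of $\Phi$ to the disjoint Borel sets $\mathbb{S}_k$ are independent PPPs.

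For the correlated case I would first condition on $T_{c(\mathrm{t})}$ and on $\{T_k\}_{k\in\mathbb{N}}$. Within cell $\mathbb{S}_k$ every interferer then sees the deterministic shadowing value $T_k$, so the conditional expectation over $\Phi\cap\mathbb{S}_k$ reduces directly to the PGFL identity~(\ref{eqn:PPP_PGFL}) applied with test function $\upsilon(x)=(1+\theta r_{\rm t}^{\alpha}\|x\|^{-\alpha}T_k/T_{c(\mathrm{t})})^{-b}$ and integration window $\mathbb{S}_k$. Multiplying the resulting per-cell exponentials (which are independent once all $T$-variables are fixed) and then averaging over $(T_k,T_{c(\mathrm{t})})$ reproduces~(\ref{eqn:MCSP_shadowing_cor}). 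In the independent case I would instead condition only on $T_{c(\mathrm{t})}$, treat the per-point shadowing marks as i.i.d.\ labels with cell-dependent CDF $F_k$, and invoke the marked-PPP PGFL (equivalently, the Campbell--Mecke form~(\ref{eqn:Campbell-Mecke}) as used in Appendix~D) so that the shadowing expectation is absorbed into the integrand as $\mathbb{E}_{T_k}[(1+\theta r_{\rm t}^{\alpha}\|x\|^{-\alpha}T_k/T_{c(\mathrm{t})})^{-b}]$; the outer $\mathbb{E}_{T_{c(\mathrm{t})}}$ then produces~(\ref{eqn:MCSP_shadowing_ind}).

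The main obstacle is bookkeeping the single shared variable $T_{c(\mathrm{t})}$, which sits inside every per-cell factor. Naively moving the product $\prod_k$ outside of $\mathbb{E}_{T_{c(\mathrm{t})}}$ would be incorrect, so the safe ordering of operations is (i) condition on $T_{c(\mathrm{t})}$ (and, in the correlated case, on the full $\{T_k\}$), (ii) apply the per-cell PGFL using the independence of the PPPs on the disjoint cells, and (iii) de-condition last. A secondary technicality is convergence of the infinite product over $k$, which follows provided $\alpha>2$ and the shadowing variables have enough moments to make $\sum_k\int_{\mathbb{S}_k}\bigl(1-(1+\theta r_{\rm t}^{\alpha}\|x\|^{-\alpha}T_k/T_{c(\mathrm{t})})^{-b}\bigr)\mathrm{d}x$ almost surely finite; the same growth condition already justifies the Laplace-transform version in Theorem~\ref{thm:shadowing}.
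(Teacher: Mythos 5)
Your proposal is correct and follows essentially the same route the paper intends: the paper itself only says the theorem is obtained ``by following the same methodology as the proof of Theorem~\ref{thm:shadowing}'', i.e., Rayleigh-average to obtain the product form $\prod_j(1+\theta r_{\rm t}^{\alpha}r_j^{-\alpha}S_j/S_{\rm t})^{-b}$, split the product over the disjoint shadowing cells, apply the PGFL of the PPP cell by cell with the $b$-th-power kernel (with the shadowing treated as a common value per cell in the correlated case and as an i.i.d.\ mark absorbed into the integrand in the independent case), and decondition on the shadowing last. Your explicit care with the shared variable $T_{c(\mathrm{t})}$ --- conditioning on it first and deconditioning only at the very end, so that the rigorous form is $\mathbb{E}_{T_{c(\mathrm{t})}}\bigl[\prod_{k}\mathbb{E}_{T_k}[\cdot]\bigr]$ rather than a literal product of joint expectations --- is the correct reading of (\ref{eqn:MCSP_shadowing_cor}) and (\ref{eqn:MCSP_shadowing_ind}) and is a point the paper's displayed formulas gloss over.
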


 \noindent
 {\bf Remark 10}: It follows from \textbf{Remark 8} that with Rayleigh fading, the moments of the CSP$_\Phi$ under correlated shadowing are greater than that with independent shadowing.

 Next, we numerically evaluate the  
 network performance under correlated and independent shadowing assumptions. We assume  $T_{k}=\kappa^{N_{k}(x)}$
 ~\cite{J.toappearLee}, where $\kappa$ is the attenuation factor accounting for the signal loss while penetrating a blockage (e.g., a wall) and $N_{k}(x)$ represents the number of blockages between $x$ and the receiver of interest.  
 $N_{k}(x)$ is assumed to be a  Poisson random variable for each  link~\cite{T.2014Bai}, i.e., $N_{k}(x) \sim \mathcal{P}( \lambda_{\rm b}\|x\|)$, where $\lambda_{\rm b}$ denotes the blockage density. With correlated shadowing, the transmitters  in the same cell $\mathbb{S}_{k}$  
  are associated with the same shadowing coefficient  $T_{k}=\kappa^{N_{k}(d_{k,o})}$, $N_{k}(d_{k,o}) \sim \mathcal{P}( \lambda_{\rm b}d_{k,o})$,  where $d_{k,o}$ represents the distance between the center of $\mathbb{S}_{k}$ and the origin.  Differently, with independent shadowing, the transmitters  
   at $x_{j} \in \mathbb{S}_{k}$ are associated with  
  i.i.d. shadowing coefficients  $S_{j}=\kappa^{N_{k}(d_{k,o})}$, $N_{k}(d_{k,o}) \sim \mathcal{P}( \lambda_{\rm b}d_{k,o})$. Besides, we assume that the link between the target receiver and the serving transmitter is subject to no shadowing, i.e., $S_{\rm t}=1$.
 For the spatial configuration of the shadowing cells,
 we consider the  
 example in Fig.~\ref{fig:grid}, where $\mathbf{S}=\{\mathbb{S}_{k}\}_{k \in \mathbb{N}}$ is a set of squares with length $L$. 
 It is worth noting that the analytical framework can be extended to other types of point processes by changing the distribution of points as long as $|\mathbb{S}_{k}|<\infty$, i.e., the point process has a finite number of points in each shadowing cell almost surely.
 
 Figure~\ref{fig:CP_shadowing} illustrates the success probability, i.e., $\mathcal{M}^{\rm Cor}_{P_{\rm s}} (1)$ and $\mathcal{M}^{\rm Ind}_{P_{\rm s}} (1)$, as a function of the SIR threshold. It can be observed that the success probability with correlated shadowing is greater than that with independent shadowing, which agrees with {\bf Remark 10}. Moreover, the gap between the success probabilities with correlated and independent shadowing decreases as the cell size (i.e., $L^2$) becomes smaller, as in this case the signals from the interferers  experience nearly independent shadowing even with the correlated shadowing model.  

  \begin{figure}
  \centering
  \includegraphics[width=0.35\textwidth]{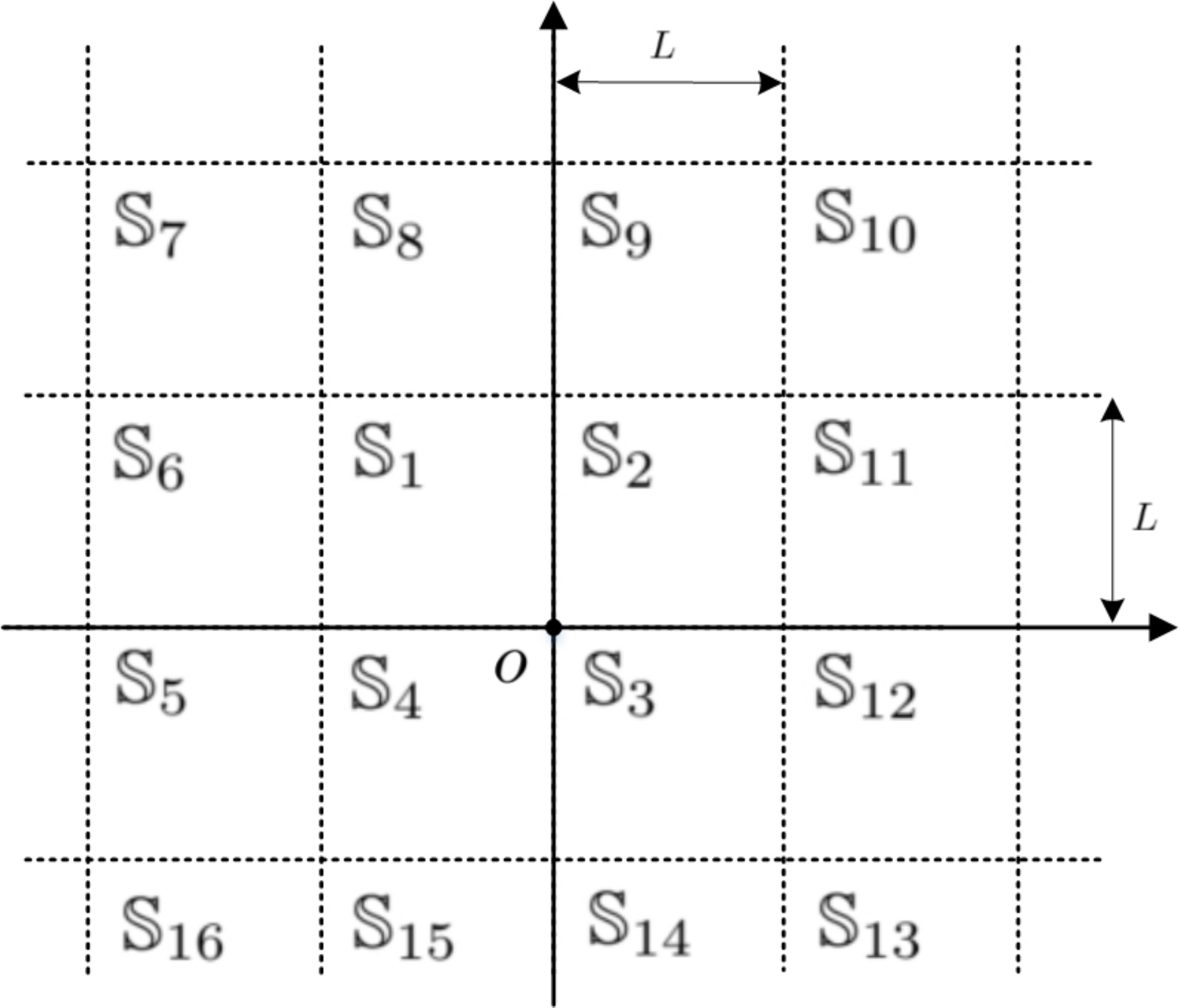} 
  \caption{Configuration of shadowing cells for simulations.  } \label{fig:grid} 
  \end{figure}

  \begin{figure*}[htp]  
     \centering
       \begin{minipage}[c]{0.48 \textwidth}
        \includegraphics[width=0.98\textwidth]{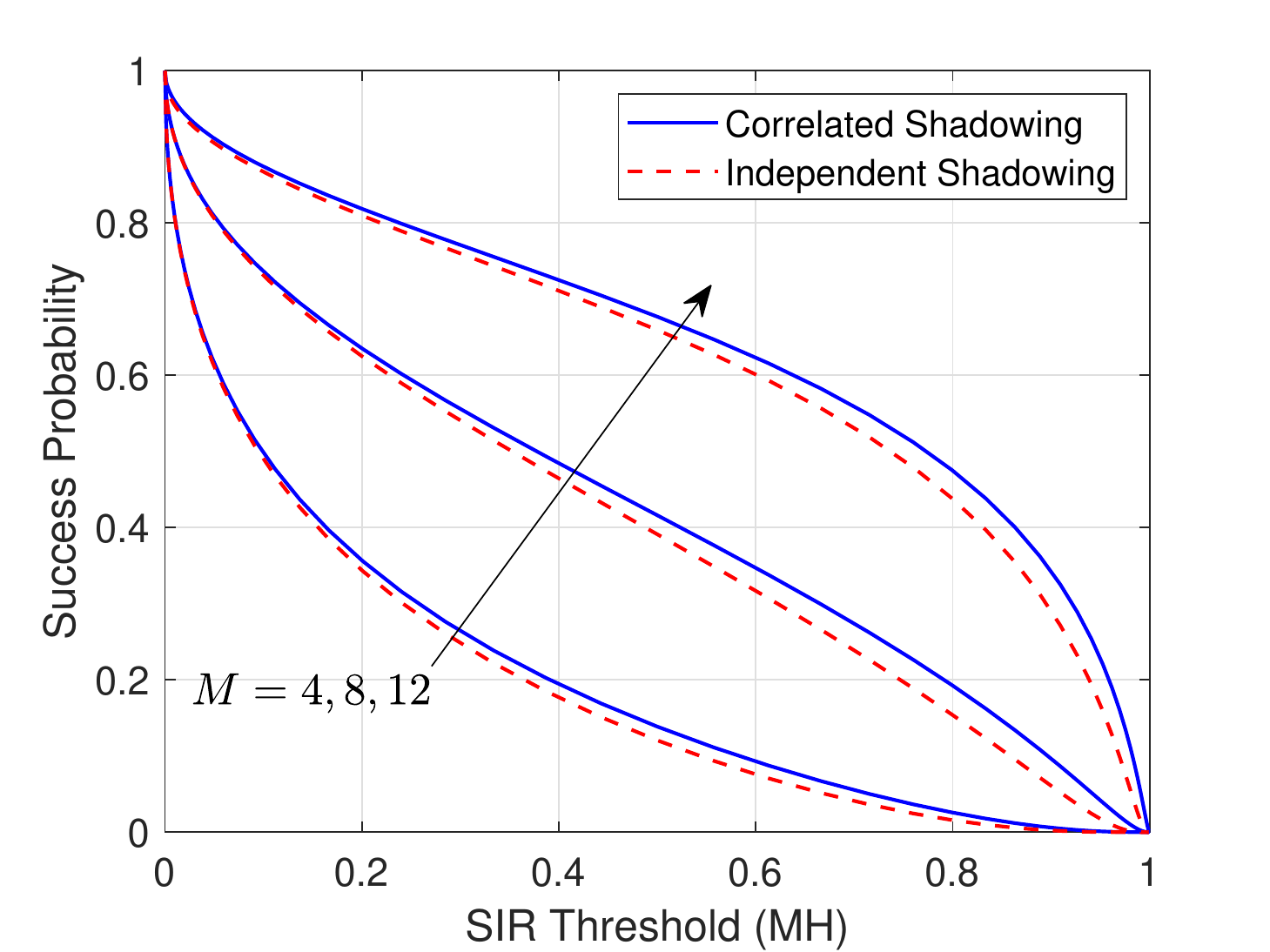} 
        \caption{Success probability versus SIR threshold under correlated and independent shadowing ($\lambda=1$,  $\kappa=0.5$, $\alpha=4$, $r_{\rm t}=1$).   
        } \label{fig:CP_shadowing} 
       \end{minipage}  \hspace{3mm}
       \begin{minipage}[c]{0.48 \textwidth}\vspace{0mm}
         \includegraphics[width=0.98\textwidth]{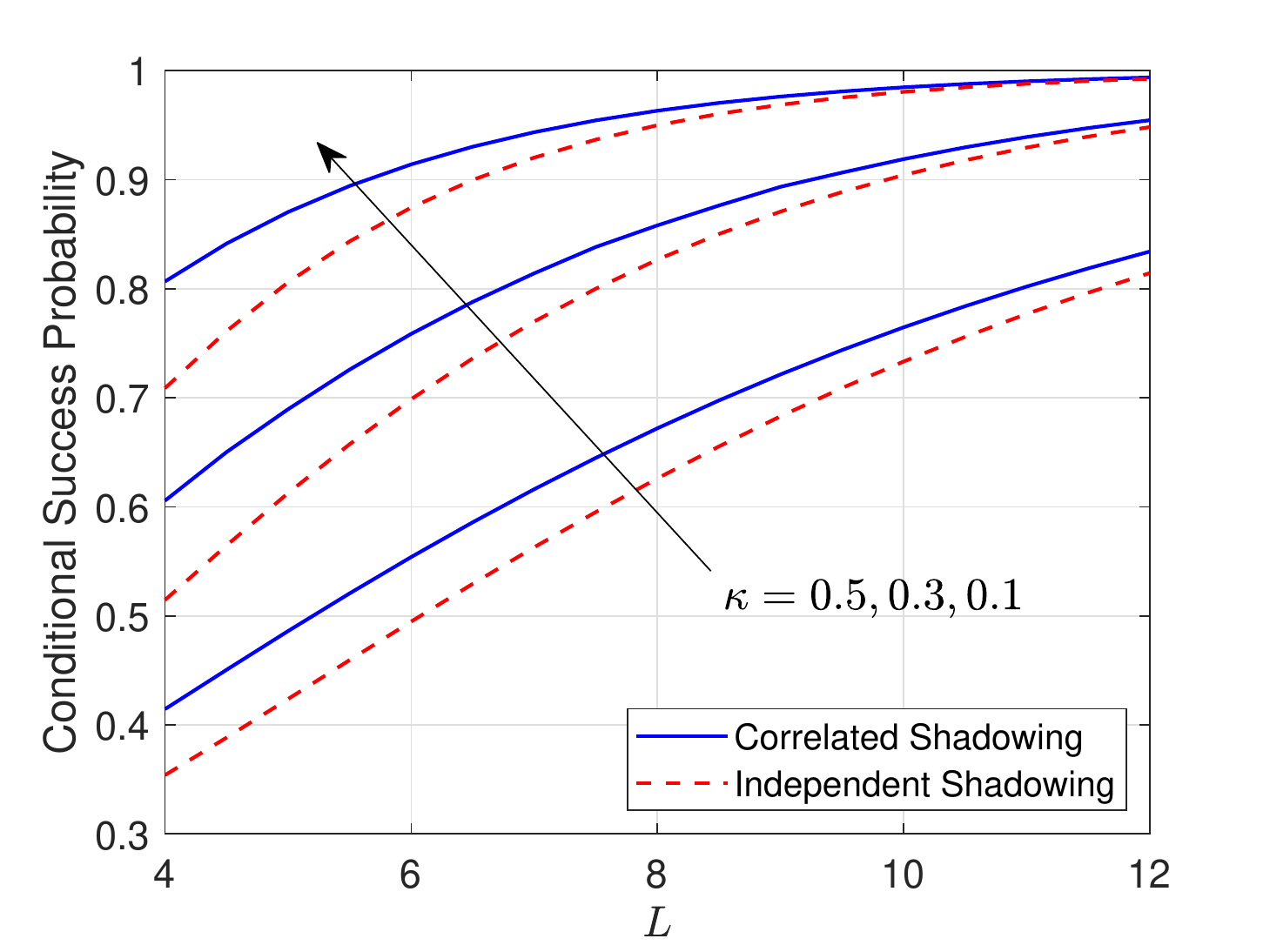}  
         \caption{CSP versus $L$ under correlated and independent shadowing  ($\lambda=1$,   $\alpha=4$, $r_{\rm t}=1$, $\theta=1$).   } \label{fig:JCP_shadowing} 
         \end{minipage}  
      \end{figure*}  
  
  Furthermore, to evaluate the temporal effect of shadowing, we evaluate the temporal CSP with both correlated and independent fading, i.e., $\frac{\mathcal{M}^{\rm Cor}_{P_{\rm s}}(2)}{\mathcal{M}^{\rm Cor}_{P_{\rm s}}(1)}$ and $\frac{\mathcal{M}^{\rm Ind}_{P_{\rm s}}(2)}{\mathcal{M}^{\rm Ind}_{P_{\rm s}}(1)}$, as shown in Figure~\ref{fig:JCP_shadowing}. 
 We can observe that correlated shadowing results in more temporal dependence between two successful transmission events. 
 The gap of CSPs between correlated and independent shadowing decreases as the cell size $L^2$ becomes larger.  
 Moreover, the dependence between two successful temporal transmission events increases as the cell size becomes smaller, especially when the SIR threshold is small.  The reasons for the above observations can be ascribed to the adopted shadowing model, which essentially leads to exponential path loss.

 \subsection{Summary and Discussion}
  
 In this section, we have established an analytical framework to model the interference distribution and success probabilities in networks with spatially-correlated shadowing and spatially-independent shadowing. In particular, the analytical framework characterizes deterministic shadowing cells where the transmitters inside are associated with the same shadowing effects. 
 The key  
 lessons learned are as follows.
 
 \begin{itemize}
 
 \item Spatially-correlated shadowing generates the same interference as but higher variance than spatially-independent shadowing.
 
 \item The moments of the CSP$_\Phi$ under spatially-correlated shadowing is greater than that with spatially-independent shadowing. The performance increase reduces when the cell size shrinks.

 \item The successful transmission events are more temporally correlated with spatially-correlated shadowing than with spatially-independent shadowing.

 \end{itemize}
   
 \vspace{0.2cm}
 \noindent
 {\em Open Technical Issues}: 
 Most of the existing literature, e.g., \cite{F.May2015Baccelli,X.Dec.2015Zhang,J.Nov.2016Lee,J.toappearLee,T.2018Kimura}, only investigates network performance 
 under the assumptions that the shadowing coefficients of the interferers in the same shadowing cells are either fully spatially correlated or independent. However, in practice, the interferers 
 in the same shadowing cell may only experience partially correlated shadowing effect due to their location difference. More 
 accurate shadowing models need to be developed by taking into account the properties (e.g., shape, density and mobility) of the obstacles that may exhibit location-dependency (e.g., urban and rural areas). Moreover, temporal shadowing variation due to mobility of obstacles (e.g., vehicles), access points (e.g., drone hot spots) or users is an important factor to be investigated in practical systems. 
 
 Different from the deterministic shadowing cells introduced in this section, reference~\cite{K.Feng2021} introduces a correlated shadowing model dependent on the Poisson Voronoi cell. This model is suitable for studying coverage-oriented  cellular networks where BSs deployment is configured to guarantee the cell-boundary users to acquire sufficient signal strength.  An intriguing future direction is to extend the study of cell-dependent correlated shadowing to capacity-oriented cellular networks.

\section{Spatial-Temporal Interactions Between Queues} \label{sec:queueing}
The majority of the existing literature heavily relies on the assumption that each transmitter always has packets in the buffer to send out, which does not characterize random traffic flows. 
While the temporal randomness 
of the traffic flows complicates the  analysis, 
it is nevertheless essential to 
understanding system-level performance. This is due to the fact that the traffic patterns in the evolving wireless networks are getting increasingly more dynamic and heterogeneous. 
The main difficulty of random traffic characterization 
originates from the correlation among the buffer statuses of different transmitters, 
often referred to as {\em interacting queues} \cite{H.2019Yang,H2021Yang}. Since the queues interact spatially and temporally, an exact analysis of the mutual interference is quite challenging. 

\subsection{System Model}

We consider both Poisson downlink networks (as introduced in Section~\ref{sec:sectionIII_SM}) and
{\em Poisson bipolar networks}~\cite[Def. 5.8]{M.2013Haenggic}.
In a Poisson downlink network, the transmitters (i.e., BSs) and receivers (i.e., users) are distributed following independent homogeneous PPPs, denoted as  $\Phi_{\rm B}=\{x_{j}\}_{j \in \mathbb{N}}$  and $\Phi_{\rm u}$ with intensity $\lambda_{\rm B}$ and $\lambda_{\rm u}$, respectively. 
  The points in $\Phi_{\rm B}$ are assumed to be ordered from nearest to farthest to the origin, i.e., $\|x_{j}\| < \|x_{j+1}\|$.   
It is assumed that each user is associated with its nearest BS for downlink transmission. In a Poisson bipolar network, the transmitters are distributed as a homogeneous PPP $\Phi$ with intensity $\lambda$. Each transmitter is paired with one receiver in a uniformly random direction with a link distance $r_{\rm t}$.  
 Without loss of generality, we study the performance of the typical receiver, conditioned to be at the origin, in both models.   

We consider a discrete-time transmission and queueing model. 
Specifically, the data transmissions are 
divided into equal-duration time slots. We consider fixed-length data packets and assume it takes exactly one time slot to send out one packet.  
If a transmitter is scheduled for transmission, it can only send out the accumulated packet(s) that arrived prior to the transmission. In light of queueing, we assume that the incoming packets at each queue 
are stored in a buffer with infinite size and sent out on a first-in-first-out basis. A transmitted packet is removed from the head of the queue only if it is successfully decoded at the target receiver.

For Poisson downlink networks, each BS maintains an individual queue for the arrived packets of each associated user~\cite{Y.Jun.2017Zhong}. Hence, each BS has a number of queues equal to the number of users in its Voronoi cell. The temporal arrival of traffic at each 
queue follows an i.i.d. Bernoulli process with arrival rate $\xi_{\rm u}$ representing the probability of a new arrival per time slot.  
The users associated with the same BS are served based on {\em random scheduling}~\cite{Y.Jun.2017Zhong}, i.e., each BS randomly selects one user within its Voronoi cell with equal probability to serve in each time slot. If the selected user has a non-empty queue at the BS, the BS is scheduled for transmission. Otherwise, the BS is muted.

For Poisson bipolar networks, the packet arrival at each transmitter $x_{j} \in \Phi$ follows an i.i.d. Bernoulli process with arrival rate $\xi$. At each time slot, 
the transmitters with non-empty buffer are all scheduled for transmission \cite{Y.2018Zhong}.

Given that the typical receiver is receiving data, its SIR in Poisson downlink and Poisson bipolar networks are given, respectively, by
\begin{align}
& \eta = \frac{ h_{1} \|x_{1}\|^{-\alpha} }{ \sum^{\infty}_{j=2} \iota_{j}  h_{j} \| x_{j} \|^{-\alpha} }, \,\, x_{j} \in \Phi_{\rm B},
\end{align}
and
\begin{align}
& \eta = \frac{ h_{\rm t} \|x_{\rm t}\|^{-\alpha} }{ \sum_{j \in \mathbb{N}  } \iota_{j}  h_{j} \| x_{j} \|^{-\alpha} }, \, \, x_{j} \in \Phi, 
\end{align}
where 
$\iota_{j}$ denote the state indicator of the transmitter located at $j$ which equals 1 and 0 when the transmitter is on and off, respectively.

\subsection{Performance Analysis}

In large random networks, characterizing the exact queue interaction among the transmitters is quite challenging. Fortunately, in a large-scale network,  the correlation among the interacting queues tends to be ``weak" and ``global"~\cite{Howard2018Yang}.   
Therefore, the impact of interacting queues tends to be negligible. In the following, we show how to approximate the success probability in Poisson downlink networks with ``interacting queues" by exploiting the mean-field property (e.g., as in \cite{Y.2018Zhong,Howard2018Yang}). 

\begin{itemize}

\item  We first compute the success probability of the typical user's cell, i.e., the cell containing the origin,  
based on the assumption that each BS at $x_{j} \in \Phi \backslash \{x_{1}\} $ is active independently with probability  $p_{\rm A}=\mathbb{E}[\iota_{j}]$.

\item We then derive the active probability of the serving BS of the typical user as a function of packet arrival rate and success probability (i.e., service rate) based on queueing theory. By inserting the success probability obtained in the previous step, we can establish a fixed-point equation of $p_{\rm A}$.

\item We finally obtain the success probability by plugging in $p_{\rm A}$ obtained by solving the fixed-point equation. 

\end{itemize} 

Based on the above methodology, for a Poisson downlink network, we have the  following result. 

\begin{theorem} \label{thm:DSP}
With infinite buffer size and random scheduling, the success probability of a Poisson downlink network under Rayleigh fading can be approximated by $P_{\rm s}$, obtained by solving the fixed-point equation 
\begin{align}
P_{\rm s}  =  \bigg( 1 +    \frac{  \xi_{\rm u}   ( {_2}F_{1} (1, - \delta; 1 - \delta ; -\theta) - 1  ) }{\sum^{\infty}_{n=1} p_{N_{\rm u}} (n)  P_{\rm s} /n }    \bigg)^{-1},
\end{align} 
where   
\begin{align} p_{ N_{\rm u} }(n) = \frac{\nu^{\nu} \Gamma(n+\nu) (\lambda_{\rm u}/\lambda_{\rm B})^n}{n!\Gamma(\nu)(\lambda_{\rm u}/\lambda_{\rm B} + \nu)^{n+\nu}},
\end{align}
with $\delta=2/\alpha$ and $\nu=3.5$.
\end{theorem}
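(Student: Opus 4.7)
The plan is to exploit the mean-field decoupling flagged in the paragraph preceding the theorem: treat the activity indicators $\{\iota_{j}\}_{x_{j} \in \Phi_{\rm B} \setminus \{x_{1}\}}$ of the interfering BSs as i.i.d.\ Bernoullis with common mean $p_{\rm A}$, and then close the loop through a self-consistency relation pinning $p_{\rm A}$ to $P_{\rm s}$ via queueing theory.

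First I would derive $P_{\rm s}(p_{\rm A})$ by conditioning on the nearest-BS distance $r_{1}$ and applying the PGFL of the PPP (as in the derivation leading to (\ref{eqn:M_SP_TU})) to the independently thinned interferer field of intensity $p_{\rm A} \lambda_{\rm B}$. Under Rayleigh fading, after the standard change of variable $u = \|x\|/r_{1}$ and evaluation of the resulting hypergeometric integral, one obtains
\begin{align}
P_{\rm s}(p_{\rm A}) = \frac{1}{1 + p_{\rm A} \bigl( {_2}F_{1}(1, -\delta; 1-\delta; -\theta) - 1 \bigr)},
\end{align}
which reduces the problem to determining $p_{\rm A}$ self-consistently.

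Next I would analyze the discrete-time queue at a typical BS by elementary queueing arguments. In a Voronoi cell of size $n$, random scheduling selects each associated user with probability $1/n$, so a tagged user's queue is serviced at effective rate $P_{\rm s}/n$ per slot against arrival rate $\xi_{\rm u}$. Since a BS transmits if and only if its randomly chosen user has a non-empty queue, the BS activity factor is governed by the per-user occupancy; averaging the per-user service rate $P_{\rm s}/n$ against the cell-size distribution $p_{N_{\rm u}}(n)$, which is the standard negative-binomial approximation with shape $\nu = 3.5$ for the number of points of an independent PPP falling in a typical Poisson--Voronoi cell, yields the mean-field consistency relation
\begin{align}
p_{\rm A} = \frac{\xi_{\rm u}}{\sum_{n=1}^{\infty} p_{N_{\rm u}}(n)\, P_{\rm s}/n}.
\end{align}
Substituting this expression back into $P_{\rm s}(p_{\rm A})$ produces the stated fixed-point equation.

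The main obstacle is justifying the mean-field decoupling itself: the indicator field $\{\iota_{j}\}$ is in reality coupled across BSs both spatially (through shared interference) and temporally (through the queue dynamics discussed preceding the theorem), so replacing it by i.i.d.\ thinning is a heuristic, and the symbol $\approx$ in the statement absorbs this slack. A rigorous argument would require, in the spirit of propagation-of-chaos results for interacting-particle systems, proving that the residual cross-correlation between activity indicators at distinct BSs vanishes in the large-network limit. A secondary technicality is to verify that the fixed-point equation admits a unique solution in $(0,1]$, which amounts to checking that the right-hand side maps $(0,1]$ into itself and is a contraction in the regime where the implicit stability condition $n \xi_{\rm u} < P_{\rm s}$ holds for the $n$ on which $p_{N_{\rm u}}$ places significant mass.
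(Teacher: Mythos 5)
Your proposal is correct and follows essentially the same route as the paper's proof in Appendix E: mean-field decoupling of the activity indicators into i.i.d.\ Bernoulli($p_{\rm A}$) marks, evaluation of $P_{\rm s}(p_{\rm A})$ via the PGFL (the paper phrases the same computation through the relative distance process, yielding the identical hypergeometric form), and closure of the loop by setting $p_{\rm A}$ equal to the queue utilization $\xi_{\rm u}/\bar{\mu}$ with $\bar{\mu}=\sum_{n}p_{N_{\rm u}}(n)P_{\rm s}/n$ under random scheduling. Your added remarks on rigorously justifying the decoupling and on uniqueness of the fixed point go beyond what the paper attempts but do not change the argument.
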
 

\begin{proof}
See \textbf{Appendix E}.
\end{proof}

Fig. \ref{fig:CP_queue_downlink} shows the success probability under different packet arrival rates  $\xi_{\rm u}$ in a Poisson downlink network. We observe that ignoring the temporal and spatial correlation among queue iterations closely approximates the success probability achieved in the presence of the interactions. This is due to the ``mean field" effect  in a large-scale network. 
We also note that the traffic arrival rate $\xi$ plays a pivotal part in the success probability.
For example, to achieve a target success probability of $80\%$, the disparity of the supported SIR thresholds with $\xi_{\rm u}=0.01$ and $\xi_{\rm u}=0.05$  
can be over 10 dB.   
A larger $\xi_{\rm u}$ decreases the success probability due to the increased density of interferers.

\begin{figure}
 \centering
 \includegraphics[width=0.5\textwidth]{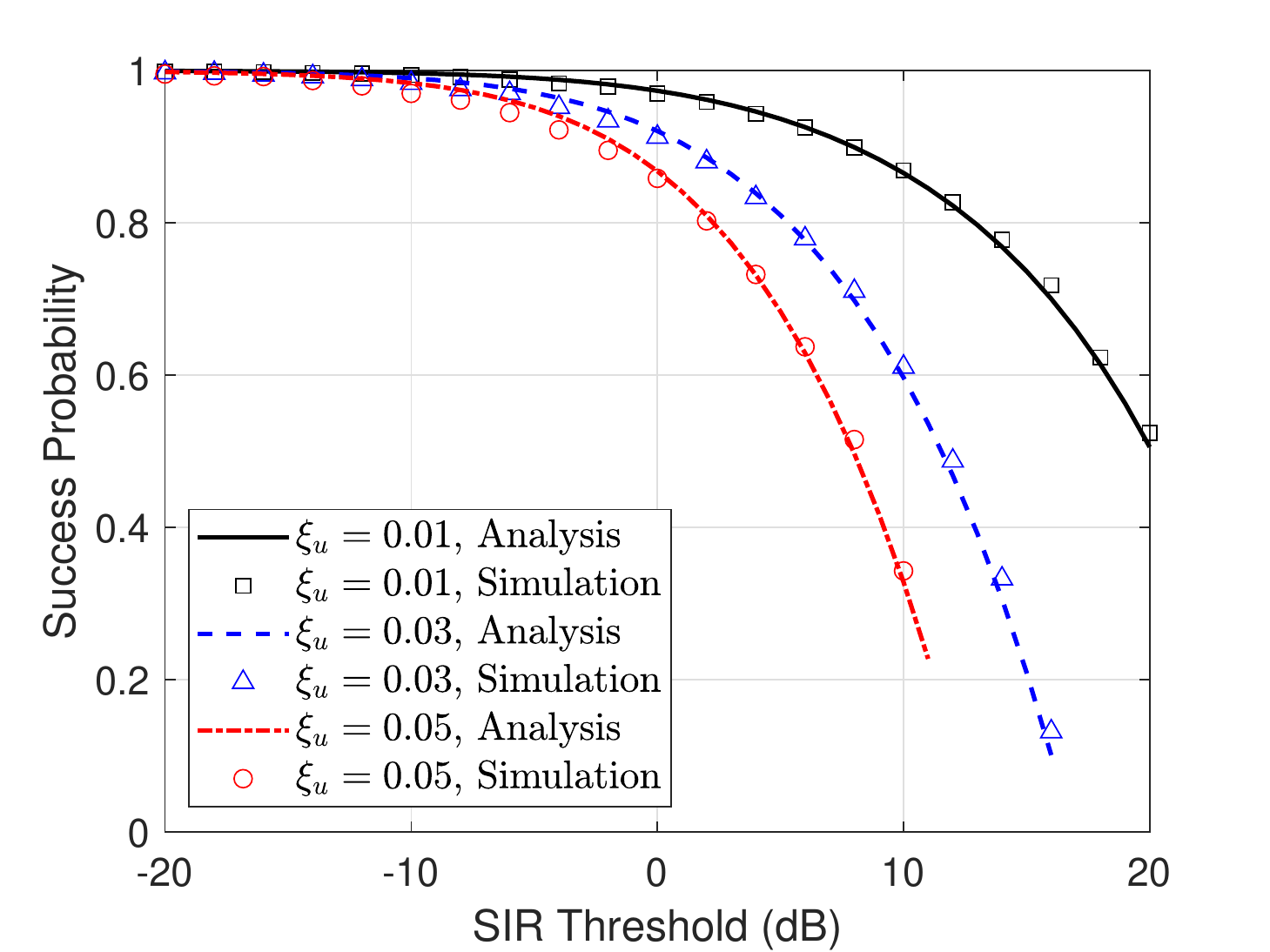} 
 \caption{Success probability versus SIR threshold in Poisson downlink networks ($\alpha=4$,  $\lambda_{\rm u}/\lambda_{\rm B}=5$). 
   } \label{fig:CP_queue_downlink} 
 \end{figure}

Next, we discuss the success probability in Poisson bipolar networks. By following the same methodology as used for
Poisson downlink networks, we obtain the result  
presented in the following theorem.

\begin{theorem} \label{thm:buffer_bipolar}
If all links in a Poisson bipolar network experience Rayleigh fading and have infinite buffer size, the success probability
of a typical link is given by (\ref{eqn:CP_buffer}), 
\begin{figure*}
\begin{align}  
P_{\rm s} \approx \max \bigg \{ \! \exp \bigg( \mathcal{W} \big( - \xi \lambda \pi r^2_{\rm t}  \theta^{\delta}  \Gamma(1\!+\!\delta ) \Gamma(1\!-\!\delta ) \big) \! \bigg) , \exp \Big( \!  -   \lambda \pi r^2_{\rm t}  \theta^{\delta}    \Gamma(1\!+\!\delta ) \Gamma(1\!-\!\delta ) \Big) \! \bigg \},
 \label{eqn:CP_buffer}
\end{align}
\hrulefill
\end{figure*}
where $\delta= \frac{2}{\alpha}$.
\end{theorem}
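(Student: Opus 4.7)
The plan is to mirror the mean-field methodology behind Theorem~\ref{thm:DSP}, now adapted to the Poisson bipolar setting. The essential difference is that in the bipolar model (i) the interferer set is the whole PPP $\Phi$ and (ii) every transmitter with a non-empty buffer is active at each slot (no random scheduling), so the interference thinning is controlled by a single active probability $p_{\rm A}$ shared across transmitters.

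First I would decouple the interacting queues under a mean-field assumption: each transmitter in $\Phi$ is independently active with probability $p_{\rm A}=\mathbb{E}[\iota_{j}]$. By the independent-thinning theorem, the active interferers form a PPP of intensity $\lambda p_{\rm A}$. Conditioning on the typical link distance $r_{\rm t}$, applying the PGFL in~(\ref{eqn:PPP_PGFL}) to the Laplace transform of the Rayleigh-faded interference, and evaluating the standard radial integral, I obtain
\begin{align}
P_{\rm s}(p_{\rm A}) \;=\; \exp\!\bigl(-\,p_{\rm A}\,\lambda\pi r_{\rm t}^{2}\,\theta^{\delta}\,\Gamma(1+\delta)\,\Gamma(1-\delta)\bigr).
\end{align}

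Next I would close the loop through discrete-time queueing. Under the mean-field decoupling, each transmitter's buffer sees i.i.d.\ Bernoulli arrivals of rate $\xi$ and i.i.d.\ Bernoulli service attempts succeeding with probability $P_{\rm s}$. For such a Geo/Geo/1-type queue with infinite buffer, standard results give a steady-state non-empty (busy) probability $p_{\rm A}=\xi/P_{\rm s}$ in the stable regime $\xi<P_{\rm s}$, while in the unstable regime $\xi\ge P_{\rm s}$ the backlog grows without bound and $p_{\rm A}=1$. Substituting each case into $P_{\rm s}(p_{\rm A})$ and abbreviating $C\triangleq \lambda\pi r_{\rm t}^{2}\theta^{\delta}\Gamma(1+\delta)\Gamma(1-\delta)$ yields, for the unstable branch, $P_{\rm s}=\exp(-C)$, and for the stable branch the fixed-point equation
\begin{align}
P_{\rm s}\;=\;\exp\!\bigl(-\xi C/P_{\rm s}\bigr).
\end{align}
Setting $u=\log P_{\rm s}$ gives $u\,e^{u}=-\xi C$, so by the defining property of the Lambert-$\mathcal{W}$ function listed in Table~\ref{notation}, $u=\mathcal{W}(-\xi C)$ and $P_{\rm s}=\exp(\mathcal{W}(-\xi C))$.

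Finally, I would argue that the correct value is the larger of the two candidates, which is exactly the $\max$ in the statement. The mapping $p_{\rm A}\mapsto P_{\rm s}(p_{\rm A})$ is strictly decreasing, so the unique physically consistent steady state is characterized by the smallest self-consistent $p_{\rm A}$ and hence the largest $P_{\rm s}$: when the arrival rate is light enough that the Lambert-$\mathcal{W}$ branch is real and gives $P_{\rm s}>\xi$, that branch is self-consistent and dominates; otherwise the queue is unstable, every transmitter is permanently backlogged, and only $\exp(-C)$ remains. The main obstacle, and what deserves the most care, is precisely this branch selection together with the justification of the Geo/Geo/1 reduction: one must verify that replacing the temporally and spatially coupled queue dynamics by a single tagged queue driven by a constant service probability $P_{\rm s}$ is asymptotically exact in the large-network limit, so that the mean-field fixed point faithfully represents the true steady state. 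Modulo this decoupling step, the remainder of the derivation is a direct PGFL computation followed by solving a transcendental equation in closed form via $\mathcal{W}(\cdot)$.
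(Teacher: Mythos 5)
Your proposal is correct and follows essentially the same route as the paper's Appendix F: mean-field decoupling of the queues into an independent activity probability $p_{\rm A}$, a PGFL computation giving $P_{\rm s}=\exp(-p_{\rm A}\lambda\pi r_{\rm t}^2\theta^{\delta}\Gamma(1+\delta)\Gamma(1-\delta))$, the Geo/Geo/1 relation $p_{\rm A}=\min\{\xi/P_{\rm s},1\}$, and a Lambert-$\mathcal{W}$ solution of the resulting fixed-point equation. Your explicit branch-selection argument for the $\max$ and your substitution of $p_{\rm A}=1$ back into $P_{\rm s}(p_{\rm A})$ (rather than into $\xi/p_{\rm A}$) are in fact slightly cleaner than the paper's closing sentence, but the substance is identical.
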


\begin{proof}
See \textbf{Appendix F}.
\end{proof}

Next, we explore the asymptotics of the success probability by utilizing the expansion of the Lambert-$\mathcal{W}$ function as $\mathcal{W}(z)\sim z$ when $z \to 0$, 
which follows from $x e^x \sim x$ when $x \to 0$.

 \begin{figure}
 \centering
 \includegraphics[width=0.5\textwidth]{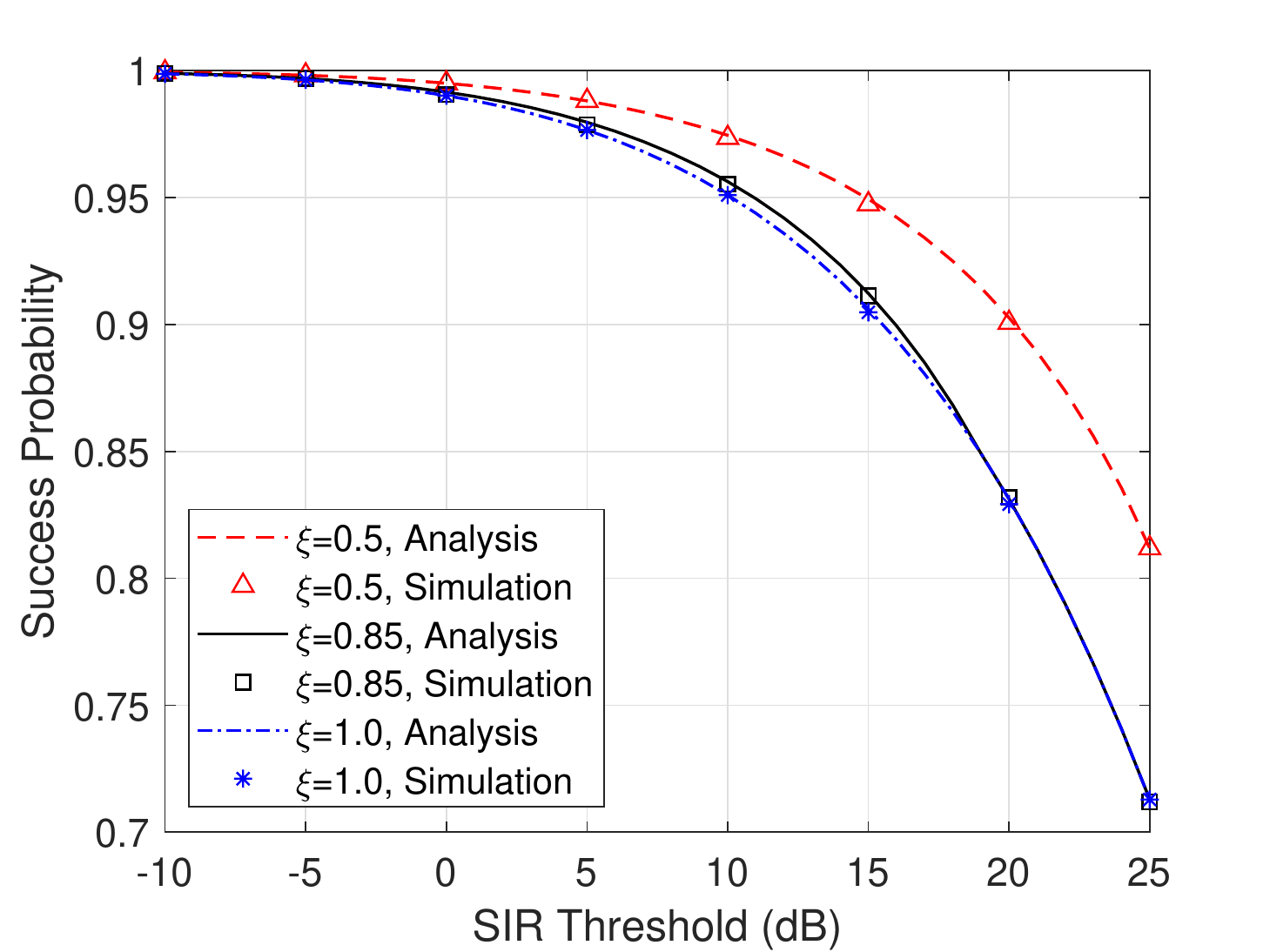} 
 \caption{Success probability as a function of the SIR threshold in Poisson bipolar networks ($\alpha=4$, $r_{\rm t}=2$, $\lambda=0.001$).  } \label{fig:CP_queue} 
 \end{figure}

Fig.~\ref{fig:CP_queue} depicts the success probability 
versus the SIR threshold with different settings of the packet arrival rate.  
The analytical results closely match the simulation results, which validates the effectiveness of the adopted approximation. 
Additionally, 
the success probability with $\xi=1.0$ overlaps that with $\xi=0.85$ when the SIR threshold is large (e.g., when $\theta>20$ dB). The reason is that in both cases the service rate of each queue is below the packet arrival rate, and thus the buffer is always non-empty.
As a result, all the transmitters remain active for transmission, which renders the same success probability under different packet arrival rates. 
This observation also shows that the success probability is lower-bounded by the case with a full load.

  \begin{figure}
 \centering
 \includegraphics[width=0.5\textwidth]{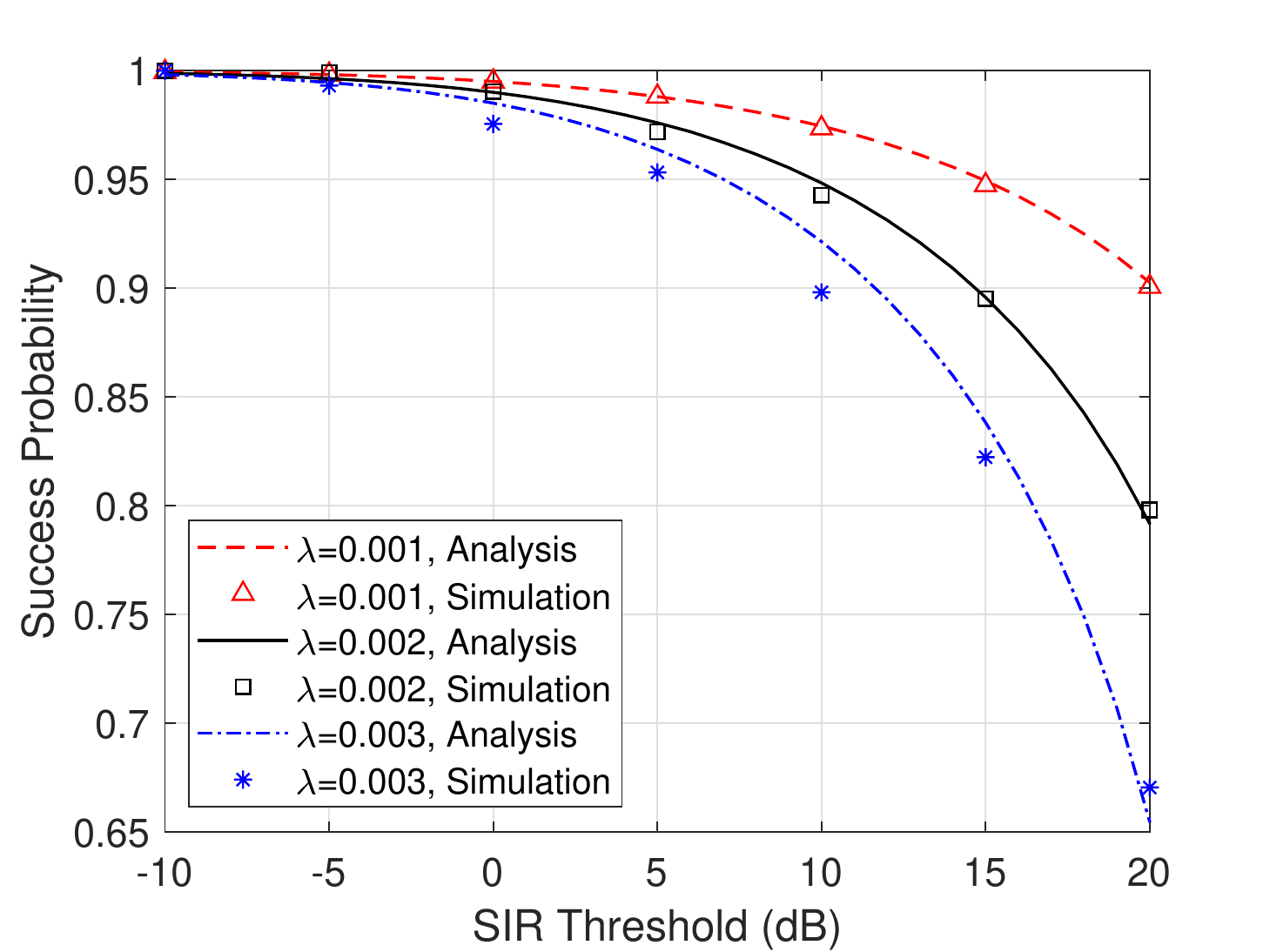} 
 \caption{Success probability in Poisson bipolar networks under different intensities.   } \label{fig:SP_queue_density} 
 \end{figure}

Fig.~\ref{fig:SP_queue_density} further illustrates the success probability in a Poisson bipolar network under different densities. It can be observed that the approximation tends to lose its accuracy with the increase of the network density. The reason is that, in a Poisson bipolar network, the interferers can be arbitrarily close to the target receiver, and thus the queues of the serving transmitters and the interferers are strongly coupled. The spatial-temporal correlation of the buffer status cannot be ignored in this regime.

\subsection{Summary and Discussion}

This section has developed models to analyze wireless systems with unsaturated buffers. Specifically, 
by integrating results from queueing theory, we have presented the derivations of success probabilities for Poisson bipolar and Poisson downlink systems given the packet arrival rate. The key 
lessons learned are as follows.

\begin{itemize}

\item The spatial correlation among the queue statuses of different transmitters and the temporal correlation among the  queue status of the same transmitter can be ignored when different queues are weakly coupled, e.g., in Poisson downlink networks, and cannot be ignored when the queues are strongly coupled, e.g., in Poisson bipolar networks. 


\item The temporal correlation among the successful transmission events decreases with increasing packet arrival rate. 

\end{itemize}

\vspace{0.2cm}
\noindent 
{\em Open Technical Issues}: In Section \ref{sec:queueing}, we have shown that in large Poisson downlink networks, the interaction among the queues becomes weak, and the impact of the temporal and spatial correlation tends to be negligible. However, this effect only appears due to the cellular infrastructure, where the interfering BSs are further away from the typical user than the serving BS. In infrastructure-less networks that do not impose any restrictions on the interferers' locations, the interaction between the queues tends to be strong, thus, their spatial-temporal correlation cannot be ignored. 
Thus, further research efforts are needed to characterize the dynamics of strongly coupled queues in large-scale networks. 
  
\section{Spatially-Correlated Interference and Relaying} \label{sec:relaying}


Multihop relaying is an effective technique to extend the communication range with limited transmit power at each hop,  enhancing the reliability and throughput of point-to-point communication~\cite{R.July2013Tanbourgi}. In multihop relaying, the transmission performance at different hops can be impacted by some common interferers. Therefore, the spatial characteristics of the interferers can 
have a substantial impact on
the end-to-end transmission performance \cite{A.Oct.2015Crismani}. The goal of this section is to quantify the spatially and temporally-correlated interference and demonstrate its effect on the performance of multihop relaying.

\subsection{System Model}

We consider a multihop relay network consisting of a source node and a receiver node, which is $M$ hops away from the transmitter node, in a random field of interferers. The locations of the source node $\mathrm{S}$ and the $m$-th hop receiver (i.e., $(m+1)$-th hop transmitter) are deterministic and  denoted by $x_{\mathrm{S}}$ and $z_{m}$, respectively. The transceiver node in each hop works in a half-duplex fashion.  
Let $d_{m}$ represent the Euclidean distance of the $m$-th hop link, i.e., $d_{1}=\|z_{1}-x_{\mathrm{S}}\|$ and $d_{m}=\|z_{m}-z_{m-1}\|$, for $ 2 \leq m \leq M$. The route from the source node to the $M$-th hop receiver is fixed. The decode-and-forward (DF) relaying protocol is adopted such that each hop first decodes the received signal and forwards the re-encoded version to the next hop. We consider a Poisson field of interferers where the locations of the interferers $\Phi \subset \mathbb{R}^2$ are a PPP with intensity $\lambda$, as illustrated in Fig.~\ref{fig:Cox_system}(a). 
Note that the relaying system of interest is considered to be independent of the interferers.

We assume that each transmitter in the considered system uses unit transmit power. Let $\Phi_{m}=\{x_{j,m}\}_{j \in \mathbb{N}}$ denote the point process of interferers during the $m$-th hop transmission. 
The SIR at the $m$-th hop can be expressed as 
\begin{align}
& \eta_{m} = \frac{ h_{m}   d_{m}  ^{- \alpha} }{  \sum_{ j \in \mathbb{N} } h_{j,m} \| x_{j,m} - z_{m} \|^{-\alpha  } }, 
\end{align}
where $h_{m}$ represents the power gain of small-scale fading for the $m$-th transmission hop and $h_{j,m}$ represents the small-scale fading gain between the interferer $j$ and the $m$-th hop receiver, which are both i.i.d. exponential random variables with unit mean.
   
For the analysis of the success probability of multihop relaying, we consider both QSI and FVI with which the transmission of different hops are subject to the interference from the same point process, i.e., $\Phi_{1}=\Phi_{2}=\cdots=\Phi_{M}=\Phi$, and independent point processes, respectively.    
 With DF relaying protocol~\cite{O.2003Hasna}, the end-to-end success probability of an $M$-hop relaying system with QSI and FVI are given, respectively, by
 \begin{align} \label{def:SP_relaying}
  \mathcal{P}^{\textup{QSI}}_{M} & =  \mathbb{E}  \Bigg[  \mathbb{P} \bigg[ \bigcap^{M}_{m=1} \big \{ \eta_{m} >  \theta \big \} \bigm| \Phi \bigg] \Bigg]  \nonumber \\
  & =  \mathbb{E}  \Bigg[ \prod^{M}_{m=1}  \mathbb{P} \big[    \eta_{m} >  \theta    \mid \Phi \big] \Bigg] 
  \end{align}
  and
   \begin{align}
 \mathcal{P}^{\textup{FVI}}_{M}  & = \mathbb{P} \Bigg[ \bigcap^{M}_{m=1} \big \{ \eta_{m} >  \theta \big \}  \Bigg]  \nonumber \\
 & = \prod^{M}_{m=1} \mathbb{P} \big[   \eta_{m} >  \theta   \big]  . 
 \end{align}

  \begin{figure}
 \centering
 \includegraphics[width=0.4\textwidth]{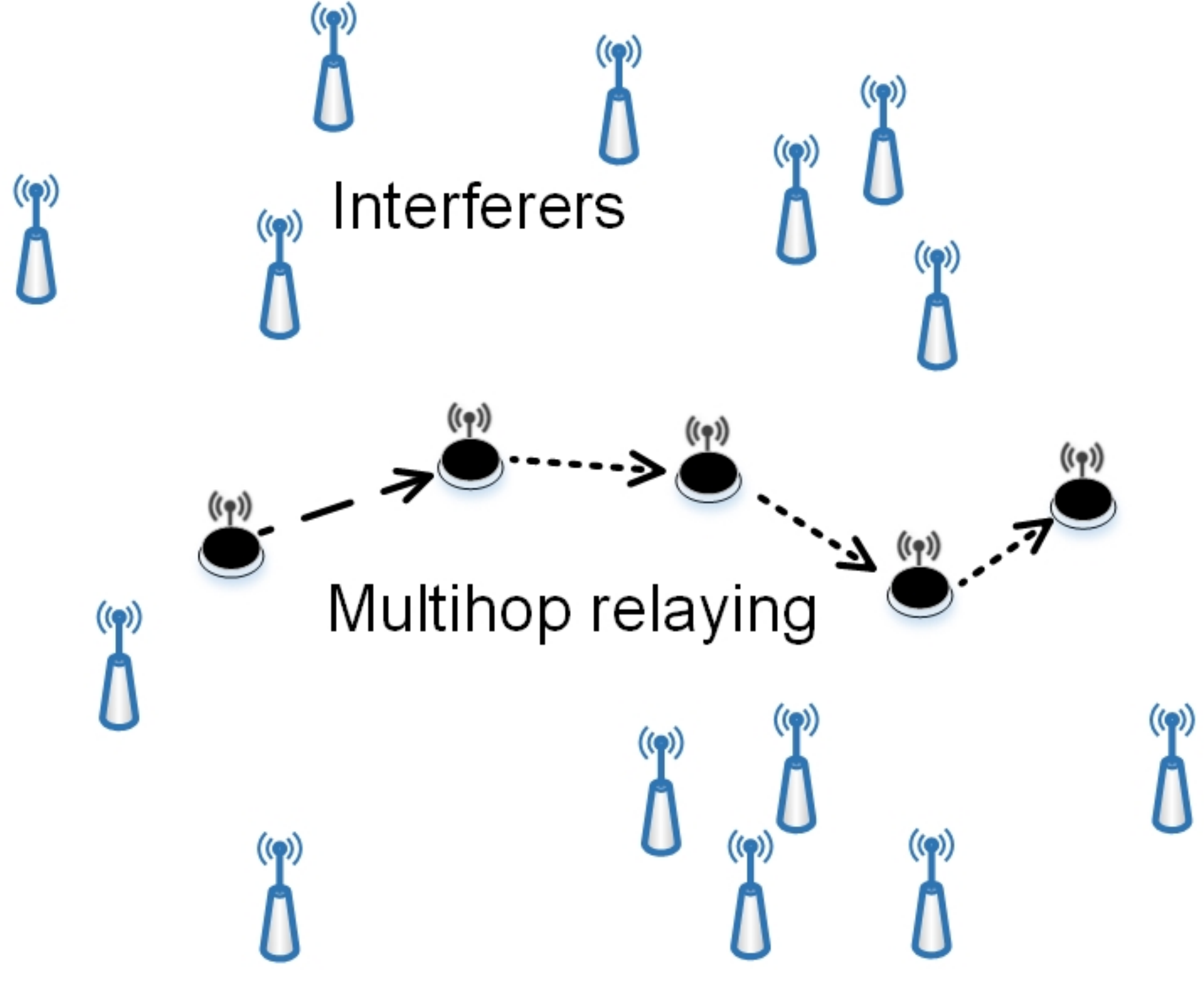} 
 \caption{Multihop relaying in the presence of random interferers \cite{B2}.  } \label{fig:Cox_system} 
 \end{figure}

 \subsection{Moments of the End-to-End CSP$_\Phi$} 

This subsection characterizes the moments of the end-to-end CSP$_\Phi$ of the multihop relaying. 
 
 \begin{itemize}
 
 \item We compute the end-to-end JSP that the transmissions of the $M$ hops all succeed given the point process, i.e., $\mathbb{P} \big[  \bigcap^{M}_{m=1} \{ \eta_{m} > \theta \} \mid \Phi_{m} \big]$ (e.g., as in \cite{J.2013Lee}).
 
 \item  We derive the moments of the CSP$_\Phi$ based on the PGFL of the PPP. 
  
 \end{itemize}

Following the above methodology, we obtain the moments of the end-to-end CSP$_\Phi$ under both QSI and FVI in the following theorem.  

\begin{theorem} \label{thm:M_ESP}
The moments of the end-to-end CSP for an $M$-hop relaying system in a Poisson field 
of interferers are given by 
(\ref{eqn:relay_PPP}).  
\begin{figure*} 
\begin{align} \label{eqn:relay_PPP}
&\mathcal{M}^{\textup{Poi}}_{P_{\rm s}} (b,M)  \!=\! \begin{dcases} \exp \Bigg( \!\! -  \lambda \! \int_{\mathbb{R}^2}  \bigg(  1  -  	\prod^{M}_{m=1} \bigg( \frac{1}{  1  + \theta d^{\alpha}_{m}  \|x-z_{m}\|^{-\alpha}   } \! \bigg)^{\! b} \bigg)   \mathrm{d}x \Bigg), \quad \textup{QSI} \\
 \exp \Bigg( \! -  \lambda \prod^{M}_{m=1}   \int_{\mathbb{R}^2}    \! \! \bigg(  1  -  \bigg( \frac{1}{   1 \! +\! \theta d^{\alpha}_{m} \| x  - z_{m} \|^{-\alpha}  } \! \bigg)^{\!b} \bigg)    \mathrm{d}x \Bigg), \quad \,\, \textup{FVI} 
\end{dcases}   
\end{align} 
\hrulefill
\end{figure*}

\end{theorem}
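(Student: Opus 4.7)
The plan is to handle the QSI and FVI cases separately, exploiting the Rayleigh-fading Laplace-transform trick to turn the joint success event into a product over the interferer locations, and then invoke the PGFL of the PPP as in Theorem~\ref{thm:SP_MCP_field}.

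For the QSI case, I would first condition on the entire interferer process $\Phi$ and use the fact that, given $\Phi$, the direct-link fades $\{h_m\}$ and the interfering fades $\{h_{j,m}\}$ are independent across hops $m$ and across interferers $j$. This lets me factor the conditional joint success probability as $\mathbb{P}\bigl[\bigcap_{m=1}^M \{\eta_m>\theta\}\mid\Phi\bigr]=\prod_{m=1}^M \mathbb{P}[\eta_m>\theta\mid\Phi]$. For each hop, the standard Rayleigh argument (condition on $h_m \sim \mathcal{E}(1)$, apply the Laplace transform of the interference, then average each $h_{j,m}$) yields
\begin{align}
\mathbb{P}[\eta_m>\theta\mid\Phi] \;=\; \prod_{j\in\mathbb{N}} \frac{1}{1+\theta d_m^{\alpha}\|x_j-z_m\|^{-\alpha}}.
\end{align}
Taking the $b$-th power and the product over $m$ collapses everything into a single product over $\Phi$, namely $\bigl(P_{\mathrm{s}\mid\Phi}\bigr)^b = \prod_{x\in\Phi} f(x)$ with $f(x)=\prod_{m=1}^M \bigl(1+\theta d_m^{\alpha}\|x-z_m\|^{-\alpha}\bigr)^{-b}$. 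Averaging over $\Phi$ via the PGFL of the PPP in (\ref{eqn:PPP_PGFL}) then produces the QSI expression in (\ref{eqn:relay_PPP}).

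For the FVI case, the argument is structurally simpler because $\Phi_1,\dots,\Phi_M$ are mutually independent PPPs, so the CSP$_{\Phi_m}$'s are independent random variables. Hence
\begin{align}
\mathcal{M}^{\textup{Poi}}_{P_{\rm s}}(b,M) = \mathbb{E}\!\left[\prod_{m=1}^M \bigl(\mathbb{P}[\eta_m>\theta\mid\Phi_m]\bigr)^{\!b}\right] = \prod_{m=1}^M \mathbb{E}\!\left[\bigl(\mathbb{P}[\eta_m>\theta\mid\Phi_m]\bigr)^{\!b}\right].
\end{align}
Each expectation on the right is handled by the same Rayleigh-plus-PGFL recipe used for the single-hop case, giving a factor $\exp\bigl(-\lambda\!\int_{\mathbb{R}^2}\!\bigl(1-(1+\theta d_m^{\alpha}\|x-z_m\|^{-\alpha})^{-b}\bigr)\mathrm{d}x\bigr)$, and the product of logarithms converts into the single exponential with a product of integrals displayed in the FVI branch of (\ref{eqn:relay_PPP}).

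The computations are largely mechanical once the conditioning order is set correctly, so the only real subtlety lies in the QSI case: one must condition on $\Phi$ \emph{before} averaging the fading variables (so that the same interferer locations are shared across hops), and one must keep the exponent $b$ outside the product over $m$ but \emph{inside} the product over $\Phi$, so that the PGFL is applied to a single scalar function $f(x)$ rather than $M$ separate ones. If instead one applied the PGFL hop by hop, one would recover the FVI formula and lose the spatial-interference correlation that QSI is meant to capture; keeping track of this distinction is the main point of care in the proof.
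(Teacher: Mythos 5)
Your proposal is correct and follows essentially the same route as the paper's Appendix G: condition on the interferer process, use the Rayleigh/exponential fading to reduce the conditional joint success probability to $\prod_{j}\prod_{m}\bigl(1+\theta d_m^{\alpha}\|x_{j,m}-z_m\|^{-\alpha}\bigr)^{-1}$, raise to the $b$-th power, and then apply the PGFL of the PPP once to the combined per-point function for QSI and hop-by-hop for FVI; your closing remark about where the exponent $b$ and the product over $m$ must sit relative to the PGFL is exactly the distinction the paper's derivation (\ref{eqn:M_SP_Poifield}) encodes. One small caveat: your FVI argument correctly yields $\prod_{m=1}^{M}\exp\bigl(-\lambda\int_{\mathbb{R}^2}(\cdot)\,\mathrm{d}x\bigr)=\exp\bigl(-\lambda\sum_{m=1}^{M}\int_{\mathbb{R}^2}(\cdot)\,\mathrm{d}x\bigr)$, i.e., a \emph{sum} of integrals in the exponent, so it does not literally reproduce the product of integrals displayed in the FVI branch of (\ref{eqn:relay_PPP}); that $\prod$ appears to be a typo in the statement (the identity $\mathcal{M}^{\textup{Poi}}_{P_{\rm s}}(b,M)/\mathcal{M}^{\textup{Poi}}_{P_{\rm s}}(b,M-1)=\mathcal{M}^{\textup{Poi}}_{P_{\rm s}}(b,1)$ invoked later only holds for the sum form), so your derivation is the correct one.
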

 
 \begin{proof}
See \textbf{Appendix G}. 
 \end{proof}

Fig.~\ref{fig:CP_SP_relay} illustrates the success probability for an $M$-hop linear-route multihop relaying system where all relays are placed on the source-destination line and all $M$ links have distance $l$. 
With both fields of interferers, we can observe that correlated QSI provides a higher success probability than independent FVI. 
This can be understood from the perspective of the CSP of $m$-th hop given that the transmissions of the previous hops all succeed. 
With FVI, we can see from  (\ref{eqn:relay_PPP}) 
that   $\mathcal{M}^{\textup{Poi}}_{P_{\rm s}}(b,M)/\mathcal{M}^{\textup{Poi}}_{P_{\rm s}}(b,M-1)=\mathcal{M}^{\textup{Poi}}_{P_{\rm s}}(b,1)$.  
This is consistent with the fact that with FVI, the successful transmission event of the $m$-th hop is independent of those of the previous hops. 
  
 \begin{figure}
  \centering
  \includegraphics[width=0.5\textwidth]{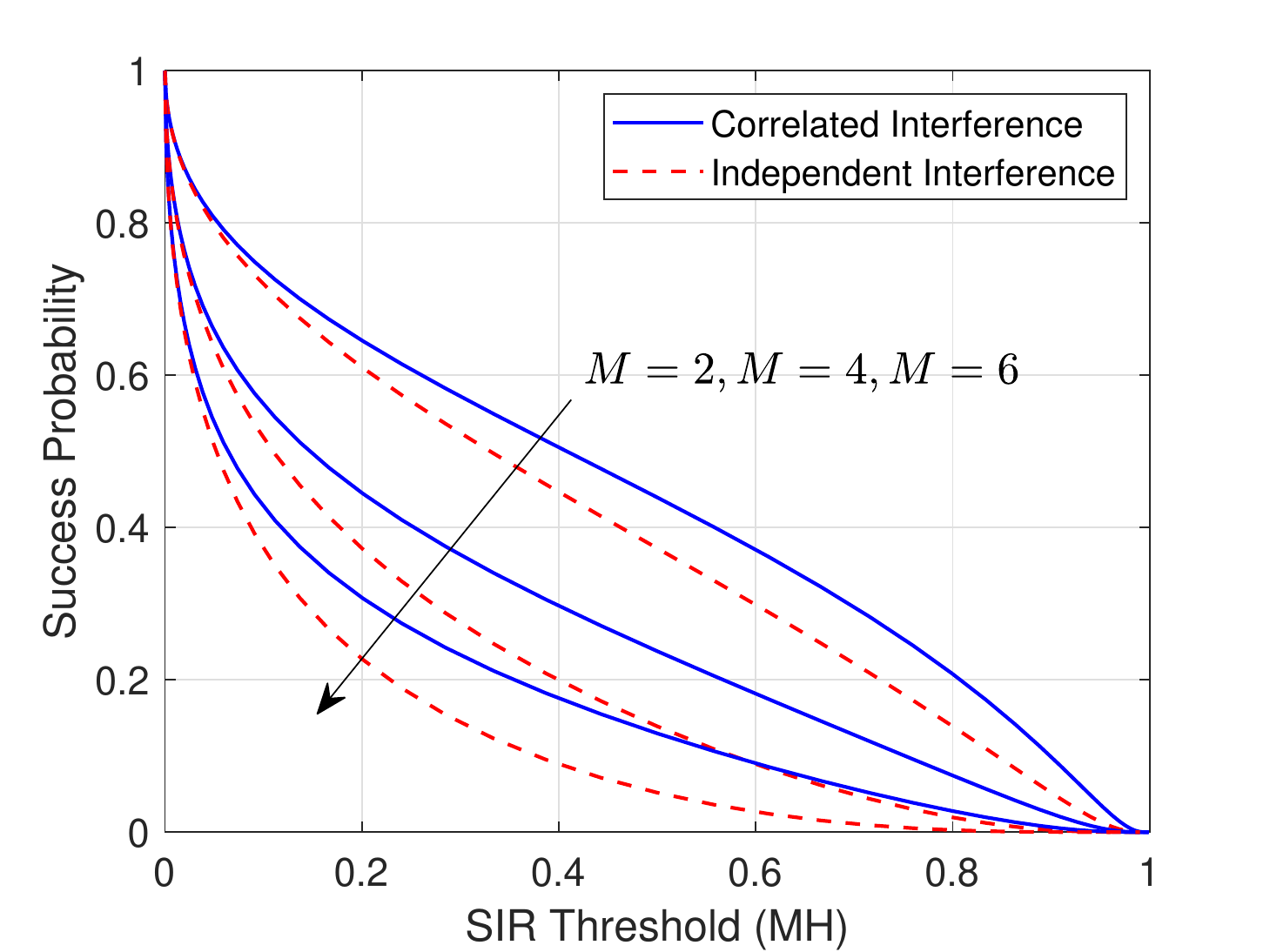} 
  \caption{Success probability of a multihop relaying system in a Poisson field of interferers ($\alpha=4$, $\lambda=0.1$, $d_{1}=d_{2}=\cdots=d_{M}=1$). } \label{fig:CP_SP_relay} 
  \end{figure}

  
Fig.~\ref{fig:CCP} demonstrates the spatial CSP of the \mbox{$m$-th} hop given that the transmissions in the previous \mbox{$m-1$} hops all succeed in the scenario with QSI, i.e., $\big(\mathcal{M}^{\text{Poi}}_{P_{\rm s}}(1,M)/\mathcal{M}^{\text{Poi}}_{P_{\rm s}}(1,M-1)\big)$ for $M \geq 2$. It can be seen that, with QSI, the CSP  increases considerably given the successful transmission of the first hop, especially when the SIR threshold $\theta$ is high. Thus, the end-to-end success probability with QSI exceeds that with FVI. It is worth noting that with temporally correlated  QSI, the conditional outage probability, given the outage events of the previous hops, also increases. However, this does not worsen the end-to-end success probability, since the outage event of any single hop leads to an end-to-end outage.

Moreover, Fig.~\ref{fig:CCP_relay} depicts the temporal CSP of an $M$-hop relaying system given a previous end-to-end successful transmission, i.e., 
$\mathcal{M}^{\text{Poi}}_{P_{\rm s}}(2,M)/\mathcal{M}^{\text{Poi}}_{P_{\rm s}}(1,M)$.
We can observe that the CSP increases with the number of hops. The reason is that a successful transmission with a larger number of hops indicates a good channel condition, and thus the next end-to-end transmission is more likely to succeed.

 \begin{figure*}[htp]  
      \centering
        \begin{minipage}[c]{0.48 \textwidth}
         \includegraphics[width=0.98\textwidth]{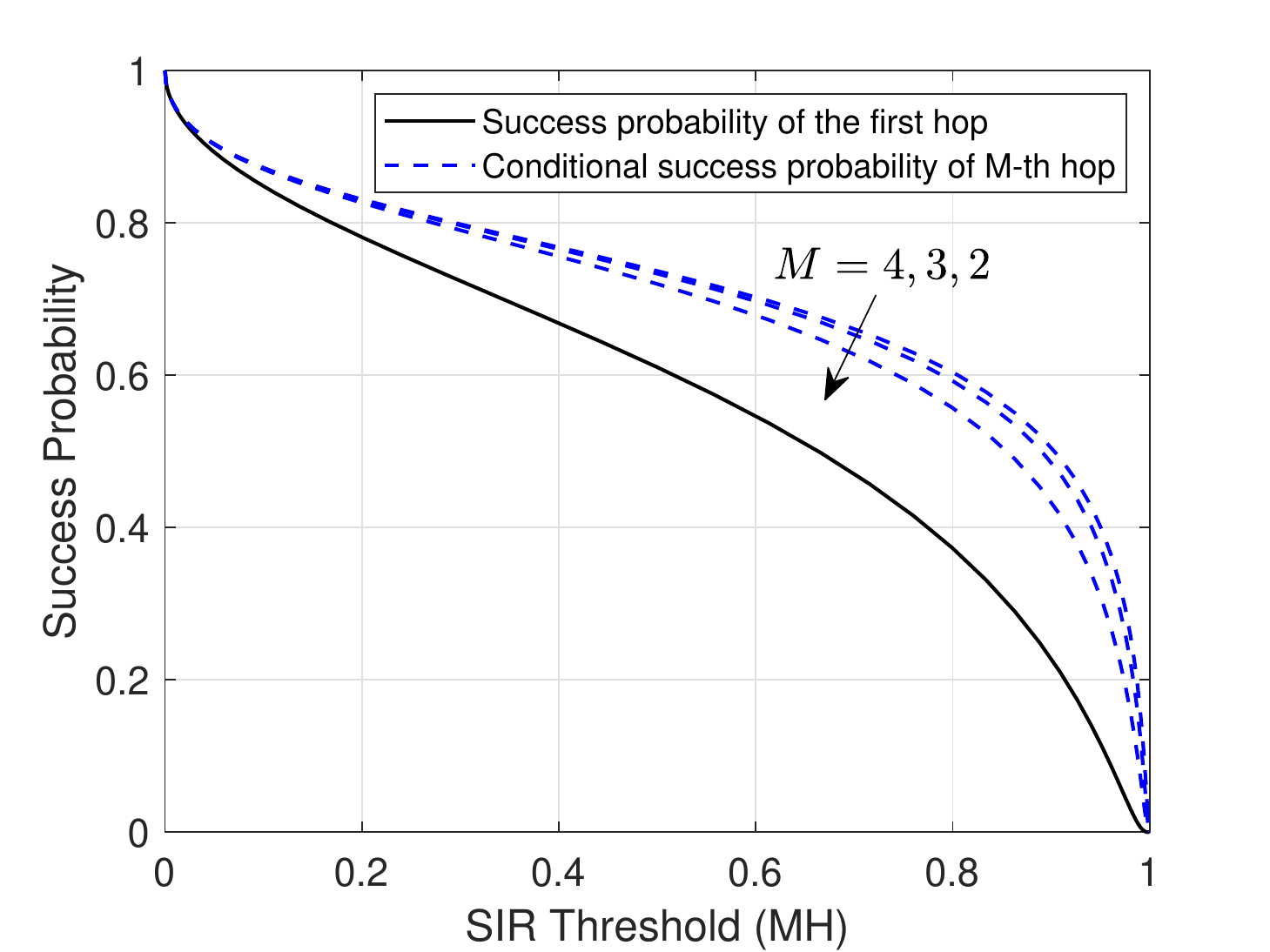} 
         \caption{ The spatial CSP of the $M$-th hop relaying in a Poisson field of interferers with QSI.
         } \label{fig:CCP} 
        \end{minipage} \hspace{3mm}
        \begin{minipage}[c]{0.48 \textwidth}\vspace{0mm}
          \includegraphics[width=0.98\textwidth]{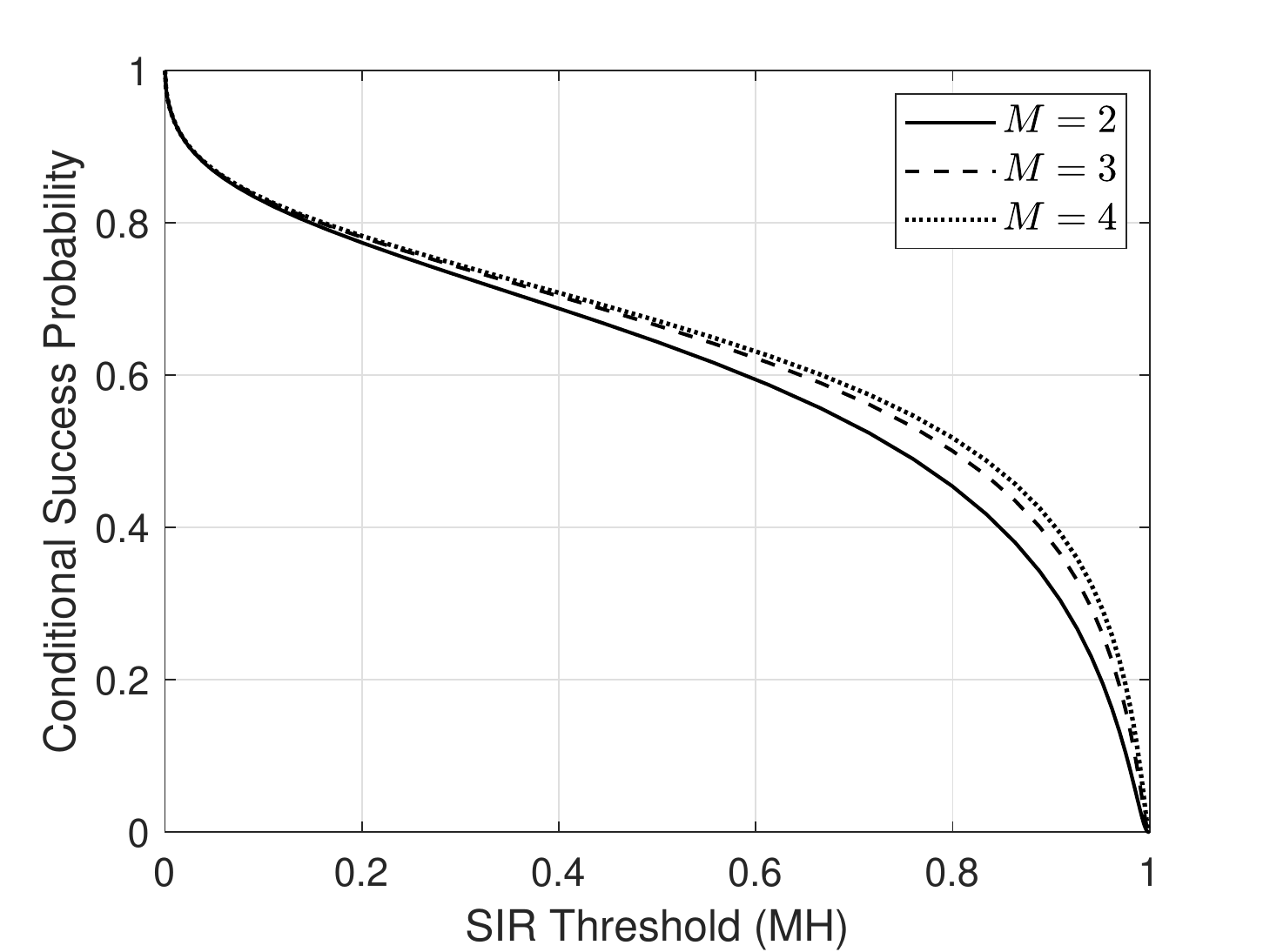}  
          \caption{The temporal CSP of an $M$-hop relaying system in a Poisson field of interferers with QSI.  } \label{fig:CCP_relay} 
          \end{minipage}  
       \end{figure*}

\subsection{Summary and Discussion}

In this section, we have discussed the impact of spatial-temporal interference on the performance of multihop relaying. In particular, we have presented the derivations of end-to-end success probability of multihop relaying in Poisson fields of interferers under both FVI and QSI. The main 
lessons learned are as follows.

\begin{itemize}

\item Correlated spatial-temporal interference (i.e., QSI) imposed by the same interferers at different hops results in higher success probability than independent spatial-temporal interference (i.e., FVI).

\item With correlated spatial-temporal interference, as $m$ increases, the successful transmission of the $m$-th hop is more dependent on the successful transmissions of the previous hops.

\item  With independent spatial-temporal interference, the CSP of the $m$-th hop given the successful transmission of the previous hops is the same as the success probability of the $m$-th hop without the condition.

\item With correlated spatial-temporal interference, the  end-to-end successful transmission events are more correlated when the number of hops is larger.

\end{itemize} 

 \vspace{0.2cm}
 \noindent
{\em Open Technical Issues}: This section has presented the characterization of end-to-end relaying performance under QSI and FVI, which considers a purely static network and an independent network, respectively. The two types of interference represent two extreme cases with fully correlated and independent interferer locations.  
In general, the ambient interferers (e.g., mobile users) may have a certain degree of movement during different transmission attempts and result in neither a static nor independent network environment. In such an environment, the resulting temporal interference is only partially correlated. An accurate characterization of the spatial-temporal correlation at different locations would be required for performance evaluation of multihop relaying systems. 

\section{Temporally-Correlated Interference and Retransmission}  \label{sec:retransmission}

The goal of this section is to derive the success probabilities of a target transmission link with retransmissions under temporally correlated and independent interference. 

\subsection{System Model}
 
This section considers the Poisson ad hoc network model (as introduced in Section~\ref{sec:sectionIII_SM}). Let \mbox{$\Phi^{(k)}=\{x^{(k)}_{j}\}_{j \in \mathbb{N}}$} denote the node locations in time slot $k$. 
The aggregated interference and receive SIR at the target receiver located at $o$ in time slot $k$ can be expressed, respectively, as
\begin{align}  
& I^{(k)}_{o} =  \sum_{ j \in \mathbb{N} } h^{(k)}_{j}  \|x^{(k)}_{j}\| ^{-\alpha  } \nonumber \\
& \eta^{(k)} = \frac{ h^{(k)}_{\rm t}   \|x^{(k)}_{\rm t}\|^{- \alpha} }{  I^{(k)}_{o}  }, 
\end{align}  
where $h^{(k)}_{\rm t}$ and $h^{(k)}_{j}$ denotes the  small-scale fading gains between the 
transmitters at $x^{(k)}_{\rm t}$, $x^{(k)}_{j}$ and the target receiver in time slot $k$ which are i.i.d. exponential random variables with unit mean.

We consider both QSI and FVI with which different transmission attempts are influenced by the same set and different sets of interferers, respectively. As a result, the interferences, and thus the successful transmission events, in the considered system are temporally correlated with and independent of QSI and FVI, respectively.

\subsection{Performance Analysis}

Let $A_{k} \triangleq  \{  \eta^{(k)} > \theta \}$ denote the successful transmission event in time slot $k$. 
The JSP of $K$ transmissions in the cases with QSI and FVI are defined, respectively, as
 \begin{align}  
 \mathcal{J}^{\textup{QSI}}_{K} & \triangleq \mathbb{E}  \Bigg[ \mathbb{P} \bigg[ \bigcap^{K}_{k=1}   A_{k} \bigm|   \Phi^{(k)}  \bigg]    \Bigg] \nonumber \\ & =  \mathbb{E}  \Bigg[ \prod^{K}_{k=1}  \mathbb{P} \big [   A_{k} \mid  \Phi^{(k)}  \big ]    \Bigg] \label{eqn:JSP_QSI}  
 \end{align}
 and
 \begin{align}
 \mathcal{J}^{\textup{FVI}}_{K}  & \triangleq   \mathbb{P} \bigg[ \bigcap^{K}_{k=1} A_{k}       \bigg]    = \prod_{k=1}^{K}    \mathbb{P} \big[     A_{k}    \big]   .   
\label{eqn:JSP_FVI} 
\end{align}

\subsubsection{Joint Success Probability}

In this subsection, we show how to derive the JSP of $K$ transmissions based on the PGFL of the PPP.  
  The final results are presented in the following theorem \cite{M.2013Haenggia}.

\begin{theorem} \label{thm:Retx_JSP} ({\bf Temporal JSP}) 
With a Poisson field of interferers, the probability that a link over distance $r_{\rm t}$ has $K$ successful transmissions in a row with QSI and FVI are given, respectively, as
\begin{align}
& \mathcal{J}^{\textup{QSI}}_{K} = \exp \big(-c \lambda  \theta^{\delta} r^{2}_{\rm t} D_{K}(\delta) \big), \label{eqn:J_QSI} \\
& \mathcal{J}^{\textup{FVI}}_{K}  =  \exp \big(-c \lambda   \theta^{\delta} r^{2}_{\rm t} K  \big), \label{eqn:J_FVI}
\end{align}
where $c= \pi   \Gamma(1+\delta) \Gamma(1-\delta) $ and $D_{K}(\delta)= \frac{\Gamma(K+\delta)}{\Gamma(K)\Gamma(1+\delta)}$. 
\end{theorem}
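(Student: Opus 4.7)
The plan is to compute both JSPs by conditioning on the interferer process and exploiting the independence of the Rayleigh fading across transmitters and time slots. The common starting point is the single-slot conditional success probability, obtained by averaging the exponential signal-link fading against the interference Laplace transform, which takes the product form $\mathbb{P}[A_k \mid \Phi^{(k)}] = \prod_{x\in\Phi^{(k)}} (1+\theta r_{\rm t}^{\alpha}\|x\|^{-\alpha})^{-1}$. This is already the content of step (\ref{eqn:spatial_SP1}) in Appendix~A.

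For the FVI case, since $\Phi^{(1)},\ldots,\Phi^{(K)}$ are independent PPPs of the same intensity $\lambda$, the $K$ conditional probabilities in (\ref{eqn:JSP_FVI}) are mutually independent and factorize. Each factor equals the standard single-slot success probability $\exp(-c\lambda\theta^{\delta} r_{\rm t}^{2})$, obtained by applying the PPP PGFL (\ref{eqn:PPP_PGFL}) with $\upsilon(x)=(1+\theta r_{\rm t}^{\alpha}\|x\|^{-\alpha})^{-1}$; raising to the $K$-th power produces (\ref{eqn:J_FVI}) immediately. Equivalently, this is $\mathcal{M}_{P_{\rm s}}(1)^{K}$ with $b=1$ substituted into the Poisson branch of Theorem~\ref{thm:SP_MCP_field}.

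The QSI case is the more interesting one. Here the interferer process is frozen across slots, but the fading coefficients $\{h_j^{(k)}\}$ remain i.i.d. in both indices, so after conditioning on $\Phi$ the product over slots combines with the product over interferers to give $\prod_{x\in\Phi}(1+\theta r_{\rm t}^{\alpha}\|x\|^{-\alpha})^{-K}$. Substituting this into (\ref{eqn:JSP_QSI}) and applying the PPP PGFL (\ref{eqn:PPP_PGFL}) with $\upsilon(x)=(1+\theta r_{\rm t}^{\alpha}\|x\|^{-\alpha})^{-K}$ yields $\mathcal{J}_K^{\textup{QSI}} = \exp\!\bigl(-\lambda \int_{\mathbb{R}^2}[1-(1+\theta r_{\rm t}^{\alpha}\|x\|^{-\alpha})^{-K}]\,\mathrm{d}x\bigr)$. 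Converting to polar coordinates and substituting $t=\theta r_{\rm t}^{\alpha} r^{-\alpha}$ brings the exponent to $-\pi\lambda\delta r_{\rm t}^{2}\theta^{\delta}\int_0^\infty [1-(1+t)^{-K}]\, t^{-\delta-1}\,\mathrm{d}t$.

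The principal obstacle I anticipate is evaluating this last integral and matching it with $cD_K(\delta)$. The plan is to integrate by parts with $u=1-(1+t)^{-K}$ and $\mathrm{d}v=t^{-\delta-1}\,\mathrm{d}t$; because $\delta\in(0,1)$ and $1-(1+t)^{-K}=O(t)$ near the origin, both boundary terms vanish, leaving $(K/\delta)\int_0^\infty t^{-\delta}(1+t)^{-K-1}\,\mathrm{d}t$, a standard Beta integral equal to $K\Gamma(1-\delta)\Gamma(K+\delta)/\bigl(\delta\Gamma(K+1)\bigr)=\Gamma(1-\delta)\Gamma(K+\delta)/\bigl(\delta\Gamma(K)\bigr)$. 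Combining with the prefactor $\pi\lambda\delta r_{\rm t}^{2}\theta^{\delta}$ and inserting $\Gamma(1+\delta)/\Gamma(1+\delta)$ recovers exactly $c\lambda\theta^{\delta}r_{\rm t}^{2}D_K(\delta)$ in the exponent, establishing (\ref{eqn:J_QSI}). Everything else is PGFL and change-of-variables bookkeeping; a sanity check is $D_1(\delta)=1$, which makes the QSI and FVI formulas coincide at $K=1$ as they must.
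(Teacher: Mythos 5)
Your proposal is correct and follows essentially the same route as the paper's Appendix H: condition on the interferer process, average the i.i.d.\ Rayleigh fading to obtain the conditional product form, and apply the PPP PGFL, with the FVI case factoring into $K$ independent single-slot success probabilities and the QSI case reducing to the $K$-th moment of the conditional success probability. The only divergence is computational: you evaluate the radial integral directly via integration by parts and a single Beta integral, whereas the paper reuses the binomial-series derivation of $\mathcal{M}_{P_{\rm s}}(b)$ from Appendix A; both correctly yield $\Gamma(1-\delta)\Gamma(K+\delta)/\Gamma(K)$ in the exponent and hence $c\,D_{K}(\delta)$.
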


\begin{proof} 
See \textbf{Appendix H}.
\end{proof}


\noindent
{\bf Remark 11}: $\mathcal{J}^{\textup{QSI}}_{1}= \mathcal{J}^{\textup{FVI}}_{1}$ as $D_{1}(\delta)=1$, which indicates that the success probability of any single time slot is not influenced by the type of interference experienced.

\noindent
{\bf Remark 12}: Since $\delta < 1$ (i.e., $\alpha>2$), it is readily checked that $D_{K}(\delta) < K$ for $K \in \mathbb{N}$, and thus $\mathcal{J}^{\textup{QSI}}_{K} > \mathcal{J}^{\textup{FVI}}_{K}$. This reveals that temporally-correlated QSI results in higher JSP than the temporally-independent FVI.  Fig.~\ref{Retx_JSP} illustrates the JSP for $K=2,3,4$ with both QSI and FVI.

\noindent
{\bf Remark 13}: When $\delta \to 0$ (i.e., $\alpha \to \infty$), it can be found that  $\mathcal{J}^{\textup{QSI}}_{1} = \mathcal{J}^{\textup{QSI}}_{2} = \cdots = \mathcal{J}^{\textup{QSI}}_{K} = e^{-\lambda \pi r^2_{\rm t}}$, equivalently $\mathbb{P}[A_{k+1}|A_{k } ]=1$ for $k=1,\ldots,K$. This indicates that despite the effect of small-scale fading, the successful transmission events are fully correlated. 
  
\noindent
{\bf Remark 14}: When $\delta \to 1$ (i.e., $\alpha \to 2$), it can be found that a) $  \mathcal{J}^{\textup{QSI}}_{K}  \to e^{- c   \lambda  \theta^{\delta} r^2_{\rm t} K }$, equivalently $\mathbb{P}[A_{k+1}|A_{k } ]=e^{- c   \lambda \theta^{\delta} r^2_{\rm t} }$ for $k=1,\ldots,K$. This means the successful transmission events are independent; b) $c \to \infty$, which indicates that $\mathcal{J}^{\textup{QSI}}_{K} \to 0$.

 \subsubsection{Conditional Success Probability}
 As a consequence of Theorem~\ref{thm:Retx_JSP}, the CSPs of succeeding at $K+1$-th transmission given the previous $K$ successful transmissions can be directly obtained following Bayes rule as                       
\begin{align} 
\mathcal{C}^{\textup{QSI}}_{K+1,K} & = \mathbb{P} \big[ A_{K+1} \mid A_{1},\ldots,A_{K} \big] \nonumber \\
&  = \frac{ \mathcal{J}^{\textup{QSI}}_{K+1}   }{ \mathcal{J}^{\textup{QSI}}_{K } } \nonumber\\
& =  \exp \Big( -  c  \lambda \theta^{\delta} r^2_{\rm t} \big( D_{K+1}(\delta) -  D_{K }(\delta)  \big) \! \Big),       \nonumber                                     
\end{align}
 and
 \begin{align}
   \mathcal{C}^{\textup{FVI}}_{K+1,K} =  \frac{ \mathcal{J}^{\textup{FVI}}_{K+1}   }{ \mathcal{J}^{\textup{FVI}}_{K } } = \exp \big( - c \lambda \theta^{\delta} r^2_{\rm t}   \big).  \nonumber
  \end{align}


 \noindent
 {\bf Remark 15}: It can be readily checked that   $\partial \mathcal{C}^{ \textup{QSI} }_{K+1,K} / \partial  K >0 $ which indicates that   $\mathcal{C}^{\textup{QSI}}_{K+1,K}$    monotonically increases with the number of transmission attempts $K$.  
                                                 
                                                                                             
Fig. ~\ref{Retx_CSP} shows the temporal CSPs with QSI when $K=1,2,3,4$. The numerical results illustrate the above-discussed properties. Given that the previous transmissions succeed, the CSP considerably increases, especially with a high SIR threshold $\theta$.

\subsubsection{Correlation Coefficient}

Let $Q_{k}=\mathbbm1_{A_{k}}$ be the indicator that event $A_{k}$ happens. The correlation coefficient between $Q_{k}$ and $Q_{j}$ with $k \neq j$ is
\begin{align}
\zeta^{\textup{QSI}}_{Q_{k}, Q_{j} } & = \frac{ \mathbb{P} [ A_{k} \cap A_{j} ] - \mathbb{P}^2[A_{k}]    }{  \mathbb{P} [A_{k} \cap A_{k}]   - \mathbb{P}^2[A_{k}]    }  \nonumber \\
& \overset{(a)}{=} \frac{  \mathcal{J}^{\textup{QSI}}_{2}  - \big( \mathcal{J}^{\textup{QSI}}_{1} \big)^2  }{   \mathcal{J}^{\textup{QSI}}_{1}(1- \mathcal{J}^{\textup{QSI}}_{1})  }   \nonumber \\
& = \frac{  \exp  ( -  c   \lambda  \theta^{\delta} r^{2}_{\rm t} ( \delta +1 )     ) - \exp ( - 2   c  \lambda  r^2_{\rm t} \theta^{\delta}   ) }{ \exp ( -     c  \lambda  r^2_{\rm t} \theta^{\delta}   ) (1 - \exp ( -     c  \lambda  r^2_{\rm t} \theta^{\delta}   ))}
\nonumber \\
& = \frac{ \exp  (   c \lambda  \theta^{\delta} r^{2}_{\rm t} (1- \delta   )     ) - 1  }{ \exp (   c   \lambda  \theta^{\delta}   r^2_{\rm t} )  - 1  }, \label{eqn:zeta_QSI}
\end{align}
where $(a)$ holds as $\{Q_{k}\}_{k\in\{1,\ldots, K \}}$ are identically distributed. Moreover, $\zeta^{\textup{FVI}}_{Q_{k}, Q_{j} }=0$ as $J^{\textup{FVI}}_{2}= \big(J^{\textup{FVI}}_{1}\big)^2$. 
Since $\zeta^{\textup{QSI}}_{Q_{k}, Q_{j} }>0$ and  $\zeta^{\textup{FVI}}_{Q_{k}, Q_{j} }=0$, it is evident that QSI and FVI result in temporally-correlated and independent successful transmission events, respectively. 


\noindent
{\bf Remark 16}: When $\delta \to 0$ and $\delta \to 1$,  $\zeta^{\textup{QSI}}_{Q_{k}, Q_{j} } \to 1$ and $\zeta^{\textup{QSI}}_{Q_{k}, Q_{j} } \to 0$, indicating that the $Q_{k}$ and $Q_{j}$ become fully correlated and fully uncorrelated, respectively. 

\noindent
{\bf Remark 17}:   For given $\delta$ and $r$, $\zeta^{\textup{QSI}}_{Q_{k}, Q_{j} }$ is a decreasing function of $\lambda$ and $\theta$. This can be checked that since $0<\delta < 1$ (i.e., $\alpha > 2$), the denominator scales up at a higher rate than the numerator of (\ref{eqn:zeta_QSI}) with the increase of $\lambda$ or $\theta$.
Fig.~\ref{fig:Retx_CC} shows the correlation coefficient of successful transmission events with QSI. 

  \begin{figure*}[htp]  
  \centering
    \begin{minipage}[c]{0.48 \textwidth}
     \includegraphics[width=0.98\textwidth]{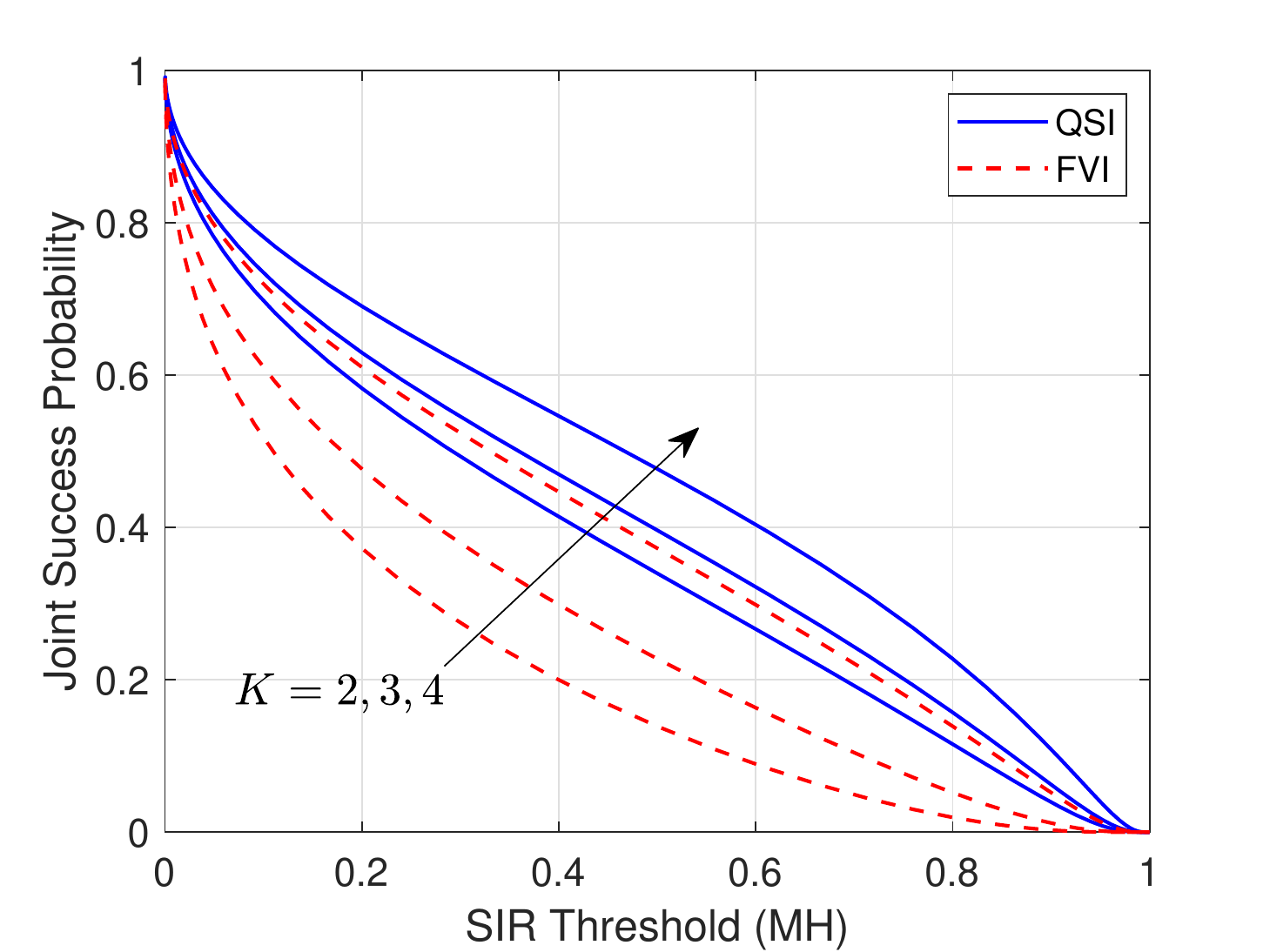} 
     \caption{JSP with a different number of transmission attempts in Poisson ad hoc networks ($\alpha=4$, $\lambda=0.1$
     ). 
     } \label{Retx_JSP}
    \end{minipage} 
    \hspace{3mm}
    \begin{minipage}[c]{0.48 \textwidth}\vspace{0mm}
      \includegraphics[width=0.98\textwidth]{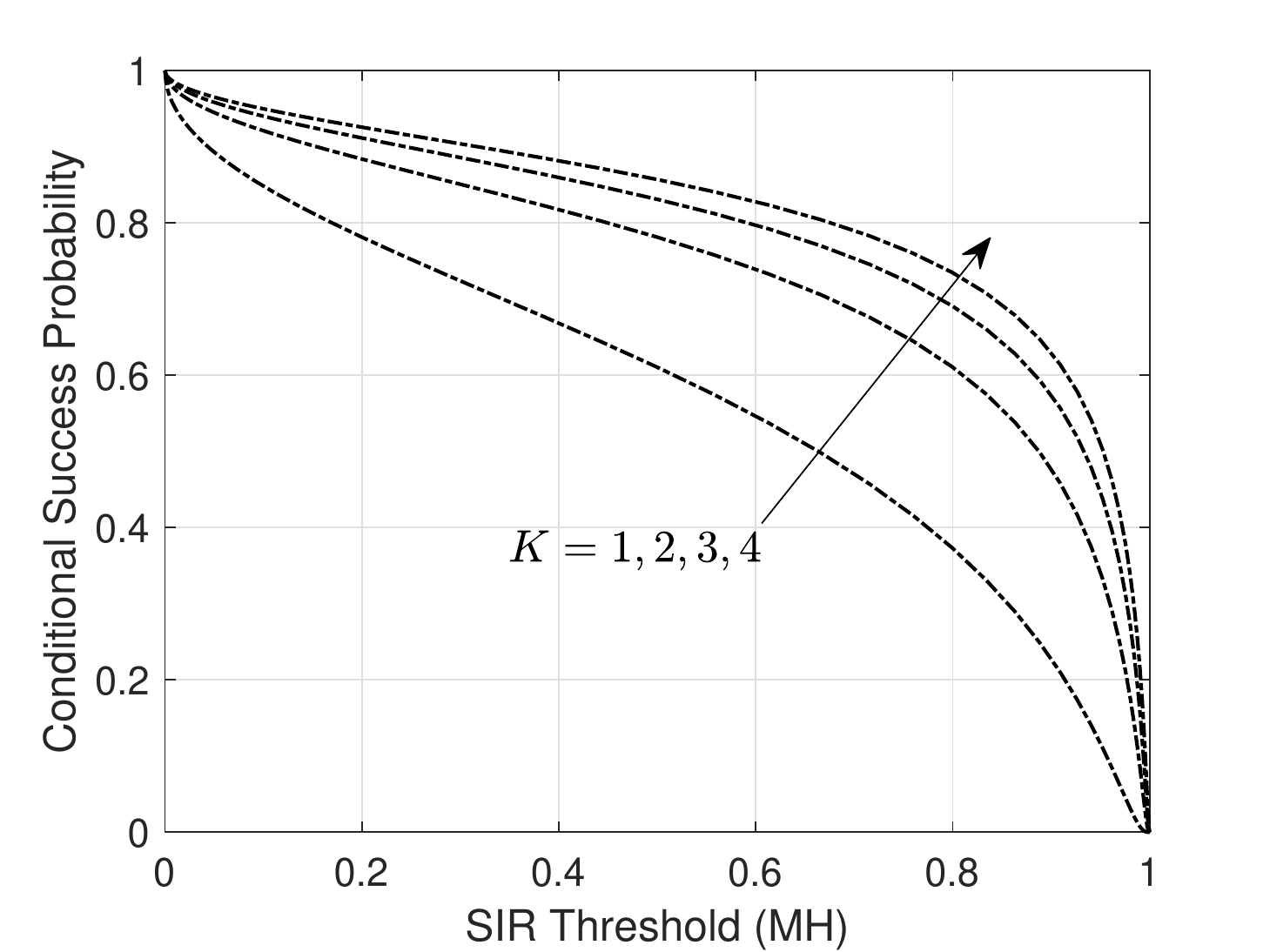}  
      \caption{CSP with a different number of transmission attempts in Poisson ad hoc networks ($\alpha=4$, $\lambda=0.1$
      ). } \label{Retx_CSP}
      \end{minipage}  
   \end{figure*}  

  \begin{figure*}[htp]  
  \centering
    \begin{minipage}[c]{0.48 \textwidth}
     \includegraphics[width=0.98\textwidth]{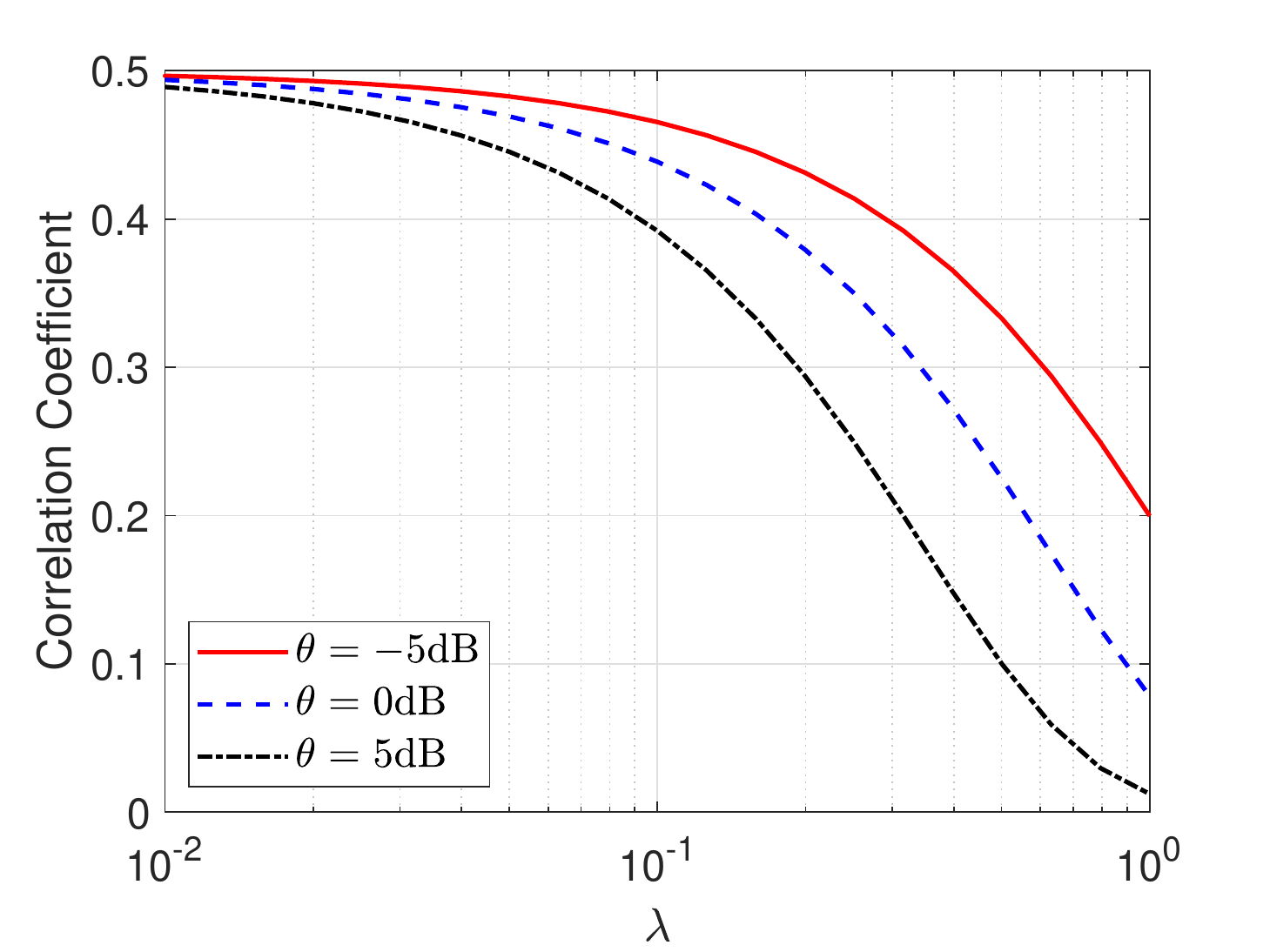}   
     \caption{Correlation coefficient in Poisson ad hoc networks ($\alpha=4$).
     } \label{fig:Retx_CC}  \vspace{2mm}
    \end{minipage}  \hspace{3mm} 
    \begin{minipage}[c]{0.48 \textwidth}\vspace{0mm}
      \includegraphics[width=0.95\textwidth]{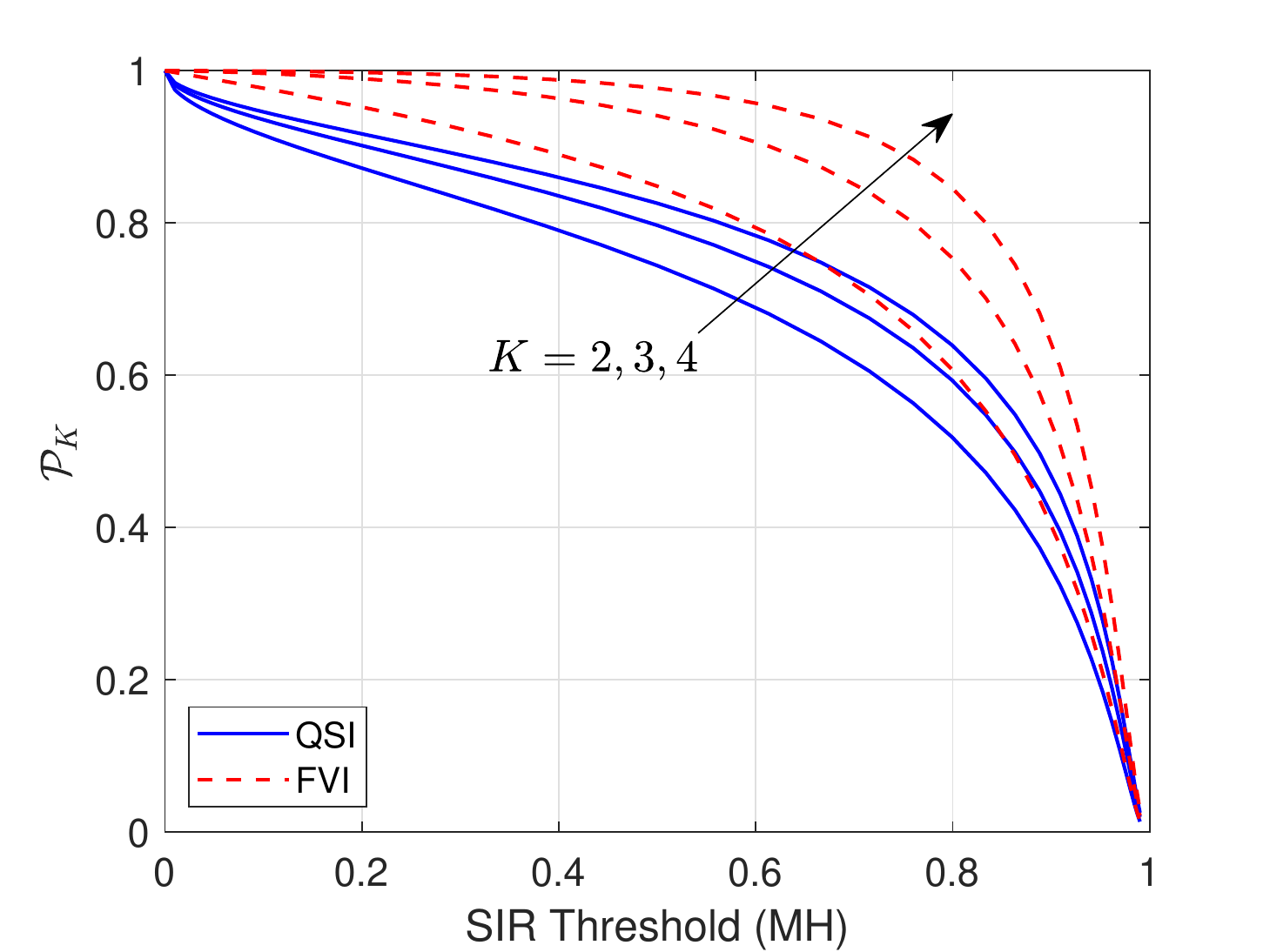} 
      \caption{Success probability with retransmission in Poisson ad hoc networks ($\alpha=4$, $\lambda=0.1$). } \label{fig:Retx_SP}
      \end{minipage}  
   \end{figure*}

 \subsubsection{Success Probability with Retransmissions}
                                                            
In the case of transmission failure, multiple transmissions can be carried out to deliver a message. 
Let us consider a retransmission protocol where the receiver requests the associated transmitter to send the message again upon a transmission failure until reaching a maximum number of transmission attempts denoted as $K$. At the receiver side, the received signal at each time slot is decoded independently.  
The success probability with retransmissions can be expressed as
 \begin{align}
  \mathcal{P}_{K} \triangleq  \mathbb{P} \bigg[ \bigcup^{K}_{k=1} A_{k} \bigg].   \end{align}
 Note that from the above definition, we have $\mathcal{P}_{1}=\mathcal{J}_{1}$, where $\mathcal{J}_{k}$, $k \in \mathbb{N}$, is defined in (\ref{def:JCP}).
                                                                                             
 Let $\mathcal{A} \triangleq \{ A_{1}, \ldots, A_{K}  \}$ denote the set of the successful transmission events and $\mathrm{P}(\mathcal{A})$ represent the power set of $\mathcal{A}$. 
 By applying the inclusion-exclusion principle, we obtain the success probability with retransmissions as
                                                                                                   \begin{align}  \label{eqn:PK}
                                                                                                  \mathcal{P}_{K} & = \sum_{A \in \mathrm{P} (\mathcal{A})   } (-1)^{|A|+1} \mathbb{P}  \big[ A \big]  \nonumber \\ 
                                                                                                   & \overset{(a)}{=} \sum^{K}_{k=1} (-1)^{k+1}   \binom{K}{k} \mathcal{J}_{k},
                                                                                                  \end{align}
where $(a)$ follows as $\{\eta_{k}\}_{ k \in \{1,2,\ldots,K\}}$ are identically distributed, and $\mathcal{J}_{k}$ has been obtained in Theorem~\ref{thm:Retx_JSP}.
                                                                                                     
 Fig. \ref{fig:Retx_SP} shows the success probability with retransmissions with both QSI and FVI. It can be seen that $\mathcal{P}^{\textup{FVI}}_{K}$ exceeds $\mathcal{P}^{\textup{QSI}}_{K}$. The reason is that the transmission failures are temporally correlated and thus are likely to occur in succession with QSI. By contrast, with FVI, transmissions have a better chance to succeed as previous transmission failure does not infer a lower success probability in the current time slot.  This can be understood by checking the dependency of transmission failure. Let $\bar{S}\triangleq \{ \eta^{(k)} < \theta \}$. By following the inclusion-exclusion principle, the joint outage probability of two transmission events is given by 
      \begin{align}
     \bar{\mathcal{J}}_{2} & = \mathbb{P} \big[ \bar{A}_{1} \cap \bar{A}_{2} \big] \nonumber \\
     &= 1 - \mathbb{P} \big[ A_{1} \cup A_{2} \big]  
     \nonumber \\
     &= 1 -  (2 \mathcal{J}_{1} - \mathcal{J}_{2}) \nonumber\\
     & = 1 - 2\exp(-c \lambda \theta^{\delta} r^2_{\rm t} )  + \exp(-c \lambda \theta^{\delta} r^2_{\rm t} (1+\delta) ).  \nonumber
      \end{align}
      Subsequently, 
      \begin{align}
      \frac{\bar{\mathcal{J}}^{\textup{QSI}}_{2}}{\big(\bar{\mathcal{J}}^{\textup{QSI}}_{1}\big)^2} = 1 + \frac{ \exp( c \lambda \theta^{\delta} r^2_{\rm t} (1-\delta) ) -1 }{ (\exp( c \lambda \theta^{\delta} r^2_{\rm t} ) -1 )^2 } >1, \nonumber
      \end{align}
      and $\frac{\bar{\mathcal{J}}^{\textup{FVI}}_{2}}{\big(\bar{\mathcal{J}}^{\textup{FVI}}_{1}\big)^2} = 1$.
     
     Moreover, it is evident that the conditional outage probability increases given the occurrence of previous transmission failure by checking 
     \begin{align}
     \mathbb{P} \big[ \bar{A}_{1} \mid \bar{A}_{2} \big] &  = \frac{\mathbb{P} \big[ \bar{A}_{1} \cap \bar{A}_{2} \big]}{ \mathbb{P} \big[ \bar{A}_{1} \big]} \nonumber \\
     & = 1 - \exp( - c \lambda \theta^{\delta} r^2_{\rm t}) \frac{1 - \exp( - \delta c \lambda \theta^{\delta} r^2_{\rm t}) }{1 - \exp( -   c \lambda \theta^{\delta} r^2_{\rm t})} \nonumber \\
     & \geq 1 - \exp( - c \lambda \theta^{\delta} r^2_{\rm t}) = \mathbb{P} \big[ \bar{A}_{1} \big]. \nonumber
     \end{align}

\subsection{HARQ Retransmission Schemes} \label{sec:HARQ}

For transmission and decoding, similar to  \cite{G.2015Nigam}, we consider two categories of HARQ, i.e., Type-I HARQ and Type-II HARQ schemes.

\begin{itemize} 

\item  If the SIR at a target receiver for the initial transmission does not exceed the threshold $\theta$, a one-time retransmission request is sent to the serving transmitter. After receiving the retransmitted signals, the target receiver with Type-I HARQ abandons the received signal from the initial transmission and decodes only from the received signal from the retransmission. Given a maximum number of transmissions $K$ and a SIR threshold $\theta$, the success probability 
is given by
\begin{align} 
\mathcal{P}_{\mathrm{I}} = \mathbb{P} \bigg[   \bigcup^{ K }_{k =1} \Big \{  \eta^{(k)} > \theta  \Big \}  \bigg],  \label{eqn:Type_I}
\end{align} 
where  $\eta^{(k)} $ is the SIR  of $k$-th transmission of the same content with Type-I HARQ retransmission scheme.

\item  With Type-II HARQ with chase combing (CC) codes and maximal ratio combining (MRC) of signals from both the initial transmission and retransmission,  the success probability  is given by 
\begin{align}
\mathcal{P}_{\mathrm{II}}  = \mathbb{P} \bigg[   \bigcup^{K}_{k =1 } \Big \{\Upsilon_{k} > \theta \Big \}     \bigg], \label{eqn:Type_II}
\end{align}
where  $\Upsilon_{k}  = \sum^{k}_{i=1} \eta^{(k)} $ is the effective SIR after the $k$-th transmission of the same content with the Type-II HARQ retransmission scheme. 
 
\end{itemize}

In what follows, the success probabilities of Type-I HARQ and Type-II HARQ-CC schemes are derived for the case with $K=2$, i.e., each packet is retransmitted once if the initial transmission is not successful. The cases with $K \geq 3$ can be obtained by following the same methodology straightforwardly. 

\subsubsection{Success Probability of Type-I HARQ} When $K=2$, the success probability with Type-I HARQ given in (\ref{eqn:Type_I}) can be expressed as \cite{X.2019Lu} 
 \begin{align}
  \mathcal{P}_{\mathrm{I}} &  =   
  \mathbb{P} \big[ \eta^{(1)}  > \theta  \big] + \mathbb{P} \big[ \eta^{(1)} > \theta, \eta^{(2)} \leq \theta \big]     \! \nonumber \\ 
&  \! \overset{\text{(a)}}{=}     \sum^{2}_{ k=1 }     
 \mathbb{P} \big[ \eta^{(k)}   >   \theta   \big]     -      \mathbb{P} \big[ \eta^{(1)}  >  \theta , \eta^{(2)} > \theta   \big]    \nonumber\\
 &   \overset{\text{(b)}}{=} \! \begin{dcases}
      \sum^{2}_{ k=1 }     
      \mathbb{P} \big[ \eta^{(k)}  \! > \!  \theta  \big] \!    -      \mathbb{P} \big[ \eta^{(1)} \! > \! \theta , \eta^{(2)} \! > \! \theta \mid \Phi  \big] , \, \, \textup{QSI}    \\
       \sum^{2}_{ k=1 }      
       \mathbb{P} \big[ \eta^{(k)}  \! > \!  \theta    \big]  \!   - \!  \prod^{2}_{k=1}    \mathbb{P} \big[ \eta^{(k)} \! > \!  \theta    \big],  \hspace{14mm}  \textup{FVI} 
 \end{dcases}   
\end{align}
where $(a)$ applies the inclusion-exclusion principle and $(b)$ follows as the point processes across different time slots are the same and i.i.d. for QSI and FVI, respectively. 
Subsequently, based on the definition of $\mathcal{J}^{\textup{QSI}}_{K}$ and $\mathcal{J}^{\textup{FVI}}_{K}$ in (\ref{eqn:JSP_QSI}) and (\ref{eqn:JSP_FVI}), respectively. 
we can obtain the following corollary.

\begin{corollary}
Given that each unsuccessfully transmitted packet is retransmitted once,  the success probability under Type-I HARQ retransmission scheme with QSI and FVI are given, respectively, as
 \begin{align}
& \mathcal{P}_{\mathrm{I}}^{\textup{QSI}}    = 2 \mathcal{J}^{\textup{QSI}}_{1} - \mathcal{J}^{\textup{QSI}}_{2},  \\
& \mathcal{P}_{\mathrm{I}}^{\textup{FVI}}   = 2 \mathcal{J}^{\textup{FVI}}_{1} - \Big(\mathcal{J}^{\textup{FVI}}_{1}\Big)^{\!2 },
\end{align}
where $\mathcal{J}^{\textup{QSI}}_{K}$ and $\mathcal{J}^{\textup{FVI}}_{K}$ for $K \in \mathbb{N}$ are given in~(\ref{eqn:J_QSI}) and~(\ref{eqn:J_FVI}), respectively.
\end{corollary}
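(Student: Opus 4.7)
The plan is to derive both identities directly from the definition of $\mathcal{P}_{\mathrm{I}}$ in (\ref{eqn:Type_I}) specialized to $K=2$, by expressing the union of two successful transmission events via the inclusion--exclusion principle and then invoking the joint success probabilities already computed in Theorem~\ref{thm:Retx_JSP}.

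First, I would write $A_{k}=\{\eta^{(k)}>\theta\}$ so that $\mathcal{P}_{\mathrm{I}}=\mathbb{P}[A_{1}\cup A_{2}]$. Inclusion--exclusion gives
\begin{align}
\mathcal{P}_{\mathrm{I}} = \mathbb{P}[A_{1}] + \mathbb{P}[A_{2}] - \mathbb{P}[A_{1}\cap A_{2}].
\end{align}
Since the interferer point process is motion-invariant and the fading variables are identically distributed in each time slot (under both the QSI and FVI assumptions), the marginal success probability does not depend on the slot index, so $\mathbb{P}[A_{1}]=\mathbb{P}[A_{2}]=\mathcal{J}_{1}$; by \textbf{Remark 11} this common value coincides for the two models, namely $\mathcal{J}^{\textup{QSI}}_{1}=\mathcal{J}^{\textup{FVI}}_{1}$.

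Next, I would identify the joint term separately in each case. Under QSI the interferers are the same across both slots, so by definition of the temporal JSP in (\ref{eqn:JSP_QSI}), $\mathbb{P}[A_{1}\cap A_{2}]=\mathcal{J}^{\textup{QSI}}_{2}$, yielding $\mathcal{P}_{\mathrm{I}}^{\textup{QSI}}=2\mathcal{J}^{\textup{QSI}}_{1}-\mathcal{J}^{\textup{QSI}}_{2}$. Under FVI the point processes in the two slots, together with the associated fading coefficients, are independent, so $\mathbb{P}[A_{1}\cap A_{2}]=\mathbb{P}[A_{1}]\,\mathbb{P}[A_{2}]=\bigl(\mathcal{J}^{\textup{FVI}}_{1}\bigr)^{2}$, which matches the product form in (\ref{eqn:JSP_FVI}) for $K=2$ and gives $\mathcal{P}_{\mathrm{I}}^{\textup{FVI}}=2\mathcal{J}^{\textup{FVI}}_{1}-\bigl(\mathcal{J}^{\textup{FVI}}_{1}\bigr)^{2}$.

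There is essentially no obstacle here beyond bookkeeping: the derivation in the excerpt preceding the corollary already contains all the ingredients, and the only step requiring care is the justification that under FVI the full product of marginals is valid (independence of the fading across slots plus independence of the point processes $\Phi^{(1)},\Phi^{(2)}$), whereas under QSI the conditioning on the common $\Phi$ collapses to the joint event whose probability is exactly $\mathcal{J}^{\textup{QSI}}_{2}$. Both points follow from the system-model assumptions stated in the section.
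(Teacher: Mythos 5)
Your proposal is correct and follows essentially the same route as the paper: writing $\mathcal{P}_{\mathrm{I}}$ as the probability of the union $A_1\cup A_2$, applying inclusion--exclusion to obtain $\sum_{k=1}^{2}\mathbb{P}[\eta^{(k)}>\theta]-\mathbb{P}[\eta^{(1)}>\theta,\eta^{(2)}>\theta]$, and then identifying the joint term with $\mathcal{J}^{\textup{QSI}}_{2}$ under QSI and with the product of marginals under FVI. No gaps.
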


\subsubsection{Success Probability of Type-II HARQ-CC}

Next, we discuss how to obtain the success probability of Type-II HARQ-CC.
When $K=2$, the success probability with Type-II HARQ given in (\ref{eqn:Type_I}) can be expressed as  
$\mathcal{P}_{\mathrm{II}}    =   
  \mathbb{P} \big[ \eta^{(1)}  > \theta \big] + \mathbb{P} \big[ \eta^{(1)} + \eta^{(2)} > \theta, \eta^{(1)}\leq \theta  \big]$      
where the first term on the right-hand side of the equality has been obtained in Section~\ref{sec:retransmission}.C.1. To obtain the second term, we need to compute the distribution of the SIR of any single time slot. In particular, we  first calculate the CDF of the SIR and then obtain the corresponding PDF by taking the derivative of the CDF.
With the PDF of the SIR, we then derive the joint probability that events  $\eta^{(1)} + \eta^{(2)} > \theta$ and $\eta^{(1)}  < \theta$ both occur. The final results are presented in the following corollary. 

\begin{corollary} \label{cor:typeII}
Given that each unsuccessfully transmitted packet is retransmitted once,  the success probability under Type-II HARQ-CC retransmission scheme with QSI and FVI are given, respectively, as
\begin{align}
& \!\! \mathcal{P}^{\textup{QSI}}_{\mathrm{II}} \! =  \exp(-c \lambda  \theta^{\delta} r^2_{\rm t} ) \! + \! 2 \pi \lambda    \!  \int^{\theta}_{0} \! \!\! \int^{\infty}_{0} \!\! \frac{ r^{\alpha}_{\rm t} r^{-\alpha}  J(r,u) }{ ( 1+ u r^{\alpha}_{\rm t}   r^{-\alpha} )^2 }      
 r \mathrm{d} r \nonumber  \\
 & \hspace{6mm} \times  \exp \bigg( - 2 \pi \lambda \int^{\infty}_{0} \!   \bigg(  1 -  \frac{ J(r,u) }{ 1 +  u r^{\alpha}_{\rm t}  r^{-\alpha}  }  \bigg) r  \mathrm{d} r  \bigg) \mathrm{d} u , 
 \end{align}
 and
 \begin{align}
& \!\! \mathcal{P}^{\textup{FVI}}_{\mathrm{II}} \! =  \! \exp(-c \lambda  \theta^{\delta} r^2_{\rm t} ) \! + \! 2 \pi \lambda    \!  \int^{\theta}_{0} \! \! \! \int^{\infty}_{0} \!\! \frac{ r^{\alpha}_{\rm t} r^{-\alpha}   }{ ( 1+ u r^{\alpha}_{\rm t}   r^{-\alpha} )^2 }  
 r \mathrm{d} r \nonumber  \\ 
 & \hspace{6mm} \times  \exp \bigg( \! - 2 \pi \lambda \int^{\infty}_{0} \! \bigg(  1 -  \frac{ 1 }{ 1 +  u r^{\alpha}_{\rm t}  r^{-\alpha}  }  \bigg) r  \mathrm{d} r  \bigg)   \nonumber \\
 & \hspace{6mm} \times \exp \bigg( \! - 2 \pi \lambda \int^{\infty}_{0} \! \Big(  1 -  J(r,u) \Big) r  \mathrm{d} r  \bigg) \mathrm{d} u ,
\end{align}
where $c \! = \! \pi   \Gamma(1+\delta) \Gamma(1-\delta)$  and $ J(r,u) =
   \frac{1}{  1 +  ( \theta - u ) r^{\alpha}_{\rm t}  r^{-\alpha} }$.
\end{corollary}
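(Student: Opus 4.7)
The plan is to decompose the success probability as
\[
\mathcal{P}_{\mathrm{II}} \;=\; \mathbb{P}\!\left[\eta^{(1)}>\theta\right] + \mathbb{P}\!\left[\eta^{(1)}\leq\theta,\ \eta^{(1)}+\eta^{(2)}>\theta\right],
\]
so that the first term is immediately $\mathcal{J}_{1}=\exp(-c\lambda\theta^{\delta}r_{\rm t}^{2})$ by Theorem~\ref{thm:Retx_JSP}, producing the leading $\exp(-c\lambda\theta^{\delta}r_{\rm t}^{2})$ in both stated formulas. What remains is to evaluate the joint event, for which I would condition on the interferer point process(es). In both the QSI and FVI cases, once the geometry is fixed, the fading coefficients $\{h_j^{(k)}\}$ are independent across time slots, so $\eta^{(1)}$ and $\eta^{(2)}$ are \emph{conditionally} independent given the underlying process(es). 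Writing the event as an integral over $u=\eta^{(1)}\in[0,\theta]$ and requiring $\eta^{(2)}>\theta-u$, the task reduces to identifying a conditional PDF and a conditional CCDF under each interference model.

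For the second term I would first obtain the conditional CCDF of $\eta^{(k)}$ given the geometry by averaging over the exponential fading of interferers, giving the familiar Rayleigh product $\prod_{x_j\in\Phi}(1+vr_{\rm t}^{\alpha}\|x_j\|^{-\alpha})^{-1}$. Differentiating this product in $u$ yields the conditional density of $\eta^{(1)}$ as a sum–product, namely
\[
f_{\eta^{(1)}\mid\Phi}(u)=\Bigl(\sum_{x_i\in\Phi}\tfrac{r_{\rm t}^{\alpha}\|x_i\|^{-\alpha}}{1+ur_{\rm t}^{\alpha}\|x_i\|^{-\alpha}}\Bigr)\prod_{x_j\in\Phi}\tfrac{1}{1+ur_{\rm t}^{\alpha}\|x_j\|^{-\alpha}}.
\]
Multiplying by the conditional CCDF $\mathbb{P}[\eta^{(2)}>\theta-u\mid\Phi]=\prod_{x_j}J(\|x_j\|,u)$ (with $J$ as defined in the corollary) and integrating in $u$ gives the integrand. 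Now the two cases diverge.

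In the FVI case, $\Phi^{(1)}\perp\Phi^{(2)}$, so after taking expectation the two product factors decouple. The expectation of the product in $u$ is handled by the PGFL of a PPP, which produces the factor $\exp(-2\pi\lambda\int_0^\infty(1-(1+ur_{\rm t}^\alpha r^{-\alpha})^{-1})r\,\mathrm{d}r)$. The sum–product for $f_{\eta^{(1)}}(u)$ is obtained by differentiating the CCDF directly (it is the derivative of an exponential), which gives precisely the $2\pi\lambda\int_0^\infty r_{\rm t}^\alpha r^{-\alpha}(1+ur_{\rm t}^\alpha r^{-\alpha})^{-2}r\,\mathrm{d}r$ prefactor. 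The factor $\exp(-2\pi\lambda\int_0^\infty(1-J(r,u))r\,\mathrm{d}r)$ arises from applying the PGFL to $\Phi^{(2)}$ independently. Combining these three factors reproduces the FVI expression.

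The QSI case is the main obstacle because the sum–product $(\sum_i\cdots)\prod_j\cdots\prod_j\cdots$ must be evaluated over a single PPP, and the marginal PDF of $\eta^{(1)}$ and the conditional CCDF of $\eta^{(2)}$ share the same points. I would collapse the two products into a single product with kernel $v(x)=[(1+ur_{\rm t}^\alpha\|x\|^{-\alpha})(1+(\theta-u)r_{\rm t}^\alpha\|x\|^{-\alpha})]^{-1}=J(\|x\|,u)/(1+ur_{\rm t}^\alpha\|x\|^{-\alpha})$ and then apply the Campbell–Mecke identity~(\ref{eqn:Campbell-Mecke}) (equivalently Slivnyak's theorem) with the sum kernel $g(x)=r_{\rm t}^\alpha\|x\|^{-\alpha}(1+ur_{\rm t}^\alpha\|x\|^{-\alpha})^{-1}$. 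The $v(x_i)$ factor inside the sum merges with $g(x_i)$, yielding $g(x)v(x)=r_{\rm t}^\alpha\|x\|^{-\alpha}J(\|x\|,u)/(1+ur_{\rm t}^\alpha\|x\|^{-\alpha})^{2}$, which after conversion to polar coordinates is exactly the prefactor in the QSI formula, while the reduced Palm expectation produces the exponential with $1-v(x)=1-J(r,u)/(1+ur_{\rm t}^\alpha r^{-\alpha})$. Recognizing that the Campbell–Mecke application is really just interchanging differentiation in $u$ with the PGFL of the single PPP (i.e., $-\tfrac{\partial}{\partial u}\mathbb{E}[\prod_j v(x_j)]$) gives a clean shortcut and constitutes the conceptual crux of the proof; the remaining manipulations are routine.
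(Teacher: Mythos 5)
Your proposal is correct and follows essentially the same route as the paper's Appendix I: the same decomposition into $\mathcal{J}_{1}$ plus the joint event $\{\eta^{(1)}\le\theta,\ \eta^{(1)}+\eta^{(2)}>\theta\}$, the same conditional CDF/PDF of $\eta^{(1)}$ given the point process, and the same Campbell--Mecke (sum--product functional) evaluation with $g(x)v(x)=r_{\rm t}^{\alpha}\|x\|^{-\alpha}J(\|x\|,u)\,(1+u r_{\rm t}^{\alpha}\|x\|^{-\alpha})^{-2}$ for QSI, together with the PGFL factorization over independent processes for FVI. One caution: your closing ``shortcut'' of reading the QSI sum--product as $-\tfrac{\partial}{\partial u}\mathbb{E}\bigl[\prod_j v(x_j)\bigr]$ is not valid as stated, since $v$ depends on $u$ through $J(\cdot,u)$ as well, so the total $u$-derivative would generate an extra unwanted term from $\partial_u J$; the explicit Campbell--Mecke application you describe is the correct step, and it is exactly what the paper uses.
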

 
\begin{proof}
See \textbf{Appendix I}.
\end{proof}

 \begin{figure}[htp]
 \centering
 \includegraphics[width=0.5\textwidth]{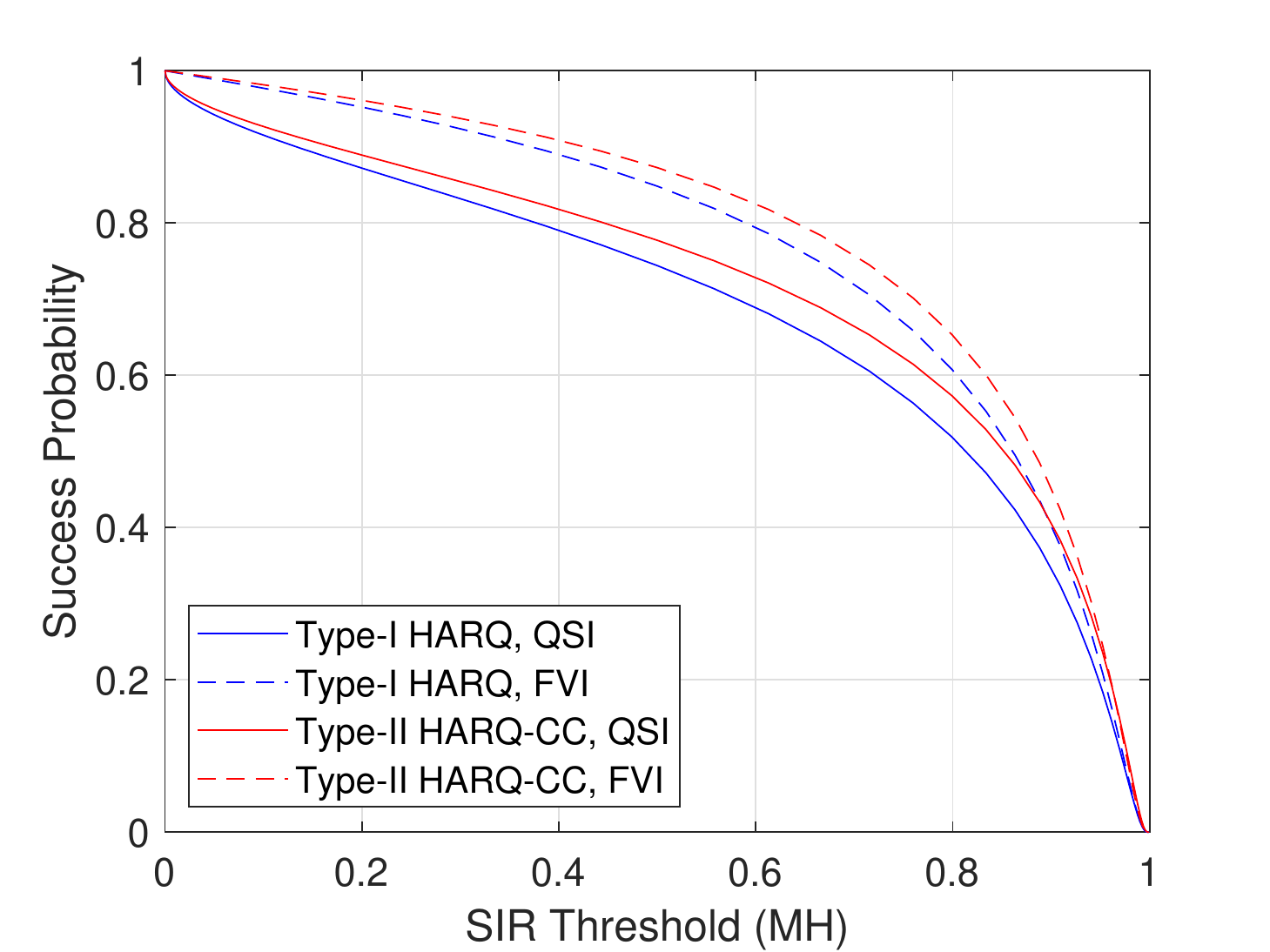} 
 \caption{Success probabilities of HARQ retransmission schemes in Poisson ad hoc networks ($K=2$).  } \label{fig:CP_HARQ} 
 \end{figure}

Fig.~\ref{fig:CP_HARQ} depicts the success probabilities of Type-I HARQ and Type-II HARQ-CC when $K=2$.
As shown, Type-II HARQ-CC provides a higher success probability at all SIR thresholds. 
Nevertheless, the success probabilities of the two schemes are comparable when the SIR threshold is small (e.g., $\theta<0.05$ MH). 
This indicates that, in the high-coverage regime, the SIR gain 
due to MRC has an imperceptible effect on the success probability.
The reason is that low SIR thresholds render a high success probability of the first transmission, which makes the impact of retransmission negligible.

\subsection{Summary and Discussion}

This section has discussed the influence of temporal interference correlation on the retransmission performance of a  transmission link in a Poisson field of interferers. In particular, we have derived the JSP of multiple transmissions, the correlation coefficient for two transmissions, and the success probability for a given number of transmission attempts. 
Furthermore, based on the above results, we have shown how to analyze the success probabilities under Type-I HARQ and Type-II HARQ retransmission schemes. 
The lessons learned are  as follows. 

\begin{itemize}

\item Temporally-correlated interference results in a higher JSP than temporally-independent interference. The performance gap between the two cases increases with the number of transmission attempts.

\item The CSP of $K$-th transmission given that the previous $K-1$ transmissions all succeed increases with $K$ and remains the same with temporally-correlated interference and temporally-independent interference, respectively.

\item Temporally-correlated interference  results in higher success probability with retransmission than temporally-independent interference.
The gap between the success probabilities with retransmission under the two types of interference increases with the number of transmission attempts. 

\item The performance gain of Type-II HARQ over Type-I HARQ is larger with temporally-correlated interference than with temporally-independent interference.  

\end{itemize}

\vspace{0.2cm}
\noindent
{\em Open Technical Issues}: This section has presented models to characterize the impact of temporal interference correlation in a retransmission-based (or multi-packet transmission) system. We have considered both Type-I HARQ and Type-II HARQ systems and evaluated both CSP and JSP. From a practical point of view, the impact of ``interacting queues" may need to be considered in a retransmission scenario. Also, the performance of HARQ systems in the presence of interference correlation in different wireless systems (e.g., non-orthogonal multiple access (NOMA) systems and MIMO systems) will be worth investigating.  

In some emerging scenarios, such as device-to-device communication \cite{N2014Tehrani} and mobile social networks~\cite{Laha2015A}, 
an information source may rely on mobile users to spread messages to multiple destinations. To model the process of information spreading in such systems, a possible direction is to incorporate user mobility models into the stochastic geometry analysis.

\section{Spatially and Temporally Correlated Interference in Mobile Systems}  \label{sec:mobility}
This section investigates the effect of mobility on the spatial-temporal SIR correlation. We analyze the JSP in mobile networks with spatially and temporally correlated interference. 

\subsection{Introduction}
We consider two mobile network scenarios. In the first scenario, similar to \cite{Krishnan2017}, we consider a downlink communication in a Poisson cellular network, and provide the steps to calculate the spatial-temporal JSP of a mobile user at two different time instants. In the second scenario, similar to \cite{Z.2014Gong}, we consider a Poisson bipolar model where the desired transmitter and receiver are static while other transmitters in the network move according to a random mobility model. For this scenario, we also explain the steps to calculate the JSP at the static receiver \cite{Tabassum2019}.

Let us denote the success event indicator at time $t_1$ at location $u_1$ by 
$A_1=\mathbbm1_{\{\text{SIR}_1>\theta \}}$ and at time $t_2$ at location $u_2$ by $A_2=\mathbbm1_ {\{\text{SIR}_2>\theta \} }$, where $\text{SIR}_1$ is the SIR at the receiver at location $u_1$ at time $t_1$, and $\text{SIR}_2$ is the SIR at the receiver at location $u_2$ at time $t_2$; $\theta$ is  the target SIR. In the following, we present the steps to calculate $\mathbb{E}[A_1A_2]$ (i.e., the spatial-temporal JSP) for the two mentioned network scenarios. Note that when the network is stationary over time, $\mathbb{E}[A_1A_2]$ does not depend on $t_1$ and $t_2$; it only depends on $\Delta=|t_2-t_1|$. Generally, the random variables $A_1$ and $A_2$ are spatially 
and temporally 
correlated. When $u_1=u_2$, $\mathbb{E}[A_1A_2]$ captures the temporal correlation, i.e., the correlation at one location in two different time instants. When $t_1=t_2$, $\mathbb{E}[A_1A_2]$ captures the spatial correlation, i.e., correlation at two different locations at the same time.

\subsection{Analysis of Poisson Downlink Networks}

\subsubsection{Model I}
\begin{figure} 
\centering
 \subfigure [No handoff occurs. ]
  {
 \label{fig:Cox_system1}
 \centering   
 \includegraphics[width=0.65 \textwidth]{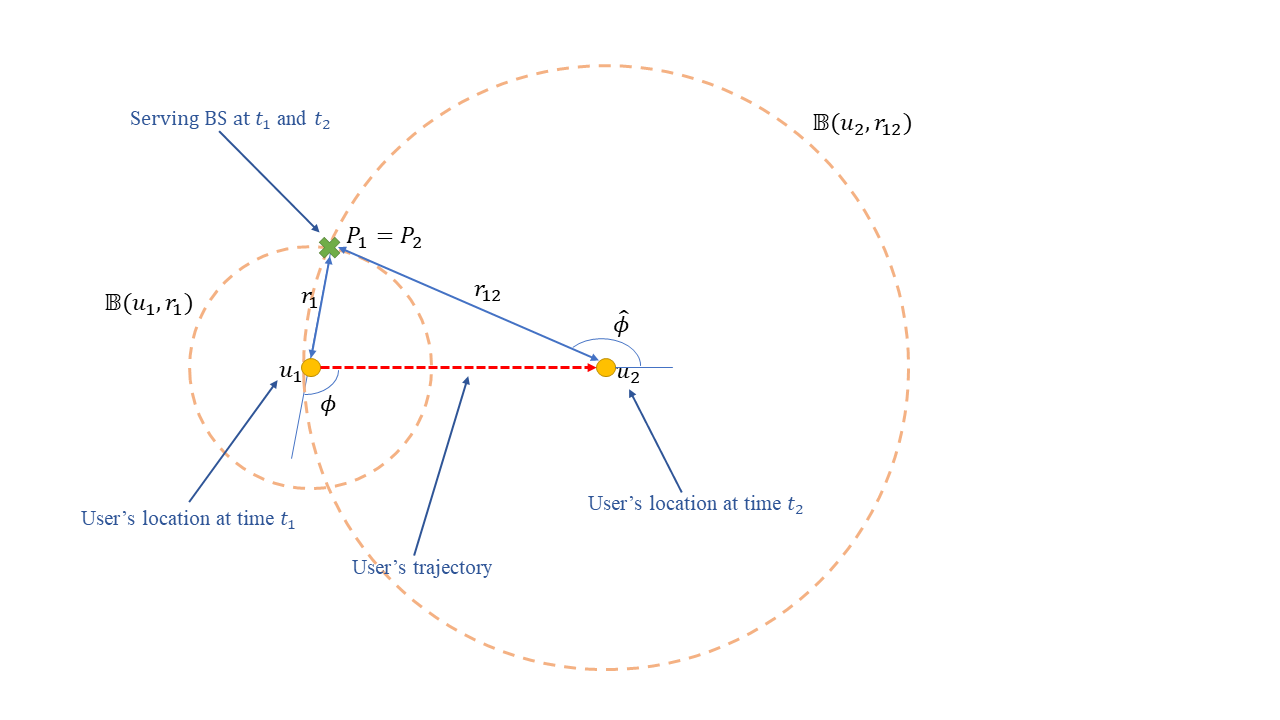}}
 \centering  
 \subfigure  [Handoff occurs.
 ] {
\label{fig:Cox_system2}
 \centering 
\includegraphics[width=0.65 \textwidth]{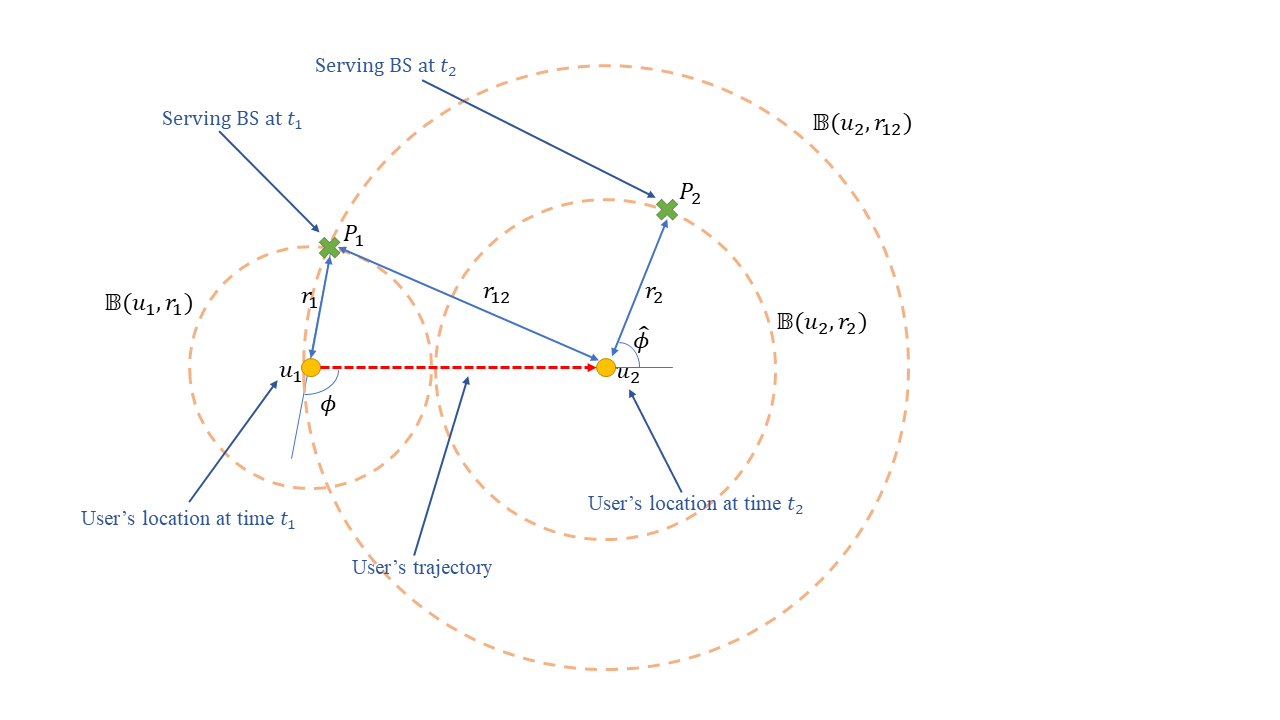}}
\caption{Scenarios with and without handoff \cite{Krishnan2017}. } 
\centering
\label{fig:Scenario1}
\end{figure}

Consider a Poisson downlink network $\Phi$ with intensity $\lambda$ where 
all the users are associated with their nearest BSs. The  small-scale fading is i.i.d. across time and space. Consider a mobile user that is located at distance $r_1$ from its associated BS at time $t_1$. 
The user moves away from  the associated BS with a constant speed $v$ at an angle $\phi$ at time $t_1$, where $\phi$ is uniformly distributed in $[0,2\pi]$ (Fig. \ref{fig:Scenario1}). Before further discussion, let us introduce some notations that help us in deriving the spatial-temporal JSP at time $t_1$ and $t_2$ with $t_{2} > t_{1}$ \cite{Krishnan2017}.
\begin{itemize}
    \item $P_1$, $P_2$: Location of the serving BS at time $t_1$ and $t_2$, respectively.
	\item $r_1$, $r_2$: Distance between the mobile user and its associated BS at time $t_1$ and time $t_2$, respectively.
	\item $r_{12}$: Distance between the mobile user at time $t_2$ and its associated BS at time $t_1$.
	\item $\mathbb{B}(x,r)$: Closed disk with center at $x$ and radius being $r$. For brevity, $\mathbb{B}(u_1,r_1)$, $\mathbb{B}(u_2,r_2)$, and $\mathbb{B}(u_2,r_{12})$ are denoted by $\mathcal{B}_1$, $\mathcal{B}_2$, and $\mathcal{B}_{12}$, respectively.
	\item $H$: Event that at least one handoff occurs in the interval $[t_1,t_2]$.
\end{itemize}

As the mobile user moves from $u_1$ to $u_2$, the mobile user may be served by the same BS (no handoff occurs) or it may be handed off to other BSs (handoff occurs). At time $t_2$, when there is no BS in $\mathcal{B}_{12}$, the mobile user is still connected to 
the same BS as time $t_1$, i.e., $P_1=P_2$. In this case, according to the definitions and from triangle equations, we have $r_2=r_{12}=\sqrt{r_1^2+v^2(t_2-t_1)^2+2r_1v(t_2-t_1)\cos\phi}$ and $\mathcal{B}_{12}=\mathcal{B}_2$. On the other hand, when there is a BS in $\mathcal{B}_{12}$, at time $t_2$, the mobile user is served by a new BS, i.e., $P_1 \neq P_2$. In this case $r_2<r_{12}$ and $\mathcal{B}_2 \subset \mathcal{B}_{12}$. In the following, we briefly provide the steps for deriving the JSP at time $t_1$ and $t_2$. Note that since $u_1 \neq u_2$ and $t_1 \neq t_2$, the JSP captures the spatial-temporal SIR correlation.

\subsubsection{Steps to Derive the JSP \cite{Krishnan2017}}   
\textbf{Step 0}: From the contact distribution function of the homogeneous PPP, the PDF of the contact distance at time $t_1$ follows as 
(\ref{eqn:PDF_CD_PPP}). Moreover, due to the symmetry, we can assume $\phi$ is uniformly distributed in $[0,\pi]$.

\textbf{Step (i)}: Calculating $\mathbb{P}\left(H \mid r_1,\phi\right)$ (handoff probability given $r_1$ and $\phi$):
\begin{align}
	\mathbb{P}\left(H \mid r_1,\phi\right) & \stackrel{\text{(a)}}{=} 1-\mathbb{P}\left( \Phi(\mathcal{B}_{12} \setminus \mathcal{B}_1)=0 \mid r_1,\phi \right) \nonumber \\
											 & \stackrel{\text{(b)}}{=} 1-\exp\big(  -\lambda |\mathcal{B}_{12} \setminus \mathcal{B}_{1}| \big), \nonumber
											 \end{align}
where $(a)$ follows the fact that handoff occurs when there is a BS in $\mathcal{B}_{12}$. In $(a)$, $\mathcal{B}_{1}$ is excluded from $\mathcal{B}_{12}$ since there is no BS inside $\mathcal{B}_{1}$. $(b)$ follows from the void probability of the PPP.  

\textbf{Step (ii)}: Deriving the distribution of $r_{2}$: Given $r_1$ and $\phi$, when there is no handoff,  $r_2=\sqrt{r_1^2+v^2(t_2-t_1)^2+2r_1v(t_2-t_1)\cos\phi}$. However, when handoff occurs, as discussed earlier, $r_2<r_{12}$. In this case, we can derive the conditional PDF of $r_2$ as $f_{r_2}(z \mid H, r_1, \phi) = \frac{{\rm d}}{{\rm d}z} F_{r_2}(z \mid H, r_1, \phi)$, where $F_{r_2}(z \mid H, r_1, \phi)$ is the conditional CDF of $r_{2}$ and can be obtained by
\begin{align}
	& F_{r_2}(z \mid H, r_1, \phi) \nonumber \\
	&= \frac{ \mathbb{P}\left(r_2 \le z, H \mid r_1,\phi\right) }{ \mathbb{P}\left(H \mid r_1,\phi\right) } \nonumber \\
	 								 &= \frac{ \mathbb{P}\left(\Phi(\mathbb{B}(u_2,z) \setminus \mathcal{B}_{1})>0, \Phi(\mathcal{B}_{12} \setminus \mathcal{B}_{1})>0 \mid r_1,\phi\right) }{ \mathbb{P}\left(H \mid r_1,\phi\right) } \nonumber \\
	 								 &\stackrel{\text{(a)}}{=} \frac{ \mathbb{P}\left(\Phi(\mathbb{B}(u_2,z) \setminus \mathcal{B}_{1})>0 \mid r_1,\phi\right) }{ \mathbb{P}\left(H \mid r_1,\phi\right) } \nonumber \\
	 								 &= \frac{1 \! - \! \exp\big( -\lambda |\mathbb{B}(u_2,z) \setminus \mathcal{B}_{1}| \big)}{ \mathbb{P}\left(H \mid r_1,\phi\right) }, \end{align}	 
for $z\in[\max(0,r_1-v),r_{12}]$. $(a)$ follows since $\mathbb{B}(u_2,z)\subset\mathcal{B}_{12}$ in the case of handoff. When $r_1>v$, the nearest BS is at least at distance $r_1-v$, therefore, $F_{r_2}(z \mid H, r_1, \phi)=0$ for $z<\max(0,r_1-v)$. Moreover, as explained earlier, $F_{r_2}(z \mid H, r_1, \phi)=1$ for $r_{12}<z$.

When handoff occurs, we also need the distribution of $\hat{\phi}$, angle between the vectors  $\overrightarrow{u_1u_2}$ and $\overrightarrow{u_2P_2}$. Given $r_1$, $r_2$, and $v$, one of the two following cases holds\footnote{If $v \le |r_1-r_2|$, no handoff occurs when the user moves from $u_1$ to $u_2$.}: 
1) $|r_1-r_2|< v \le r_1+r_2$ ($\mathcal{B}_{1}$ and $\mathcal{B}_{2}$ overlap): $\hat{\phi}$ is uniformly distributed in 
$\left[-\pi+\arccos\left(\frac{r_2^2+v^2-r_1^2}{2r_2v}\right),\pi-\arccos\left(\frac{r_2^2+v^2-r_1^2}{2r_2v}\right)\right]$, 
2) $r_1+r_2<v$ ($\mathcal{B}_{1}$ and $\mathcal{B}_{2}$ are disjoint): $\hat{\phi}$ is uniformly distributed in $[0,2\pi]$.

\textbf{Step (iii)}: Calculating the spatial-temporal JSP \mbox{$\mathbb{P}(\text{SIR}_1>\theta,\text{SIR}_2>\theta)$} as in (\ref{eq:scenario1_main}), where $\bar{H}$ is the complement of $H$ and denotes the event that handoff does not occur.
\begin{figure*}
	\begin{align}
	 \mathbb{P}(\text{SIR}_1>\theta,\text{SIR}_2>\theta) 
&	=
	\mathbb{E}\left[ \mathbb{P}(\text{SIR}_1>\theta,\text{SIR}_2> \theta \mid r_1, \phi) \right] \nonumber \\ 
	&=\underbrace{
		\mathbb{E}\left[ \mathbb{P}(\text{SIR}_1>\theta,\text{SIR}_2> \theta \mid \bar{H}, r_1, \phi) \mathbb{P}(\bar{H} \mid r_1, \phi) \right] }_{\textbf{Term I}}\nonumber \\ 
	&+\underbrace{
		\mathbb{E}\left[ \mathbb{P}(\text{SIR}_1>\theta,\text{SIR}_2> \theta \mid H, r_1, \phi) \mathbb{P}(H \mid r_1, \phi) \right] }_{\textbf{Term II}} \IEEEeqnarraynumspace \label{eq:scenario1_main}.
		\end{align}
		\hrulefill
		\end{figure*}

\textbf{Term I} in \eqref{eq:scenario1_main} can be obtained by (\ref{eqn:term1}), 
	\begin{figure*}
	\begin{align}
	& \mathbb{E}\left[ \mathbb{P}(\text{SIR}_1>\theta,\text{SIR}_2> \theta \mid \bar{H}, r_1, \phi) \mathbb{P}(\bar{H} \mid r_1, \phi) \right]  \nonumber \\ &= \mathbb{E}\left[ \mathbb{E}_{\Phi_{\rm I}}\left[ \mathbb{P}(\text{SIR}_1>\theta,\text{SIR}_2> \theta \mid \Phi_{\rm I}, \bar{H}, r_1, \phi) \right] \mathbb{P}(\bar{H} \mid r_1, \phi) \right] \nonumber \\
	&\stackrel{\text{(a)}}{=} \mathbb{E}\left[ \mathbb{E}_{\Phi_{\rm I}}\left[ \mathbb{P}(\text{SIR}_1>\theta \mid \Phi_{\rm I}, \bar{H}, r_1, \phi) \mathbb{P}(\text{SIR}_2>\theta \mid \Phi_{\rm I}, \bar{H}, r_1, \phi) \right] \mathbb{P}(\bar{H} \mid r_1, \phi) \right], \label{eqn:term1}
	\end{align}
	\hrulefill
	\end{figure*}
where $\Phi_{\rm I}$ denotes the locations of the interferers. In the case of no handoff ($\bar{H}$), $\Phi_{\rm I}$ is a PPP with intensity $\lambda$ in $\mathbb{R}^2\setminus \left(\mathcal{B}_{1}\cup\mathcal{B}_{12}\right)$. $(a)$ is obtained using the fact that, given $\Phi_{\rm I}$, $\bar{H}$, $r_1$, and $\phi$, $\text{SIR}_1$ and $\text{SIR}_2$ are independent since small-scale fading is i.i.d. at different time instants and spatial locations. The inner expectation w.r.t. $\Phi_{\rm I}$ can be calculated by using the PGFL of PPP, and the outer expectation by using PDF of $r_1$ and $\phi$ provided in \textbf{Step 0}. $\mathbb{P}(\bar{H} \mid r_1, \phi)$, probability that handoff does not occur, is also provided in \textbf{Step (i)}. 
\textbf{Term II} in \eqref{eq:scenario1_main} can be derived as \eqref{eqn:term2}.
\begin{figure*}
\begin{align}
	& 
	 \mathbb{E}\left[ \mathbb{P}(\text{SIR}_1>\theta,\text{SIR}_2>\theta \mid H, r_1, \phi) \mathbb{P}(H \mid r_1, \phi) \right]  \nonumber \\ 
	& = \mathbb{E}\left[ \mathbb{E}_{r_2}\left[ \mathbb{E}_{\hat{\phi}}\left[ \mathbb{P}(\text{SIR}_1>\theta,\text{SIR}_2>\theta \mid \hat{\phi}, r_2, H, r_1, \phi) \right] \right] \mathbb{P}(H \mid r_1, \phi) \right] \nonumber \\
	&\stackrel{\text{(a)}}{=} 
	\mathbb{E}\left[ \mathbb{E}_{r_2}\left[ \mathbb{E}_{\hat{\phi}}\left[ 
	\mathbb{E}_{\Phi_{\rm I}}\left[ \mathbb{P}(\text{SIR}_1>\theta \mid \Phi_{\rm I}, \hat{\phi}, r_2, H, r_1, \phi) \mathbb{P}(\text{SIR}_2>\theta \mid \Phi_{\rm I}, \hat{\phi}, r_2, H, r_1, \phi) \right]
	\right] \right] \mathbb{P}(H \mid r_1, \phi) \right] . 
	\label{eqn:term2}
	\end{align}
	\hrulefill
	\end{figure*}
In the case of handoff ($H$), $\Phi_{\rm I}$ is a PPP with intensity $\lambda$ in $\mathbb{R}^2\setminus (\mathcal{B}_{1} \cup \mathcal{B}_{2})$. Note that $\Phi_{\rm I} \cup \{P_2\}$ is the set of interferers at time $t_1$ and $\Phi_{\rm I} \cup \{P_1\}$ is the set of interferers at time $t_2$. $(a)$ is obtained using the fact that, given $\Phi_{\rm I}$, $\hat{\phi}$, $r_2$, $H$, $r_1$, and $\phi$, $\text{SIR}_1$ and $\text{SIR}_2$ are independent. The expectation w.r.t. $\Phi_{\rm I}$ can be obtained by using the PGFL of PPP. The expectation w.r.t. $\hat{\phi}$ and $r_2$ can be obtained from \textbf{Step (ii)}. $\mathbb{P}(H \mid r_1, \phi)$ is provided in \textbf{Step (i)}. Outer expectation w.r.t. $r_1$ and $\phi$ can also be obtained by using PDFs given in \textbf{Step 0}.    

\subsection{Analysis of Poisson Bipolar Networks}

\subsubsection{Model II}
In the second scenario, similar to \cite{Z.2014Gong}, we consider a Poisson bipolar network, where, at time $t_1$, transmitters form a PPP $\Phi$ with intensity $\lambda$. Each transmitter is assigned a unique receiver at distance $r_0$ at a random direction. Due to the stationarity of PPP, we investigate the performance of the typical receiver and consider its location as the origin. Since the homogeneous PPP is isotropic, without loss of generality, we can assume that the transmitter of this receiver is located at $[r_0,0]^T$.
From Slivnyak's theorem \cite{M.2013Haenggic}, at time $t_1$, the other transmitters (interferers), denoted by  $\Phi$, form a homogeneous PPP with intensity $\lambda$. Assume that the selected receiver and its corresponding transmitter are static while interferers are mobile according to a uniform mobility model. Hence, at any time $t$, the interferers form a homogeneous PPP with intensity $\lambda$.  

We study the JSP at two different time slots $t_1$ and $t_2$. Note that since the desired receiver is fixed at the origin ($u_1=u_2$), the JSP captures the temporal correlation. Let $w_{i,\Delta}$ denote the displacement vector for transmitter $i$\footnote{$\Delta$ reflects the fact that statistics of the displacement vector depends on the time difference $\Delta=|t_2-t_1|$ rather than $t_1$ and $t_2$ separately.}. We consider a random individual mobility model (such as random walk), where each interferer moves independently of other interferers following the same random mobility model. Thus, $w_{i,\Delta}$ is independently and identically distributed for each interferer. 

\subsubsection{Derivation of Spatial-temporal JSP}
We can calculate the JSP 
in the case of random individual mobility model where $u_{1}=u_{2}$ for spatially and temporally i.i.d. fading as follows:
\begin{align}
	& \mathbb{P}(\text{SIR}_1>\theta,\text{SIR}_2>\theta) \nonumber \\
	&= \mathbb{E}_{\Phi_{\rm I}} \left[ \mathbb{P}(\text{SIR}_1>\theta,\text{SIR}_2>\theta \mid \Phi_{\rm I}) \right] \nonumber \\
	&\stackrel{\text{(a)}}{=} \mathbb{E}_{\Phi_{\rm I}} \left[ \mathbb{P}(\text{SIR}_1>\theta \mid \Phi_{\rm I}) \mathbb{P}( \text{SIR}_2>\theta \mid \Phi_{\rm I}) \right] \nonumber \\
	&\stackrel{\text{(b)}}{=} \mathbb{E}_{\Phi_{\rm I}} \Big[ \mathbb{P}(\text{SIR}_1>\theta \mid \Phi_{\rm I}) \nonumber \\																						& \hspace{5mm} \times
	\mathbb{E}_{(w_{\Delta ,i})} \big[ \mathbb{P}( \text{SIR}_2>\theta \mid \{w_{\Delta,i}\}, \Phi_{\rm I}) \big] \Big], \label{eq:scenarioIImobility} \end{align} 
where $(a)$ is obtained, since given the initial location of the interferers $\Phi_{\rm I}$, $\text{SIR}_1$ and $\text{SIR}_2$ are independent. In $(b)$, the inner expectation is w.r.t. the displacement vectors of all interferers, where $\Delta=|t_2-t_1|$.

\subsection{Numerical Results}
To understand the effect of mobility, in this section, we consider the CSP
	\begin{align}
	   \mathbb{P}( \text{SIR}_{2}> \theta \mid \text{SIR}_{1} >\theta ) 
	  = 
	\frac{ \mathbb{P}( \text{SIR}_{1} >\theta, \text{SIR}_2>\theta ) }{ \mathbb{P}( \text{SIR}_1>\theta ) }. \nonumber
	\end{align}
Since the network performance depends on the nodes' displacements\footnote{The displacement of a node that moves at a speed $v$ for a time duration $|t_2-t_1|=\Delta$ is the same as when it moves with speed $v\Delta$ for duration $|t_2-t_1|=1$.}, we set $|t_2-t_1|=1$ and only consider the effect of the speed on the spatial-temporal JSP.

\begin{figure} 
\centering
 \subfigure [Model I.]
  {
 \centering   
 \includegraphics[width=0.5  \textwidth]{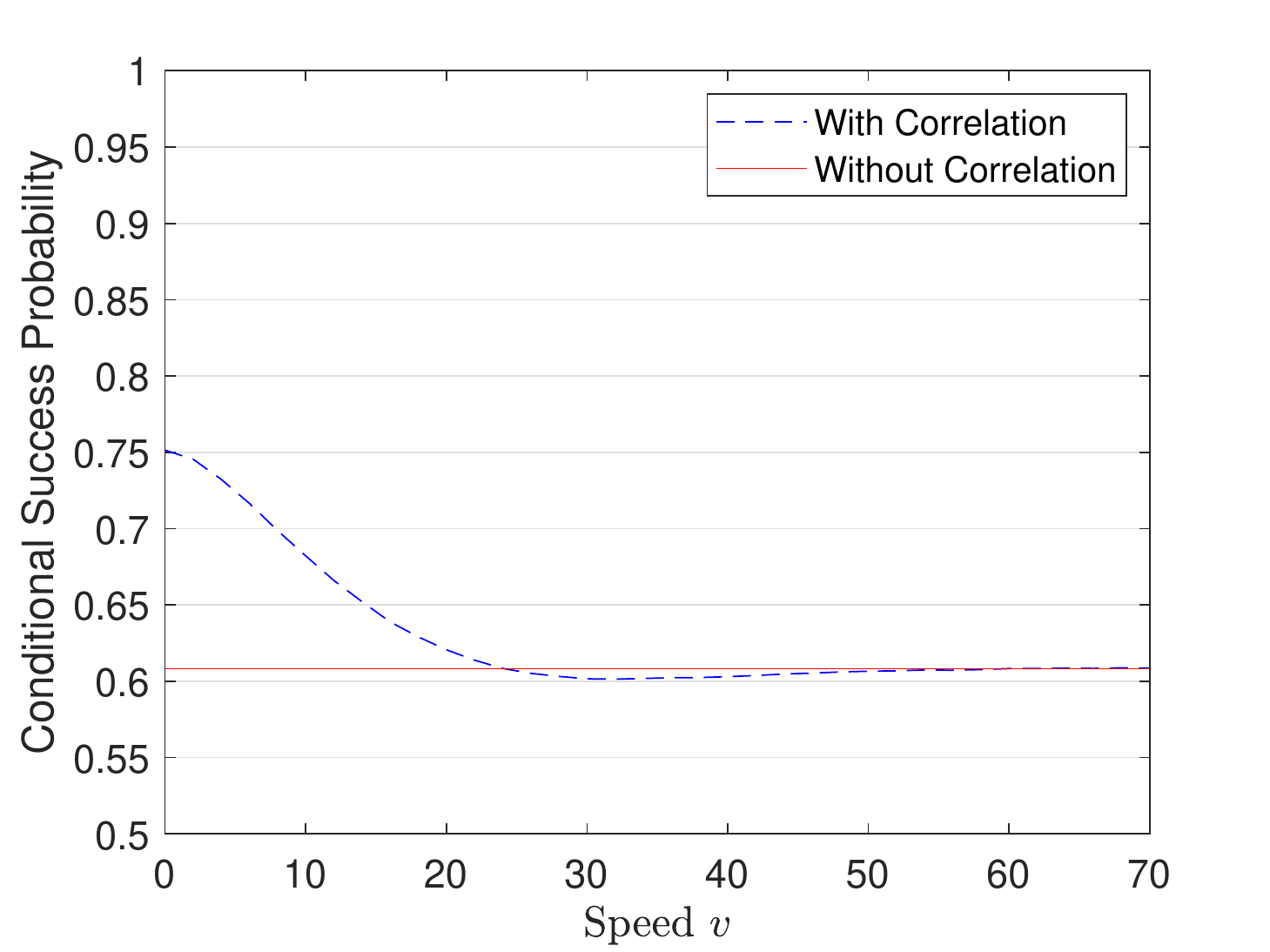}}
 \centering  
 \subfigure  [Model II.
 ] {
 \centering 
\includegraphics[width=0.5 \textwidth]{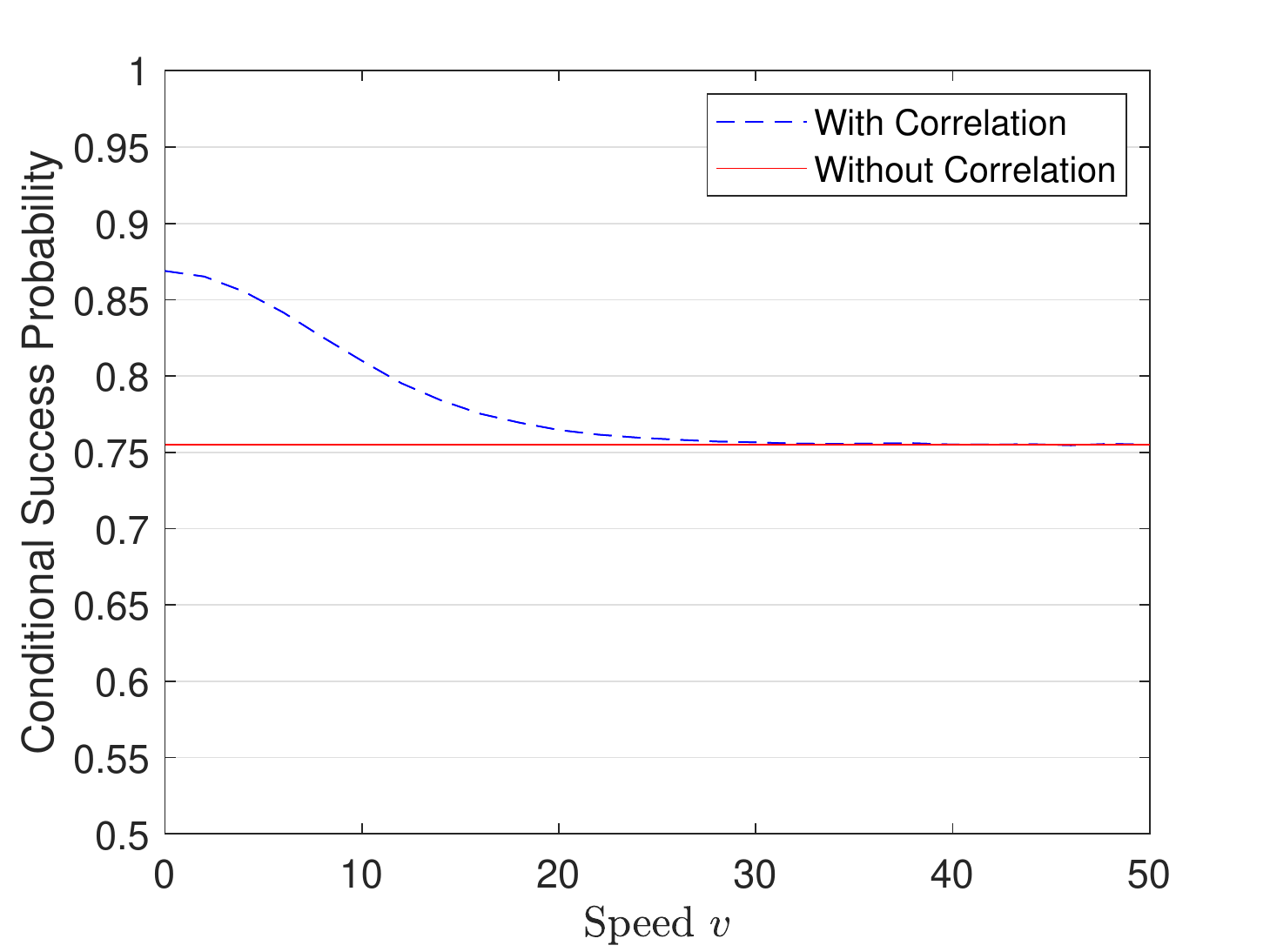}}
\caption{Simulation of the effect of speed on the spatial-temporal JSP for Model I in $(a)$ and Model II in $(b)$ for $\lambda=0.001$ and $\theta=-1{\rm dB}$.  In Model II, the desired link distance is set to $d=8$.  } 
\centering
 	\label{fig:Simulation_results}
\end{figure}

In Fig.~\ref{fig:Simulation_results}, we illustrate the effect of correlation on the CSP for both scenarios. For the random individual mobility model, we have assumed all the mobile users move with the same speed $v$, and each mobile user moves in a random independent direction $\varphi_i$,  uniformly distributed in the interval $[0,2\pi]$. To better understand the impact of correlation, in Fig.~\ref{fig:Simulation_results}, we also compare the results with the independent case, where the CSP is equal to $\mathbb{P}( \text{SIR}_2>\theta )$. According to the displacement theorem \cite{M.2013Haenggic}, at any time $t$, interferers form a PPP with intensity $\lambda$. Thus, in Model II, $\mathbb{P}( \text{SIR}_{u_2}(t)>\theta )$ does not depend on $t$ (and consequently $v$). Using the stationarity of the PPP, we can obtain the same result for Model I.

As shown in Fig. \ref{fig:Simulation_results}, at high speeds, we can ignore the effect of correlation, and assume that the two sets of interferers at time instants $t_1$ and $t_2$ are independent. As a result, most of the existing works in mobile networks only focus on the highly mobile scenario where they can ignore the effect of correlation. It is worthwhile to note that the gap between the results with correlation (dashed blue curves) and without correlation (solid red lines) is proportional to the correlation coefficient between the success event at $t_1$ and success event at $t_2$. As illustrated in Fig. \ref{fig:Simulation_results}, the correlation between success events at $t_1$ and $t_2$ decreases as the speed increases.

\subsection{Summary and Discussion}

This section has characterized the spatial-temporal JSP of two network scenarios 1) with mobile users and static interferers and 2) with mobile interferers and mobile users. The main 
lessons learned are as follows.

\begin{itemize}

\item The spatial-temporal correlation among different transmission events decreases with speed and can be ignored when the speed is high enough.

\item The gap between the spatial-temporal CSP is proportional to the correlation coefficient between  successful transmission events.  

\end{itemize}

In Poisson networks, 
mobility introduces spatial randomness which decreases the correlation across time \cite{Krishnan2017}; thus, spatial diversity provides time diversity when nodes are mobile. This is also discussed in the literature by studying the mean local delay \cite{Haenggi2010}.  																																	
In a static network, when there is a close interferer, retransmitting an unsuccessfully received signal is usually not helpful. As a result, in static networks, the mean local delay may be infinite, i.e., a significant fraction of users suffer from large delays. However, in mobile networks, interference correlation decreases across time and there is a higher chance to receive a retransmitted signal successfully. Therefore, mobility reduces the mean local delay by providing spatial diversity. Specifically, in the highly mobile scenario, the mean local delay is always finite.     

\vspace{0.2cm}
\noindent
{\em Open Technical Issues}: Analyzing the performance of mobile wireless networks with correlation effects is challenging for most scenarios and usually yields complicated results. Consequently, most of the related works consider simple system models such as downlink communication in a Poisson cellular network (as in Model I) or Poisson bipolar networks with mobile interferers (as in Model II), but even for the latter model, there are no analytical results.

In the downlink communication in Model I, we have studied the performance of a mobile node in a network of static transmitters (i.e., BSs). In contrast, in the uplink, we have mobile transmitters (the desired transmitter and interferers). Due to the random mobility of the nodes and irregular shapes of the cells, it may not be possible to calculate the load distribution while considering the correlation effect, which in turn, makes the analysis of the uplink scenario very challenging. In fact, an exact analysis is impossible, but  approximations, bounds, or asymptotic results can almost always be derived.

In Model II, as shown in \eqref{eq:scenarioIImobility}, we need the distribution of the displacement vector $w_{\Delta}$. However, it may not be possible to derive this distribution for most of the mobility models and for all values of $\Delta$. Therefore, most of 
the existing works resort to a simplified mobility model or consider only one movement step during which the mobile node moves along a straight line.

In addition to the spatial-temporal JSP, the SIR meta distribution in cellular networks with mobile BSs has been studied in~\cite{X.June2020Tang}. The joint SIR meta distribution and product SIR meta distribution in mobile networks are also worth investigating.

\section{Future Directions}

In this section, we discuss  future applications of stochastic geometry analysis and related technical challenges. 

\subsection{Novel Performance Metrics and Stochastic Geometry Analysis}

 Although the existing literature has extensively investigated different types of large-scale communication networks, 
 the majority of them focus on the spatial and temporal average performance, such as mean success probability, delay and throughput. Recently, 
 the SIR meta distribution~\cite{M.Apr.2016Haenggi} has been introduced to provide insights on the distribution of success probability over space and/or time. The meta distribution has been extended to evaluate the distributions of rate and energy~\cite{N.2019Deng}, mean delay~\cite{Y.Jun.2017Zhong},
 transmission rate~\cite{S.2019Kalamkar2},  secrecy rate~\cite{J.2018Tang}, and secrecy success probability~\cite{Yu2019Xinlei}.

 The concept of meta distribution can also be applied to evaluate the distribution of other performance metrics such as the covert probability for low probability of detection/intercept communications~\cite{A.2013Bash,X.Oct2020Lu}.  
Moreover, the
age-of-information (AoI) \cite{S.2012Kaul,H.2020Yang}  that quantifies the time elapsed from the generation to observation of the information explicitly measures the information freshness that the conventional metrics (e.g., link delay) do not. 
Some recent works have shown that network protocol designs based on AoI provide better consumer service quality in  real-time status updating, such as Google Scholar \cite{Bastopcu2020Melih}. To enhance information freshness, the existing network protocols need to be re-examined and stochastic geometry analysis can be exploited in understanding AoI in large-scale systems. Characterization of the above-mentioned performance metrics by considering the spatial and temporal correlations introduced in this tutorial opens a broad research area. 

\subsection{Stochastic Geometry Analysis for Cross-layer Study}
 
The majority of the existing literature on stochastic geometry analysis assumes that the transmission demand is uniformly distributed and regulated by some medium access control (MAC) protocols such as Aloha or CSMA. However, in practice, the MAC-level transmission scheduling can be  affected by application-level service scheduling~\cite{Lin2006Xiaojun}. For example, in a mobile edge computing system where the mobiles can offload computation tasks to the edge, different requests can be scheduled to different edge servers based on the quality-of-service requirements of the tasks as well as the functionality and the status of the servers. In such a scenario, to analyze the communications performances for the offloaded tasks in a large-scale system, stochastic geometry can be used. For such an analysis, characterizing the point process corresponding to the interferers would be challenging due to the dynamic resource scheduling introduced by the cross-layer design. 
In general, modeling and analysis of cross-layer effects is an important research direction.
 


\subsection{Machine Learning Approaches for Large-Scale Networks and Stochastic Geometry Analysis}

Machine learning (ML) techniques have been increasingly adopted for resource allocation and control in wireless networks due to their capabilities in learning network environment variations (e.g., due to traffic pattern and channel uncertainties), classification of the relevant quantities, predicting future outcomes~\cite{R.2017Li}. For instance, deep reinforcement learning approaches are capable of estimating channel state information (CSI) without explicit feedback/detection and thus can be exploited to design resource management schemes with low  communication  and computation overhead compared to conventional CSI-based schemes~\cite{Sun2019Yaohua,Boutaba2018R}. With ML-based designs, network resource allocations will be more intelligent and environment-aware. Stochastic geometry tools can be used to analyze the effects of correlation introduced by environment-aware resource allocation in large-scale systems. Also, stochastic geometry can be used to study the impact of network interference as well as interference correlation on the performance of ML-based designs (e.g., in terms of stability, convergence, and accuracy).
 
\subsection{Stochastic Geometry Analysis of Emerging Network Scenarios } 
 
\subsubsection{Programmability of Radio Environment with Reconfigurable Intelligent Surfaces} A 
{\em Reconfigurable Intelligent Surface} (RIS)~\cite{D.May2019Renzo,S.Arxiv2020Gong}, also known as {\em software-defined hypersurface} and {\em large intelligent surface/antennas}, is a digitally-tunable metasurface. 
A metasurface is a thin and planar structure composed of sub-wavelength passive scattering particles which alter the electromagnetic waves through a surface impedance boundary condition.
The electromagnetic response of an RIS can be reprogrammed without re-fabrication. Due to this distinguishing feature, an RIS can be utilized to dynamically alter the electromagnetic behavior (e.g., amplitude, phase, and frequency \cite{E.2019Basar}) of the incident signals, and thus reshaping the wireless propagation environments. 

RIS has been envisioned as a candidate technology for future generation networks \cite{D.Arxiv2020Renzo} and expected to be widely deployed on, e.g., surfaces of buildings, vehicles and billboards, to facilitate the ever-increasing traffic demands. In such an environment, the signal distributions  
are jointly manipulated by multiple RISs. The incident signals at nearby RISs come from common sources and exhibit spatial correlations. Characterization of the aggregate effect of large-scale RIS deployments by taking into account the spatial correlation is an open research problem.

\subsubsection{Joint Radar and Communication}

Joint radar and communication (JRC) \cite{F.Tcom2020Liu} is a novel paradigm  that facilitates radar detection and data communication over the same frequency bands to improve the spectrum efficiency.  In JRC systems, a dual-functional transmitter (e.g., autonomous cars or unmanned drones) simultaneously transmits data and detects radar targets (e.g., barriers, mobile vehicles and pedestrians) based on a shared and integrated hardware platform. Specifically, a portion of the transmitted signal received at the target receiver is used for communication, while the other portion of the signal reflected from the radar targets at the dual-function transmitter is utilized for detection~\cite{C.Luongarxiv}. Due to the common signal source and interferers, the communication and radar performance in JRC systems are correlated.  
Furthermore, to accommodate both radar and communication demands, JRC brings a series of resource allocation problems, 
such as power allocation, spectrum allocation and sharing, and beamforming design.
Analyzing  radar and communication performance in large-scale systems based on stochastic geometry analysis under different resource allocation schemes  is a promising future research direction.
 
\subsubsection{Internet of Space Things}
 
The Internet of Space Things (IoST)~\cite{Ian2020F} is an expansion of the ground Internet of Things (IoT) to the aerial and space domains enabled by drones and miniaturized satellites/CubeSats~\cite{H.2001Heidt}, respectively.
With the holistic integration of the ground with aerial and space communication systems, IoST is expected to realize ubiquitous connectivity virtually to support a broad range of applications including monitoring, tracking, in-space backhauling~\cite{H.2017Kaushal}, wireless power transfer~\cite{X.2016Lu2,X.Lu2016} and remote healthcare solutions.
Performance characterization of IoST requires efforts of developing three-dimensional modeling incorporating disparate propagation characteristics, mobility patterns, resource allocation schemes of communication systems as well as their intricate correlations at different domains.  

\subsubsection{Terahertz Communications}
The terahertz band (i.e., 0.1-10 THz) provides ultra-wide spectrum resources and enable new 
applications for beyond 5G~\cite{T.2011Kleine-Ostmann}. For example,  terabits per second (Tbps) data rate can be realized to support ``fiber-like"  
communication performance 
which offers a seamless transition between optical fibers and THz wireless links with no latency.
Moreover, the micro-scale wavelength makes THz band more suitable for {\em in-vivo} nano-network communications than any other frequency bands because terahertz waves have strong penetrating force and can be absorbed by in-vivo substances such as liquids and organs~\cite{S.2020Ghafoor}.
The availability of new spectrum bands will necessitate novel resource allocation designs. An intriguing research direction is to develop stochastic geometry models that take into account the  correlation effects from: i) physical blockages, ii) direction of arrival/departure of high-gain beam steering, and iii)  temporal broadening
effect resulted from frequency selectivity in the ultra-wide THz band. 
  
\subsubsection{Internet of Nanothings (IoNT)}

Internet of Nanothings (IoNT) is a connected molecular system empowered by nanomachines, which are microscopic devices capable of performing high-precision functions for many real-world applications such as biomedicine, food industry and military operations~\cite{Ian2010}.
As IoNT is expected to work in the environments of fluids, gases or particulates, IoNT will foreseeably function in a manner that is drastically different from the IoT due to differences in propagation environments, the scale of deployment, as well as physical constraints (e.g., in energy and computations) of miniaturized nano devices. 
Therefore, it is imperative to develop novel mathematical models to characterize the features of IoNT. 
Moreover, the protocol stack design and analysis for large-scale IoNT still remains an open field for exploration.
 
\section{Summary}

Stochastic geometry tools can be used to develop analytical frameworks for large-scale wireless systems considering the effects of spatial-temporal interference correlation, which is generally ignored in traditional performance analyses. In this tutorial, we have presented a comprehensive spatial-temporal analysis of large-scale communications systems.
In particular, we have formulated models to characterize correlations in interference (and hence SIR) due to different effects such as distributions of the interferers, distribution of contact distance, shadowing, transmission buffer status (or network queues), multihop transmissions, retransmissions, and user mobility. The performance of a target link in a large-scale system has been demonstrated both analytically and numerically, considering the effect of spatial-temporal interference correlation in different scenarios.
In particular, we have derived the joint distribution of spatially-correlated SIR with multihop relaying, temporally-correlated SIR with retransmissions and spatially and temporally-correlated SIR with mobility. For each of the above scenarios, we have shed light on the technical challenges in stochastic geometry analysis.   
Finally, we have presented future research directions in stochastic geometry analysis  of emerging wireless communication systems the performance of which will be affected by spatial-temporal interference correlations.

\section*{Appendix}

\subsection*{A. Proof of Theorem~\ref{thm:SP_MCP_field} (Moments of the CSP$_\Phi$ in Random Fields of Interferers)}  

\begin{proof} 
Given $\Phi$, the CSP of a link with transmission distance $r_{\rm t}$ can be derived as   
\begin{align}  
 \bar{F}_{\eta|\Phi} ( \theta )   
&   = \mathbb{P} \Bigg[  h_{\rm t}   > \theta r_{\rm t}^{\alpha} \sum_{ j \in \mathbb{N} } h_{j} \|x_{j}\| ^{-\alpha}  \Bigm| \Phi  \Bigg] \nonumber \\
&   =  \mathbb{E}_{(h_{j})}  \bigg[  \exp \Big( - \theta r^{\alpha}_{\rm t}  \sum_{j \in \mathbb{N}} h_{j}   \|x_{j}\|  ^{-\alpha} \Big) \Bigm| \Phi \bigg] \nonumber \\
& =   \prod_{ j \in \mathbb{N} } \frac{1}{1+ \theta r^{\alpha}_{\rm t}   r_{j} ^{-\alpha} }  .   \label{eqn:spatial_SP1} 
\end{align}

The moments of the CSP given Mat\'{e}rn and Poisson fields of interferers can be obtained by applying their PGFLs as in (\ref{eqn:Moments_MCP_PPP_fields}), shown on the top of the next page,  
\begin{figure*}
\begin{align}
 \mathcal{M}_{P_{\rm s}} (b) &  = \mathbb{E} \Bigg[ \prod_{ j \in \mathbb{N} }  \bigg(  \frac{1}{1+ \theta \|x_{\rm t}\|^{\alpha}   \|r_{j}\|^{-\alpha} }  \bigg)^{\!b} \Bigg] \nonumber \\
& \overset{(a)}{=}  \! \begin{dcases} \exp \! \Bigg( \!\! - \! \lambda_{\rm p} \!  \int_{\mathbb{R}^2} \! \bigg[ 1 - \exp \! \bigg( \!  - \bar{c}  \bigg( 1 \!- \! \!\int_{\mathbb{R}^2} \!\! \bigg( \frac{1}{ 1 \! 
+   \theta r^{\alpha}_{\rm t}  \|x-y\|^{-\alpha}   } \! \bigg)^{\!b}   f^{\rm M}(y) \mathrm{d} y   \bigg) \! \bigg) \! \bigg] \mathrm{d} x  \Bigg),     \, \,\, \textup{MCP}    \\
 \exp \! \Bigg( \! - \! \lambda \!\! \int_{\mathbb{R}^2} \! \bigg( 1 - \bigg( \frac{1}{  1+ \theta r^{\alpha}_{\rm t}  \|x\|^{-\alpha}   } \! \bigg)^{\!b} \bigg) \mathrm{d} x  \Bigg), \hspace{50mm}   \textup{PPP}  ,
\end{dcases} 
\label{eqn:Moments_MCP_PPP_fields}
\end{align}  
\hrulefill
\end{figure*}
where $(a)$ applies the PGFLs of MCP and PPP given in (\ref{eqn:PGFL_MCP}) and (\ref{eqn:PPP_PGFL}), respectively. 
By inserting $f^{\rm M}(y)$ given in (\ref{PDF_MCP_daughter})  in (\ref{eqn:Moments_MCP_PPP_fields}), the final result with a Mat\'{e}rn cluster field of interferers in (\ref{eqn:moments_field}) directly yields.

We further extend the expression for the PPP in (\ref{eqn:Moments_MCP_PPP_fields}) as 
\begin{align} 
& \mathcal{M}_{P_{\rm s}} (b) \nonumber \\
& \overset{(b)}{=} \exp \! \Bigg( \!  \! -  \pi \lambda \delta \! \int^{\infty}_{0} \! \! \bigg(  1 - \!  \Big( 1 - \! \frac{\theta r^{\alpha}_{\rm t} }{  z+ \theta r^{\alpha}_{\rm t}   } \Big)^{\!b}     \bigg)  z^{\delta  - 1} \mathrm{d} z  \Bigg) \nonumber \\
& \overset{(c)}{=} \exp \! \bigg( \!\! - \pi \lambda \delta  \sum^{\infty}_{k=1} \binom{b}{k} (-1)^{k+1} \theta^{k} r^{\alpha k }_{\rm t}  \nonumber \\
& \hspace{30mm} \times \int^{\infty}_{0} \frac{z^{\delta-1}}{(z+\theta r^{\alpha}_{\rm t} )^{k}}\mathrm{d} z \bigg),  \label{eqn:M_PPP}
\end{align}
where $(b)$ employs the change of variable $z = r^2_{\rm t} \theta^{-\delta}  x^{2}$ and the substitution $\delta= \frac{2}{\alpha}$, $(c)$ takes the binomial expansion of  
 $\big( 1 - \frac{\theta r^{\alpha}_{\rm t} }{  z+ \theta r^{\alpha}_{\rm t}   } \big)^{b} $ for $b \in \mathbb{C}$.

The integral in (\ref{eqn:M_PPP}) can be transformed as follows~\cite[Eq. 3.196.2]{S.Gradshteyn2007} 
\begin{align} \label{eqn:JCP_integral}
&\int^{\infty}_{0} \!\! \frac{z^{\delta-1}}{(z+\theta r^{\alpha}_{\rm t} )^{k}} \mathrm{d}z \nonumber \\
& = \theta^{\delta-k} r^{ 2 - \alpha k}_{\rm t} \frac{ \Gamma(k-\delta) \Gamma( \delta) }{ \Gamma(k) } \nonumber \\
& \overset{(d)}{=} \theta^{\delta-k} r^{2 - \alpha k}_{\rm t}  \frac{(-1)^{k+1} \pi \csc(\pi \delta) \Gamma(\delta) }{ \Gamma(\delta-k+1) \Gamma(k) } \nonumber  \\  %
& = \theta^{\delta-k} r^{2 - \alpha k}_{\rm t} (-1)^{k+1} \pi \csc(\pi \delta) \frac{(\delta-1)\ldots (\delta-k+1)}{(k-1)!} \nonumber \\
& =  \theta^{\delta-k} r^{ 2 - \alpha k}_{\rm t} (-1)^{k+1}  \pi \csc(\pi \delta)  \binom{\delta-1}{k-1}, 
\end{align}
where $(d)$ follows as 
\begin{align}
\Gamma(k-\delta) = \frac{\pi \csc(\pi(k-\delta))}{\Gamma(\delta-k+1)} = \frac{(-1)^{k+1} \pi \csc(\pi \delta)}{\Gamma(\delta-k+1)}. \nonumber
\end{align}

Inserting (\ref{eqn:JCP_integral}) into (\ref{eqn:M_PPP}) yields, 
\begin{align}
 \mathcal{M}_{P_{\rm s}} (b)& \!=\! \exp \! \Bigg( \!\! -  \lambda  \theta^{\delta } r^{ 2 }_{\rm t} \pi \delta \csc(\pi \delta) \sum^{\infty}_{k=1} \binom{b}{k}   \binom{\delta\!-\!1}{k\!-\!1}  \!  \Bigg)  \nonumber \\
& \!\overset{(e)}{=}\! \exp \! \bigg( \!\! -   \lambda  \theta^{\delta } r^{ 2 }_{\rm t}   \frac{\Gamma(1-\delta)  \Gamma(b+\delta)}{\Gamma(b) }   \bigg) ,   \label{eqn:M_PS_PPP}
\end{align}
where $(e)$ follows as $\pi \delta \csc(\pi \delta)=\Gamma(1-\delta)\Gamma(\delta)$ and $\sum^{\infty}_{k=1} \binom{b}{k}   \binom{\delta-1}{k-1}= \frac{\Gamma(b+\delta)}{\Gamma(b)\Gamma(1+\delta)}$.

With a $\beta$-Ginibre field of interferers, the success probability can be obtained from (\ref{eqn:spatial_SP1}) as
\begin{align} 
\mathcal{M}_{P_{\rm s}}(b) & \overset{(e)}{=}  \mathbb{E}_{Q_{j}} \Bigg[  \prod_{ j \in \mathbb{N}   }   \bigg( \frac{\beta}{1 + \theta r^{\alpha }_{\rm t}   Q_{j}^{-\alpha/2}   }  + 1 - \beta \bigg)^{\!b} \Bigg]  \nonumber \\
& \overset{(f)}{=}      \prod_{ j \in \mathbb{N}   } \int^{\infty}_{0} \frac{ (\pi \lambda /\beta)^{j}}{\Gamma(j) } q^{j-1} e^{-\pi \lambda q/\beta} \nonumber \\
&  \hspace{15mm} \times \bigg( \frac{\beta}{1 + \theta   r^{\alpha}_{\rm t}   q  ^{-\alpha/2} } + 1 - \beta \bigg)^{\!b}  \mathrm{d} q   ,   \nonumber  
\end{align}  
where $(e)$ adopts the substitution $Q_{j}=r^2_{j}$ and $(f)$ follows from $Q_{j} \sim \mathcal{G}(j,\beta/\pi \lambda)$.

Summarizing the above results, we have the final results in Theorem~\ref{thm:SP_MCP_field}.
\end{proof}

\subsection*{B.  Proof of Theorem~\ref{thm:2D_SP_CCU} (Moments of the CSP$_\Phi$ of the Typical Cell-center User in a Poisson Downlink Network)}


\begin{proof}
Let $\Phi_{\rm c} \triangleq \{\Phi \mid o\in \mathcal{R}_{\rm c}\}$.
Given a Poisson point process $\Phi_{\rm c}$, the CSP of the typical cell-center user is
\begin{align}
 & \bar{F}^{\rm c}_{\eta|\Phi_{\rm c}} (\theta) \nonumber \\
  & = \mathbb{P} \big[ h > \theta \|x_{1}\|^{\alpha} I_{o} \mid 
  \Phi_{\rm c} \big] \nonumber \\
& = \mathbb{E} \bigg[ \exp \bigg( \! -  \theta  \| x_{1}\| ^{\alpha}   \sum^{\infty}_{ j=2 }    h_{j} \|x_{j}\|^{-\alpha} \bigg) \Bigm| 
\Phi_{\rm c} \bigg] \nonumber \\
& \overset{(a)}{=} \mathbb{E} \bigg[ \exp \bigg( \! -  \theta   \sum^{\infty}_{ j=2 }    h_{j} \varrho_{j}^{\alpha} \bigg) \Bigm| 
\Phi^{\rm R}_{\rm c} \bigg] \nonumber \\
& \overset{(b)}{=}  \prod^{\infty}_{\substack{ j= 2 \\   \varrho_{2} \leq \rho    }  }   \frac{1}{ 1 + \theta \varrho^{\alpha}_{j}  }  
, \label{eqn:CSP_CCU1}
\end{align}
where $(a)$ changes the PPP to its RDP, i.e., $\rho_{j}=\|x_{1}\|/\|x_{j}\|$, and the condition $\varrho_{2} \leq \rho$ in $(b)$ implies $o\in \mathcal{R}_{\rm c}$. 

Then, the $b$-th moments of the CSP$_\Phi$ in a Poisson downlink network can be derived as 
\begin{align}
 & \mathcal{M}^{\rm c}_{P_{\rm s}}(b) \nonumber \\ 
&  = \mathbb{E}    \Bigg[  \prod^{\infty}_{ j = 2   }  \bigg( \frac{1}{ 1 +  \theta \varrho_{j}^{\alpha} }  \bigg)^{\!b}   
    \bigm |     \varrho_{2}  \leq \rho   \Bigg]  \nonumber \\ 
  & = \mathbb{E}    \Bigg[  \prod^{\infty}_{ j = 2   }  \bigg( \frac{1}{ 1 +  \theta \rho^{\alpha} \big(\frac{\varrho_{j}}{\rho}\big)^{\alpha} }  \bigg)^{\!b}   
        \bigm |     \varrho_{2}  \leq \rho   \Bigg]  \nonumber \\ 
  & \overset{(b)}{=} \mathbb{E}    \Bigg[  \prod^{\infty}_{ j = 2   }  \bigg( \frac{1}{ 1 +  \theta \rho^{\alpha}  \varrho_{j} ^{\alpha} }  \bigg)^{\!b}      
            \Bigg]  \nonumber \\       
 & \overset{(c)}{=} \bigg( 1 + 2 \int^{1}_{0}  \bigg( 1 -  \Big( 1 -  \frac{\theta \rho^{\alpha} \varrho^{\alpha}}{ 1 + \theta \rho^{\alpha} \varrho^{\alpha} } \Big)^{b}  \bigg)   \varrho^{-3} \mathrm{d} \varrho \bigg)^{-1}     \nonumber \\
 & \overset{(d)}{=} \! \bigg(1  - 2  \sum^{\infty}_{k=1} \! {b \choose k} (- \theta \rho^{\alpha}  )^{k}  \nonumber \\
 & \hspace{20mm} \times \int^{1}_{0} \Big(\frac{\varrho^{\alpha}}{1 + \theta \rho^{\alpha} \varrho^{\alpha}} \Big)^{k} \varrho^{-3} \mathrm{d} \varrho  \bigg)^{-1} \nonumber \\
 & \overset{(e)}{=} \bigg(1 - 2  \sum^{\infty}_{k=1} {b \choose k}  \frac{ (- \theta \rho^{\alpha}  )^{k}}{\alpha}  \nonumber \\
  & \hspace{20mm} \times \int^{1}_{0} \Big(\frac{ x }{1 \!+\! \theta \rho^{\alpha} x } \Big)^{k} x^{-\frac{2}{\alpha}-1} \mathrm{d} x  \bigg)^{-1} \nonumber \\
&  \overset{(f)}{=}  \bigg(  1 -  \sum^{\infty}_{k=1} {b \choose k}  (- \theta \rho^{\alpha}  )^{k}\frac{\delta}{k   - \delta} \nonumber \\
& \hspace{10mm} \times {_2} F_{1} \big(k,k-\delta; k+1-\delta; -\theta \rho^{\alpha}\big) \bigg)^{-1},  
 \label{eqn:moments_CCU1}
\end{align} 
where $(b)$ holds due to the fact that 
the probability law of $\varrho_{2}$ conditioned on the probability
law of $\varrho_{2}, \varrho_{3}, \dots$ conditioned on $\varrho_{2} < \rho$  is the same as the law of $\varrho_{2},\varrho_{3},\dots$ without conditioning 
and $(c)$ applies the PGFL of the RDP of a PPP, $(d)$ takes the binomial expansion of  
$\big( 1 - \frac{\theta \rho^{\alpha} \varrho^{\alpha} }{ 1 + \theta \rho^{\alpha} \varrho^{\alpha} } \big)^{b} $ for $b \in \mathbb{C}$, $(e)$ applies the change of variable $x=\varrho^{\alpha}$, and $(f)$ substitutes $\frac{2}{\alpha}$ with $\delta$.
   
After some mathematical manipulations of (\ref{eqn:moments_CCU1}), we have the final result in Theorem~\ref{thm:2D_SP_CCU}. 
\end{proof}

\subsection*{C. Proof of Theorem~\ref{thm:vertex} (Moments of the CSP$_\Phi$ for the Typical Vertex User in a Poisson Downlink Network)}

\begin{proof}
Based on the stationarity of a PPP, we have the success probability of the typical vertex user conditioned on $\Phi$ as
\begin{align} 
&\bar{F}^{\rm v}_{\eta|\Phi}(\theta) \nonumber\\
& = \mathbb{E} \big[ e^{- \theta r^{\alpha}_{1} I_{o}} \mid r_{1}=r_{2}=r_{3} , \Phi \big] \nonumber \\ 
&  =  \mathbb{E}  \Bigg[ \exp \Bigg( -  \theta r^{\alpha}_{1} \bigg( h_{1} r^{-\alpha}_{2} + h_{2}r^{-\alpha}_{3}  \nonumber \\ 
& \hspace{40mm} +  \sum^{\infty}_{ j = 4 }  h_{j} r^{-\alpha}_{j} \bigg) \Bigg)  \Bigm|  \Phi \Bigg]  \nonumber \\ 
& = \big( 1 \! + \!  \theta  \big)^{\!-2 }      \prod^{\infty}_{ j = 4 }     \frac{1}{ 1 +  \theta r^{\alpha}_{1} r^{-\alpha}_{j} }     . \label{eqn:CCP_vertex}
\end{align}

With $\bar{F}^{\rm v}_{\eta|\Phi}(\theta)$ in (\ref{eqn:CCP_vertex}), the moments of the CSP$_\Phi$ can be derived as 
\begin{align}
&\mathcal{M}^{\rm v}_{P_{\rm s}}(b)  \nonumber \\ 
& =   \Big( 1   +    \theta  \Big)^{\! -2 b }   \mathbb{E}  \Bigg[   \prod^{\infty}_{ j = 4 }     \bigg( \frac{1}{1  +    \theta \|x_{1}\|^{\alpha} \|x_{j}\|^{-\alpha} }   \bigg)^{\!b}   \Bigg] \nonumber  \\ 
& \overset{(a)}{=} \Big( 1 \!+\!  \theta   \Big)^{-2 b}   \nonumber\\ 
& \hspace{2mm}
\times \mathbb{E}_{r_{1} } \! \Bigg[  \exp \! \bigg( \!\! -  2 \pi \lambda  \! \int^{\infty}_{r_{1}} \!\! \bigg ( 1 \!-\!  \bigg( \frac{1}{1 \!+\!  \theta r^{\alpha}_{1} x^{-\alpha}} \! \bigg)^{\!b}  \bigg) x \mathrm{d} x \bigg) \! \Bigg] \nonumber \\ 
& \overset{(b)}{=}  \Big( 1 \! +\! \theta \Big)^{\!-2 b} 2 \pi^2 \lambda^2 \!\!\int^{\infty}_{  0}\!\!   \exp(   - \lambda \pi r^2) r^3   \nonumber \\
&  \hspace{2mm} \times \exp \! \bigg( \!\! -  2 \pi  \lambda \! \int^{\infty}_{r} \!\! \bigg ( 1 \!-\!  \bigg( \frac{1}{1 \!+\!  \theta r^{\alpha}  x^{-\alpha} } \! \bigg)^{\!b}  \bigg) x \mathrm{d} x  \! \bigg)  \mathrm{d} r,  \label{M_CP_worstcase}
\end{align}
where $(a)$ applies 
\cite[Lemma 1]{J.1995Meche} and the PGFL of a PPP, and $(b)$ plugs in the PDF $f^{\rm v}_{r_{1}}(r)$ given in (\ref{eqn:PDF_VU_2D}).

After some mathematical manipulations of (\ref{M_CP_worstcase}), 
 we have the final results in Theorem~\ref{thm:vertex}.
\end{proof}

\subsection*{D.  Proof of Theorem~\ref{thm:shadowing} (Laplace Transform of Interference with Independent and Correlated Shadowing)}

\begin{proof}
We first derive the Laplace transform of the interference with correlated shadowing as follows. 
\begin{align}  
 &  \mathcal{L}_{I^{\rm Cor}_{o}  } (s) \nonumber \\
 & = \mathbb{E}  \big[ \exp \big( - s I^{\rm Cor}_{o} \big)    \big]   \nonumber \\
& =   \mathbb{E}  \bigg[ \exp \Big( - s \sum_{ j \in \mathbb{N}}   h_{j}  \ell(\|x_{j}\|)    S_{j} \Big)    \bigg]  \nonumber \\& \overset{(a)}{=} \mathbb{E}  \bigg[  \sum_{ k \in \mathbb{N}} \sum_{ x_{j} \in \mathbb{S}_{k} }  \frac{1}{ 1 + s    \ell(\|x_{j}\|)  T_{k}   }   \bigg]  \nonumber \\
& \overset{(b)}{=} \mathbb{E}_{(T_{k})} \! \Bigg[ \exp \! \Bigg( \! - \lambda  \sum_{k \in \mathbb{N}}    \int_{\mathbb{S}_{k} } \! \!  \bigg( 1  -       \frac{1}{1\!+\!s   \ell(\|x_{j}\|)   T_{k}  }    \bigg)     \mathrm{d} x \Bigg) \Bigg],  \nonumber 
\end{align}
where $(a)$ follows as $h_{j} \sim \mathcal{E}(1)$ and the points within the same cell share the same shadowing coefficient, and $(b)$ applies the PGFL of a PPP and holds as the accumulative interference of each shadowing cell $\mathbb{S}_{k}$ is averaged over the same random variable $T_{k}$ for $x_{j} \in \mathbb{S}_{k}$.

Similarly, the Laplace transform of $I^{\rm Ind}_{o}$ is given by 
\begin{align} 
& \mathcal{L}_{I^{\rm Ind}_{o}  } (s) \nonumber \\ &   = \mathbb{E}  \big[ \exp \big( - s I^{\rm Ind}_{o} \big)    \big]   \nonumber \\
&  =     \mathbb{E}  \bigg[ \exp \Big( - s \sum_{ j \in \mathbb{N}}  h_{j} \ell(\|x_{j}\|)   S_{j}  \Big)   \bigg]  \nonumber \\
& \overset{(c)}{=} \!   \exp  \! \Bigg( \! - \! \lambda  \sum_{k \in \mathbb{N}}   \int_{\mathbb{S}_{k}} \! \! \bigg(  1  -   \mathbb{E}_{T_{k}} \bigg[   \frac{1}{1\!+\!s    \ell(\|x_{j}\|)    T_{k} }   \bigg] \! \bigg)      \mathrm{d} x \! \Bigg), \nonumber 
\end{align} 
where $(c)$ follows as $S_{j}$ is an i.i.d mark for all $x_{j} \in \mathbb{S}_{k}$ 
with independent shadowing, with CDF $F_{k}$, as $T_{k}$.
\end{proof}

\subsection*{E. Proof of Theorem~\ref{thm:DSP} (Success Probability in a Poisson Downlink Network with Unsaturated Queues)}

\begin{proof}
To bypass the difficulty of modeling coupled queue statuses, we assume the typical receiver observes temporally-independent interference across different time slots. 
If each interfering BS is active with independent probability $p_{\rm A} = \mathbb{E}[\iota_{j}]$, the success probability can be derived as
\begin{align} 
P_{\rm s} &  = \mathbb{P} [ \eta > \theta ] \nonumber \\
& \overset{(a)}{\approx} \mathbb{E} \bigg[ \frac{ h_{1} \|x_{1}\|^{-\alpha} }{ \sum^{\infty}_{ j = 2 } \iota_{j}    h_{j} \|x_{j}\| ^{-\alpha} } > \theta  \bigg] \nonumber \\
& \overset{(b)}{=} \mathbb{E}  \Bigg[ \prod^{\infty}_{ j = 2 }    \bigg (  \frac{p_{\rm A}}{1 + \theta \|x_{1}\|^{\alpha}  \|x_{j}\|^{-\alpha} }  + 1 - \!p_{\rm A}  \bigg)  \Bigg]  \nonumber  \\ %
& \overset{(c)}{=} \mathbb{E}  \Bigg[ \prod^{\infty}_{ j = 2 }    \bigg (  \frac{p_{\rm A}}{1 + \theta \varrho_{j}^{\alpha} }  + 1 - p_{\rm A}  \bigg)  \Bigg] \nonumber \\
&   \overset{(d)}{=} \frac{1}{ 1 + 2 \int^{1}_{0}     \frac{p_{\rm A} \theta \varrho^{\alpha}}{ 1 + \theta \varrho^{\alpha} }    \varrho^{-3} \mathrm{d} \varrho }  \nonumber \\
&  \overset{(e)}{=}  \frac{1}{ 1 + \delta p_{\rm A} \theta   \int^{1}_{0} \frac{x}{1+ \theta x} x^{-\delta-1} \mathrm{d} x   }  \nonumber \\
& = \frac{1}{ 1 + \frac{ \delta  p_{\rm A} \theta}{1-\delta} {_2}F_{1} (1,1- \delta; 2 - \delta ;-\theta)  } \nonumber \\
&   \overset{(f)}{=} \frac{1}{ 1 - p_{\rm A} + p_{\rm A}   {_2}F_{1} (1, - \delta; 1 - \delta ; -\theta)  },
\label{eqn:CP_DL_queue}
\end{align}
where $(a)$ applies the independent interference assumption, $(b)$ follows as $h_{1}, h_{j} \sim \mathcal{E}(1)$, $(c)$ changes the PPP to its RDP, $(d)$ applies the PGFL of an RDP given in~(\ref{eqn:PGFL_RDP}), $(e)$ changes the variable $x= \varrho^\alpha$  and substitutes $\frac{2}{\alpha}$ with $\delta$. 

As can be seen from (\ref{eqn:CP_DL_queue}), $p_{\rm A}$ is the only unknown parameter.  To obtain $p_{\rm A}$, we need to analyze the probability that each BS is active from the perspective of queue dynamics.
As a BS sends out a packet once its selected queue is not empty, the active probability of the BS is equivalent to the utilization of its queue.
Since the service rate of a BS is equivalent to the success probability, with random scheduling, 
we have the service rate of the typical queue at a BS conditioned on $\Phi_{\rm B}$ as
\begin{align} 
\mu = \frac{1}{N_{\rm u}}  \mathbb{P} [ \eta > \theta \mid \Phi_{\rm B}],
\end{align} 
where $N_{\rm u}$ denotes the number of users associated with the typical BS.

According to the law of total probability, the mean service rate of any user's queue at its serving BS is calculated as
\begin{align}
\bar{\mu} &  = \sum^{\infty}_{n=1} p_{ N_{\rm u} }(n)  \mu   \nonumber \\
& =   \sum^{\infty}_{n=1} p_{ N_{\rm u} }(n)    \frac{ \mathbb{P} [\eta > \theta] }{n}, 
\label{eqn:mean_service_rate}
\end{align}
where $ \mathbb{P} [\eta > \theta]$ has been obtained in (\ref{eqn:CP_DL_queue}), $p_{ N_{\rm u} }(n)$ denotes the PDF of $N_{\rm u}$. In a Poisson downlink network, $p_{ N_{\rm u} }(n)$ is approximated as \cite{L.2018Wang}
\begin{align} 
p_{ N_{\rm u} }(n) \approx \frac{\nu^{\nu} \Gamma(n+\nu) (\lambda_{\rm u}/\lambda_{\rm B})^n}{n!\Gamma(\nu)(\lambda_{\rm u}/\lambda_{\rm B} + \nu)^{n+\nu}}, \nonumber 
\end{align}
where $\nu=3.5$. 

From  (\ref{eqn:CP_DL_queue}) and (\ref{eqn:mean_service_rate}), the mean service rate $\bar{\mu}$ is a function of active probability $p_{\rm A}$.
Based on the random scheduling, the active probability of a BS equals the utilization factor of the queue for the typical general user. Thus, we can establish a fixed-point equation as follows:
\begin{align}
p_{\rm A} = \frac{\xi_{\rm u} }{\bar{\mu}} =  \frac{  \xi_{\rm u} }{\sum^{\infty}_{n=1} p_{N_{\rm u}} (n)  P_{\rm s} /n }.  \nonumber
\end{align}  
  
Subsequently, we have the success probability of a Poisson downlink network with unsaturated queues given in Theorem \ref{thm:DSP}.
\end{proof}

\subsection*{F. Proof of Theorem \ref{thm:SP_MCP_field} (Success Probability in a Poisson Bipolar Network) }

\begin{proof}
Using the assumption that each transmitter is active with independent probability, the success probability in Poisson bipolar networks is given by 
\begin{align}  
  P_{\rm s} 
& \overset{(a)}{\approx} \mathbb{E}  \bigg[ \frac{ h_{\rm t} \|x_{\rm t}\|^{-\alpha} }{ \sum_{j \in \mathbb{N} \backslash \{ \varsigma(x_{\rm t}) \} }  \iota_{j}   h_{j} \| x_{j} \|^{-\alpha} } > \theta  \bigg] \nonumber \\ 
&   \overset{(b)}{=}  \mathbb{E} \bigg[   \prod_{ j \in \mathbb{N} } \bigg( \frac{p_{\rm A}}{1+ \theta \|x_{\rm t}\|^{\alpha} \| x_{j} \|^{-\alpha} } + 1 - p_{\rm A} \bigg)   \bigg] \nonumber \\
&  \overset{(c)}{=}  \exp \bigg(  \! \! -  2 \pi \lambda p_{\rm A}    \int^{\infty}_{0} \! \!    \frac{ \theta r^{\alpha}_{\rm t} x^{-\alpha+1}  }{1 + \theta r^{\alpha}_{\rm t} x^{-\alpha}  }      \mathrm{d} x    \bigg)  \nonumber \\ 
&  \overset{(d)}{=}  \exp \bigg(  \! \! -    \pi \lambda p_{\rm A} \delta    \int^{\infty}_{0} \! \!    \frac{ \theta r^{\alpha}_{\rm t} u^{ \frac{2}{\alpha}-1}  }{ u + \theta r^{\alpha}_{\rm t}    }      \mathrm{d} u   \bigg)  \nonumber \\  
& \overset{(e)}{=} \exp \Big(  - p_{\rm A} \lambda \pi r^{2}_{\rm t} \theta^{\delta} \Gamma(1 +   \delta )  \Gamma(1 -  \delta) \Big), \label{eqn:CP_queue}
\end{align}
where $(a)$ applies the independent interference assumption, $(b)$ holds 
due to Slivnyak's theorem,   $(c)$ applies the PGFL of a PPP and employs the substitution $\delta=2/\alpha$, $(d)$ applies the change of varaible $u=x^{\alpha}$, and $(e)$ follows from~\cite[Eq. 3.196.2]{S.Gradshteyn2007} that $\int^{\infty}_{0}  \frac{u^{\delta-1}}{u+A} \mathrm{d} u = A^{\delta} \Gamma(1-\delta) \Gamma(\delta)$.    


Since the service rate of the typical transmitter is equivalent to the success probability of a time slot, $p_{\rm A}$ can be written as $\min\{ \frac{\xi}{ P_{\rm s} }, 1\}$ which is a well-known result from Geo/Geo/1 queue~\cite{I.2004Atencia}. Subsequently, we can establish the following equation.
\begin{align}
& \min \big\{ \xi /  P_{\rm s}, 1 \big\}\!  \nonumber \\ 
& = \min \!\bigg\{ \!\frac{\xi}{ \exp \big( - p_{\rm A} \lambda \pi r^2_{\rm t} \theta^{\delta}    \Gamma(1\!+\!\delta) \Gamma(1\!-\!\delta) \big) }, 1  \bigg\} \nonumber\\
&= p_{\rm A}. \label{eqn:equation_PA}
\end{align}

Since $P_{\rm s}$ is a function of $P_{A}$ as shown in (\ref{eqn:CP_queue}), solving the equation in (\ref{eqn:equation_PA}) yields 
\begin{align}
\!\!\! p_{\rm A}= \min \!\bigg\{ \! - \frac{ \mathcal{W} \big(\! - \xi \lambda \pi r^2_{\rm t}  \theta^{\delta}    \Gamma(1\!+\!\delta) \Gamma(1\!-\! \delta ) \big) }{ \lambda \pi r^2_{\rm t} \theta^{\delta}  \Gamma(1+\delta) \Gamma(1-\delta) }  , 1 \bigg\} \label{P_A},
\end{align}
where $\mathcal{W}$ denotes the Lambert-$\mathcal{W}$ function \cite{M.1997Corless}. 

 By inserting $p_{\rm A}$ in (\ref{P_A}) into the expression of $P_{\rm s}=\frac{\xi}{p_{\rm A}}$, we have the final results presented in Theorem~\ref{thm:buffer_bipolar}. 
\end{proof}

\subsection*{G. Proof of Theorem \ref{thm:M_ESP} (Moments of the End-to-End CSP$_\Phi$)}

\begin{proof}
From (\ref{def:SP_relaying}), the end-to-end CSP$_{\Phi}$ of $M$-hop relaying given a Poisson field of interferers can be expressed as 
\begin{align} 
&\bar{F}_{\eta|\Phi }(\theta)  \nonumber \\
& \! =  \!  \mathbb{P} [ h_{1} \! > \! \theta d^{\alpha}_{1}  I_{1},  h_{2} \! > \! \theta  d^{\alpha}_{2} I_{2}, \ldots \!, h_{M} \! > \! \theta  d^{\alpha}_{M} I_{M} \mid \Phi_{m}  ]    \nonumber\\
& \! = \!   \mathbb{E}_{(h_{j,m})}   \Bigg[ \! \exp \! \bigg( \! - \theta  \! \sum^{M}_{m=1}  \sum_{ j \in \mathbb{N} }  \! \frac{  h_{j,m} d^{\alpha}_{m} }{  \| x_{j,m} \! - \! y_{m} \|^{\alpha} } \! \bigg) \Bigm|  \Phi_{m}  \Bigg] ,   \nonumber \\ 
& \! = \!    \prod_{ j \in \mathbb{N} }  \prod^{M}_{m=1}  \frac{1}{ 1 + \theta d^{\alpha}_{m} \|   x_{j,m} - z_{m} \|^{-\alpha}  }    .  \label{eqn:ESP_PPP} 
\end{align} 

With the end-to-end success probability with a Poisson field of interferers in (\ref{eqn:ESP_PPP}), we further compute the $b$-th moment of it as in (\ref{eqn:M_SP_Poifield}), 
where $(a)$ follows as $h_{j,m}\sim \mathcal{E}(1)$ and $(b)$ applies the PGFL of a PPP.    
\begin{figure*} \vspace{-5mm}
\begin{align}
 \mathcal{M}^{\textup{Poi}}_{P_{\rm s}}(b,M)  
& \overset{(a)}{=}    \mathbb{E}_{  \Phi }  \Bigg[  \prod_{ j \in \mathbb{N} }   \prod^{M}_{m=1}  \bigg(  \frac{1}{  1 + \theta d^{\alpha}_{m} \|   x_{j,m}   - z_{m} \|^{-\alpha} } \bigg)^{\!b}   \Bigg ] , \hspace{10mm}  
\nonumber \\   
& \overset{(b)}{=}  \begin{dcases}  \exp \Bigg( \!\! - \lambda \int_{\mathbb{R}^2} \bigg(  1  -   \prod^{M}_{m=1} \bigg(  \frac{1}{ 1 + \theta d^{\alpha}_{m} \|   x   - z_{m} \|^{-\alpha}  } \bigg)^{\!b} \bigg)   \mathrm{d} {  x}  \Bigg) ,   \hspace{10mm} \, \mathrm{QSI}  \\
\exp \Bigg( \!\! - \lambda  \prod^{M}_{m=1} \int_{\mathbb{R}^2}  \bigg(  1  -   \bigg( \frac{1}{  1 + \theta d^{\alpha}_{m} \|  x   - z_{m} \|^{-\alpha}  } \bigg)^{\!b} \bigg)   \mathrm{d} {  x}  \Bigg), \hspace{10mm}  \mathrm{FVI}  
\end{dcases}
\label{eqn:M_SP_Poifield}
\end{align}
\hrulefill 
\end{figure*} 
\end{proof}

\subsection*{H. Proof of Theorem~\ref{thm:Retx_JSP} (JSP of Multiple Transmissions)}

\begin{proof}
The JSP for $K$ transmissions is given as 
\begin{align}  
 \mathcal{J}_{K}   & \! =  \mathbb{P} \bigg[ \bigcap^{K}_{k=1} \Big \{ h^{(k)} > \theta r^{\alpha}_{\rm t} I^{(k)}_{o} \Big \}   \bigg]
\nonumber\\ 
& \overset{\text{(a)}}{=} \mathbb{E}   \bigg[  \prod^{K}_{k=1} \exp \Big(  - \theta r^{ \alpha }_{\rm t}     I^{(k)}_{o}       \Big) \bigg]  \nonumber \\   
& =  \mathbb{E} 
\bigg[   \prod^{K}_{k=1}  \exp  \bigg( \! - \theta r^{ \alpha }_{\rm t}    \sum_{j \in \mathbb{N} } h^{(k)}_{j} \|x^{(k)}_{j}\|^{-\alpha}  \bigg) \bigg]  , 
\end{align} 
where $(a)$ follows as $h^{(k)}\sim \mathcal{E}(1)$ is independent across different time slots.
 
In the scenarios with QSI, 
the JSP can be expressed as 
 \begin{align}
  \mathcal{J}^{\textup{QSI}}_{K}   & \!\overset{\text{(b)}}{=}\!   \mathbb{E}  \Bigg[   \prod_{ j \in \mathbb{N} }   \bigg( \frac{1}{ \! 1 + \theta r^{ \alpha }_{\rm t}  \|x^{(k)}_{j}\|^{-\alpha}  }   \bigg)^{\!\!K}    \Bigg]  \nonumber \\
 & \!\overset{\text{(c)}}{=}  \!  \exp \Bigg(\! - 2 \pi \lambda \! \int^{\infty}_{0} \! \! \bigg( \! 1 \!- \! \bigg( \frac{1}{  1+ \theta r^{\alpha}_{\rm t} x^{-\alpha}     } \bigg)^{\!\!K}  \!   \bigg)   x \mathrm{d} x  \! \Bigg),   \label{eqn:JCP_1}
 \end{align} 
 where $(b)$ follows as 
 $K$ transmissions are subject to the same point process of interferers, and $(c)$ applies the PGFL of a PPP. As~(\ref{eqn:JCP_1}) is equivalent to the moments of the CSP$_\Phi$ in Poisson ad hoc networks in (\ref{eqn:Moments_MCP_PPP_fields}) with $b$ replaced by $K$, we have the final results of $\mathcal{J}^{\textup{QSI}}_{K}$ in (\ref{eqn:J_QSI}). 
 
 Moreover, in the scenarios with FVI, as different transmissions are affected by independent point processes, the JSP can be derived as  
  \begin{align} 
  \mathcal{J}^{\textup{FVI}}_{K} & =   \prod^{K}_{k=1} 
   \mathbb{E}_{  h^{(k)}_{j}}   \Bigg[  \exp  \bigg( \! - \theta r^{ \alpha    }_{\rm t}   \!   \sum_{ j \in \mathbb{N} }   h^{(k)}_{j}   \|x^{(k)}_{j}\|^{-\alpha}    \bigg) \Bigg]  \nonumber \\
  & \overset{(d)}{=}  \prod^{K}_{k=1}  \mathbb{E}    \Bigg[    \prod_{ j \in \mathbb{N} }     \frac{1}{ \! 1 + \theta r^{ \alpha  }_{\rm t}       \|x^{(k)}_{j}\|^{-\alpha}  }    \Bigg] \nonumber \\
  & \overset{(e)}{=}   \exp \Bigg( \!\!- 2 \pi \lambda \int^{\infty}_{0}  \bigg(  1 -   \frac{1}{  1+ \theta r^{\alpha}_{\rm t} x^{-\alpha}     }      \bigg)  x \mathrm{d} x  \Bigg)^{\!\!K}  \nonumber \\
  & \overset{(f)}{=}  \exp \Big( -  c \lambda     r^2_{\rm t} \theta^{\delta}  K  \Big),   \nonumber
  \end{align}  
   where $(d)$ holds as $h^{(k)}_{j} \sim \mathcal{E}(1)$, $(e)$ follows as the point process of interferers across different transmissions are i.i.d. and applies the PGFL of a PPP, and $(f)$ follows the same steps of the derivations of 
   (\ref{eqn:CP_queue}). 
\end{proof}

\subsection*{I. Proof of Corollary~\ref{cor:typeII} (Success Probability of Type-II HARQ)}

\begin{proof}
When $K=2$, the success probability with Type-II HARQ-CC given in (\ref{eqn:Type_II}) can be rewritten as
\begin{align} 
\mathcal{P}_{\mathrm{II}} & = \mathbb{P} \big[ \eta^{(1)} > \theta  \big] + \mathbb{P} \big[ \eta^{(1)} + \eta^{(2)} > \theta , \eta^{(1)} < \theta  \big] \nonumber \\ 
&  =  \mathbb{P} \big[ \eta^{(1)} > \theta  \big]  \nonumber \\
& \hspace{10mm} +   \mathbb{E} \Big[ \mathbb{P}  \big[ \eta^{(2)}> \theta - \eta^{(1)} \mid  \eta^{(1)} \big]    \mathbbm1 _{ \{ \eta^{(1)} < \theta  \} }  \Big]  \nonumber  \\
& = \mathcal{J}_{1} +     \int^{T}_{0} \mathbb{P}  \big[ \eta^{(2)} > \theta - u    \big]  f_{\eta^{(1)}  } ( u )   \mathrm{d} u \Big] , \label{eqn:CP_typeII}  
\end{align}
where $\mathcal{J}_{1}$ is given in Theorem~\ref{thm:Retx_JSP} and $f_{\eta^{(1)}  } ( t )$ represents the PDF of $\eta^{(1)}$.

Then, we derive the conditional CDF of $\eta^{(1)}$ given $\Phi^{(1)}$ as follows 
\begin{align}
\mathbb{P} \big[ \eta^{(1)} \leq  t   \big] 
& =   \mathbb{E} \bigg[ 1 - \exp \Big( \! - u r^{\alpha}_{\rm t}  \sum_{ j \in \mathbb{N} } h^{(1)}_{j} \|x_{j} \|^{-\alpha}   \Big) \bigg]  \nonumber \\
& =   1 - \prod_{ j \in \mathbb{N} }    \frac{1}{ 1+ u r^{\alpha}_{\rm t}    r_{j}  ^{-\alpha}} . \label{eqn:CDF_eta}
\end{align} 

Subsequently, the PDF of $\eta^{(1)}$ can be derived by taking the derivative of (\ref{eqn:CDF_eta}) with respective to $u$ as follows. 
\begin{align} 
 f_{\eta^{(1)}} (u) 
 & = \frac{\partial \Big(  1 - \prod_{j \in \mathbb{N}  }    \frac{1}{ 1+ u r^{\alpha}_{\rm t}  r_{j}^{-\alpha}}  \Big) }{ \partial u } \nonumber \\
& = \sum_{ j \in \mathbb{N}  } \frac{ r^{\alpha}_{\rm t} r_{j}^{-\alpha}   }{ ( 1+ u r^{\alpha}_{\rm t}   r_{j}^{-\alpha} )^{2} }  \prod^{i \neq j }_{ i \in \mathbb{N} }   \frac{1}{ 1+ u r^{\alpha}_{\rm t}  r_{i}^{-\alpha}}. \label{eqn:PDF_eta}
\end{align}

By plugging (\ref{eqn:PDF_eta}) into the second term of (\ref{eqn:CP_typeII}), we have (\ref{eqn:secondterm_CP_typeII}), 
\begin{figure*}
\vspace{-3mm}
\begin{align} 
&  \int^{T}_{0} \mathbb{P}  \big[ \eta^{(2)} > \theta - u    \big]  f_{\eta^{(1)}  } ( u )   \mathrm{d} u \Big]   \nonumber \\
& = \! \begin{dcases}   \int^{T}_{0} \! \mathbb{E} \Bigg[  \sum_{ j \in \mathbb{N}  } \frac{ r^{\alpha}_{\rm t} r_{j}^{-\alpha}    }{ ( 1+ u r^{\alpha}_{\rm t}   r_{j}^{-\alpha} )^{2} } 
\frac{1}{  1 +  ( \theta - u ) r^{\alpha}_{\rm t}  r_{j}^{-\alpha} }    \prod^{i \neq j}_{ i  \in \mathbb{N}  }  \frac{1}{ 1+ u r^{\alpha}_{\rm t}    r_{i}^{-\alpha}}   \frac{1}{  1 +  ( \theta - u ) r^{\alpha}_{\rm t}  r_{i}^{-\alpha} }   \Bigg]   \mathrm{d} u , \hspace{15mm} \textup{QSI}   \\
 \int^{T}_{0} \! \mathbb{E} \Bigg[  \prod_{ j^{\prime} \in  \mathbb{N} }      \frac{1}{  1 +  ( \theta - u ) r^{\alpha}_{\rm t}  r_{j^{\prime}} ^{-\alpha} } \sum_{ j \in \mathbb{N}  } \frac{ r^{\alpha}_{\rm t} r^{-\alpha}_{j}   }{ ( 1+ u r^{\alpha}_{\rm t}   r_{j}^{-\alpha} )^{2} }  \prod^{i \neq j}_{ i \in \mathbb{N}  }   \frac{1}{ 1+ u r^{\alpha}_{\rm t}  r_{i}^{-\alpha}}  \Bigg]   \mathrm{d} u  ,   \hspace{35mm} \textup{FVI}
 \end{dcases}  \nonumber \\
  & \overset{(a)}{=} \! \begin{dcases} \!2 \pi \lambda    \!  \int^{T}_{0} \! \! \int^{\infty}_{0} \!\!  \frac{ r^{\alpha}_{\rm t} r^{-\alpha}   }{ ( 1+ u r^{\alpha}_{\rm t}   r^{-\alpha} )^{2} }     \frac{1}{  1 +  ( \theta - u ) r^{\alpha}_{\rm t}  r^{-\alpha}  }  r \mathrm{d} r   \mathbb{E} \Bigg[   \prod^{i \neq j}_{ i  \in  \mathbb{N} }   \frac{1}{  1 +  u r^{\alpha}_{\rm t}  r_{i}^{-\alpha} }    \frac{1}{  1 +  ( \theta - u ) r^{\alpha}_{\rm t}  r_{i}^{-\alpha} } \Bigg]  \mathrm{d} u, \hspace{3mm} \textup{QSI}  \\
  \!2 \pi \lambda    \!  \int^{T}_{0} \! \! \mathbb{E} \Bigg[  \prod_{j^{\prime} \in  \mathbb{N}  }  \frac{1}{  1 +  ( \theta - u ) r^{\alpha}_{\rm t}  x_{j^{\prime}}^{-\alpha}  } \Bigg] \int^{\infty}_{0} \!\!  \frac{ r^{\alpha}_{\rm t} r^{-\alpha}   }{ ( 1+ u r^{\alpha}_{\rm t}   r^{-\alpha} )^{2} }   r \mathrm{d} r   \mathbb{E} \Bigg[ \prod_{ i \in  \mathbb{N}}    \frac{1}{  1 +  u r^{\alpha}_{\rm t}  r_{i}^{-\alpha} }   \Bigg]   \mathrm{d} u, \hspace{16mm} \textup{FVI} ,
  \end{dcases} \label{eqn:secondterm_CP_typeII}
 \end{align}
   \hrulefill
 \end{figure*}
 where $(a)$ applies the Campbell-Mecke formula given in~(\ref{eqn:Campbell-Mecke}). 
 
 Subsequently, by computing the expectations in (\ref{eqn:secondterm_CP_typeII}) based on the PGFL of the PPP, we have the final results in {\bf Corollary}~\ref{cor:typeII}.
\end{proof}

\bibliographystyle{IEEEtran}

\begin{thebibliography}{}
\providecommand{\url}[1]{#1}
\csname url@samestyle\endcsname
\providecommand{\newblock}{\relax}
\providecommand{\bibinfo}[2]{#2}
\providecommand{\BIBentrySTDinterwordspacing}{\spaceskip=0pt\relax}
\providecommand{\BIBentryALTinterwordstretchfactor}{4}
\providecommand{\BIBentryALTinterwordspacing}{\spaceskip=\fontdimen2\font plus
\BIBentryALTinterwordstretchfactor\fontdimen3\font minus
  \fontdimen4\font\relax}
\providecommand{\BIBforeignlanguage}[2]{{%
\expandafter\ifx\csname l@#1\endcsname\relax
\typeout{** WARNING: IEEEtran.bst: No hyphenation pattern has been}%
\typeout{** loaded for the language `#1'. Using the pattern for}%
\typeout{** the default language instead.}%
\else
\language=\csname l@#1\endcsname
\fi
#2}}
\providecommand{\BIBdecl}{\relax}
\BIBdecl

\end{thebibliography}


\begin{thebibliography}{99} 
 
 
  \bibitem{E.2014Hossain}
 E. Hossain, M. Rasti, H. Tabassum, and A. Abdelnasser, ``Evolution toward 5G multi-tier cellular wireless networks: An interference management perspective," \emph{IEEE Wireless Communications}, vol. 21, no. 3, pp. 118-127, Jun. 2014.
 
 \bibitem{M.2015Peng} 
 M. Peng, C. Wang, J. Li, H. Xiang, and V. Lau, ``Recent advances in underlay heterogeneous networks: Interference control, resource allocation, and self-organization," \emph{IEEE Communications Surveys $\&$ Tutorials}, vol. 17, no. 2, pp. 700-729, Second Quarter 2015.
 
    \bibitem{M.2013Haenggic}
  M. Haenggi, {\em Stochastic Geometry for Wireless Networks}. Cambridge, U.K.: Cambridge Univ. Press, 2013.

 
    \bibitem{Wei2016H}
  H. Wei, N. Deng, W. Zhou, and M. Haenggi, ``Approximate SIR analysis in general heterogeneous cellular networks," \emph{IEEE Transactions on Communications}, vol. 64, no. 3, pp. 1259-1273, Jan. 2016.
  
  \bibitem{L.2018Wang}
 J. S. Ferenc and Z. N\'eda, ``On the size distribution of Poisson Voronoi cells," \emph{Phys. A, Statist. Mech. Appl.}, vol. 385, no. 2, pp. 518-526,  Nov. 2007.
  
 
 \bibitem{S.2012Dhillon} 
 H. S. Dhillon, R. K. Ganti, F. Baccelli, and J. G. Andrews, ``Modeling and analysis of K-tier downlink heterogeneous cellular networks," \emph{IEEE Journal on Selected Areas in Communications}, vol. 30, no. 3, pp. 550-560, Mar. 2012. 
 
 \bibitem{Lee2012C} 
C. Lee and M. Haenggi, ``Interference and outage in Poisson cognitive networks," \emph{IEEE Transactions on Wireless Communications}, vol. 11, no. 4, pp. 1392-1401, Apr. 2012.
  
  \bibitem{A.2015H}
  A. H. Sakr and E. Hossain, ``Cognitive and energy harvesting-based D2D communication in cellular networks: Stochastic geometry modeling and analysis,"  \emph{IEEE Transactions on Communications}, vol. 63, no. 5, pp. 1867-1880, Mar. 2015. 
  
  
   \bibitem{K.2009Ganti}
   R. K. Ganti and M. Haenggi, ``Interference and outage in clustered  wireless ad hoc networks,"  \emph{IEEE Transactions on Information Theory}, vol. 55, no. 9, pp. 4067-4086, Sept. 2009.

 
  
 
 \bibitem{M.2008Hunter} 
 A. M. Hunter, J. G. Andrews, and S. Weber, ``Transmission capacity of ad hoc networks with spatial diversity," \emph{IEEE Transactions on Wireless Communications}, vol. 7, no. 12, pp. 5058-5071, Dec. 2008.  
  
 \bibitem{F.2009Baccelli2} 
 F. Baccelli, B. Blaszczyszyn, and P. Muhlethaler, ``Stochastic analysis of spatial and opportunistic aloha," \emph{IEEE Journal on Selected Areas in Communications}, vol. 27, no. 7, pp. 1105-1119,  Sept. 2009.  
  


   
  \bibitem{A.2007Hasan}
   A. Hasan and J. G. Andrews, ``The guard zone in wireless ad hoc networks," \emph{IEEE Transactions on Wireless Communications},  vol. 6, no. 3, pp. 897-906, Mar. 2007.
  

  
    
  \bibitem{Ravi2016Vishnu}
V. V. C. Ravi and H. S. Dhillon, ``Downlink coverage analysis for a finite 3-D wireless network of unmanned aerial vehicles," \emph{IEEE Transactions on Communications}, vol. 65, no. 10, pp. 792-794, Oct. 2017.
  
  \bibitem{M.2020Azari}
    M. M. Azari, G. Geraci, A. Garcia-Rodriguez, and S. Pollin, ``UAV-to-UAV communications in cellular networks," \emph{IEEE Transactions on Wireless Communications}, 
  vol. 19, no. 9, pp. 6130-6144, Sept. 2020.
  
  \bibitem{Lahmeri2019Mohamed-Amine}
  M. A. Lahmeri, M. A. Kishk, and M. S. Alouini, ``Stochastic geometry-based analysis of airborne base stations with laser-powered UAVs," \emph{IEEE Communications Letters}, vol. 24, no. 1, pp. 173-177, Oct. 2019.
  
 
 \bibitem{Jeya2020Pradha}
 J. P. Jeyaraj and M. Haenggi, ``Cox models for vehicular networks: SIR performance and equivalence," \emph{IEEE Transactions on Wireless Communications}, vol. 20, no. 1, pp. 171-185, Jan. 2021.
 
 \bibitem{V.2020Chetlur}
 V. V. Chetlur and H. S. Dhillon, ``On the load distribution of vehicular users modeled by a Poisson line Cox process," \emph{IEEE Wireless Communications Letters}, vol. 9, no. 12, Dec. 2020.
  
    
  \bibitem{Konstantinos2020Koufos}
  K. Koufos and C. P. Dettmann, ``Performance of a link in a field of vehicular interferers with hardcore headway distance," \emph{IEEE Transactions on Intelligent Transportation Systems}, to appear.
  
   \bibitem{N.2020Okati}  
  N. Okati, T. Riihonen, D. Korpi, I. Angervuori, and R. Wichman, 
  ``Downlink coverage and rate analysis of low earth orbit satellite constellations using stochastic geometry," \emph{IEEE Transactions on Communications}, vol. 68, no. 8, Aug. 2020. 
  
  
  
  \bibitem{Talgat2020Anna}
  A. Talgat, M. A. Kishk, and M. S. Alouini, ``Stochastic geometry-based analysis of LEO satellite communication systems," \emph{IEEE Communications Letters}, to appear.
  
   \bibitem{A.2005Goldsmith}
  A. Goldsmith, \emph{Wireless communications}. Cambridge University Press, 2005.
 
  \bibitem{R.2009Ganti} 
  R. Ganti and M. Haenggi, ``Spatial and temporal correlation of the interference in ALOHA ad hoc networks", \emph{IEEE Communications Letters}, vol. 13, no. 9, pp. 631-633, Sept. 2009.
  
 
  
  
  \bibitem{U2011Schilcher}
   U. Schilcher, C. Bettstetter, and G. Brandner, ``Temporal correlation of interference in wireless networks with Rayleigh block fading," \emph{IEEE Transactions on Mobile Computing}, vol. 11, no. 12, pp. 2109-2120, Nov. 2011.
 

 \bibitem{Z.2014Gong} 
  Z. Gong and M. Haenggi, ``Interference and outage in mobile random networks: Expectation, distribution and correlation," \emph{IEEE Transactions on Mobile Computing}, vol. 13, no. 2, pp. 337-349, Feb. 2014.
  
  \bibitem{K.2018Koufos}
   K. Koufos and C. P. Dettmann, ``Temporal correlation of interference and  outage in mobile networks over one-dimensional finite regions," \emph{IEEE Transactions on Mobile Computing}, vol. 17, no. 2, pp. 475-487, Feb. 2018.
 
 
 
   \bibitem{M.Aug.2017Gharbieh}
   M. Gharbieh, H. ElSawy, A. Bader, and M. S. Alouini, ``Spatiotemporal stochastic modeling of IoT enabled cellular networks: Scalability and  stability analysis,"  \emph{IEEE Transactions on Communications}, vol. 65, no. 8, pp. 3585-3600, May 2017.
 
 
  \bibitem{H.Dec.2017Yang} 
   H. H. Yang, G. Geraci, Y. Zhong, and T. Q. S. Quek, ``Packet throughput analysis of static and dynamic TDD in small cell networks," \emph{IEEE Wireless Communications Letters}, vol. 6, no. 6, pp. 742-745, Aug. 2017.
   
   \bibitem{J.Sep.2018Li}
   J. Li, A. Huang, H. Shan, H. H. Yang, and T. Q. S. Quek, ``Analysis of packet throughput in small cell networks under clustered dynamic TDD," \emph{IEEE Transactions on Wireless Communications}, vol. 17, no. 9, pp. 5729-5742, Jun. 2018.
 
   \bibitem{M.2013Haenggib}
   M. Haenggi, ``The local delay in Poisson networks," \emph{IEEE Transactions on Information Theory}, vol. 59, no. 3, pp. 1788-1802, Mar. 2013.
 
    
   \bibitem{Y.Jun.2017Zhong}
   Y. Zhong, T. Q. S. Quek, and X. Ge, ``Heterogeneous cellular networks with spatio-temporal traffic: Delay analysis and scheduling," \emph{IEEE Journal on Selected Areas in Communications}, vol. 35, no. 6, pp. 1373-1386, Jun. 2017.
   
   
   
     
       \bibitem{S.2009Zuyev}
    S. Zuyev, ``Stochastic geometry and telecommunications networks," in \emph{Proc. Stochastic Geometry: Highlights, Interactions and New Perspectives}, 2009.
     
    \bibitem{JGAndrews2010}
   J. G. Andrews, R. K. Ganti, M. Haenggi, N. Jindal,   and S. Weber, ``A primer on spatial modeling and analysis in wireless networks," \emph{IEEE Communications Magazine}, vol. 48, no. 11, pp. 156-163, Nov. 2010.
     
     
   \bibitem{ElSawy2013H}
   H. ElSawy, E. Hossain, and M. Haenggi, ``Stochastic geometry for modeling, analysis, and design of multi-tier and cognitive cellular wireless networks: A survey," \emph{IEEE Communications Surveys $\&$ Tutorials}, vol. 15, no. 3, pp. 996-1019, Third Quarter 2013.
   
   
      \bibitem{M.Haenggi2009}
      M. Haenggi, J. G. Andrews, F. Baccelli, O. Dousse,  and M. Franceschetti, 
      ``Stochastic geometry and random graphs for the analysis and design of wireless networks," \emph{IEEE Journal on Selected Areas in Communications}, vol. 27, no. 7, pp.  1029-1046, Aug. 2009.
   
    \bibitem{J.G.Andrews2016}
       J. G. Andrews, A. L. Gupta, and H. S. Dhillon, ``A primer on cellular network analysis using stochastic geometry," Available online at {\em arXiv:1604.03183}.
   
    \bibitem{N.Miyoshi2016}
       N. Miyoshi and T. Shirai, ``Spatial modeling and analysis of cellular networks using the Ginibre point process: A tutorial," \emph{IEICE Transactions on Communications}, vol. 99, no. 11, pp. 2247-2255, 2016. 
   
   \bibitem{ElSawy2016Hesham}
         H. ElSawy, A. K. Sultan-Salem, M. S. Alouini, and M. Z. Win, ``Modeling and analysis of cellular networks using stochastic geometry: A tutorial," \emph{IEEE Communications Surveys $\&$ Tutorials}, vol. 19, no. 1, pp. 167-203, First Quarter 2017.
      
   
 
 
 
  
   \bibitem{P.2010Cardieri}   
   P. Cardieri, ``Modeling interference in wireless ad hoc networks," \emph{IEEE
 Communications  Surveys $\&$ Tutorials}, vol. 12, no. 4, pp. 551-572, Fourth quarter
        2010.
  
  
  \bibitem{lattice} 
         M. Haenggi and R. K. Ganti, \emph{Interference in Large Wireless Networks}. Now Publishers Inc, 2009. 
  
  \bibitem{Y.2021Hmamouche} 
      Y. Hmamouche, M. Benjillali, S. Saoudi, H. Yanikomeroglu, and M. D. Renzo, ``New trends in stochastic geometry for wireless networks: A tutorial and survey,"     
      \emph{Proceedings of the IEEE}, to appear.
  
  
  
  
   \bibitem{J.Mar2013G}
     J. G. Andrews, ``Seven ways that HetNets are a cellular paradigm shift,"
      {\em IEEE Communications Magazine}, vol. 51, no. 3, pp. 136-144, Mar. 2013.
   
 
 

 
    
   
   
    
  
    
    
    
 
   
    
     \bibitem{U.2016Schilcher}
                U. Schilcher, S. Toumpis, M. Haenggi, A. Crismani, G. Brandner, and C. Bettstetter, ``Interference functionals in Poisson networks", \emph{IEEE Transactions on Information Theory}, vol. 62, pp. 370-383, Jan. 2016.
                
    
       \bibitem{Yu2019Xinlei}
       X. Yu, Q. Cui, Y. Wang, and M. Haenggi, ``The joint and product meta distributions of the SIR and their application to secrecy and cooperation", \emph{IEEE Transactions on Wireless Communications}, vol. 19, no. 7, pp. 4408-4423, July 2020.
   
   
      \bibitem{M.Part1Haenggi}
        M. Haenggi, ``Meta distributions-Part 1: Definition and examples", \emph{IEEE Communications Letters}. Accepted. Invited Paper
        
        \bibitem{M.Part2Haenggi}
       M. Haenggi, ``Meta distributions-Part 2: Properties and interpretations", \emph{IEEE Communications Letters}. Accepted. Invited Paper. 
    
     
     \bibitem{G.2014Nigam}
    G. Nigam, P. Minero, and M. Haenggi, ``Coordinated multipoint joint transmission in heterogeneous networks," \emph{IEEE Transactions on Communications}, vol. 62, no. 11, pp. 4134-4146, Oct. 2014.
    
   \bibitem{M.2014Haenggi} 
   M. Haenggi, ``ASAPPP: A simple approximate analysis framework for heterogeneous cellular networks," in \emph{Keynote presentation at the 2014 Workshop on Heterogeneous and Small Cell Networks (HetSNets’14)}, Dec. 2014. [Online]. Available:  https://www3.nd.edu/$\sim$mhaenggi/talks/hetsnets14.pdf. 
      
      
        
     \bibitem{Afshang2017Mehrnaz}
     M. Afshang and H. S. Dhillon, ``Fundamentals of modeling finite wireless networks using binomial point process," \emph{IEEE Transactions on Wireless Communications}, vol. 16, no. 5, pp. 3355-3370, Mar. 2017.   
     
     
        
   \bibitem{M2011Haenggi}    M. Haenggi, ``Mean interference in hard-core wireless networks,"    \emph{IEEE Communications Letters}, vol. 15, no. 8, pp. 792-794, Aug. 2011.    
      
   \bibitem{H.2013ElSawy}
   H. ElSawy and E. Hossain, ``A modified hard core point process for analysis of random CSMA wireless networks in general fading environments," \emph{IEEE Transactions on Communications}, vol. 61, no. 4, pp. 1520-1534, Apr. 2013.
   
   \bibitem{G.2014Alfano}
   G. Alfano, M. Garetto, and E. Leonardi, ``New directions into the stochastic geometry analysis of dense CSMA networks," \emph{IEEE Transactions on Mobile Computing}, vol. 13, no. 2, pp. 324-336, Feb. 2014.
   
   \bibitem{U.2020Schilcher2}
   U. Schilcher, J. F. Schmidt and C. Bettstetter, ``On Interference Dynamics in Mat\'{e}rn Networks,"  \emph{IEEE Transactions on Mobile Computing}, vol. 19, no. 7, pp. 1677-1688,   July 2020.
   
   
    \bibitem{Z.2016Yazdanshenasan}
    Z. Yazdanshenasan, H. S. Dhillon, M. Afshang, and  P. H. J. Chong, ``Poisson hole process: Theory and applications to wireless networks," \emph{IEEE Transactions on Wireless Communications}, vol. 15, no. 11, pp. 7531-7546, Oct. 2016.
   
   
       
       
      \bibitem{N.Jan.2015Deng}
        N. Deng, W. Zhou, and M. Haenggi, ``The Ginibre point process as a model for wireless networks with repulsion," \emph{IEEE Transactions on Wireless Communications}, vol. 14, no. 1, pp. 107-121, Jan. 2015.
           
   

   
           
   \bibitem{X.2016Lu}
              X. Lu, I. Flint, D. Niyato, N. Privault, and P. Wang, ``Self-sustainable communications with RF energy harvesting: Ginibre point process modeling and analysis," \emph{IEEE Journal on Selected Areas in Communications}, vol. 34, no. 5, pp. 1518-1535, Apr. 2016.         
       
 
   
   \bibitem{A.2016Al-Hourani}
   A. Al-Hourani, R. J. Evans, and S. Kandeepan, ``Nearest neighbor distance distribution in hard-core point processes," \emph{IEEE Communications Letters},  vol. 20, no. 9, pp. 1872-1875, Sept. 2016.   
   
  
   \bibitem{D.2016Mankar}
    P. D. Mankar, G. Das, and S. S. Pathak, ``Modeling and coverage analysis of BS-centric clustered users in a random wireless network," \emph{IEEE Wireless Communications Letters}, vol. 5, no. 2, pp. 208-211, Jan. 2016.
    
    
    
    
    \bibitem{C.2017Saha}
    C. Saha, M. Afshang, and H. S. Dhillon, ``Enriched $ K $-tier HetNet model to enable the analysis of user-centric small cell deployments," \emph{IEEE Transactions on Wireless Communications}, vol. 16, no. 3, pp. 1593-1608, Jan. 2017.
    
   \bibitem{Azimi-Abarghouyi2018}
    S. M. Azimi-Abarghouyi, B. Makki, M. Haenggi, M. Nasiri-Kenari, and T. Svensson, ``Stochastic geometry modeling and analysis of single-and multi-cluster wireless networks," \emph{IEEE Transactions on Communications}, vol. 66, no. 10, pp. 4981-4996, May 2018.
    
    \bibitem{C2017Saha}
   C. Saha, M. Afshang, and H. S. Dhillon, ``3GPP-inspired HetNet model using Poisson cluster process: Sum-product functionals and downlink coverage," \emph{IEEE Transactions on Communications}, vol. 66, no. 5, pp. 2219-2234, Dec. 2017.
   
     
    \bibitem{A2016Guo}
    A. Guo, Y. Zhong, W. Zhang, and M. Haenggi, ``The Gauss-Poisson process for wireless networks and the benefits of cooperation," \emph{IEEE Transactions on Communications}, vol. 64, no. 5, pp. 1916-1929, Apr. 2016.
    
\bibitem{N2018Deng}
 N. Deng and M. Haenggi, ``The benefits of hybrid caching in Gauss-Poisson D2D networks," \emph{IEEE Journal on Selected Areas in Communications}, vol. 36, no. 6, pp. 1217-1230, Jun. 2018.
  
   \bibitem{Chetlur2018}
 V. V. Chetlur and H. S. Dhillon, ``Coverage analysis of a vehicular network modeled as Cox process driven by Poisson line process," \emph{IEEE Transactions on Wireless Communications}, vol. 17, no. 7, pp. 4401-4416, Apr. 2018. 
 
 \bibitem{Y2013Jeong}
   Y. Jeong, J. W. Chong, H. Shin, and M. Z. Win, ``Intervehicle communication: Cox-Fox modeling," \emph{IEEE Journal on Selected Areas in Communications},   vol. 31, no. 9, pp. 418-433,  July  2013.      
 
  \bibitem{M.2017Haenggi}
    M. Haenggi, ``User point processes in cellular networks," \emph{IEEE Wireless Communications Letters}, vol. 6, no. 2, pp. 258-261, Feb. 2017.
      
    \bibitem{G.2011Andrews}
   J. G. Andrews, F. Baccelli, and R. K. Ganti, ``A tractable approach to coverage and rate in cellular networks," \emph{IEEE Transactions on Wireless Communications}, vol. 59, no. 11, pp. 3122-3134, Nov. 2011.
   

  \bibitem{D.2010Gesbert}
D. Gesbert, S. Hanly, H. Huang, S. S. Shitz, O. Simeone, and W. Yu, ``Multi-cell MIMO cooperative networks: A new look at interference," \emph{IEEE Journal on Selected Areas in Communications}, vol. 28, no. 9, pp. 1380-1408, Dec. 2010.


\bibitem{H2017Yang}
H. H. Yang, G. Geraci, T. Q. S. Quek, and J. G. Andrews, ``Cell-edge-aware precoding for downlink massive MIMO cellular networks," \emph{IEEE Transactions on Signal Processing},  vol. 65, no. 13, pp. 3344-3358,  Apr. 2017.

\bibitem{ICIC}
4G Americas, ``4G mobile broadband evolution: 3GPP release 10 and beyond-HSPA+ SAE/LTE and LTE-advanced," \emph{4G Americas White Papers}, vol. 59, Feb. 2011. 

\bibitem{J.2010Zhang}
J. Zhang and J. G. Andrews, ``Adaptive spatial intercell interference cancellation in multicell wireless networks," \emph{IEEE Journal on Selected Areas in Communications}, vol. 28, no. 9, pp. 1455-1468, Oct. 2010.

  \bibitem{T.2002Rappaport}
 T. Rappaport, {\em Wireless Communications: Principles and Practices}, (2nd edition) Prentice Hall, 2002.
    
    
 \bibitem{M.1991Gudmundson}
  M. Gudmundson, ``Correlation model for shadow fading in mobile radio systems," \emph{IEEE Electronics Letters}, vol. 27, no. 23, pp. 2145-2146, Nov. 1991.
  
  \bibitem{L.2018Ahumada}
L. Ahumada, E. Carreño, A. Anglès, and D. Schkolnik ``Shadowing correlation: Empirical results for mm-wave wireless links in urban street canyons," \emph{IEEE Antennas and Wireless Propagation Letters}, vol. 17, no. 4, pp. 543-546, Feb. 2018.

   
    \bibitem{V.1978Graziano} 
    V. Graziano, ``Propagation correlations at 900 MHz," \emph{IEEE Transactions on Vehicular Technology}, vol. 27, no. 4, pp. 182-189, Nov. 1978.
    
    
    \bibitem{Van1987Rees}
    J. V. Rees, ``Cochannel measurements for interference limited small cell planning," \emph{Arch. Elek. Ubertragung}, vol. 41, no. 5, pp. 318-320, 1987.
    
    
           
    
    
    
   \bibitem{Z.2009Win}
   M. Z. Win, P. C. Pinto, and L. A. Shepp, ``A mathematical theory of network interference and its applications," \emph{Proceedings of the IEEE}, vol. 97, no. 2,  pp. 205-230, Mar. 2009.
   
   \bibitem{Krishnan2017}
   S. Krishnan and H. S. Dhillon,  ``Spatio-temporal interference correlation and joint coverage in cellular networks," \emph{IEEE Transactions on Wireless Communications}, vol.16, no.9, pp. 5659-5672, Jun. 2017.
  
     
    
   \bibitem{A.Oct.2015Crismani}
   A. Crismani, S. Toumpis, U. Schilcher, G. Brandner, and C. Bettstetter, ``Cooperative relaying under spatially and temporally correlated interference," \emph{IEEE Transactions on Vehicular Technology}, vol. 64, no. 10, pp. 4655-4669, Oct. 2015.
  
  
  
    \bibitem{M.Apr.2016Haenggi}
          M. Haenggi, ``The meta distribution of the SIR in Poisson bipolar and cellular networks," \emph{IEEE Transactions on Wireless Communications}, vol. 15, no. 44, pp. 2577-2589, Apr. 2016.
        
  

  
 
   \bibitem{Baccelli2010} 
       F. Baccelli and B. B\l{aszczyszyn}, ``A new phase transitions for local delays in MANETs," in \emph{Proc. IEEE INFOCOM}, San Diego, CA, USA,  Mar. 2010.
       
  
   \bibitem{M.2013Haenggia}
    M. Haenggi and R. Smarandache, ``Diversity polynomials for the analysis of temporal correlations in wireless networks," {\em IEEE Transactions on Wireless Communications}, vol. 12, no. 11, pp. 5940-5951, Nov. 2013.
    
   
  
      
     \bibitem{M.2012Haenggi}
          M. Haenggi, ``Diversity loss due to interference correlation," \emph{IEEE Commun. Lett.}, vol. 16, no. 10, pp. 1600-1603, Oct. 2012.  
  
 
     
    
   
  
  
 
  
    \bibitem{M.Dec.2014Haenggi}
    M. Haenggi, ``The mean interference-to-signal ratio and its key role in cellular and amorphous networks," \emph{IEEE Wireless Communications Letters}, vol. 3, no. 6, pp. 597-600, Dec. 2014.
  
  
    \bibitem{K.Mar.2016Ganti}
    R. K. Ganti and M. Haenggi, ``Asymptotics and approximation of the SIR distribution in general cellular networks," \emph{IEEE Transactions on Wireless Communications}, vol. 15, no. 3, pp. 2130-2143, Mar. 2016.
    
    
     \bibitem{G.2015Nigam}
       G. Nigam, P. Minero, and M. Haenggi, ``Spatiotemporal cooperation in heterogeneous cellular networks,"  \emph{IEEE Journal on Selected Areas in Communications}, vol. 33, no. 6, pp. 1253-1265, Mar. 2015. 
    
        \bibitem{A.2017Kishk}
       M. A. Kishk and H. S. Dhillon, ``Joint uplink and downlink coverage analysis of cellular-based RF-powered IoT network," \emph{IEEE Transactions on Green Communications and Networking}, vol. 2, no. 2, pp. 446-459, Dec. 2017.
      
    
       \bibitem{Singh2015Sarabjot}
        S. Sarabjot, X. Zhang, and J. G. Andrews, ``Joint rate and SINR coverage analysis for decoupled uplink-downlink biased cell associations in HetNets," \emph{IEEE Transactions on Wireless Communications}, vol. 14, no. 10, pp. 5360-5373, May 2015.
    
    
      \bibitem{U.2020Schilcher}
       U. Schilcher, J. F. Schmidt, M. K. Atiq and C. Bettstetter, ``Autocorrelation and coherence time of interference in Poisson networks," \emph{IEEE Transactions on Mobile Computing}, vol. 19, no. 7, pp. 1506-1518, July 2020.
    
    
   \bibitem{N.2013Chiu}
    S. N. Chiu, D. Stoyan, W. S. Kendall, and J. Mecke, \emph{Stochastic Geometry and Its Applications}. New York, NY, USA: Wiley, 2013.
      
   \bibitem{M.2017Afshang}
      M. Afshang, C. Saha, and H. S. Dhillon, ``Nearest-neighbor and contact distance distributions for Mat\'{e}rn cluster process," \emph{IEEE Wireless Communications Letters}, vol. 21, no. 12, pp. 2686-2689, Dec. 2017.
   
   \bibitem{M.Haenggi2005}
   M. Haenggi, ``On distances in uniformly random networks," \emph{IEEE Transactions on Information Theory}, vol. 51, no. 10, pp. 3584-3586, Sept. 2005.     
     
     \bibitem{D.2012Moltchanov}
        D. Moltchanov, ``Distance distributions in random networks," \emph{Ad Hoc Networks}, vol, 10, no. 6, pp. 1146-1166, Aug. 2012.
    
    
    \bibitem{X.Dec.2014Zhang}
                X. Zhang and M. Haenggi, ``A stochastic geometry analysis of inter-cell interference coordination and intra-cell diversity," \emph{IEEE Transactions on Wireless Communications}, vol. 13, no. 12, pp. 6655-6669, Dec. 2014.
                
   
    
       
    
     
    
   
    
    
    
    
   
    
    \bibitem{N.2014Miyoshi} 
    N. Miyoshi and T. Shirai, ``A cellular network model with Ginibre configurated  base stations," \emph{Adv. Appl. Probab.}, vol. 46, no. 3, pp. 1-12, 2014. 
    
    
       \bibitem{X.2000Brown}
       T. X. Brown, ``Cellular performance bounds via shotgun cellular systems," \emph{IEEE J. Sel. Areas Commun.}, vol. 18, no. 11,
       pp. 2443-2455, Nov. 2000.
  
    \bibitem{B.2009Hough}
      J. B. Hough, ``Zeros of Gaussian analytic functions and determinantal point processes," \emph{American Mathematical Society}, vol. 51, RI, USA, 2009.  
  
  \bibitem{S.2015Gomez}
  J. S. Gomez, A. Vasseur, A. Vergne, P. Martins, L. Decreusefond, and W. Chen, ``A case study on regularity in cellular network deployment," \emph{IEEE Wireless Commun. Lett.}, vol. 4, no. 4, pp. 421-424,
  Aug. 2015.
  
  
    
      \bibitem{M.2020Haenggi}   
            M. Haenggi, ``SIR analysis via signal fractions", \emph{IEEE Communications Letters}, vol. 24, pp. 1358-1362, July 2020.     
 
   \bibitem{Gil-Pelaez1951}
   J. Gil-Pelaez, ``Note on the inversion theorem,"  \emph{Biometrika}, vol. 38, no. 3-4, pp. 481-482, Dec. 1951.
   
 \bibitem{A.2015Guo} 
    A. Guo and M. Haenggi, ``Asymptotic deployment gain: A simple approach to characterize the SINR distribution in general cellular networks," \emph{IEEE Transactions on Communications}, vol. 63, pp. 962-976, Mar. 2015.
   
    \bibitem{S.2019Kalamkar}
      S. S. Kalamkar and M. Haenggi, ``Simple approximations of the SIR meta distribution in  general cellular networks," \emph{IEEE Transactions on Communications}, vol. 67, no. 6, pp. 4393-4406, Feb. 2019.
   
   \bibitem{N.2020EPPDeng}
 N. Deng and M. Haenggi, ``The energized point process as a model for wirelessly powered communication networks," \emph{IEEE Transactions on Green Communications and Networking},  vol. 4, no. 3, pp. 832-844, Sept. 2020.
 
 \bibitem{K.2014Stamatiou}
 K. Stamatiou and M. Haenggi, ``Traffic management in random cellular networks," \emph{Information Theory and Applications Workshop}, San Diego, CA, Feb. 2014.
 
 
  
 
  
   
   
 
   
   \bibitem{K.2019Feng2}
   K. Feng and M. Haenggi, ``A location-dependent base station cooperation scheme for cellular networks," \emph{IEEE Transactions on Communications}, vol. 67, no. 9, pp. 6415-6426, Sept. 2019.    
      
      \bibitem{K.2019Feng} 
        K. Feng and M. Haenggi, ``On the location-dependent SIR gain in cellular networks," \emph{IEEE Wireless Communications Letters}, vol. 8, pp. 777-780, Jun. 2019.
      
  
  
  
    \bibitem{L.2005Muche}
    L. Muche, ``The Poisson-Voronoi tessellation: Relationships for edges," \emph{Advances in Applied Probability}, vol. 37, pp. 279-296, Jun. 2005.
   
   \bibitem{Y.Aug.2013Jung} 
    S. Y. Jung, H. K. Lee, and S. L. Kim, ``Worst-case user analysis in Poisson Voronoi cells," \emph{IEEE Communications Letters}, vol. 17, no. 8, PP. 1580-1583, Aug. 2013.
  
  
    
 
  
    
            
    
  \bibitem{J.toappearLee} 
    J. Lee and F. Baccelli, ``On the effect of shadowing correlation on wireless network performance," in \emph{Proc. IEEE INFOCOM 2018 - IEEE Conference on Computer Communications}, Honolulu, HI, USA, Apr. 2018.
    
    \bibitem{T.2014Bai}
    T. Bai, R. Vaze, and R. W. Heath, ``Analysis of blockage effects on urban
  cellular networks,"  \emph{IEEE Transactions on Wireless Communications},
  vol. 13, no. 9, pp. 5070-5083,  Sept.  2014.
    
    \bibitem{F.May2015Baccelli}
    F. Baccelli and X. Zhang, ``A correlated shadowing model for urban wireless networks," in \emph{Proc. IEEE Conference on Computer Communications (INFOCOM)}, Kowloon, Hong Kong, May 2015.
    
    \bibitem{X.Dec.2015Zhang}
    X. Zhang, F. Baccelli, and R. W. Heath, ``An indoor correlated shadowing model," in \emph{Proc. IEEE Global Telecommunications Conference (GLOBECOM’ 15)}, San Diego, CA, USA, Dec. 2015.
    
    \bibitem{J.Nov.2016Lee}
    J. Lee, X. Zhang, and F. Baccelli, ``A 3-D spatial model for in-building wireless networks with correlated shadowing," \emph{IEEE Transactions on Wireless Communications}, vol. 15, no. 11, pp. 7778-7793, Nov. 2016.
   
     
  
     
    \bibitem{T.2018Kimura} 
    T. Kimura and H. Saito, ``Spatio-temporal correlation of interference in MANET under spatially correlated shadowing environment," \emph{IEEE Transactions on Mobile Computing},  vol. 20, no. 4, pp. 1642-1655, April 2021. 
    
    
    \bibitem{K.Feng2021}
     K. Feng and M. Haenggi, ``Joint spatial-propagation modeling of cellular networks based on the directional radii of Poisson Voronoi cells," \emph{IEEE Transactions on Wireless Communications}, vol. 20, no. 5, pp. 3240-3253, May 2021. 
     
      \bibitem{H.2019Yang}
    H. H. Yang and T. Q. S. Quek, ``Spatiotemporal analysis for SINR coverage in small cell networks," \emph{IEEE Trans. Commun.}, vol. 67, no. 8, pp. 5520 - 5531, May 2019.
    
    \bibitem{H2021Yang}
    H. H. Yang, T. Q. S. Quek, and H. V. Poor, ``A unified framework for SINR analysis in Poisson networks with traffic dynamics," {\em IEEE Trans. Commun.}, vol. 69, no. 1, pp. 326-339, Jan. 2021.
    
    
    \bibitem{Y.2018Zhong}
    Y. Zhong, X. Ge, T. Han, Q. Li, and J. Zhang, ``Tradeoff between delay and physical layer security in wireless networks," \emph{IEEE Journal on Selected Areas in Communications}, vol. 36, no. 7, pp. 1635-1647, July 2018.  
    
   \bibitem{Howard2018Yang}
      C. Bordenave, D. McDonald, and A. Proutiere, ``Asymptotic stability region of slotted ALOHA," \emph{IEEE Transactions on Information Theory}, vol. 58, no. 9, pp. 5841-5855, Sept. 2012.
     
     
      \bibitem{R.July2013Tanbourgi}
      G. Farhadi and N. C. Beaulieu, ``On the ergodic capacity of multi-hop wireless relaying systems," \emph{IEEE Transactions on Wireless Communications},  no. 8, vol. 5, pp. 2286-2291, May 2009.
      
    
    \bibitem{B2}
    B. B\l{aszczyszyn} and P. M\"{u}hlethaler, ``Random linear multihop relaying in a general field of interferers using spatial ALOHA," \emph{IEEE Transactions on Wireless Communications}, vol. 14, no. 7, pp. 3700-3714, Jul. 2015.
   
   
   
     \bibitem{O.2003Hasna}
     M. O. Hasna and M. S. Alouini, ``Outage probability of multihop transmission over Nakagami fading channels," \emph{IEEE Commun. Lett.}, vol. 7, no. 5, pp. 216-218, May 2003.
     
     \bibitem{J.2013Lee}
     J. Lee, H. Shin, I. Lee, and J. Heo, ``Optimal linear multihop system for DF relaying in a Poisson field of interferers," \emph{IEEE communications letters},   vol. 17, no. 11, pp. 2029-2032, Nov. 2013. 
      
      \bibitem{X.2019Lu}
      X. Lu, E. Hossain, H. Jiang, and G. li, ``On coverage probability with Type-II HARQ in large-scale uplink cellular networks," \emph{IEEE Wireless Communications Letters}, vol. 9, no. 1, pp. 3-7, Jan. 2020.
      
 
  \bibitem{N2014Tehrani}
   M. N. Tehrani, M. Uysal, and H. Yanikomeroglu, ``Device-to-device communication in 5G cellular networks: challenges, solutions, and future directions," \emph{IEEE Communications Magazine}, vol. 52 no. 5, pp. 86-92, May 2014.
   
   
   \bibitem{Laha2015A}
  N. Vastardis and K. Yang, ``Mobile social networks: Architectures, social properties, and key research challenges," {\em IEEE Communications Surveys $\&$ Tutorials},  vol. 15, no. 3, pp. 1355-1371,  Third Quarter 2013.  
   

  \bibitem{Tabassum2019}
   H. Tabassum, M. Salehi, and E. Hossain, ``Fundamentals of mobility-aware performance characterization of cellular networks: A tutorial," \emph{IEEE Communications Surveys $\&$ Tutorials}, vol. 21, no.3,  pp. 2288-2308, Third Quarter 2019. 
   
     
   
    
     \bibitem{Haenggi2010}
     M. Haenggi, ``Local delay in static and highly mobile Poisson networks with ALOHA," in \emph{Proc. IEEE International Conference on Communications}, Cape Town, South Africa, IEEE, May 2010.
   
   \bibitem{X.June2020Tang}
   X. Tang, X. Xu, and M. Haenggi, ``Meta distribution of the SIR in moving networks", \emph{IEEE Transactions on Communications}, vol. 68, pp. 3614-3626, June 2020. 
   
    
   
   
   \bibitem{N.2019Deng}
   N. Deng and M. Haenggi, ``The energy and rate meta distributions in wirelessly powered D2D networks," \emph{IEEE Journal on Selected Areas in Communications}, vol. 37, no. 2, pp. 269-282, Feb. 2019.
      
      
      \bibitem{S.2019Kalamkar2}
      S. S. Kalamkar and M. Haenggi, ``Per-link reliability and rate control: Two facets of the SIR meta distribution," \emph{IEEE Wireless Communications Letters}, vol. 8, no. 4, pp. 1244-1247, Aug. 2019.
      
      \bibitem{J.2018Tang}  
      J. Tang, G. Chen, and J. P. Coon,  ``Meta distribution of the secrecy rate in the presence of randomly located eavesdroppers," \emph{IEEE Wireless Communications Letters}, vol. 7, no. 4, pp. 630-633, Aug. 2018.
      
      
       \bibitem{A.2013Bash}
     B. A. Bash,  D. Goeckel, and D. Towsley, ``Limits of reliable communication with low probability of detection on AWGN channels," \emph{IEEE Journal on Selected Areas in Communications}, vol. 31, no. 9, pp. 1921-1930, Aug. 2013.
      
      \bibitem{X.Oct2020Lu}
      X. Lu, E. Hossain, T. Shafique, S. Feng, H. Jiang, and D. Niyato, ``Intelligent reflecting surface enabled covert communications in wireless networks," {\em IEEE Network}, vol. 34, no. 5, pp. 148-155, Sept./Oct. 2020.
      
      
     \bibitem{S.2012Kaul}
     S. Kaul, R. Yates, and M. Gruteser, ``Real-time status: How often should one update," in \emph{Proc. 2012 IEEE INFOCOM}, Orlando, FL, USA, Mar. 2012.
     
     
     \bibitem{H.2020Yang} 
     H. H. Yang, A. Arafa, T. Q. S. Quek, and V. H. Poor, ``Optimizing information freshness in wireless networks: A stochastic geometry approach," \emph{IEEE Transactions on Mobile Computing}, vol. 20, no. 6, June 2021.
     
     
     \bibitem{Bastopcu2020Melih}
     M. Bastopcu and S. Ulukus, ``Who should Google scholar update more often?,"  Available online at {\em arXiv:2001.11500}.
     
     \bibitem{Lin2006Xiaojun}
     X. Lin, N. B. Shroff, and R. Srikant, ``A tutorial on cross-layer optimization in wireless networks," \emph{IEEE Journal on Selected areas in Communications}, vol. 24, no. 8, pp. 1452-1463, Jul. 2006.
     
     \bibitem{R.2017Li}
     R. Li, Z. Zhao, X. Zhou, G. Ding, Y. Chen, Z. Wang, and H. Zhang, ``Intelligent 5G: When cellular networks meet artificial intelligence,” \emph{IEEE Wireless Communication}, vol. 24, no. 5, pp. 175-183, Oct. 2017.
   
   \bibitem{Sun2019Yaohua} 
    Y. Sun, M. Peng, Y. Zhou, Y. Huang, and S. Mao, ``Application of machine learning in wireless networks: Key techniques and open issues," \emph{IEEE Communications Surveys $\&$ Tutorials} vol. 21, no. 4, pp. 3072-3108, Fourth Quarter 2019. 
     
     \bibitem{Boutaba2018R}
    R. Boutaba, M. A. Salahuddin, N. Limam,  S. Ayoubi, N. Shahriar, F. Estrada-Solano, and O. M. Caicedo, 
    ``A comprehensive survey on machine learning for networking: evolution, applications and research opportunities," \emph{Journal of Internet Services and Applications}, vol. 9, no. 16, Jun. 2018.
     
   
    
     
     \bibitem{D.May2019Renzo}
     M. D. Renzo, M. Debbah, D. Phan-Huy, A. Zappone, M. S. Alouini, C. Yuen, V. Sciancalepore, G. C. Alexandropoulos, J. Hoydis, H. Gacanin, J. de Rosny, A. Bounceur, G. Lerosey, and M. Fink,  ``Smart radio environments empowered by reconfigurable AI meta-surfaces: An idea whose time has come," \emph{EURASIP Journal on Wireless Communications and Networking}, vol. 1, pp. 1-20, May 2019.
   
    
     \bibitem{S.Arxiv2020Gong} 
     S. Gong, X. Lu, D. T. Hoang, D. Niyato, L. Shu, D. I. Kim, and Y. C. Liang, ``Towards smart radio environment for wireless communications via intelligent reflecting surfaces: A comprehensive survey," \emph{IEEE Communications Surveys and Tutorials}, vol. 22, no. 4, Fourthquarter 2020.
   
   \bibitem{E.2019Basar}
   E. Basar, M. D. Renzo, J. D. Rosny, M. Debbah, M. S. Alouini, and R.  Zhang, ``Wireless communications through reconfigurable intelligent surfaces," \emph{IEEE Access}, vol. 7, pp. 116753-116773, 2019.
   
   \bibitem{D.Arxiv2020Renzo}
     M. D. Renzo, A. Zappone, M. Debbah, M. S. Alouini, C. Yuen, J. d. Rosny, and S. Tretyakov, ``Smart radio environments empowered by reconfigurable intelligent surfaces: How it works, state of research, and the road ahead," \emph{IEEE Journal on Selected Areas in Communications}, vol. 38, no. 11, pp. 2450-2525, Nov. 2020. 
     
     \bibitem{F.Tcom2020Liu}
     F. Liu, C. Masouros, A. Petropulu, H. Griffiths, and L. Hanzo, ``Joint radar and communication design: Applications, state-of-the-art, and the road ahead," \emph{IEEE Transactions on Communications}, vol. 68, no. 6, pp. 3834-3862, Feb. 2020.
         
         
      
         \bibitem{C.Luongarxiv}
         N. C. Luong, X. Lu, D. T. Hoang, D. Niyato, and D. I. Kim, ``Radio resource management in joint radar and communication: A comprehensive survey," Available online at {\em arXiv:2007.13146}.
         
          \bibitem{Ian2020F}
          I. F. Akyildiz, A. Kak, and S. Nie, ``6G and beyond: The future of wireless communications systems," \emph{IEEE Access}, vol. 8, pp. 133995-134030, Jul. 2020.
          
         
          
          \bibitem{H.2001Heidt}
        N. Saeed, A. Elzanaty, H. Almorad, H. Dahrouj,  T. Y. Al-Naffouri, and M. S. Alouini,  ``Cubesat communications: Recent advances and future challenges,"  \emph{IEEE Communications Surveys \& Tutorials}, vol. 22, no. 3,  pp. 1839-1862, Third Quarter 2020.
        
           
   \bibitem{H.2017Kaushal} 
   H. Kaushal  and G. Kaddoum, ``Optical communication in space: Challenges and mitigation techniques," \emph{IEEE Communications Surveys \& Tutorials}, 
    vol. 19, no. 1,  pp. 57-96, First quarter  2017.
    
  \bibitem{X.2016Lu2}
  X. Lu, P. Wang, D. Niyato, D.I. Kim, and Z. Han, ``Wireless charging technologies: Fundamentals, standards, and network applications," \emph{IEEE Communications Surveys and Tutorials}, vol. 18, no. 2, pp. 1413-1452, Second Quarter 2016.   
 
 \bibitem{X.Lu2016}
 X. Lu, P. Wang, D. Niyato, D. I. Kim, and Z. Han, ``Wireless networks with RF energy harvesting: A contemporary survey," \emph{IEEE Communications Surveys and Tutorials}, vol. 17, no. 2, pp. 757-789, Second Quarter 2015.  
 
 \bibitem{T.2011Kleine-Ostmann}
 T. Kleine-Ostmann and T. Nagatsuma, ``A review on terahertz communications research," {\em Journal of Infrared, Millimeter, and Terahertz Waves}, vol. 32, no. 2, pp. 143-171, Jan. 2011.
   
   \bibitem{S.2020Ghafoor}
   S. Ghafoor,  N. Boujnah, M. H. Rehmani, and A. Davy, ``MAC protocols for terahertz communication: A comprehensive survey," \emph{IEEE Communications Surveys \& Tutorials}, vol. 22, no. 4, pp. 2236-2282,  Fourthquarter 2020.
   
   
 \bibitem{Ian2010}
   I. F. Akyildiz and J. M. Jornet, ``The Internet of nano-things,"  \emph{IEEE Wireless Communications},  vol. 17, no. 6, pp. 58-63, Dec. 2010.

   
   
  \bibitem{J.1995Meche}
                J. Meche and L. Muche, ``The Poisson Voronoi tessellation-I: A basic identity," \emph{Mathematische Nachrichten}, vol. 176, pp. 199-208, 1995. 
      
     \bibitem{S.Gradshteyn2007} 
     S. Gradshteyn and I. M. Ryzhik, {\em Table of Integrals, Series, and Products}, 7th ed. Academic Press, Nov. 2007. 
    
  \bibitem{I.2004Atencia}
   I. Atencia and P. Moreno, ``A discrete-time Geo/G/1 retrial queue with general retrial times," \emph{Queueing Systems}, vol. 48, no. 1-2, pp. 5-21, Sept. 2004.
   
   
    \bibitem{M.1997Corless}
     R. M. Corless, D. J. Jeffrey, and D. E. Knuth, ``A sequence of series for the Lambert W function," in \emph{Proc. International Symposium on Symbolic and Algebraic Computation}, New York, Jul. 1997.

\end{thebibliography}

\begin{IEEEbiography}   
[
{
\includegraphics[width=1in,height=1.25in,clip,keepaspectratio]{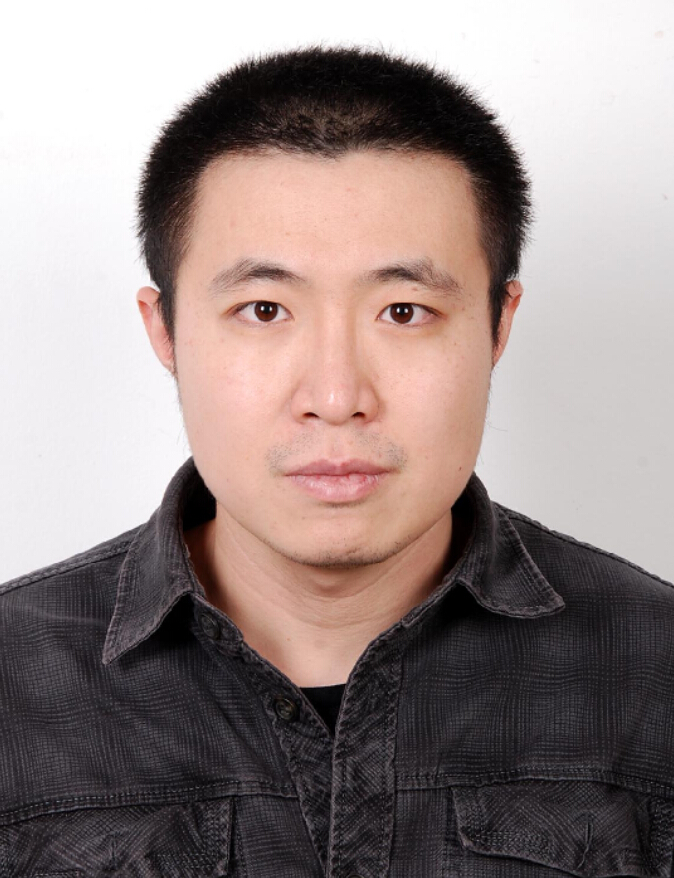}
}
]
{Xiao Lu} received the Ph.D degree in the University of Alberta, Canada, the M.Eng. degree in computer engineering from Nanyang Technological University, and the B.Eng. degree in communication engineering from Beijing University of
Posts and Telecommunications. His current research interests include design, analysis, and optimization of future generation cellular wireless networks.
\end{IEEEbiography} 

\begin{IEEEbiography}   
[
{
\includegraphics[width=1in,height=1.25in,clip,keepaspectratio]{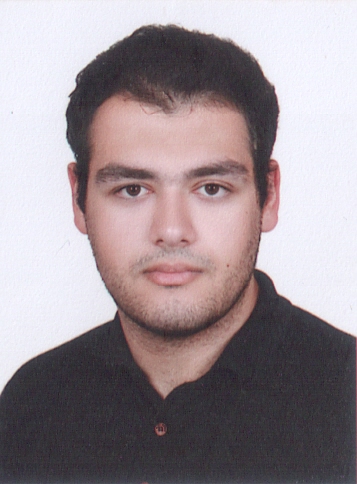}
}
]
{Mohammad Salehi}   received the B.Sc. degree in electrical engineering from K. N. Toosi University of Technology, Tehran, Iran, in 2014; the M.Sc. degree in electrical engineering from Amirkabir University of Technology, Tehran, Iran, in 2017; and the Ph.D. degree in electrical engineering from the University of Manitoba, Winnipeg, Canada, in 2021. His research interests include modeling and analyzing wireless networks.  
\end{IEEEbiography} 

\begin{IEEEbiography}   
[
{
\includegraphics[width=1in,height=1.25in,clip,keepaspectratio]{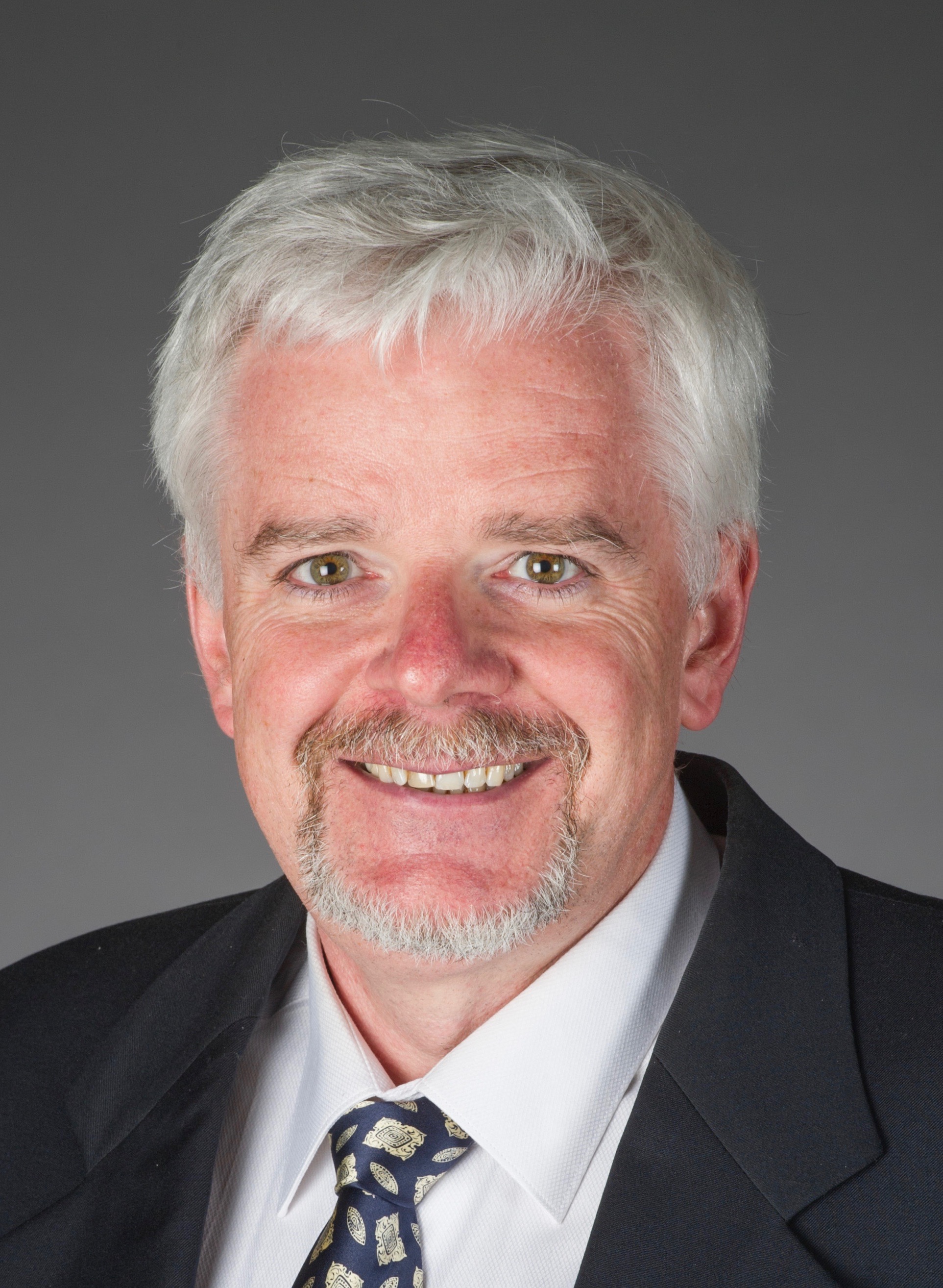}
}
]
{Martin Haenggi} (S’95-M’99-SM’04-F’14) received the Dipl.-Ing. (M.Sc.) and Dr.sc.techn. (Ph.D.) degrees in electrical engineering from the Swiss Federal Institute of Technology in Zurich (ETHZ) in 1995 and 1999, respectively. Currently he is the Freimann Professor of Electrical Engineering and a Concurrent Professor of Applied and Computational Mathematics and Statistics at the University of Notre Dame, Indiana, USA. In 2007-2008, he was a Visiting Professor at the University of California at San Diego, in 2014-2015 he was an Invited Professor at EPFL, Switzerland, and in 2021-2022 he is a Guest Professor at ETHZ.
He is a co-author of the monographs "Interference in Large Wireless Networks" (NOW Publishers, 2009) and “Stochastic Geometry Analysis of Cellular Networks” (Cambridge University Press, 2018) and the author of the textbook "Stochastic Geometry for Wireless Networks" (Cambridge, 2012) and the blog stogblog.net, and he published 18 single-author journal articles. His scientific interests lie in networking and wireless communications, with an emphasis on cellular, amorphous, ad hoc (including D2D and M2M), cognitive, vehicular, and wirelessly powered networks.
He served as an Associate Editor for the Elsevier Journal of Ad Hoc Networks, the IEEE Transactions on Mobile Computing (TMC), the ACM Transactions on Sensor Networks, as a Guest Editor for the IEEE Journal on Selected Areas in Communications, the IEEE Transactions on Vehicular Technology, and the EURASIP Journal on Wireless Communications and Networking, as a Steering Committee member of the TMC, and as the Chair of the Executive Editorial Committee of the IEEE Transactions on Wireless Communications (TWC). From 2017 to 2018, he was the Editor-in-Chief of the TWC. Currently he is an editor for MDPI Information. 
For both his M.Sc. and Ph.D. theses, he was awarded the ETH medal. He also received a CAREER award from the U.S. National Science Foundation in 2005 and three paper awards from the IEEE Communications Society, the 2010 Best Tutorial Paper award, the 2017 Stephen O. Rice Prize paper award, and the 2017 Best Survey paper award, and he is a Clarivate Analytics Highly Cited Researcher.
  
\end{IEEEbiography}

\begin{IEEEbiography} 
[
{
\includegraphics[width=1in,height=1.25in,clip,keepaspectratio]{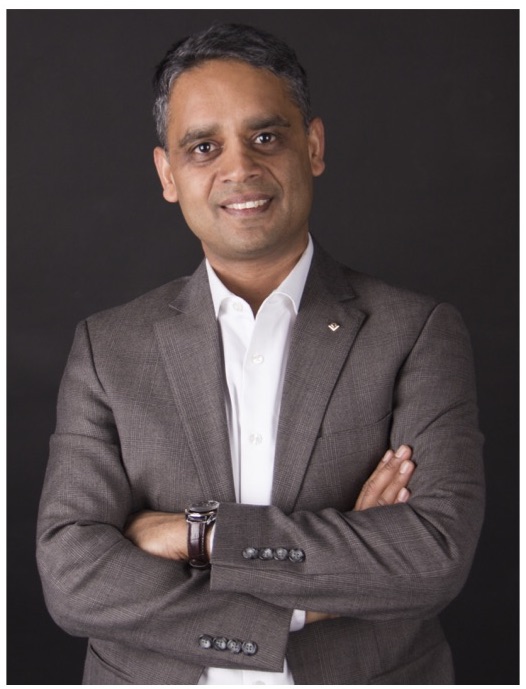}
}
]
{Ekram Hossain} (F’15) is a Professor and Associate Head (Graduate Studies) in the Department of Electrical and Computer Engineering at University of Manitoba, Canada (http://home.cc.umanitoba.ca/$\sim$hossaina). He is a Member (Class of 2016) of the College of the Royal Society of Canada, a Fellow of the Canadian Academy of Engineering, and a Fellow of the Engineering Institute of Canada. Dr. Hossain’s current research interests include design, analysis, and optimization of wireless networks with emphasis on beyond 5G cellular networks. He was elevated to an IEEE Fellow ``for contributions to spectrum management and resource allocation in cognitive and cellular radio networks". He received the 2017 IEEE ComSoc TCGCC (Technical Committee on Green Communications \& Computing) Distinguished Technical Achievement Recognition Award ``for outstanding technical leadership and achievement in green wireless communications and networking". He was listed as a Clarivate Analytics Highly Cited Researcher in Computer Science in 2017, 2018, 2019, and 2020. Currently he serves as the Editor-in-Chief of IEEE Press (2018-2021) and the Director of Magazines (2020-2021) for the IEEE Communications Society. Previously he served as the Editor-in-Chief for the IEEE Communications Surveys and Tutorials (2012–2016).
\end{IEEEbiography}

\begin{IEEEbiography}   
[
{
\includegraphics[width=1in,height=1.25in,clip,keepaspectratio]{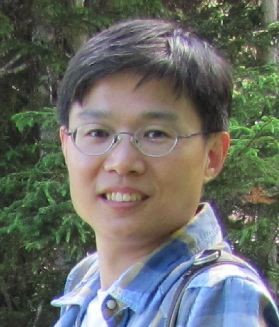}
}
]
  {Hai Jiang} (SM’15) received the B.Sc. and M.Sc. degrees in electronics engineering from Peking University, Beijing, China, in 1995 and 1998, respectively, and the Ph.D. degree in electrical engineering from the University of Waterloo, Waterloo, ON, Canada, in 2006. Since 2007, he has been a Faculty Member with the University of Alberta, Edmonton, AB, Canada, where he is currently a Professor with the Department of Electrical and Computer Engineering. His research interests include radio resource management, cognitive radio networking, mobile edge computing, and cooperative communications.
\end{IEEEbiography} 
\end{document}